\providecommand{\tabularnewline}{\\}
\theoremstyle{plain}
\newtheorem{thm}{\protect\theoremname}[section]
\theoremstyle{definition}
\newtheorem{defn}[thm]{\protect\definitionname}
\theoremstyle{plain}
\newtheorem{prop}[thm]{\protect\propositionname}
\theoremstyle{plain}
\newtheorem{cor}[thm]{\protect\corollaryname}
\theoremstyle{plain}
\newtheorem{lem}[thm]{\protect\lemmaname}
\theoremstyle{plain}
\newtheorem{obs}[thm]{\protect\Observationname}
\theoremstyle{remark}
\newtheorem{rem}[thm]{\protect\remarkname}
\date{}
\providecommand{\Observationname}{Observation}
\providecommand{\corollaryname}{Corollary}
\providecommand{\definitionname}{Definition}
\providecommand{\lemmaname}{Lemma}
\providecommand{\propositionname}{Proposition}
\providecommand{\remarkname}{Remark}
\providecommand{\theoremname}{Theorem}
\begin{document}
\global\long\def\l{\ell}%

\global\long\def\N{\mathbb{N}}%

\global\long\def\Z{\mathbb{Z}}%

\global\long\def\R{\mathbb{R}}%

\global\long\def\C{\mathbb{C}}%

\global\long\def\E{\mathbb{E}}%

\global\long\def\cs{\mathcal{S}}%

\global\long\def\defi{\coloneqq}%

\global\long\def\s{\subseteq}%

\newcommandx\q[1][usedefault, addprefix=\global, 1=]{#1^{\ast}}%

\newcommandx\p[1][usedefault, addprefix=\global, 1=]{#1^{\prime}}%

\global\long\def\A{\mathcal{A}}%

\global\long\def\Open{\texttt{Open}}%

\global\long\def\Explore{\texttt{Explore}}%

\global\long\def\Color{\texttt{Color}}%

\global\long\def\Bought{\texttt{Bought}}%

\global\long\def\Invest{\texttt{Invest}}%

\global\long\def\True{\texttt{True}}%

\global\long\def\False{\texttt{False}}%

\global\long\def\None{\texttt{None}}%

\global\long\def\Special{\texttt{Special}}%

\global\long\def\UponDeadline{\texttt{UponDeadline}}%

\global\long\def\UponCritical{\texttt{UponCritical}}%

\global\long\def\SetColor{\texttt{SetColor}}%

\global\long\def\t{\texttt{{}}}%

\global\long\def\T{\mathcal{T}}%

\global\long\def\alg{\mathrm{ALG}}%

\global\long\def\opt{\mathrm{OPT}}%

\title{General Framework for Metric Optimization Problems with Delay or with
Deadlines}
\author{%
\begin{tabular}{c}
Yossi Azar\tabularnewline
\texttt{\footnotesize{}azar@tau.ac.il}\tabularnewline
{\small{}Tel Aviv University}\tabularnewline
\end{tabular}\and%
\begin{tabular}{c}
Noam Touitou\tabularnewline
\texttt{\footnotesize{}noamtouitou@mail.tau.ac.il}\tabularnewline
{\small{}Tel Aviv University}\tabularnewline
\end{tabular}}
\maketitle
\begin{abstract}
In this paper, we present a framework used to construct and analyze
algorithms for online optimization problems with deadlines or with
delay over a metric space. Using this framework, we present algorithms
for several different problems. We present an $O(D^{2})$-competitive
deterministic algorithm for online multilevel aggregation with delay
on a tree of depth $D$, an exponential improvement over the $O(D^{4}2^{D})$-competitive
algorithm of Bienkowski et al. (ESA '16). We also present an $O(\log^{2}n)$-competitive
randomized algorithm for online service with delay over any general
metric space of $n$ points, improving upon the $O(\log^{4}n)$-competitive
algorithm by Azar et al. (STOC '17).

In addition, we present the problem of online facility location with
deadlines. In this problem, requests arrive over time in a metric
space, and need to be served until their deadlines by facilities that
are opened momentarily for some cost. We also consider the problem
of facility location with delay, in which the deadlines are replaced
with arbitrary delay functions. For those problems, we present $O(\log^{2}n)$-competitive
algorithms, with $n$ the number of points in the metric space.

The algorithmic framework we present includes techniques for the design
of algorithms as well as techniques for their analysis.
\end{abstract}
\newpage{}

\section{Introduction}

Recently in the field of online algorithms, there has been an increasing
interest in online problems involving deadlines or delay. In such
problems, requests of some form arrive over time, requiring service.
In problems with deadlines, each request is equipped with a deadline,
by which the request must be served. In problems with delay, this
hard constraint is replaced with a more general constraint. In those
problems, each request is equipped with a delay function, such that
an algorithm accumulates delay cost while the request remains pending.
This provides an incentive for the algorithm to serve the request
as soon as possible. Deadlines are a special case of delay, as deadlines
can be approximated arbitrarily well by delay functions.

The mechanism of adding delay or deadlines can be used to convert
a problem over a sequence into a problem over time. For example, a
problem in which an arriving request must immediately be served by
the algorithm can be converted into a problem with deadlines, providing
more flexibility to a possible solution. This conversion often creates
interesting problems over time from problems that are trivial over
a sequence, as well as enables much better solutions (i.e. lower cost).

A case of special interest is the case of such problems over a metric
space. A notable example, which we consider in this paper, is the
\textbf{online multilevel aggregation problem}. In this problem, the
requests arrive on the leaves of a tree. At any time, the algorithm
may choose to transmit any subtree that includes the root of the tree,
at a cost which is the sum of the weights of the subtree's edges.
Pending requests on any leaves contained in the transmitted subtree
are served by the transmission. The general delay case of this problem
was first considered by Bienkowski et al. \cite{DBLP:conf/esa/BienkowskiBBCDF16},
who gave a $O(D^{4}2^{D})$-competitive algorithm for the problem,
with $D$ the depth of the tree. Buchbinder et al. \cite{DBLP:conf/soda/BuchbinderFNT17}
then showed a $O(D)$-competitive deterministic algorithm for the
deadline case. In this paper, we improve the result of \cite{DBLP:conf/esa/BienkowskiBBCDF16}
for general delay exponentially.

Another notable example is the \textbf{online service with delay}
problem, presented in \cite{DBLP:conf/stoc/AzarGGP17}. In this problem,
requests arrive on points in a metric space, accumulating delay while
pending. There is a single server in the metric space, which can be
moved from one point to another at a cost which is the distance between
the two points. Moving a server to a point at which there exists a
pending request serves that request. In \cite{DBLP:conf/stoc/AzarGGP17},
an $O(\log^{4}n)$-competitive randomized algorithm is given for the
problem, where $n$ is the number of points in the metric space. This
algorithm encompasses a random embedding to an hierarchical well-separated
tree (HST) of depth $h=O(\log n)$, and an $O(h^{3})$-competitive
deterministic algorithm for online service with delay on HSTs. In
this paper, we also improve this result to $O(\log^{2}n)$ competitiveness.

In addition, we also present the problem of \textbf{online facility
location with deadlines}. In this problem, requests arrive over time
on points of a metric space, each equipped with a deadline. The algorithm
can open a facility at any point of the metric space, at some fixed
cost. Immediately upon opening a facility, the algorithm may connect
any number of pending requests to that facility, serving these requests.
Connecting a request to a facility incurs a connection cost which
is the distance between the location of the request and the location
of the facility. In contrast to previous considerations of online
facility location, in our problem the facility is only opened momentarily,
disappearing immediately after connecting the requests. We also consider
the problem of \textbf{online facility location with delay}, in which
the deadlines are replaced with arbitrary delay functions. For those
problems we present $O(\log^{2}n)$-competitive algorithms, with $n$
the number of points in the metric space.

The problem of facility location is a widely researched classic problem.
The modification of ephemeral facilities is highly motivated, as it
describes an option of renting facilities instead of buying them.
As renting shared resources is a growing trend (e.g. in cloud computing),
this problem captures many practical scenarios.

Our paper presents algorithms for online facility location with deadlines,
online facility location with delay, online multilevel aggregation
with delay and online service with delay. These algorithms all share
a common framework that we develop. The framework includes techniques
for both the design of the algorithms and their analysis. We believe
the flexibility and generality of this framework would enable designing
and analyzing algorithms for additional problems with deadlines or
with delay.

\subsection*{Our Results}

In this paper, we present a framework used to construct and analyze
online optimization problems with deadlines or with delay over a metric
space. Using this framework, we present the following algorithms.
\begin{enumerate}
\item An $O(D^{2})$-competitive deterministic algorithm for online multilevel
aggregation with delay on a tree of depth $D$. This is an exponential
improvement over the $O(D^{4}2^{D})$-competitive algorithm in \cite{DBLP:conf/esa/BienkowskiBBCDF16}.
\item An $O(\log^{2}n)$-competitive randomized algorithm for online service
with delay over a metric space with $n$ points. This improves upon
the $O(\log^{4}n)$-competitive randomized algorithm in \cite{DBLP:conf/stoc/AzarGGP17}.
\item An $O(\log^{2}n)$-competitive randomized algorithm for online facility
location with deadlines over a metric space with $n$ points.
\item An $O(\log^{2}n)$-competitive randomized algorithm for online facility
location with delay over a metric space with $n$ points.
\end{enumerate}
Our algorithms all share a common framework, which we present. The
framework provides general structure to both the algorithm and its
analysis.

Such an improvement for the online multilevel aggregation problem
is only known for the special case of deadlines, as given in \cite{DBLP:conf/soda/BuchbinderFNT17}.

The algorithms for online facility location with deadlines and with
delay can be easily extended to the case in which the cost of opening
a facility is different for each point in the metric space. This changes
the competitiveness of the algorithms to $O(\log^{2}\Delta+\log\Delta\log n)$,
where $\Delta$ is the aspect ratio of the metric space.

\subsection*{Our Techniques}

All of our algorithms are based on corresponding competitive algorithms
for HSTs. The randomized algorithms for general metric spaces are
obtained through randomized HST embedding. The $O(D^{2})$-competitive
deterministic algorithm for online multilevel aggregation with delay
on a tree is based on decomposing the tree into a forest of HSTs.
This decomposition is similar to that used in \cite{DBLP:conf/soda/BuchbinderFNT17}
for the case of deadlines.

\textbf{The framework -- algorithm design. }In designing algorithms
for the problems over HSTs, we use a certain framework. In an algorithm
designed using the framework, there is a counter for every node (in
the case of facility location) or every edge (in the case of online
multilevel aggregation and service with delay). The sizes of the counters
vary between the problems considered. When the counter for a tree
element (either node or edge) is full, the algorithm resets the counter
and explores the subtree rooted at that element. 

The process of exploration serves some of the pending requests at
that subtree, while simultaneously charging counters of descendant
tree elements. The exploration takes place in a DFS fashion -- if
at any time during the exploration of an element the counter of a
descendant element is full, the algorithm immediately suspends the
exploration of the current element in favor of its descendant. The
exploration of the original element resumes only when the exploration
of the descendant is complete.

The exploration of specific element has a certain budget, used to
charge counters of descendants. This budget is equal to the size of
the counter of the element being explored. The algorithm adheres to
the budget very strictly, spending exactly the amount specified. This
is a crucial part of the framework, as exceeding the budget (or falling
below budget) by even a constant factor would yield a competitiveness
which is exponential in the depth of the tree. 

This DFS exploration method is very different from previous algorithms,
and enables us to get our improved results. The counter-based structure
of our algorithms enables this DFS exploration while controlling the
budget. Using the counter-based structure is, in turn, enabled by
the techniques that we present in our framework's analysis.

\textbf{The framework -- analysis.} The analysis of the algorithms
of this framework require constructing a \emph{preflow} - a weighted
directed graph which is similar to a flow network, but in which we
allow nonnegative excesses at nodes (i.e. more incoming than outgoing).
We refer to nodes of the preflow as \emph{charging nodes}. We construct
a source charging node, from which the output is proportional to the
cost of the optimum, and then use the preflow to propagate this output
throughout the preflow graph. Since the excesses are nonnegative,
the sum of the excesses of any subset of charging nodes is a lower
bound of the total output from the source charging node, and thus
also some lower bound on the cost of the optimum. We construct the
preflow in a manner that allows us to locate such a subset of high-excess
charging nodes, thus providing the required lower bound. 

In the preflows we construct, each tree element (node or edge) is
converted to multiple charging nodes, each corresponding to an exploration
of that tree element. The possible edges between charging nodes in
the preflow depend on the structure of the tree and the operation
of our algorithm. Of those possible edges, we describe a procedure
that chooses the actual edges of the preflow. This procedure depends
on the optimal solution. Though the original metric space is a tree,
the multiple copies of each tree element cause the resulting preflow
to be a general directed graph.

The goal of the preflow creation procedure is to propagate the optimum's
costs to some ``top layer'' of charging nodes. This top layer usually
consists of nodes corresponding to explorations of the root tree element,
though in the case of online service with delay the definition is
different. The charging nodes of that top layer are then chosen to
lower bound the optimum, as described.

The preflow creation procedure involves creating colors at the ``top''
layer of the charging nodes. These colors are then propagated, through
some set of propagation rules, to nodes in lower layers. Each color
corresponds to the charging node in which it originated, with the
exception of two colors -- the empty color, and an additional ``special''
color. As nodes are colored, the possible edges that contain them
become actual edges of the preflow.

We now discuss the techniques used in each of the problems in this
paper.

\textbf{Online facility location with deadlines. }We use our framework
in constructing an algorithm for this problem over an HST. The algorithm
maintains a counter on each node (other than the root node), such
that each counter is of size $f$, where $f$ is the cost of opening
a facility. Whenever a counter is full, it resets and triggers an
exploration of that node. Whenever the deadline of a pending request
expires, the algorithm starts an exploration of the root node. 

In the exploration of a node $u$, the algorithm opens a facility
at $u$, and considers pending requests in the node's subtree according
to increasing deadline. For each request considered, it raises the
counter of the child node on the path to the request by the cost of
connecting that request to $u$. If the counter of the child is full,
an exploration of that child is called recursively, which would surely
serve the considered request. Otherwise, the algorithm connects that
request to $u$. As per the framework, the budget of $u$'s exploration
for raising these counters is exactly $f$. 

\textbf{Online facility location with delay. }The algorithm for this
problem is an extension of the deadline case. An exploration of the
root node is now triggered upon a set of requests which is \emph{critical},
i.e. has accumulated large delay. 

The significant difference between the delay case and the deadline
case is in the exploration itself. In the deadline case, the exploration
of a node $u$ spends its budget attempting to ``push back'' the
next occurrence of a single event (i.e. the earliest deadline of a
pending request in the subtree rooted at $u$). In the delay case,
there are two events to consider. The first event is a single request
with a delay large enough to justify connection to $u$. The second
event is a ``coalition'' of many tightly-grouped requests with small
individual delay, but large overall delay. This coalition does not
justify connection to $u$, but does merit opening a facility near
the coalition.

\textbf{Online multilevel aggregation with delay.} In our algorithm
for this problem over HSTs, each edge has a counter. The size of the
counter is the weight of edge. This is in contrast to our algorithms
for the facility location problems, in which all counters were of
the same size. We assume, without loss of generality, that there exists
a single edge exiting the root node, called the root edge. As in the
facility location case, an exploration of the root edge is triggered
when the delay of a set of requests becomes high.

In our algorithm, exploring an edge means adding descendant edges
to the transmitted subtree. The explored edge again has a budget equal
to its weight. The exploration repeatedly chooses the earliest point
in time in which the delay of a set of requests exceeds the cost of
expanding the transmission to include these requests. It then raises
the counter of the descendant edge in the direction of that request
set. Note the contrast with the algorithms for facility location --
the explored edge is allowed to raise the counters of its descendant
edges, and not just of its immediate children. 

While the analysis for our facility location problems required constructing
a single preflow to get a lower bound on the cost of the optimum,
the analysis for online multilevel aggregation with delay requires
constructing an additional preflow to get an upper bound on the cost
of the algorithm.

\textbf{Online service with delay. }Our algorithm for this problem
uses the exploration method of the algorithm for online multilevel
aggregation with delay. However, the tree to be explored is not the
entire tree, but rather some subtree according to the location of
the server. The concepts of relative trees and major edges are defined
in a similar way to \cite{DBLP:conf/stoc/AzarGGP17}. We also use
a potential function based on the distance of the algorithm's server
from the optimum's server. As the algorithm consists (mainly) of making
calls to the multilevel aggregation exploration, the analysis divides
these explorations to those for which the optimum can be charged (using
similar arguments to the analysis of the multilevel aggregation algorithm),
and explorations for which the costs are covered by the potential
function.

\subsection*{Related Work}

The online multilevel aggregation problem generalizes a range of studied
problems, such as the TCP acknowledgment problem \cite{TCPAck_DBLP:conf/esa/BuchbinderJN07,TCPAck_DBLP:conf/stoc/DoolyGS98,TCPAck_DBLP:journals/algorithmica/KarlinKR03}
and the joint replenishment problem \cite{JointRep_DBLP:conf/soda/BienkowskiBCJNS14,JointRep_DBLP:journals/algorithmica/BritoKV12,JointRep_DBLP:conf/soda/BuchbinderKLMS08}.
For both the deadline and delay variants of online multilevel aggregation,
the best known lower bounds are only constant \cite{JointRep_DBLP:conf/soda/BienkowskiBCJNS14}.
Bienkowski et al. \cite{DBLP:conf/esa/BienkowskiBBCDF16} were the
first to present an algorithm for the online multilevel aggregation
problem with arbitrary delay functions, which is $O(D^{4}2^{D})$-competitive.
Buchbinder et al. \cite{DBLP:conf/soda/BuchbinderFNT17} presented
an $O(D)$-competitive algorithm for the special case of deadlines. 

The problem of online service with delay was presented in \cite{DBLP:conf/stoc/AzarGGP17},
along with the $O(\log^{4}n)$-competitive randomized algorithm for
a general metric space of $n$ points. The problem has also been studied
over specific metric spaces, such as uniform metric and line metric,
in which improved results can be achieved \cite{DBLP:conf/stoc/AzarGGP17,DBLP:conf/sirocco/BienkowskiKS18}.

Another metric optimization problem with delay is the problem of matching
with delay \cite{Matching_DBLP:conf/approx/AshlagiACCGKMWW17,Matching_DBLP:conf/ciac/EmekSW17,Matching_DBLP:conf/stoc/EmekKW16,Matching_DBLP:conf/waoa/AzarF18,Matching_DBLP:conf/waoa/BienkowskiKLS18,Matching_DBLP:conf/waoa/BienkowskiKS17}.
For this problem, arbitrary delay functions are intractable, and thus
the main line of work focuses on linear delay functions. 

Additional problems with delay exist other than those over a metric
space. The set aggregation problem, presented in \cite{DBLP:conf/latin/CarrascoPSV18},
is a variant of set cover with delay. The problem of bin packing with
delay is presented in \cite{conf/spaa/Azar19}.

The classic online facility location problem, suggested by Meyerson
\cite{OFL_DBLP:conf/focs/Meyerson01}, has also been studied \cite{OFL_DBLP:journals/algorithmica/Fotakis08,OFL_DBLP:journals/iandc/AnagnostopoulosBUH04}.
In this problem, requests arrive one after the other, and the algorithm
must either connect a request to an existing facility immediately
upon the request's arrival, or open a facility at the request's location.
This problem is different from the problems of facility location with
deadlines and facility location with delay presented in this paper.
The main difference is that in our problems, a facility is only opened
momentarily, which only allows immediate connection of pending requests.
In contrast, an opened facility in the online facility location of
\cite{OFL_DBLP:conf/focs/Meyerson01} is permanent, allowing the connection
of any future request to that facility.

\paragraph*{Paper Organization}

Section \ref{sec:FLDeadline} presents the problem of online facility
location with deadlines, and an $O(\log^{2}n)$-competitive randomized
algorithm for the problem, as well as its analysis. Section \ref{sec:FLDelay}
discusses the more general problem of online facility location with
delay, and extending the algorithm for the deadline case in section
\ref{sec:FLDeadline} to an $O(\log^{2}n)$-competitive algorithm
for the case of delay. 

Section \ref{sec:MAD} presents the $O(D^{2})$-competitive deterministic
algorithm for online multilevel aggregation with delay. Section \ref{sec:OSD}
presents the $O(\log^{2}n)$-competitive randomized algorithm for
online service with delay, which relies on the algorithm for online
multilevel aggregation with delay given in Section \ref{sec:MAD}.

\section{\label{sec:FLDeadline}Online Facility Location with Deadlines}

\subsection{Problem and Notation}

In the online facility location with deadlines problem, requests arrive
on points of a metric space over time. Each request is associated
with a deadline, by which it must be served. An algorithm for the
problem can choose, in any point in time, to open a facility at any
point in the metric space momentarily, at a cost of $f$. Immediately
upon opening the facility, the algorithm must choose the subset of
pending requests (i.e. requests that have arrived but have not been
served) to connect to the facility. The cost of connecting each request
to the facility is the distance between the request's location and
the facility's location. Connecting a request to a facility serves
that request. Immediately after connecting the requests, the facility
disappears. We allow opening a facility at the same point more than
once, at different times.

Formally, we are given a metric space $\A=(A,\delta_{\A})$ such that
$|A|=n$. A request is a tuple $q=(v_{q},r_{q},d_{q})$ such that
$v_{q}$ is a point in $\A$, the arrival time of the request is $r_{q}$
and the deadline of the request is $d_{q}$. We assume, without loss
of generality, that all deadlines are distinct. For any instance of
the problem, the algorithm's solution has two costs. The first is
the \emph{buying cost} (or \emph{opening cost}) $\alg^{B}=mf$, where
$m$ is the number of facilities opened by the algorithm. Denoting
by $Q$ the set of requests in the instance, and denoting by $\beta_{q}$
the location of the facility to which the algorithm connects request
$q$, the second cost of the algorithm is the \emph{connection cost}
$\alg^{C}=\sum_{q\in Q}\delta_{\A}(v_{q},\beta_{q})$. Wherever a
single metric space $\A$ is considered, we write $\delta=\delta_{\A}$.

The goal of the algorithm is to minimize the total cost, which is
\[
\alg=\alg^{B}+\alg^{C}
\]

For the special case in which $A$ is a tree $T$, and $\delta$ is
the distance between nodes in $T$, we denote the root of $T$ by
$r$ and the weight function on the edges of the tree by $w$. We
assume, without loss of generality, that the requests only arrive
on leaves of the tree.

The following definitions regarding trees are used throughout the
paper.
\begin{defn}
\label{def:FLDeadline_TreeDefinitions}For every tree node $u\in T$,
we use the following notations:
\begin{itemize}
\item For $u\neq r$, we denote by $p(u)$ the parent of $u$ in the tree. 
\item We denote by $T_{u}$ the subtree rooted at $u$.
\item For a set of requests $Q\subseteq T_{u}$, we denote by $T_{u}^{Q}\subseteq T_{u}$
the subtree spanned by $u$ and the leaves of $Q$.
\item We define the \emph{height} of $u$ to be the depth of $T_{u}$.
\end{itemize}
\end{defn}

The following definition is similar to the usual definition of a $\beta$-HST,
except that we allow a child edge to be strictly smaller than $\frac{1}{\beta}$
times its parent edge.
\begin{defn}[$\left(\ge\beta\right)$-HST]
 A rooted tree $T$ is a $\left(\ge\beta\right)$-HST if for any
two edges $e,e^{\prime}\in T$ such that $e$ is a parent edge of
$e^{\prime}$, we have that $w(e)\ge\beta w(e^{\prime})$.
\end{defn}

When considering the problem over a tree $T$, we assume, without
loss of generality, that $w(e)\le f$ for any edge $e\in T$. Indeed,
if this is not the case, no request would be connected over $e$,
effectively yielding two disjoint instances of the problem.

In this section, we prove the following theorem.
\begin{thm}
\label{thm:FLDeadline_GMSLogSquared}There exists an $O(\log^{2}n)$-competitive
randomized algorithm for online facility location with deadlines for
any metric space of $n$ points.
\end{thm}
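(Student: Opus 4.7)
The plan is to reduce the general metric case to an HST via a randomized embedding, then design and analyze an $O(D)$-competitive algorithm on an HST of depth $D$. For the reduction, I would use a standard FRT-style randomized embedding into a $\beta$-HST (for some constant $\beta\ge 2$) of depth $D=O(\log n)$ with expected multiplicative distortion $O(\log n)$. Since the buying cost is independent of the metric and the connection cost scales linearly with distances, an $\alpha$-competitive algorithm on the HST yields an expected $O(\alpha\log n)$-competitive algorithm in the original metric (comparing to the optimum restricted to the embedded distances, which in turn bounds the original optimum up to the expected distortion). So it suffices to produce an $O(D)$-competitive algorithm on an HST of depth $D$, giving the $O(\log^{2} n)$ bound.

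For the HST algorithm, I would follow the counter-based framework outlined in the introduction. Each non-root node $u$ carries a counter of capacity $f$, initially zero. Two events trigger an \emph{exploration}: a counter becoming full, or the deadline of some pending request being reached (which triggers an exploration of the root $r$). An exploration of a node $u$ opens one facility at $u$ (paying $f$) and then, going through pending requests in $T_u$ by increasing deadline, iteratively increases by $\delta(u,v_q)$ the counter of the child of $u$ on the path to $v_q$: if that push fills the child's counter, recurse (DFS) into exploring the child, which will certainly serve $q$ along the way; otherwise connect $q$ to $u$. The crucial framework invariant is that the total amount added to child counters during an exploration of $u$ is exactly $f$, so the exploration closes when a cumulative charge of $f$ has been dispensed to children. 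This strict budget, together with the fact that each full counter pays for the next level's facility of cost $f$, morally makes each level cost at most $O(1)$ times the one above, giving $O(D)$ per root exploration.

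For the analysis, the key is to lower bound $\opt$ using the preflow machinery of the framework. I would instantiate a charging graph in which every exploration of a tree node becomes a charging node, with arcs determined by the DFS parent-child relation between explorations; a source charging node emits flow proportional to $\opt$, and the analysis propagates this flow downward while exhibiting nonnegative excess everywhere. One then uses the ``top layer'' of root explorations as the targeted subset of charging nodes whose aggregate excess is a lower bound on $\opt$. On the algorithm's side, each root exploration costs $O(D\cdot f)$ (by unrolling the DFS budget across at most $D$ levels with exactly $f$ being dispensed per level), and a careful amortization shows each root exploration can be charged against the excess flowing into a distinct charging node at the top layer. Combining yields $\alg \le O(D)\cdot\opt$ on the HST.

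The main obstacle will be the preflow construction, specifically designing the color-propagation procedure so that (i) total outflow at the source is $O(\opt)$, (ii) the excess at each top-layer node can be matched to the $O(f)$ cost of a root exploration, and (iii) the same unit of $\opt$ is not double-counted across explorations that chain through the DFS recursion. The requirement that the algorithm stay exactly on budget (neither over nor under $f$) is what makes this matching exact; any slack would blow up multiplicatively through the $D$ recursion levels and ruin the linear-in-$D$ bound. Once the preflow is in place, the remaining calculations (bounding connection versus opening costs, and invoking linearity of expectation across the HST embedding) are routine.
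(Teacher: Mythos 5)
Your high-level route matches the paper: randomized HST embedding of depth $D=O(\log n)$ with expected distortion $O(\log n)$, followed by a counter-based DFS exploration algorithm on the HST, analyzed via a preflow/coloring argument on charging nodes. However, there is a real gap in your accounting of the HST competitiveness.

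You claim to establish ``$\alg\le O(D)\cdot\opt$ on the HST'' and then multiply by the $O(\log n)$ embedding distortion to get $O(\log^{2}n)$. The paper's HST algorithm does \emph{not} achieve $O(D)$-competitiveness. The paper proves $\alg \le 3(D+1)\cdot kf$ (Lemma \ref{lem:FLDeadline_ALG}) and $kf \le 2(D+1)\cdot\opt^{B} + 4\cdot\opt^{C}$ (Lemma \ref{lem:FLDeadline_OPT}), so the HST bound is $\alg \le O(D^{2})\cdot\opt^{B} + O(D)\cdot\opt^{C}$, i.e.\ the algorithm is $O(D^{2})$-competitive on the HST, not $O(D)$. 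Each root exploration costs $\Theta(Df)$ in the algorithm, and the preflow only certifies that $kf$ is at most $O(D)\opt^{B}+O(1)\opt^{C}$; this chains to $O(D^{2})$ against buying cost and there is no obvious way to shave a factor of $D$. If you naively multiply $O(D^{2})$ by the distortion $O(\log n)$, you get $O(\log^{3}n)$, which is too weak.

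The missing idea is to carry the separate $\opt^{B}$ and $\opt^{C}$ bounds through the embedding rather than collapsing them to a single competitive ratio. The buying cost is insensitive to the embedding ($\opt^{\T,B}=\opt^{\mathcal{X},B}$), so the $O(D^{2})=O(\log^{2}n)$ factor against buying cost is final and not further multiplied by distortion; only the connection cost is distorted, and there the HST bound is the sharper $O(D)$, which becomes $O(\log^{2}n)$ after the $O(\log n)$ distortion. You actually observed that ``the buying cost is independent of the metric'' but then did not exploit it, instead reducing to a single-ratio claim that is both unsupported and unnecessary. To fix the proposal, replace the claim of $O(D)$-competitiveness on the HST with the two-sided bound $\alg^{\T}\le O(D^{2})\opt^{\T,B}+O(D)\opt^{\T,C}$, and in the embedding step apply the distortion only to the connection-cost term.
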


\subsection{Algorithm for HSTs}

We present an algorithm for facility location with deadlines on a
$\left(\ge2\right)$-HST $T$ of depth $D$. We denote the root of
the tree by $r$.

We make the assumption that the total weight of any path from the
root to a leaf is at most $f$. In a $\left(\ge2\right)$-HST, the
total weight of such a path is at most twice the weight of the top
edge, which is at most $f$. Thus, this assumption only costs us a
constant factor of $2$ in competitiveness.

Without loss of generality, we allow the algorithm to open facilities
on internal nodes of the tree. Indeed, any algorithm that opens facilities
on internal nodes can be converted to an algorithm that only opens
facilities on leaves in the following manner. Consider a facility
opened by the original algorithm on the internal node $u$, and denote
by $Q$ the set of requests connected to that facility. The modified
algorithm would open the facility at $v_{q^{\ast}}$ instead, where
$q^{\ast}=\arg\min_{q\in Q}\delta(u,v_{q})$, and connect the original
requests. Through triangle inequality, the connection cost of the
modified algorithm is at most twice larger.

\textbf{Algorithm's description. }The algorithm for facility location
with deadlines on a $\left(\ge2\right)$-HST is given in Algorithm
\ref{alg:FLDeadline}. The algorithm waits until the deadline of a
pending request. It then begins exploring the root node. An exploration
of a node $u$ consists of considering the pending requests in $T_{u}$
by order of increasing deadline. The exploration has a budget of exactly
$f$ to spend on raising counters of child nodes -- it maintains
that budget in the variable $b_{u}$. When considering a request $q$,
the algorithm raises the counter of the child node $v$, denoted $c_{v}$
in the algorithm, for the child node $v$ in the request's direction.
The counter is raised by the smallest of $\delta(v_{q},u)$, the amount
required to fill $c_{v}$, and the remaining budget $b_{u}$. If this
fills the counter of $v$, the exploration of $u$ is paused, and
a new exploration of $v$ is started, in a DFS manner. We claim, in
the analysis, that this exploration of $v$ connects $q.$ Otherwise,
the request $q$ is connected to $u$. 

The operation of the algorithm is visualized in Figure \ref{fig:FLDeadline_AlgorithmVisualization}
of Appendix \ref{appendix:AdditionalFigures}.

\LinesNumbered \RestyleAlgo{boxruled}\DontPrintSemicolon\newcommand\mycommfont[1]{\emph{#1}}
\SetCommentSty{mycommfont}
\begin{algorithm}[tb]
\caption{\label{alg:FLDeadline}Facility Location with Deadlines}

\SetKwProg{Fn}{Function}{}{end}
\SetKwProg{EFn}{Event Function}{}{end}
\SetKwFunction{UponDeadline}{UponDeadline}
\SetKwFunction{Open}{Open}
\SetKwFunction{Invest}{Invest}
\SetKwFunction{Explore}{Explore}

\textbf{Initialization.}

Initialize $c_{u}\leftarrow0$ for any node $u\in T\backslash\{r\}$.

Declare $b_{u}$ for any node $u\in T$.

\;

\EFn(\tcp*[h]{Upon expired deadline of pending request at time $t$}){\UponDeadline{}}{

\Explore{$r$}

}

\;

\Fn{\Explore{$u$}}{

\Open{$u$}

\tcp*[h]{Spend a budget of $f$ on charging child node counters}

set $b_{u}\leftarrow f$

\While{$b_{u}\neq0$ \normalfont{\textbf{and}} there remain pending
requests in $T_{u}$}{

\tcp*[h]{Consider pending requests by increasing deadline}

let $q$ be the pending request with earliest deadline in $T_{u}$

let $v$ be the child of $u$ on the path to $v_{q}$

call \Invest{$u$,$v$,$\delta(u,v_{q})$}

\lIf{$c_{v}=f$}{set $c_{v}\leftarrow0$ $\texttt{\textbf{;}}$
call \Explore{v}.} 

\lIf{$q$ is still pending}{connect $q$ to facility at $u$}

}

}
\end{algorithm}
\LinesNumbered \RestyleAlgo{boxruled}\DontPrintSemicolon
\begin{algorithm}[tb]
\caption{Facility Location with Deadlines}

\ContinuedFloat   \caption*{Facility Location with Deadlines (cont.)}
\SetKwProg{Fn}{Function}{}{end}
\SetKwFunction{UponDeadline}{UponDeadline}
\SetKwFunction{Open}{Open}
\SetKwFunction{Invest}{Invest}
\SetKwFunction{Explore}{Explore}

\Fn{\Open{$u$}}{

open facility at $u$.

\lIf{$u$ is a leaf node}{connect to facility all pending requests
on $u$}

}

\;

\Fn(\tcp*[h]{%
\parbox[t]{0.6\textwidth}{%
Invests in $v$'s counter either $x$, or until $v$'s counter \\
is full, or until $u$ is out of budget.%
}}){\Invest{$u$,$v$,$x$}}{

let $y\leftarrow\min(x,b_{u},f-c_{v})$

increase $c_{v}$ by $y$

decrease $b_{u}$ by $y$

\Return $y$

}
\end{algorithm}

\subsection{Analysis}

Fix any instance of online facility location with deadlines on a $\left(\ge2\right)$-HST.
Let $\opt$ be any solution to the instance. We denote by $\opt^{B}$
the total buying cost of $\opt$, and by $\opt^{C}$ the total connection
cost of $\opt$. Denote by $\alg$ the total cost of the solution
of Algorithm \ref{alg:FLDeadline} for this problem. In this subsection,
we prove the following theorem.
\begin{thm}
\label{thm:FLDeadline_HSTTheorem}$\alg\le O(D^{2})\cdot\opt^{B}+O(D)\cdot\opt^{C}$.
\end{thm}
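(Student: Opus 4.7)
My plan is to split the proof into two stages. \emph{Stage 1} bounds $\alg$ in terms of the number of root-level explorations $N_r$ (the number of times \Explore$(r)$ is invoked); \emph{Stage 2} bounds $N_r f$ in terms of $\opt$ via the paper's preflow/charging framework.

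\textbf{Stage 1 (counting explorations).} Let $N_u$ denote the number of times \Explore$(u)$ is called, and set $N = \sum_u N_u$. Every non-root call \Explore$(v)$ is triggered exclusively by $c_v$ filling up to $f$, which consumes exactly $f$ units of budget from the enclosing \Explore$(p(v))$. Since each \Explore$(u)$ has total invest budget $f$ divided among all of its children, I obtain the local inequality $\sum_{v \text{ child of } u} N_v \le N_u$. Iterating on depth yields $\sum_{u : \mathrm{depth}(u) = d} N_u \le N_r$ for every $d$, and summing over $d$ gives $N \le (D+1) N_r$. This bounds $\alg^B = N f \le O(D)\cdot N_r f$. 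For the connection cost, every request connected to $u$ during an exploration of $u$ contributes the same amount $\delta(v_q,u)$ both as an invest into its child counter and as the actual connection; hence the total connection cost generated by all explorations of $u$ equals $\bigl(N_u - \sum_{v\text{ child of }u} N_v\bigr) f$, and telescoping over all $u$ gives $\alg^C \le N_r f$. Overall $\alg \le O(D)\cdot N_r f$.

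\textbf{Stage 2 (bounding $N_r f$ against $\opt$).} Each root exploration is triggered by a request $q_i$ whose deadline is expiring while $q_i$ is still pending in the algorithm, so $\opt$ must already have paid $f$ to open a facility plus a connection cost $\delta(v_{q_i},\cdot)$ in order to serve $q_i$. If the facilities that $\opt$ uses for distinct triggers were themselves distinct, I would immediately get $N_r f \le \opt^B$; the subtlety is that $\opt$ can serve arbitrarily many pending triggers simultaneously with a single facility. To handle this I invoke the preflow/coloring framework from the overview: build a directed charging graph whose nodes are (tree-node, exploration-instance) pairs with a top layer corresponding to the $N_r$ root explorations; introduce a source whose output is $O(D\cdot\opt^B) + O(\opt^C)$; and propagate this output downward using coloring rules driven by $\opt$'s facility placements and by the DFS structure of the algorithm. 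The rules should be designed so that the sum of excesses at the top layer is at least $N_r f$; since all excesses are non-negative, this yields $N_r f \le O(D)\cdot\opt^B + O(1)\cdot\opt^C$.

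Combining the two stages gives $\alg \le O(D)\cdot N_r f \le O(D^2)\cdot\opt^B + O(D)\cdot\opt^C$, matching the theorem.

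\textbf{The main obstacle} is Stage 2, and specifically the design of the coloring so that the coefficient on $\opt^B$ is $O(D)$ rather than growing like $2^D$. The crucial structural feature to exploit is the HST's geometric edge-weight decay with depth, which is what will allow a single $\opt$ facility to be billed for at most $O(D)$ worth of root-exploration triggers before further charges must come from $\opt^C$. Making this precise inside the preflow framework -- proving that the coloring procedure can simultaneously saturate all top-layer charging nodes while remaining within the source's budget -- is the technical heart of the proof; Stage 1 by contrast is a direct accounting argument that relies only on the strict per-exploration budget of $f$.
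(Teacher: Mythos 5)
Your two-stage skeleton matches the paper's own decomposition: Stage~1 corresponds to Lemma \ref{lem:FLDeadline_ALG} (the paper proves $\alg\le 3(D+1)kf$, where $k=N_r$), and Stage~2 corresponds to Lemma \ref{lem:FLDeadline_OPT} ($kf\le 2(D+1)\opt^B+4\opt^C$). However, there are two issues, one minor and one serious.

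The minor issue is in Stage~1. Your claim that the total connection cost charged to a node $u$ equals $\bigl(N_u-\sum_{v}N_v\bigr)f$ does not hold. When the exploration budget $b_u$ hits zero in the middle of considering a request $q$, the request is still connected at cost $\delta(u,v_q)$ but only a fraction of that amount was actually invested; this can happen once per call to $\Explore(u)$, adding up to $f$ of uncharged connection cost per exploration. The paper's Proposition \ref{prop:FLDeadline_ALGBoundedByCounters} absorbs this by bounding connection cost per exploration by $2f$, giving $\alg^C\le 2(D+1)N_r f$ rather than your tighter $N_r f$. Your final $\alg\le O(D)N_r f$ still stands, but the intermediate claim is incorrect.

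The serious issue is Stage~2, where you explicitly defer the ``technical heart.'' What you describe as still needing to be done is in fact the bulk of the theorem: defining the charging nodes $\mu=(u,[\tau_1,\tau_2])$ and the quantities $c_b(\mu),c_c(\mu)$ and $\lambda_\mu$; proving Lemma \ref{lem:FLDeadline_ChargeLBOpt} that $\sum_\mu c(\mu)\le 2(D+1)\opt^B+4\opt^C$; specifying the coloring procedure (Procedure \ref{proc:FLDeadline_PreflowBuilder}) and proving preflow validity (Lemma \ref{lem:FLDeadline_ValidPreflow}, whose Case~3 is the nontrivial combinatorial argument); and proving the root-excess bound (Lemma \ref{lem:FLDeadline_RootExcesses}). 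Saying ``the rules should be designed so that the sum of excesses at the top layer is at least $N_r f$'' restates the goal without supplying an argument for why such a coloring exists and is consistent. Moreover, your intuition about where the HST's geometric edge-weight decay matters is misplaced: in the paper's Lemma \ref{lem:FLDeadline_ChargeLBOpt}, the geometric decay is used to keep the coefficient on $\opt^C$ at $O(1)$ (via a geometric series along a leaf path), while the $O(D)$ coefficient on $\opt^B$ simply comes from each facility lying in at most $D+1$ subtrees; the danger of an exponential blowup is controlled by the strict $f$-budget discipline and the DFS/counter structure, not by edge-weight decay per se. As written, your proposal identifies the right scaffolding but leaves the load-bearing part of the argument unproved.
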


To prove Theorem \ref{thm:FLDeadline_HSTTheorem}, we show validity
of the algorithm, an upper bound for $\alg$ and a lower bound for
$\opt$.

Throughout the analysis, we denote by $k$ the number of calls to
$\UponDeadline$ made by the algorithm. We also denote by $t_{1},...,t_{k}$
the times of these $k$ calls, by increasing order.

\subsubsection{Validity of the Algorithm}

The following proposition and its corollary show that the algorithm
is valid. 
\begin{prop}
\label{prop:FLDeadline_ConsideredMeansServed}Let $q$ be a request
considered in a call to $\Explore(u)$. Then $q$ is served when $\Explore(u)$
returns.
\end{prop}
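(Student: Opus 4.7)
The plan is to prove the proposition by induction on the height of $u$. The base case is when $u$ is a leaf: the very first action of $\Explore(u)$ is $\Open(u)$, which connects all pending requests at $u$ to the newly opened facility. Since $T_u=\{u\}$, no pending requests remain in $T_u$ when the while loop is reached, so no request is considered in $\Explore(u)$ and the claim holds vacuously.

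For the inductive step I assume the claim for every strict descendant of $u$ and take a request $q$ selected in some iteration of $\Explore(u)$'s while loop. Let $v$ be the child of $u$ on the path to $v_q$; the iteration calls $\Invest(u,v,\delta(u,v_q))$. If $c_v$ is left strictly below $f$, then no recursion occurs, and since $\Invest$ itself serves no requests, $q$ is still pending when the iteration reaches the final conditional and is connected directly to the facility opened at $u$ by the initial $\Open(u)$. If instead $c_v$ is filled to $f$, the algorithm resets $c_v$ and recurses into $\Explore(v)$; my aim is then to show that $q$ is selected in the first iteration of $\Explore(v)$'s while loop, after which the inductive hypothesis applied to $\Explore(v)$ delivers that $q$ is served before $\Explore(v)$ returns, hence before $\Explore(u)$ returns.

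To verify that $q$ is indeed the first request considered by the recursive call, I will use two easy observations. First, $\Invest$ modifies only $b_u$ and $c_v$, so no request has been served between $q$'s selection in $\Explore(u)$ and the start of $\Explore(v)$, and therefore $q$ is still pending at that point. Second, $q$ had the earliest deadline among the pending requests in $T_u\supseteq T_v$, so it still has the earliest deadline among pending requests in $T_v$. Combining these with the fact that the first iteration of $\Explore(v)$'s loop does execute (because $b_v$ is freshly set to $f>0$ and $q\in T_v$ is pending), $q$ is precisely the request chosen in that iteration.

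The main obstacle I expect is exactly this verification in the recursive case; once it is nailed down, the induction goes through cleanly. A minor side case is when $v$ itself is a leaf of $T$: then $v=v_q$ and $q$ is served directly by the $\Open(v)$ call at the start of $\Explore(v)$, which is consistent with the base case of the induction applied to $v$.
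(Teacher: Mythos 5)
Your proof is correct, but it takes a substantially different (and much longer) route than the paper. The paper's proof is a one-liner: the very last step of each iteration of $\Explore(u)$'s loop is a check ``if $q$ is still pending, connect $q$ to the facility at $u$,'' so after that iteration finishes --- regardless of what the recursive call did or did not do --- $q$ is guaranteed served, and the iteration must finish before $\Explore(u)$ returns. No induction, no case split, no analysis of the recursive call is needed; the conditional is a failsafe that catches $q$ unconditionally. Your induction on the height of $u$ and the careful argument that $q$ is the first request selected by $\Explore(v)$ when $c_v$ hits $f$ is a correct proof, but it proves a strictly stronger fact than the proposition asserts (namely that, in the recursive case, $q$ is served by $\Explore(v)$ itself rather than by the fallback connection). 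That stronger fact is exactly what the paper establishes separately in Proposition~\ref{prop:FLDeadline_InvestServes}, where it \emph{is} needed for the cost analysis; for the present proposition the paper deliberately leans only on the explicit safety net in the code. So you are not wrong, but you have folded the content of a later proposition into this one and missed the far simpler direct argument.
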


\begin{proof}
This is guaranteed by the condition check at the end of the main loop
in $\Explore$.
\end{proof}
\begin{cor}
Every request is served by its deadline. That is, the algorithm is
valid.
\end{cor}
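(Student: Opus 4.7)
The approach is to derive the corollary directly from Proposition~\ref{prop:FLDeadline_ConsideredMeansServed}, which already does the heavy lifting: it guarantees that any request considered inside a call $\Explore(u)$ is served by the time that call returns. Thus the only real task is to argue that whenever a request's deadline arrives, the subsequent $\Explore(r)$ invocation actually considers that request, and does so in time.

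I would argue this by induction on the order of deadlines. Suppose towards contradiction that $q$ is the request with the smallest deadline $d_q$ that is not served by time $d_q$. By the induction hypothesis, every request $q'$ with $d_{q'} < d_q$ is served before its deadline, hence before time $d_q$. In particular, at time $d_q$ the event function $\UponDeadline$ fires (triggered by $q$ itself), and since all algorithmic actions are instantaneous, the resulting call $\Explore(r)$ starts and finishes at time $d_q$.

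Next I would verify that $q$ is considered inside this $\Explore(r)$ call. At the moment the while loop of $\Explore(r)$ is first evaluated, $b_r$ has just been set to $f > 0$ and $q$ is a pending request in $T_r = T$, so the loop guard holds and the body executes. Among all pending requests at that instant, $q$ has the smallest deadline: any request with strictly smaller deadline is already served by the induction hypothesis, and by the distinct-deadlines assumption no other request shares deadline $d_q$. Hence the first iteration of the while loop selects $q$ as the request to consider.

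By Proposition~\ref{prop:FLDeadline_ConsideredMeansServed} applied to this call, $q$ is served when $\Explore(r)$ returns, i.e.\ at time $d_q$, contradicting our choice of $q$. The only delicate point is the interplay with time: one must read the model so that $\UponDeadline$ fires at time $d_q$ rather than strictly after, and so that the recursive unfolding of $\Explore$ is instantaneous. Both are standard conventions for this type of model, and once they are granted the induction closes cleanly; the rest is just an unrolling of the while-loop semantics.
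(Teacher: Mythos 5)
Your proof is correct and takes essentially the same approach as the paper: the paper's one-line argument ("upon the deadline of $q$, $\Explore(r)$ is called, and immediately considers $q$") implicitly relies on the same fact you establish explicitly by minimal-counterexample induction, namely that at time $d_q$ the request $q$ has the earliest deadline among pending requests, so the first iteration of the while loop selects it and Proposition~\ref{prop:FLDeadline_ConsideredMeansServed} finishes the job. Your version simply spells out the inductive justification the paper leaves implicit.
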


\begin{proof}
Observe that upon the deadline of a request $q$, $\Explore(r)$ is
called, and immediately considers $q$. Proposition \ref{prop:FLDeadline_ConsideredMeansServed}
concludes the proof.
\end{proof}

\subsubsection{Upper Bounding $\protect\alg$}

We now proceed to bound $\alg$ by proving the following lemma.
\begin{lem}
\label{lem:FLDeadline_ALG}$\alg\le3\cdot(D+1)\cdot kf$.
\end{lem}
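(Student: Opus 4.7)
The plan is to bound $\alg$ as a sum over individual $\Explore$ invocations and multiply by an upper bound on the number of invocations. Specifically, I will establish two claims: (i) the total number of $\Explore$ calls across the entire execution is at most $(D+1)k$; and (ii) each single $\Explore(u)$ invocation contributes at most $3f$ of direct cost, counting the facility it opens at $u$ plus the requests it connects directly to that facility. Multiplying these two bounds gives $\alg \le 3(D+1)kf$.

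For (i), let $N_d$ denote the number of $\Explore$ invocations on nodes at depth $d$. Since $\UponDeadline$ is invoked exactly $k$ times and each such invocation triggers exactly one $\Explore(r)$, we have $N_0 = k$. For $d \ge 1$, an $\Explore(v)$ invocation on a depth-$d$ node $v$ occurs precisely when the counter $c_v$ reaches $f$ (after which $c_v$ is reset to $0$), so the number of such invocations equals the total amount invested into $c_v$ divided by $f$. The only source of investment into depth-$d$ counters is $\Invest$ calls from $\Explore$ invocations at depth $d-1$, and each such invocation strictly adheres to its budget $b_u$, initialized to $f$. Summing over all depth-$(d-1)$ invocations, the total investment into depth-$d$ counters is at most $N_{d-1} f$, and therefore $N_d \le N_{d-1}$. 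By induction $N_d \le k$ for every $d \in \{0,\dots,D\}$, so $\sum_{d=0}^{D} N_d \le (D+1)k$.

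For (ii), fix an $\Explore(u)$ invocation. Its buying cost is exactly $f$, so it remains to show that the direct connection cost (from requests connected to the facility at $u$ during this invocation) is at most $2f$. I will partition the requests considered in the while loop by the behavior of their $\Invest$ call. A request whose $\Invest$ fully transfers the amount $\delta(u, v_q)$ into $c_v$ either causes a recursive $\Explore(v)$ (and is then served there, by Proposition \ref{prop:FLDeadline_ConsideredMeansServed}, contributing no direct connection cost), or is connected to $u$ at a cost equal to the amount just invested; the sum of all such investments over the call is bounded by the budget $f$. The remaining case is when $\Invest$ is truncated by $b_u$ rather than by $\delta(u, v_q)$ or $f - c_v$; this drains $b_u$ to zero and terminates the while loop, so it can occur for at most one request per $\Explore(u)$, whose connection cost is $\delta(u, v_q) \le f$ by the assumption that every root-to-leaf path has total weight at most $f$. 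Hence the direct connection cost is at most $f + f = 2f$, and the total direct cost of the invocation is at most $3f$.

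Combining (i) and (ii) yields $\alg \le 3f \cdot (D+1)k$, as claimed. The main subtlety is the argument in (ii): because $\Invest$ caps its transfer at the remaining budget, a request can incur a connection cost strictly larger than what is actually invested on its behalf, so one cannot simply identify connection cost with investment. The resolution is to observe that such an ``overdraft'' request can arise at most once per exploration (since it immediately exhausts $b_u$) and to use the root-to-leaf weight assumption to cap its excess at $f$.
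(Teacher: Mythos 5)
Your proof is correct and takes essentially the same approach as the paper: the paper tracks cumulative counter values $\bar{c}_u$ and shows their depth-sums are non-increasing, which is equivalent to your direct count $N_d \le N_{d-1}$ of $\Explore$ invocations per depth, and the per-invocation cost bound of $3f$ (one $f$ for opening, at most $f$ for connections whose $\Invest$ returns the full amount, and at most $f$ for the single budget-truncated request) is precisely the argument in Proposition~\ref{prop:FLDeadline_ALGBoundedByCounters}.
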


The proof of Lemma \ref{lem:FLDeadline_ALG} is through providing
an upper bound for the cumulative amount by which counters are raised
in the algorithm, then bounding the cost of the algorithm by that
cumulative amount.
\begin{obs}
\label{obs:FLDeadline_RaisesCounterAtMostF}Observe any node $u$,
and consider a call to $\Explore(u)$. Denote by $x$ the total amount
by which $\Explore(u)$ increases the counters of its children nodes
through calls to $\Invest$. Then we have that $x\le f$. Moreover,
if there exists a pending request in $T_{u}$ after the return of
$\Explore(u)$, then $x=f$.
\end{obs}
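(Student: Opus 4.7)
The plan is to treat the local variable $b_u$ as a \emph{budget ledger} that records how much of the initial allowance of $f$ remains unspent in the current call to $\Explore(u)$. The key structural fact is that the only statement inside $\Explore(u)$ that alters $b_u$ is a call to $\Invest(u, v, \cdot)$, and each such call decreases $b_u$ by exactly the amount $y$ by which it increases the child counter $c_v$; in particular, recursive invocations $\Explore(v)$ triggered from within the loop operate on the distinct budget variable $b_v$ (each $\Explore$ call begins by assigning $f$ to the budget of \emph{its own} node) and therefore leave $b_u$ untouched. Both parts of the observation then follow from elementary reasoning about the termination of the while-loop, and the main point to verify carefully is precisely this scoping of the budget variables across recursive calls.

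For the first claim, I would note that the assignment $y \leftarrow \min(x, b_u, f - c_v)$ inside $\Invest$ guarantees $0 \le y \le b_u$, so the invariant $b_u \ge 0$ is preserved throughout the call. Writing $b_u^{\mathrm{end}}$ for the value of $b_u$ at the moment $\Explore(u)$ returns, the total amount charged to child counters is the total decrement applied to $b_u$, namely $x = f - b_u^{\mathrm{end}} \le f$.

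For the second claim, suppose that some request $q \in T_u$ is still pending when $\Explore(u)$ returns. Since the set of requests pending in $T_u$ can only shrink during the call (requests are only served, never created, and the execution is instantaneous), the same $q$ is still pending at the moment the while-loop's guard is re-evaluated for the last time. Therefore the loop cannot have exited because the ``remain pending requests in $T_u$'' clause failed; it must have exited because $b_u = 0$. Substituting into the identity $x = f - b_u^{\mathrm{end}}$ then yields $x = f$, as required.
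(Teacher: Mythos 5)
Your proof is correct. The paper leaves this Observation unproved, treating it as immediate from the definition of $\Invest$ and the initialization $b_u \leftarrow f$; your writeup supplies exactly the bookkeeping the paper elides, and you correctly identify the two points one must actually check — that $b_u$ is decremented by precisely the amount by which child counters rise, and that recursive calls $\Explore(v)$ touch only $b_v$ (a distinct variable, since $v$ is a strict descendant of $u$), so $b_u$ is never clobbered during $\Explore(u)$'s execution. The instantaneity remark in the second part is also the right justification for why the loop guard failing on the ``pending requests'' clause is incompatible with a request remaining pending after return.
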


From the previous observation, the following observation follows.
\begin{obs}
\label{obs:FLDeadline_SingleService}For any $u$, $\Explore(u)$
is called at most once at any time $t$.
\end{obs}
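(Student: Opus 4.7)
The plan is to prove Observation \ref{obs:FLDeadline_SingleService} by induction on the depth of $u$ in the tree, proceeding from the root downward.

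For the base case $u=r$, the only place $\Explore(r)$ appears in Algorithm \ref{alg:FLDeadline} is inside the event function $\UponDeadline$. Since the problem statement assumes all deadlines are distinct, at any fixed time $t$ at most one deadline can expire, so $\UponDeadline$ fires at most once at time $t$, and hence $\Explore(r)$ is called at most once at time $t$.

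For the inductive step, consider a non-root node $u$ with parent $p(u)$. Inspection of the algorithm shows $\Explore(u)$ is invoked only from within the main loop of a call to $\Explore(p(u))$, precisely when the check $c_v=f$ succeeds for $v=u$. By the inductive hypothesis there is at most one call to $\Explore(p(u))$ at time $t$, so it suffices to prove that within a single execution of $\Explore(p(u))$ there can be at most one recursive call to $\Explore(u)$.

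The crux is a counting argument on the counter $c_u$. Each recursive call to $\Explore(u)$ is immediately preceded by an $\Invest$ step that fills $c_u$ to $f$, after which $c_u$ is reset to $0$. Two such recursive calls occurring within a single execution of $\Explore(p(u))$ would therefore require first raising $c_u$ from some value $v_{0}\in[0,f)$ up to $f$ (an investment of $f-v_{0}$), and subsequently raising it from $0$ back to $f$ (a further investment of $f$), for a total investment strictly greater than $f$ into the counter of $u$ alone. However, Observation \ref{obs:FLDeadline_RaisesCounterAtMostF} caps the total investment made by $\Explore(p(u))$ across \emph{all} of its children's counters at $f$, a contradiction. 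The only delicate point is confirming that no other invocation path for $\Explore(u)$ exists; since the entire DFS cascade triggered by an $\UponDeadline$ event is processed atomically at time $t$ and the only in-algorithm call site of $\Explore(u)$ for non-root $u$ is the single line identified above, the induction goes through cleanly.
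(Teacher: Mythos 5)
Your proof is correct and takes essentially the same approach the paper intends: the paper states that the observation ``follows'' from Observation \ref{obs:FLDeadline_RaisesCounterAtMostF} without elaboration, and you supply exactly the missing induction on depth plus the budget-exceedance counting argument that this one-line remark is gesturing at.
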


Using the last observation, we refer to a call to $\Explore(u)$ at
time $t$ by $\Explore_{t}(u)$. 

Observe the state of each counter in the algorithm over time. The
counter undergoes phases, such that in the start of each phase its
value is $0$. The counter increases in value during the phase until
it reaches $f$, and is then reset to $0$, triggering a service and
the end of the phase. 

We define a virtual counter $\bar{c}_{u}$ which contains the cumulative
value of $c_{u}$. That is, whenever $c_{u}$ increases, $\bar{c}_{u}$
increases by the same amount, but $\bar{c}_{u}$ is never reset when
$c_{u}$ is reset. For the sake of analysis, we also consider a virtual
counter $\bar{c}_{r}$, which is raised by $f$ whenever $\Explore(r)$
is called.

We define $\bar{C}_{j}=\sum_{\text{node \ensuremath{u} at depth \ensuremath{j}}}\bar{c}_{u}$.
Observe that $\bar{C}_{0}=\bar{c}_{r}$. 
\begin{prop}
\label{prop:FLDeadline_CounterDecreaseWithLevel}For every $j\in[D]$,
$\bar{C}_{j}\le\bar{C}_{j-1}$.
\end{prop}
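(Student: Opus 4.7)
The plan is to chain two inequalities relating both $\bar{C}_j$ and $\bar{C}_{j-1}$ to a common intermediate quantity: the total number of $\Explore$ calls at depth $j-1$. Let $N_{u'}$ denote the number of times $\Explore(u')$ is invoked during the algorithm's run, and write $N_{j-1} = \sum_{u' \text{ at depth } j-1} N_{u'}$. I would prove
\[
\bar{C}_j \;\le\; f \cdot N_{j-1} \;\le\; \bar{C}_{j-1},
\]
which immediately gives the proposition.

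For the upper bound on $\bar{C}_j$, I would appeal directly to Observation \ref{obs:FLDeadline_RaisesCounterAtMostF}. Every increment that is ever added to the real counter $c_u$ of a depth-$j$ node $u$ (and hence to $\bar{c}_u$) is performed by an $\Invest$ call issued from inside some $\Explore(u')$ with $u' = p(u)$. Observation \ref{obs:FLDeadline_RaisesCounterAtMostF} tells us that a single call to $\Explore(u')$ invests a total of at most $f$ across all its children's counters. Summing over all $N_{u'}$ calls and over all nodes $u'$ at depth $j-1$ gives $\bar{C}_j \le f \cdot N_{j-1}$.

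For the lower bound $f \cdot N_{j-1} \le \bar{C}_{j-1}$, I would separate the root case from the non-root case. For any non-root $u'$, the algorithm calls $\Explore(u')$ if and only if $c_{u'}$ has just reached $f$, whereupon $c_{u'}$ is reset to $0$; hence across $N_{u'}$ calls the virtual counter $\bar{c}_{u'}$ absorbs $N_{u'} \cdot f$ from full fills, plus the nonnegative value of $c_{u'}$ currently in progress, giving $\bar{c}_{u'} \ge N_{u'} f$. For the root, the definition of $\bar{c}_r$ declares that it is raised by $f$ on every call to $\Explore(r)$, so $\bar{c}_r = N_r f$ exactly. Summing over all nodes at depth $j-1$ yields $\bar{C}_{j-1} \ge f \cdot N_{j-1}$, and combining this with the previous bound proves the proposition.

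I do not anticipate a serious obstacle; the whole argument is flow accounting made possible by the algorithm's rigid budget discipline (exactly $f$ per exploration, never more). The only point that requires a moment of care is that the root contributes to its virtual counter by a different mechanism than non-root counters do, which is precisely why $\bar{c}_r$ was introduced by a separate convention in the first place.
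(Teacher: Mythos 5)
Your proof is correct and takes essentially the same approach as the paper's, which also bounds $\bar{C}_j$ by $f$ per call to $\Explore$ at depth $j-1$ (via Observation \ref{obs:FLDeadline_RaisesCounterAtMostF}) and notes that each such call is preceded by raising $\bar{c}_u$ by $f$. Your version simply makes the intermediate count $N_{j-1}$ explicit, which the paper leaves implicit.
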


\begin{proof}
Observe that the counters at depth $j$ are raised only upon a call
to $\Explore(u)$ for a node $u$ at depth $j-1$. $\texttt{\ensuremath{\Explore}}(u)$
is only called after $\bar{c}_{u}$ is raised by $f$, and every such
call raises counters at depth $j$ by at most $f$ (using Observation
\ref{obs:FLDeadline_RaisesCounterAtMostF}).
\end{proof}
\begin{cor}
\label{cor:FLDeadline_CountersBoundedByDKF}$\sum_{u\in T}\bar{c}_{u}\le(D+1)kf$.
\end{cor}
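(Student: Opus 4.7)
The plan is to decompose the sum by tree depth and iterate the inequality of Proposition \ref{prop:FLDeadline_CounterDecreaseWithLevel}. Writing
\[
\sum_{u \in T} \bar{c}_u \;=\; \sum_{j=0}^{D} \bar{C}_j,
\]
it suffices to show that each level sum $\bar{C}_j$ is bounded by $\bar{C}_0 = \bar{c}_r$, and then to evaluate $\bar{c}_r$ explicitly.

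First I would iterate Proposition \ref{prop:FLDeadline_CounterDecreaseWithLevel}: applying $\bar{C}_j \le \bar{C}_{j-1}$ inductively for $j=1,\dots,D$ yields $\bar{C}_j \le \bar{C}_0$ for every $j \in \{0,1,\dots,D\}$. Thus the whole sum is bounded by $(D+1)\bar{C}_0 = (D+1)\bar{c}_r$.

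Next I would evaluate $\bar{c}_r$ directly from its definition. The virtual counter $\bar{c}_r$ is raised by exactly $f$ each time $\Explore(r)$ is invoked. Inspecting Algorithm \ref{alg:FLDeadline}, the only call site for $\Explore(r)$ is the event handler $\UponDeadline$, which by definition of $k$ is triggered exactly $k$ times, at the times $t_1,\dots,t_k$. Hence $\bar{c}_r = kf$, and combining with the previous bound gives $\sum_{u \in T} \bar{c}_u \le (D+1)kf$, as required.

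There is no real obstacle here: the corollary is essentially a bookkeeping consequence of the already-established level-monotonicity (Proposition \ref{prop:FLDeadline_CounterDecreaseWithLevel}) together with the trivial counting of $\Explore(r)$ invocations. The only point that warrants care is making sure that $\bar{c}_r$ is indeed incremented only on calls to $\Explore(r)$ and not also via some $\Invest$ call from above (which is vacuous for the root), so that the identity $\bar{c}_r = kf$ is exact rather than merely an upper bound.
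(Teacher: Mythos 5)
Your proof is correct and follows the same route as the paper: decompose the sum by depth, apply Proposition \ref{prop:FLDeadline_CounterDecreaseWithLevel} to bound every level by $\bar{C}_0$, and evaluate $\bar{C}_0 = \bar{c}_r = kf$ from the definition of $k$. The extra care you take to justify $\bar{c}_r = kf$ (noting $\Explore(r)$ is called only from $\UponDeadline$) is sound but just makes explicit an observation the paper states directly.
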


\begin{proof}
Observe that $\bar{C}_{0}=\bar{c}_{r}=kf$. Using Proposition \ref{prop:FLDeadline_CounterDecreaseWithLevel},
we have that 
\[
\sum_{u\in T}\bar{c}_{u}=\sum_{j=0}^{D}\bar{C}_{j}\le\sum_{j=0}^{D}\bar{C}_{0}=(D+1)kf
\]
\end{proof}
\begin{prop}
\label{prop:FLDeadline_InvestServes}Suppose the function $\Explore(u)$
calls $\Invest(u,v,x)$ when considering request $q$. Then at least
one of the following holds:
\begin{enumerate}
\item $\Invest(u,v,x)$ returns $x$.
\item $b_{u}=0$ after the return of $\Invest$.
\item The condition check in $\Explore$ of whether $q$ is still pending
fails. 
\end{enumerate}
\end{prop}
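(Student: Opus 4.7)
The proof I would give is a case analysis on which of the three arguments to the $\min$ inside $\Invest$ determines the returned value $y = \min(x, b_u, f - c_v)$. The case split is exhaustive, and each case directly matches one of the three conclusions.

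First I would dispose of the two easy cases. If $y = x$, the function returns exactly $x$, so conclusion (1) holds. If $y = b_u$ (and $y < x$), then $\Invest$ decreases $b_u$ by $b_u$, leaving $b_u = 0$ after the call, giving conclusion (2). These are immediate from reading off the body of $\Invest$.

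The real content is the remaining case, where $y = f - c_v$ strictly dominates. Here $\Invest$ brings $c_v$ up to exactly $f$, so the very next line of $\Explore(u)$ resets $c_v$ and calls $\Explore(v)$ before the pending-check on $q$ is ever reached. The plan is to argue that this recursive $\Explore(v)$ must consider $q$, so by Proposition \ref{prop:FLDeadline_ConsideredMeansServed} it serves $q$, and therefore the subsequent ``$q$ still pending'' check in $\Explore(u)$ fails, yielding conclusion (3).

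To see that $\Explore(v)$ considers $q$, I would note two facts. First, $v$ is by construction the child of $u$ on the path from $u$ to $v_q$, so $q$ lies in $T_v$ and in particular $T_v$ contains a pending request when $\Explore(v)$ is entered, so the main loop executes. Second, $q$ was chosen in $\Explore(u)$ as the pending request of earliest deadline in $T_u$; since $T_v \subseteq T_u$, $q$ remains the pending request of earliest deadline in $T_v$ at that moment. Hence the first iteration of the loop inside $\Explore(v)$ selects $q$, and Proposition \ref{prop:FLDeadline_ConsideredMeansServed} applied to $\Explore(v)$ guarantees that $q$ is served by the time $\Explore(v)$ returns. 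There is no real obstacle here beyond keeping the case split clean; the only point that needs a sentence of care is the ``earliest deadline'' preservation when descending from $T_u$ to $T_v$.
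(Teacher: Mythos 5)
Your proof is correct and follows essentially the same route as the paper's: dispatch the two trivial cases via the definition of $\Invest$, then in the remaining case observe that $c_v$ becomes $f$, $\Explore(v)$ is invoked, and $q$ — being the earliest-deadline pending request in $T_u$ and hence in $T_v$ — is immediately considered and served by Proposition \ref{prop:FLDeadline_ConsideredMeansServed}. The only cosmetic difference is that you organize the cases by which argument of the $\min$ is active, while the paper negates conclusions (1) and (2) to deduce $c_v=f$; these are equivalent.
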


\begin{proof}
If $\Invest(u,v,x)$ does not return $x$, and $b_{u}\neq0$ after
its return, then it must be that $c_{v}=f$. In this case, $\Explore(u)$
calls $\Explore(v)$ when checking the condition after the return
of $\Invest$. Request $q$ is the pending request with earliest deadline
under $T_{u}$, and thus also under $T_{v}$. Hence, $q$ is immediately
considered by $\Explore(v)$, and is thus served by the end of $\Explore(v)$
by Proposition \ref{prop:FLDeadline_ConsideredMeansServed}.
\end{proof}
\begin{prop}
\label{prop:FLDeadline_ALGBoundedByCounters}$\alg\le3\cdot\sum_{u\in T}\bar{c}_{u}$
\end{prop}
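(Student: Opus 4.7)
The plan is to decompose the cost of the algorithm call-by-call: I will bound the total cost of each individual $\Explore(u)$ invocation by $3f$, and then bound the total number of $\Explore$ invocations by $\frac{1}{f}\sum_{u \in T}\bar{c}_{u}$. Multiplying gives the result.

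First I would isolate the cost of a single call to $\Explore(u)$. It has two contributions: the facility opening cost, which is exactly $f$, and the connection costs of requests that $\Explore(u)$ ultimately connects to $u$. For each iteration of the main loop, the call to $\Invest(u,v,\delta(u,v_{q}))$ is governed by Proposition \ref{prop:FLDeadline_InvestServes}: whenever $q$ remains pending after $\Invest$ returns (so that $q$ is actually connected to $u$, rather than being served by a recursive exploration), either case 1 of the proposition holds, meaning $\Invest$ returned the full $\delta(u,v_{q})$, or case 2 holds, meaning $b_{u}=0$ after the return. In case 1, the connection cost $\delta(u,v_{q})$ is exactly matched by the increase in $c_{v}$ (and hence $\bar{c}_{v}$); summing these over the whole exploration, they total at most the exploration's overall counter-raising budget, which is at most $f$. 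Case 2 can happen at most once per $\Explore(u)$, since it drives $b_{u}$ to $0$ and terminates the loop; its connection cost is $\delta(u,v_{q})$, bounded by the total edge weight on the downward path from $u$ to the leaf $v_{q}$, which is at most $f$ by the root-to-leaf path assumption in place for the HST. Adding $f$ (opening) $+f$ (case 1 aggregate) $+f$ (case 2) yields $3f$ per call.

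Second I would bound the number of $\Explore(u)$ calls using $\bar{c}_{u}$. For any non-root node $u$, each call to $\Explore(u)$ is triggered only when $c_{u}$ reaches $f$ and is then reset, which corresponds to a contribution of $f$ to $\bar{c}_{u}$; hence the number of calls to $\Explore(u)$ is at most $\bar{c}_{u}/f$. For the root, $\bar{c}_{r}$ is raised by $f$ each time $\Explore(r)$ is invoked, so the number of root explorations equals $\bar{c}_{r}/f$ exactly. Summing, the total number of $\Explore$ invocations across all nodes is at most $\frac{1}{f}\sum_{u\in T}\bar{c}_{u}$.

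Combining the two steps gives $\alg \le 3f \cdot \frac{1}{f}\sum_{u \in T}\bar{c}_{u} = 3\sum_{u \in T}\bar{c}_{u}$. The only subtle point is the one ``budget-exhausting'' connection per exploration that is not covered by any counter increment; this is exactly where the assumption that every root-to-leaf path has weight at most $f$ is used, ensuring that this single overrun connection costs at most $f$ and can be absorbed into the per-exploration budget of $3f$.
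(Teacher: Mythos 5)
Your proof is correct and follows essentially the same route as the paper's: bound the cost of a single $\Explore(u)$ call by $3f$ ($f$ for opening, $f$ for the aggregate of ``$\Invest$ returned full'' connections matched against the budget, and $f$ for the single budget-exhausting connection bounded via the root-to-leaf weight assumption) using Proposition \ref{prop:FLDeadline_InvestServes}, then observe that each $\Explore(u)$ call corresponds to an increase of $f$ in $\bar{c}_{u}$. The only cosmetic difference is that you sum the number of calls across all nodes before multiplying by $3f$, whereas the paper multiplies per-node and then sums; the substance is identical.
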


\begin{proof}
The costs of the algorithm (both opening and connection) are contained
in calls to the function $\texttt{\ensuremath{\Explore}}$ (where
we associate the opening costs in $\Open$ to the $\Explore$ call
that invoked it). In each call to $\texttt{\ensuremath{\Explore}}(u)$,
the algorithm has a cost of $f$ in opening a facility at $u$. 

In addition, the algorithm incurs connection costs, as $\Explore(u)$
connects any considered request if it is still pending at the end
of the loop's iteration. From Proposition \ref{prop:FLDeadline_InvestServes},
if $\Explore(u)$ connects a request $q$, then either the preceding
$\Invest(u,v,\delta(u,v_{q}))$ returned $\delta(u,v_{q})$, or $b_{u}=0$
after the return of that call to $\Invest$. 

Observe the calls to $\Invest(u,v,\delta(u,v_{q}))$ that return $\delta(u,v_{q})$.
For those requests, the connection cost of $q$ is exactly the return
value of $\Invest$. But the return values of $\Invest$ sum to at
most the initial value of $b_{u}$, which is $f$. Thus, connection
costs for those requests sum to at most $f$.

As for calls to $\Invest$ after which we have that $b_{u}=0$, observe
that there is at most one such call, after which the loop in $\Explore$
ends. The connection cost for the request considered in this iteration
is $\delta(u,v_{q})\le f$.

Overall, the connection costs in $\Explore(u)$ sum to at most $2f.$

Thus, in each call to $\texttt{\ensuremath{\Explore}}(u)$, the total
cost of the algorithm (buying and connection) is at most $3f$. Observing
that $\texttt{\ensuremath{\Explore}}(u)$ is called only upon raising
$\bar{c}_{u}$ by $f$ concludes the proof.
\end{proof}
\begin{proof}[Proof of Lemma \ref{lem:FLDeadline_ALG}]
The lemma results directly from Proposition \ref{prop:FLDeadline_ALGBoundedByCounters}
and Corollary \ref{cor:FLDeadline_CountersBoundedByDKF}.
\end{proof}

\subsubsection{Lower Bounding $\protect\opt$}

We now lower bound the cost of $\opt$. 

\paragraph*{Charging nodes and incurred costs.}

We define a charging node to be a tuple $(u,[\tau_{1},\tau_{2}])$
such that $u\in T$, and $\tau_{1},\tau_{2}$ are two subsequent times
in which $\Explore(u)$ is called. We allow the charging nodes of
the form $(u,[\tau_{1},\tau_{2}])$ in which $\tau_{1}=-\infty$ and
$\tau_{2}$ is the first time in which $\Explore(u)$ is called. Similarly,
we allow the charging nodes $(u,[\tau_{1},\tau_{2}])$ in which $\tau_{1}$
is the last time $\Explore(u)$ is called, and $\tau_{2}=\infty$.
We denote by $M$ the set of charging nodes.

For a charging node $\mu=(u,[\tau_{1},\tau_{2}])$, we define the
following.
\begin{enumerate}
\item Let \emph{$c_{b}(\mu)$ }be the \emph{buying cost incurred by} $\opt$
\emph{in $\mu$}, defined to be the total cost at which $\opt$ opened
facilities in $T_{u}$ during $[\tau_{1},\tau_{2}]$.
\item Let $c_{c}(\mu)$ be the \emph{connection cost incurred by $\opt$
in $\mu$, }defined to be $\sum_{q\in Q}\delta(p(u),v_{q})$, where
$Q$ is the set of requests $q$ such that $v_{q}\in T_{u}$, $r_{q}\in[\tau_{1},\tau_{2}]$
and $\opt$ connected $q$ to a facility outside $T_{u}$.
\end{enumerate}
Let $c(\mu)=c_{b}(\mu)+c_{c}(\mu)$ be the \emph{total cost $\opt$
incurred in $\mu$}.

\begin{figure}[tb]
\subfloat[\label{subfig:ChargingNodeVisualization_SingleNode}Charging Nodes
for Single Tree Node]{\includegraphics[width=0.45\columnwidth]{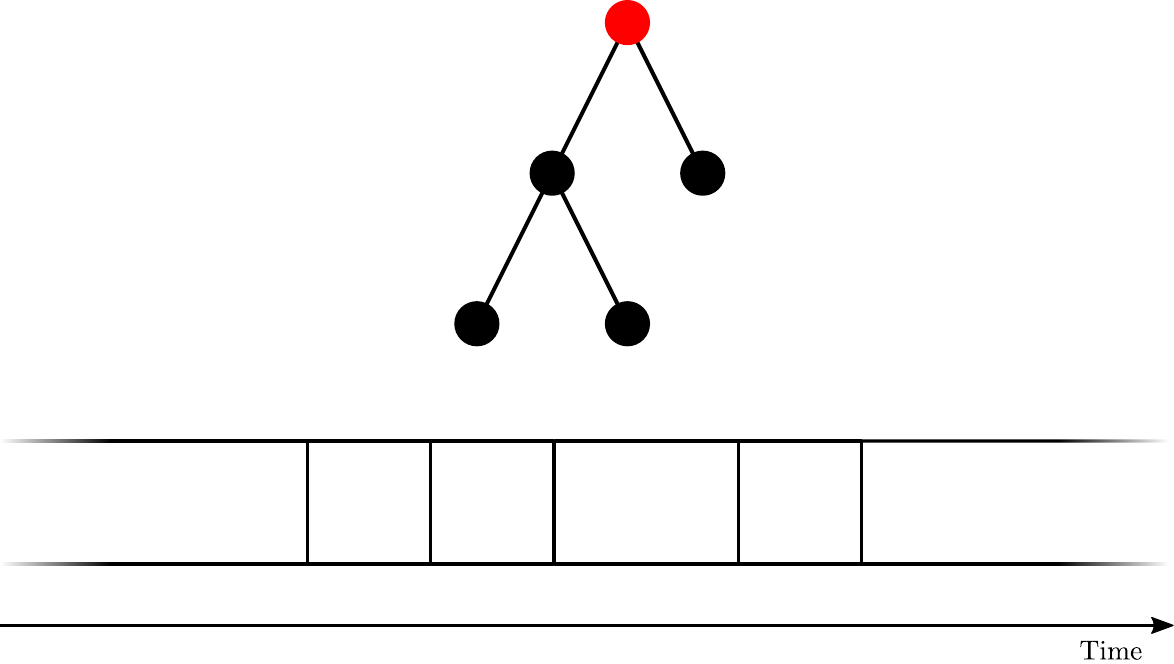}

}\hfill{}\subfloat[\label{subfig:ChargingNodeVisualization_Branch}Charging Nodes for
a Branch]{\includegraphics[width=0.45\columnwidth]{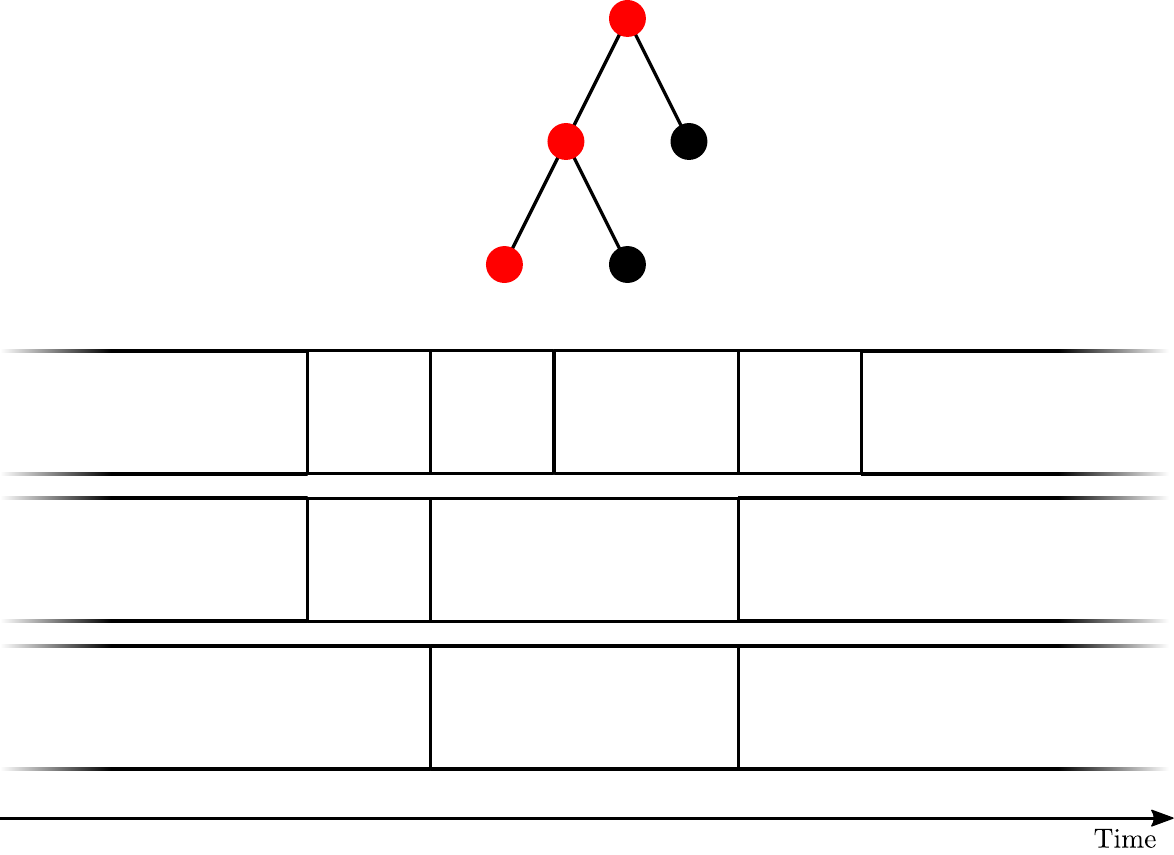}

}\\

Sub-figure \ref{subfig:ChargingNodeVisualization_SingleNode} is a
visualization of the charging nodes corresponding to a single tree
node, displayed over a timeline. Each rectangle is a charging node.
Note the charging node from $-\infty$ and the charging node to $\infty$.

Sub-figure \ref{subfig:ChargingNodeVisualization_Branch} shows the
charging nodes corresponding to a branch in the tree. Observe the
containment of charging node intervals for a certain tree node in
the intervals of descendant tree nodes.

\caption{Visualization of Charging Nodes}
\end{figure}

\begin{lem}
\label{lem:FLDeadline_ChargeLBOpt}$\sum_{\mu}c(\mu)\le2(D+1)\cdot\opt^{B}+4\cdot\opt^{C}$.
\end{lem}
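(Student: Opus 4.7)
The plan is to split $\sum_\mu c(\mu) = \sum_\mu c_b(\mu) + \sum_\mu c_c(\mu)$ and bound each part by charging its contributions back to the individual decisions of $\opt$.

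For the buying term, I fix a single facility that $\opt$ opens at location $v$ at time $t$ (costing $f$). This opening contributes $f$ to $c_b(\mu)$ precisely for those charging nodes $\mu = (u, [\tau_1, \tau_2])$ satisfying $v \in T_u$ and $t \in [\tau_1, \tau_2]$. The first condition forces $u$ to be an ancestor of $v$ (or $v$ itself), yielding at most $D+1$ choices of $u$; the second selects, for each such $u$, exactly one charging node, namely the unique interval between consecutive $\Explore(u)$ calls that contains $t$ (possibly using the $-\infty$ or $\infty$ endpoints). Hence each opening is counted at most $D+1$ times, yielding $\sum_\mu c_b(\mu) \le (D+1)\cdot\opt^B$, comfortably within the claimed $2(D+1)\cdot\opt^B$.

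For the connection term, I fix a request $q$ that $\opt$ connects to a facility at some location $\phi_q$. The charging nodes whose $c_c$ receive a contribution from $q$ are exactly those $\mu = (u, [\tau_1, \tau_2])$ with $v_q \in T_u$, $\phi_q \notin T_u$, and $r_q \in [\tau_1, \tau_2]$. Letting $z$ be the LCA of $v_q$ and $\phi_q$ in $T$, the first two conditions pick out precisely the chain $u_0 = v_q,\ u_1 = p(v_q),\ \ldots,\ u_{k-1}$ with $p(u_{k-1}) = z$, where $k$ is the depth of $v_q$ below $z$; the third condition then selects, for each such $u_i$, a unique charging node contributing $\delta(p(u_i), v_q) = \sum_{j=0}^{i} w(u_j, u_{j+1})$. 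Swapping the order of summation, the total contribution of $q$ is $\sum_{j=0}^{k-1}(k-j)\, w(u_j, u_{j+1})$. I then invoke the $(\ge 2)$-HST property: since $w(u_{j+1}, u_{j+2}) \ge 2\, w(u_j, u_{j+1})$, the edge weights grow geometrically along the path, and the weighted sum is bounded by $w(u_{k-1}, u_k) \cdot \sum_{m\ge 0}(m+1)2^{-m} = 4\, w(u_{k-1}, u_k) \le 4\, \delta(v_q, \phi_q)$. Summing over all $\opt$-connected requests gives $\sum_\mu c_c(\mu) \le 4\cdot\opt^C$, and combining the two bounds proves the claim.

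The hard part will be the connection-cost step: one must identify exactly which charging nodes receive a given request's contribution, and then exploit the geometric decay of edge weights toward the leaves in the $(\ge 2)$-HST so that the resulting sum is bounded by $O(1) \cdot \delta(v_q, \phi_q)$ rather than $\Omega(D) \cdot \delta(v_q, \phi_q)$ -- the latter would be disastrous when propagated through the remainder of the analysis. The buying-cost bound, in contrast, is a straightforward double-counting argument once one observes that, for each ancestor of $v$, exactly one interval between consecutive $\Explore$ calls contains the given time $t$.
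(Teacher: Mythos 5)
Your approach matches the paper's: split into buying and connection contributions, double-count the buying cost along the ancestor chain, and for the connection cost use the LCA decomposition plus the geometric decay of edge weights in the $\left(\ge2\right)$-HST. However, there is a gap in the connection-cost step that happens to be masked by a compensating slack in your geometric-sum bound.

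The charging-node intervals are \emph{closed} intervals $[\tau_1,\tau_2]$, and the paper explicitly notes that a given time $t$ (or $r_q$) can lie in \emph{two} consecutive intervals whenever it coincides with an $\Explore(u_i)$ call time. Nothing in the problem setup precludes $r_q$ from equalling such a time, since services are triggered at deadlines and arrival times are not assumed to avoid them. Your proof asserts that each $u_i$ yields a ``unique'' charging node, which is not justified; the correct count is at most $2$ per $u_i$. For the buying term this is harmless (your bound of $D+1$ is still within $2(D+1)$), but for the connection term it matters: doubling your intermediate estimate $\sum_{j}(k-j)\,w(u_j,u_{j+1})\le 4\,w(u_{k-1},u_k)$ gives $8\,w(u_{k-1},u_k)$, and $8\,w(u_{k-1},u_k)\le 4\,\delta(v_q,\phi_q)$ fails in general (e.g.\ when $k=1$ and $\phi_q=z$, so $\delta(v_q,\phi_q)=w(u_{k-1},u_k)$).

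The fix is exactly what the paper does: bound the uncorrected sum by the tighter $\sum_{j}(k-j)\,w(u_j,u_{j+1})\le 2\sum_j w(u_j,u_{j+1})=2\,\delta(v_q,z)$ (this follows from $w(u_j,u_{j+1})\le w(u_{k-1},u_k)/2^{k-1-j}$, comparing coefficients), then multiply by $2$ for the closed-interval double-counting to get $4\,\delta(v_q,z)\le 4\,\delta(v_q,\phi_q)$. The two errors in your write-up — omitting the factor of $2$, and stopping at the looser $4\,w(u_{k-1},u_k)$ — happen to cancel, which is why you reach the stated constant; but the proof as written is not rigorous.
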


\begin{proof}
We consider each action of $\opt$ and how it affects the incurred
cost at various charging nodes.

For a facility that is opened by $\opt$ at node $u$ at time $t$
to participate in $c_{b}((u^{\prime},[\tau_{1},\tau_{2}]))$, we must
have that $u\in T_{u^{\prime}}$. Hence, $u^{\prime}$ is on the branch
from the root to $u$, and thus $u^{\prime}$ is one of at most $D+1$
possible nodes. We also have that $t\in[\tau_{1},\tau_{2}]$. Using
Observation \ref{obs:FLDeadline_SingleService}, we have that $\tau_{2}>\tau_{1}$,
and therefore $t$ can belong to at most two such intervals, for every
choice of $u^{\prime}$. This yields that the cost of each facility
opened by $\opt$ is counted in \textbf{$\sum_{\mu}c_{b}(\mu)$ }at
most $2(D+1)$ times.

As for connection costs, consider a request $q$ that $\opt$ connects
to a facility at node $v$. Denote by $u$ the least common ancestor
of $v$ and $v_{q}$. If $\opt$ incurs connection cost due to $q$
in charging node $(u^{\prime},[\tau_{1},\tau_{2}])$, then $v_{q}\in T_{u^{\prime}}$
and $v\notin T_{u^{\prime}}$. Therefore, $u^{\prime}$ must be on
the path from $v_{q}$ to $u$ (including $v_{q}$, and not including
$u$). Let $u=u^{(0)},u^{(1)},u^{(2)},...,u^{(m)}=v_{q}$ be the path
from $u$ to $v_{q}$. As with the buying cost, for every $l\in[m]$,
$\opt$ may incur connection cost due to $q$ in at most $2$ charging
nodes of the form $(u^{(l)},[\tau_{1},\tau_{2}])$ for some $\tau_{1},\tau_{2}$.
Therefore, denoting the total connection cost of $\opt$ due to connecting
$q$ by $X$, we have that 
\[
X\le2\cdot\sum_{l=1}^{m}\delta(u^{(l-1)},v_{q})
\]

Now observe that since the tree is a $\left(\ge2\right)$-HST, we
have that the total weight of any path from a node to a descendant
leaf is at most the weight of the node's ancestor edge. Therefore,
for every $l\in[m]$:
\begin{align*}
\delta(u^{(l-1)},v_{q}) & =w((u^{(l-1)},u^{(l)}))+\delta(u^{(l)},v_{q})\\
 & \ge2\delta(u^{(l)},v_{q})
\end{align*}
Therefore, we have that $\delta(u^{(l)},v_{q})\le\frac{1}{2^{l}}\cdot\delta(u,v_{q})$.
Hence:
\[
X\le2\cdot\sum_{l=1}^{m}\frac{1}{2^{l-1}}\cdot\delta(u,v_{q})\le4\delta(u,v_{q})
\]

Since $u$ is on the path from $v$ to $v_{q}$, we have that $\delta(u,v_{q})\le\delta(v,v_{q})$.
Hence $X$ is at most $4$ times the connection cost of $\opt$ for
$q$. This concludes the lemma.
\end{proof}
\begin{defn}[excess]
Let $G=(V^{\prime},E)$ be a directed multigraph, with a non-negative
weight function $\alpha:E\rightarrow R^{+}$ defined on its edges.
We denote by $E_{v}^{+}\subseteq E$ the set of edges entering node
$v$, and by $E_{v}^{-}\subseteq E$ the set of edges leaving $v$.
We define the \emph{excess at a node $v\in V^{\prime}$ }to be $\chi_{v}=\sum_{\sigma\in E_{v}^{+}}\alpha(\sigma)-\sum_{\sigma\in E_{v}^{-}}\alpha(\sigma)$. 
\end{defn}

Note that every edge $\sigma\in E$ from $u$ to $v$ is counted in
$\chi_{u}$ and $\chi_{v}$ with opposite signs. The following observation
follows.
\begin{obs}
\label{obs:Flow_ExcessSumToZero}For any $G=(V^{\prime},E)$ and weights
$\alpha:E\to\R^{+}$, we have $\sum_{v\in V^{\prime}}\chi_{v}=0$.
\end{obs}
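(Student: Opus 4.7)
The plan is to prove Observation \ref{obs:Flow_ExcessSumToZero} by a direct double-counting argument on edges, exploiting the remark made immediately before the observation: every edge contributes to two excesses with opposite signs.

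First I would expand the sum $\sum_{v \in V^{\prime}} \chi_v$ using the definition of excess, obtaining
\[
\sum_{v \in V^{\prime}} \chi_v \;=\; \sum_{v \in V^{\prime}} \Bigl( \sum_{\sigma \in E_v^{+}} \alpha(\sigma) \;-\; \sum_{\sigma \in E_v^{-}} \alpha(\sigma) \Bigr).
\]
Then I would swap the order of summation, rewriting the double sum as a single sum over edges $\sigma \in E$. An edge $\sigma$ from $u$ to $v$ appears exactly once in some $E_{v'}^{+}$ (namely for $v' = v$, its head) and exactly once in some $E_{v'}^{-}$ (namely for $v' = u$, its tail), so its total contribution to the combined sum is $\alpha(\sigma) - \alpha(\sigma) = 0$.

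Since every edge contributes $0$, the whole sum equals $0$, proving the observation. There is essentially no obstacle here: the claim is a one-line consequence of bookkeeping, and the paper even flags the relevant cancellation in the sentence preceding the observation. The only minor care needed is to note that the statement is formulated for directed multigraphs, so parallel edges and self-loops are allowed; for a self-loop at $v$, the edge lies simultaneously in $E_v^{+}$ and $E_v^{-}$, and its two contributions cancel within the single term $\chi_v$, which is consistent with the general argument.
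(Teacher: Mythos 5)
Your proof is correct and is exactly the argument the paper gestures at in the sentence immediately preceding the observation (each edge contributes $+\alpha(\sigma)$ to the excess of its head and $-\alpha(\sigma)$ to the excess of its tail, so everything cancels). The extra remark about self-loops is a harmless bit of additional care; otherwise this matches the paper's reasoning.
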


\begin{defn}
For a graph $G=(V^{\prime}=V\cup\{s\},E)$ and non-negative weights
$\alpha:E\rightarrow\R^{+}$, We say that $Z=(G,s,\alpha)$ is a \emph{preflow
}if for every node $v\neq s$ we have that $\chi_{v}\ge0$. We call
$s$ the \emph{source node }of the preflow.

Observation $\ref{obs:Flow_ExcessSumToZero}$ yields that $\chi_{s}\le0$
for every preflow $Z=(G,s,\alpha)$. We write $\omega_{Z}=-\chi_{s}$.
\end{defn}

\begin{prop}
\label{prop:Flow_SubsetLBSource}For $G=(V\cup\{s\},E)$ a directed
graph, for weights $\alpha:E\to\R^{+}$ such that $Z=(G,s,\alpha)$
is a preflow, and for every $S\subseteq V$, we have $\sum_{v\in S}\chi_{v}\le\omega_{Z}$.
\end{prop}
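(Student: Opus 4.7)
The plan is to derive the inequality directly from Observation \ref{obs:Flow_ExcessSumToZero} together with the defining property of a preflow. First I would apply Observation \ref{obs:Flow_ExcessSumToZero} to the graph $G=(V\cup\{s\},E)$, which yields
\[
\chi_{s}+\sum_{v\in V}\chi_{v}=0,
\]
so that $\sum_{v\in V}\chi_{v}=-\chi_{s}=\omega_{Z}$.

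Next I would invoke the preflow property: by definition, $\chi_{v}\ge 0$ for every $v\neq s$, i.e.\ for every $v\in V$. This means the sum $\sum_{v\in V}\chi_{v}$ is a sum of nonnegative terms, each equal to $\omega_Z$ in total. Hence, for any $S\subseteq V$,
\[
\sum_{v\in S}\chi_{v}\;\le\;\sum_{v\in V}\chi_{v}\;=\;\omega_{Z},
\]
which is exactly the claimed inequality.

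There is no real obstacle here — the statement is essentially a one-line consequence of the global balance equation for excesses plus nonnegativity on $V$. The only subtlety worth flagging in the write-up is that Observation \ref{obs:Flow_ExcessSumToZero} is applied over the full vertex set $V\cup\{s\}$ (so that $\chi_s$ is included), while the preflow inequality $\chi_v\ge 0$ is used only on $V$; separating these two roles makes the two-line argument fully precise.
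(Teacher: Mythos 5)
Your proof is correct and takes essentially the same route as the paper: combine Observation \ref{obs:Flow_ExcessSumToZero} with the nonnegativity of $\chi_v$ on $V$ to get $\sum_{v\in S}\chi_v\le\sum_{v\in V}\chi_v=-\chi_s=\omega_Z$. The extra bookkeeping you add (applying the balance equation over $V\cup\{s\}$ and the nonnegativity only over $V$) is exactly the implicit structure in the paper's one-line argument.
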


\begin{proof}
Observation \ref{obs:Flow_ExcessSumToZero} and the definition of
a preflow, we get $\sum_{v\in S}\chi_{v}\le\sum_{v\in V}\chi_{v}=-\chi_{s}=\omega_{Z}$.
\end{proof}

We now construct a preflow to lower bound $\opt$. The graph $G$
underlying the preflow has the set of nodes $M\cup\{s\}$, where $M$
is the set of charging nodes and $s$ is a source node.

Consider a charging node $\mu=(u,[\tau_{1},\tau_{2}])$. We have that
$[\tau_{1},\tau_{2}]$ corresponds to a phase of the counter $c_{u}$,
since $c_{u}$ was empty at $\tau_{1}$ and was filled and emptied
again until time $\tau_{2}$.
\begin{defn}[Investing]
Observe two charging nodes $\mu=(u,[\tau_{1},\tau_{2}])$ and $\mu^{\prime}=(u^{\prime}=p(u),[\tau_{1}^{\prime},\tau_{2}^{\prime}])$.
We say that $\mu^{\prime}$ \emph{invested $x$ in $\mu$ }if the
function call $\Explore_{\tau_{1}^{\prime}}(u^{\prime})$ increased
$c_{u}$ by $x$, through calls to $\Invest$, during the phase of
$c_{u}$ between $\tau_{1}$ and $\tau_{2}$.
\end{defn}

\begin{defn}[$\lambda_{u}^{t}$ and $\lambda_{\mu}$]
For every function call $\Explore_{t}(u)$ for some $u\in T$ and
time $t$, we denote by $\lambda_{u}^{t}$ the earliest deadline of
a pending request in $T_{u}$ immediately after the return of $\Explore_{t}(u)$
(if there are no pending requests in $T_{u}$, we write $\lambda_{u}^{t}=\infty$).

In addition, for a charging node $\mu=(u,[\tau_{1},\tau_{2}])$ with
$\tau_{1}\neq-\infty$, we write $\lambda_{\mu}=\lambda_{u}^{\tau_{1}}$.
\end{defn}

\paragraph*{Possible edges. }

We describe the set of possible edges in $G$ from nodes in $M$ to
other nodes in $M$, denoted by $\bar{E}$, and the weight function
$\alpha:\bar{E}\rightarrow\R^{+}$. The final set of edges added to
$G$ by Procedure \ref{proc:FLDeadline_PreflowBuilder} from the nodes
of $M$ to themselves is a subset of $\bar{E}$. The set $\bar{E}$
contains an edge $\sigma$ from any charging node $\mu_{1}=(u_{1},[\tau_{1}^{1},\tau_{2}^{1}])$
to any charging node $\mu_{2}=(u_{2},[\tau_{1}^{2},\tau_{2}^{2}])$
if $\mu_{1}$ invested in $\mu_{2}$. We set the weight $\alpha(\sigma)$
to be the amount that $\mu_{1}$ invested in $\mu_{2}$.

\noindent \LinesNumbered \RestyleAlgo{boxruled}\renewcommand{\algorithmcfname}{Procedure}\DontPrintSemicolon
\begin{algorithm}[tb]
\caption{\label{proc:FLDeadline_PreflowBuilder}PreflowBuilder - Facility Location
with Deadlines}

\SetKwProg{Fn}{Function}{}{end}
\SetKwFunction{PreflowBuilder}{PreflowBuilder}
\SetKwFunction{SetColor}{SetColor}
\SetKw{Break}{break}

\textbf{Initialization.}

Let the set of vertices of $G$ be $M\cup\{s\}$, and initialize $G$'s
edge set to be $E=\emptyset$.

Initialize dictionary $\texttt{Color}[\mu]=\None$ for every $\mu\in M$.

\ForEach{$\mu=(u,[\tau_{1},\tau_{2}])\in M$ such that $\opt$ opened
a facility in $T_{u}$ during $[\tau_{1},\tau_{2}]$}{

set $\Color[\mu]\leftarrow\Special$

}

\ForEach{$\mu\in M$ such that $c(\mu)>0$}{

add a new edge $\sigma=(s,\mu)$ to $E$, and set $\alpha(\sigma)=c(\mu)$

}

\;

\Fn{\PreflowBuilder{}}{

\tcp*[h]{Create new colors for root charging nodes}

\For{$i$ from $1$ to $k$}{

let $\mu\leftarrow(r,[t_{i-1}^{r},t_{i}^{r}])$

\SetColor{$\mu$,$\mu$}

}

\tcp*[h]{Propagate colors to other charging nodes}

\For{$j$ from $1$ to $D$}{

\ForEach{$\mu=(u,[\tau_{1},\tau_{2}])\in M$ such that $u$ is of
depth $j$}{

\ForEach{edge $\sigma\in E_{\mu}^{-}$ incoming to a node $\mu^{\prime}$}{

\lIf{\SetColor{$\mu,\texttt{Color}[\mu^{\prime}]$}$\neq None$}{\Break}

}

}

}

}

\end{algorithm}
\LinesNumbered \RestyleAlgo{boxruled}\renewcommand{\algorithmcfname}{Procedure}\DontPrintSemicolon
\begin{algorithm}[tb]
\caption{PreflowBuilder - Facility Location with Deadlines}

\ContinuedFloat   \caption*{PreflowBuilder - Facility Location with Deadlines (cont.)}
\SetKwProg{Fn}{Function}{}{end}
\SetKwFunction{SetColor}{SetColor}
\SetKw{Break}{break}

\Fn(\tcp*[h]{If $\Color[\mu]=\None$, tries to set it to $\mu^{\star}$}){\SetColor{$\mu,\mu^{\star}$}}{

let $[\tau_{1},\tau_{2}]$ be the interval of $\mu$ and let $[\tau_{1}^{\star},\tau_{2}^{\star}]$
be the interval of $\mu^{\star}$.

\If{\normalfont{(} $\Color[\mu]=\None$ \normalfont{\textbf{and}}
$\tau_{1}\neq-\infty$ \normalfont{\textbf{and}} $\lambda_{\mu}\le\tau_{2}^{\star}$
\normalfont{)}}{

add the edges in $\bar{E}$ incoming to $\mu$ to $E$.

set $\texttt{Color}[\mu]\leftarrow\mu^{\star}$

}

\Return $\texttt{Color}[\mu]$

}
\end{algorithm}
\begin{figure}[tb]
\subfloat[\label{subfig:FLDeadline_BeforeColoring}Initial state before coloring]{\includegraphics[width=0.45\columnwidth]{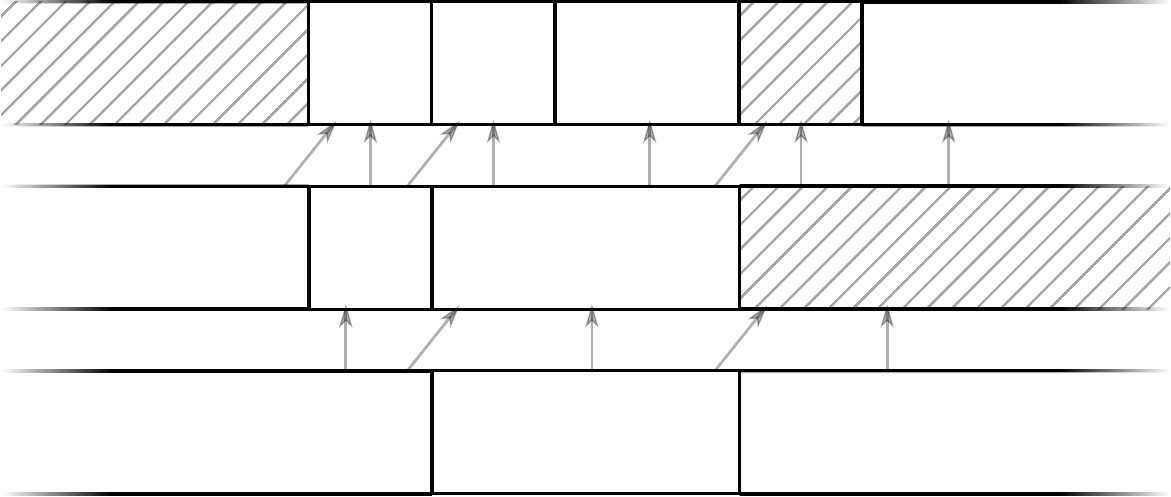}

}\hfill{}\subfloat[\label{subfig:FLDeadline_AfterColorCreation}After assigning colors
to root charging nodes]{\includegraphics[width=0.45\columnwidth]{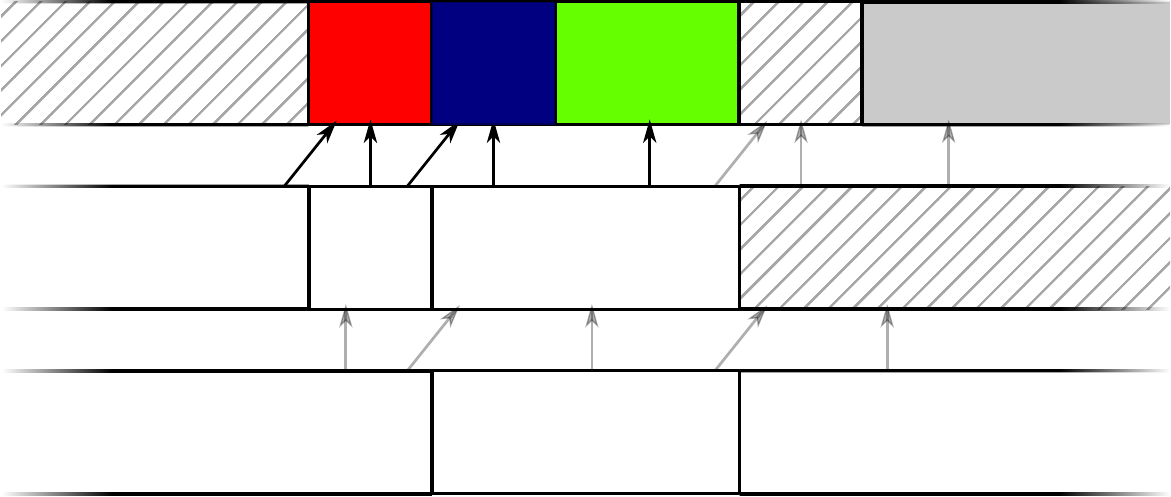}}

\subfloat[\label{subfig:FLDeadline_FirstLevelPropagation}After propagating
color to depth 1]{\includegraphics[width=0.45\columnwidth]{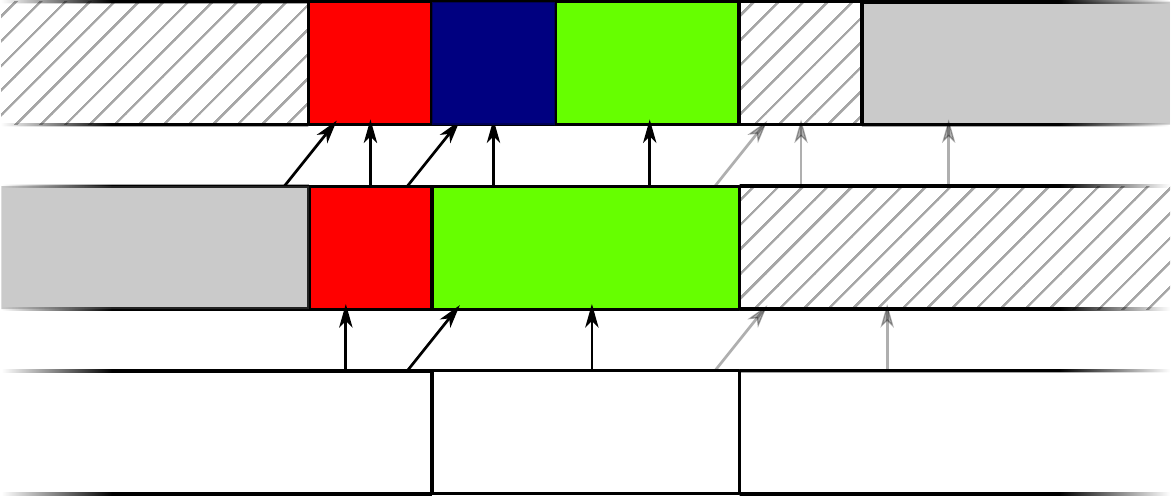}

}\hfill{}\subfloat[\label{subfig:FLDeadline_FinalPreflowState}Final state after preflow
construction]{\includegraphics[width=0.45\columnwidth]{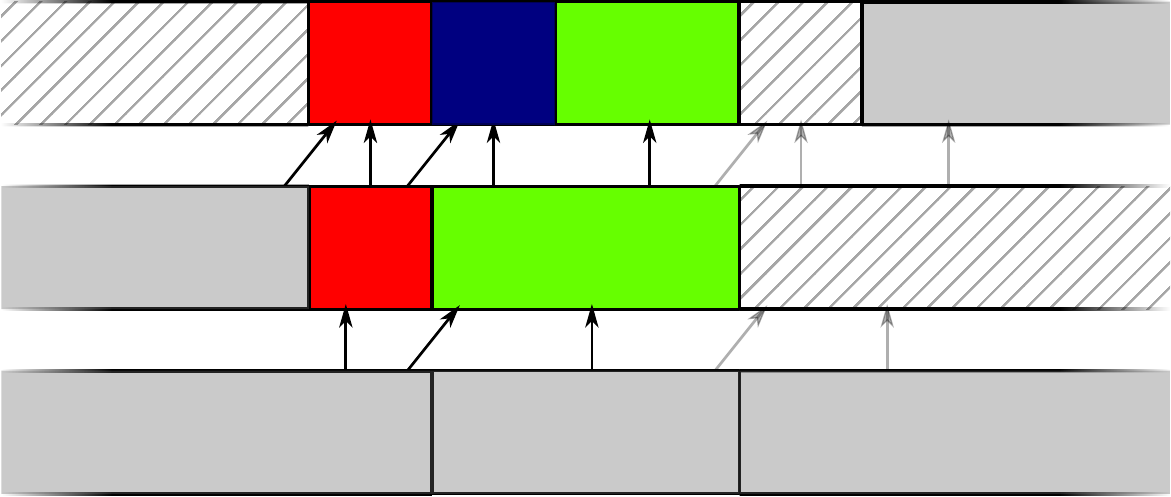}}~\\
\\

In this figure, we see the stages of the preflow construction in Procedure
\ref{proc:FLDeadline_PreflowBuilder}, visualized on a set of charging
nodes corresponding to a branch in the tree. Sub-Figure \ref{subfig:FLDeadline_BeforeColoring}
shows the state after the initialization, where the $\Special$ color
has been assigned. The node with the $\Special$ color appears as
striped. The gray edges are the edges of $\bar{E}$, not yet added
to the edge set $E$. 

Sub-Figure \ref{subfig:FLDeadline_AfterColorCreation} shows the state
after the creation of the colors at the root charging nodes. Note
that when a node is colored, the edges of $\bar{E}$ incoming to that
node are added to $E$. In the figure, nodes with the $\None$ color
for which the procedure will not call $\SetColor$ again (i.e. $\None$
is their final color) are colored gray. Note that the procedure does
\emph{not} call $\SetColor$ for the final root node (the interval
of which ends at $\infty$), and thus its color remains $\None$.

Sub-Figure \ref{subfig:FLDeadline_FirstLevelPropagation} shows the
state after the propagation of colors to the nodes at depth $1$.
Sub-Figure \ref{subfig:FLDeadline_FinalPreflowState} shows the final
state of the preflow, after coloring the nodes at the maximum depth.
Note that the color of a node $\mu$ at the maximum depth is either
$\Special$ or $\None$, as $\lambda_{\mu}=\infty$.

\caption{Visualization of preflow construction}
\end{figure}
In the analysis of the preflow $Z=(G,s,\alpha)$ resulting from this
procedure, we refer to the values of the variables used in the procedure
in their final state.
\begin{prop}
For every charging node $\mu=(u,[\tau_{1},\tau_{2}])\in M$, we have
that $\sum_{\sigma\in E_{\mu}^{-}}\alpha(\sigma)\le f$.
\end{prop}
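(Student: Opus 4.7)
The plan is to unwind the definition of the outgoing edges of $\mu$ and then invoke Observation \ref{obs:FLDeadline_RaisesCounterAtMostF} directly. Fix a charging node $\mu = (u, [\tau_1, \tau_2])$. By the definition of possible edges $\bar{E}$, every outgoing edge from $\mu$ goes to a charging node of the form $\mu' = (v, [\tau_1', \tau_2'])$ where $v$ is a child of $u$, and its weight $\alpha(\sigma)$ equals the amount by which $\mu$ invested in $\mu'$. By the definition of investing, this requires $\tau_1 \neq -\infty$ and says precisely that the call $\Explore_{\tau_1}(u)$ raised $c_v$ by $\alpha(\sigma)$ during the phase of $c_v$ corresponding to $[\tau_1', \tau_2']$.

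First I would dispose of the boundary case: if $\tau_1 = -\infty$, then by definition no call $\Explore_{\tau_1}(u)$ ever took place, so no investment has $\mu$ as its source, and $\sum_{\sigma \in E_\mu^-} \alpha(\sigma) = 0 \le f$ trivially.

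For $\tau_1 \neq -\infty$, I would partition the outgoing edges by the child $v$ they are directed to. For a fixed child $v$, the edges in $E_\mu^-$ ending at charging nodes of the form $(v, \cdot)$ correspond to the distinct phases of $c_v$ into which $\Explore_{\tau_1}(u)$ contributed via $\Invest$. (Multiple phases can occur because, DFS-style, filling $c_v$ during $\Explore_{\tau_1}(u)$ triggers a call to $\Explore(v)$ that resets $c_v$ before $\Explore_{\tau_1}(u)$ resumes and possibly invests again.) The sum of $\alpha$ over these edges is therefore exactly the total amount by which $\Explore_{\tau_1}(u)$ raised the counter $c_v$ through $\Invest$. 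Summing over all children $v$ of $u$, $\sum_{\sigma \in E_\mu^-} \alpha(\sigma)$ equals the total amount by which $\Explore_{\tau_1}(u)$ increased the counters of its children through calls to $\Invest$. Observation \ref{obs:FLDeadline_RaisesCounterAtMostF} bounds this quantity by $f$, completing the argument.

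I do not anticipate a real obstacle here: the whole content is a careful reading of the definitions of ``invested'' and of the weight function $\alpha$, plus the budget bound already established in Observation \ref{obs:FLDeadline_RaisesCounterAtMostF}. The only point that deserves explicit mention is the multi-phase subtlety above, ensuring that investments by a single $\Explore_{\tau_1}(u)$ call into the same child $v$ but across different phases of $c_v$ are all counted correctly within the single budget of $f$ that Observation \ref{obs:FLDeadline_RaisesCounterAtMostF} guarantees.
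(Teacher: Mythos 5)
Your proof is correct and takes essentially the same approach as the paper: both bound $\sum_{\sigma\in E_{\mu}^{-}}\alpha(\sigma)$ by noting that the edges of $E_\mu^-$ are a subset of $\bar{E}$, whose total outgoing weight from $\mu$ equals the total amount $\mu$ invested, which is at most $f$ by Observation \ref{obs:FLDeadline_RaisesCounterAtMostF}. You spell out the $\tau_1=-\infty$ boundary case and the multi-phase bookkeeping more explicitly than the paper does, but the substance is identical.
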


\begin{proof}
Observe that $E_{\mu}^{-}$ is a subset of $\bar{E}$. In $\bar{E}$,
the sum of $\alpha(\sigma)$ over edges $\sigma$ outgoing from $\mu$
is exactly the amount $\mu$ invested in other charging nodes, which
is at most $f$.
\end{proof}
\begin{cor}
\label{cor:FLDeadline_FIsEnough}For a charging node $\mu\in M$ in
which $\sum_{\sigma\in E_{\mu}^{+}}\alpha(\sigma)\ge f$ we have $\chi_{\mu}\ge0$.
\end{cor}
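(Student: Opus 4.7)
The plan is to observe that this is a direct combination of the definition of excess with the preceding proposition. By definition,
\[
\chi_{\mu} = \sum_{\sigma \in E_{\mu}^{+}} \alpha(\sigma) - \sum_{\sigma \in E_{\mu}^{-}} \alpha(\sigma),
\]
so a lower bound on the first sum together with an upper bound on the second sum will yield the nonnegativity of $\chi_{\mu}$ immediately.

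First, I would invoke the hypothesis of the corollary, which gives $\sum_{\sigma \in E_{\mu}^{+}} \alpha(\sigma) \ge f$. Second, I would apply the immediately preceding proposition to obtain $\sum_{\sigma \in E_{\mu}^{-}} \alpha(\sigma) \le f$. Subtracting the two inequalities yields $\chi_{\mu} \ge f - f = 0$, which is the desired conclusion.

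There is essentially no obstacle here; the statement is a one-line arithmetic consequence of the prior proposition, and the corollary's role in the larger argument is simply to identify which charging nodes are guaranteed to contribute nonnegatively to sums of excesses (so that they may safely be included in the subset $S$ used with Proposition \ref{prop:Flow_SubsetLBSource} when lower bounding $\opt$).
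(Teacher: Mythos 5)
Your proof is correct and takes exactly the approach the paper intends: the corollary follows immediately by combining the definition of excess with the preceding proposition bounding $\sum_{\sigma\in E_{\mu}^{-}}\alpha(\sigma)\le f$. The paper omits an explicit proof precisely because this one-line arithmetic consequence is the whole argument.
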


\begin{prop}
\label{prop:FLDeadline_NotBought}Let $\mu=(u,[\tau_{1},\tau_{2}])$
be such that $\texttt{Color}[\mu]=\mu^{\star}$ for some charging
node $\mu^{\star}=(r,[\tau_{1}^{\star},\tau_{2}^{\star}])$. Then
$\opt$ did not open a facility in $T_{u}$ during $[\tau_{1},\tau_{2}^{\star}]$.
\end{prop}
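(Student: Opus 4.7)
The plan is to proceed by induction on the depth of $u$, tracking how the color $\mu^{\star}$ originates and propagates, and showing that each successful propagation glues together ``facility-free'' intervals into progressively longer ones that eventually cover $[\tau_{1},\tau_{2}^{\star}]$.

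\textbf{Base case.} When $u=r$, the only way for a root charging node to carry color $\mu^{\star}$ is to be $\mu^{\star}$ itself, since in the first loop of Procedure~\ref{proc:FLDeadline_PreflowBuilder} each root charging node is offered only its own identity as a color. Thus $\mu=\mu^{\star}$. For the corresponding call $\SetColor(\mu^{\star},\mu^{\star})$ to have succeeded, $\Color[\mu^{\star}]$ must have equalled $\None$ at that moment, which by the initialization loop means $\opt$ did not open a facility in $T_{r}$ during $[\tau_{1}^{\star},\tau_{2}^{\star}]=[\tau_{1},\tau_{2}^{\star}]$.

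\textbf{Inductive step.} For $u$ at depth $j\ge 1$, I trace the moment $\Color[\mu]$ was set to $\mu^{\star}$: it must have occurred during iteration $j$ of the outer loop via a call $\SetColor(\mu,\mu^{\star})$, invoked because some parent charging node $\mu'=(p(u),[\tau_{1}',\tau_{2}'])$ at depth $j-1$ already had color $\mu^{\star}$ and had an edge in $\bar{E}$ connecting it to $\mu$, witnessing that $\mu'$ invested in $\mu$. From this I extract two facts: (i) the success of $\SetColor(\mu,\mu^{\star})$ required $\Color[\mu]=\None$ at propagation time, so by the initialization loop $\opt$ did not open a facility in $T_{u}$ during $[\tau_{1},\tau_{2}]$; and (ii) applying the induction hypothesis to $\mu'$ yields that $\opt$ did not open a facility in $T_{p(u)}\supseteq T_{u}$ during $[\tau_{1}',\tau_{2}^{\star}]$.

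\textbf{Gluing step (the main obstacle).} The remaining task is to combine (i) and (ii) into the single interval $[\tau_{1},\tau_{2}^{\star}]$; for this I must establish $\tau_{1}\le\tau_{1}'\le\tau_{2}$. I plan to derive this directly from the investment relation: $\mu'$ investing in $\mu$ means the instantaneous call $\Explore_{\tau_{1}'}(p(u))$ raised $c_{u}$ during $\mu$'s counter phase, which runs from $c_{u}=0$ at time $\tau_{1}$ until $c_{u}$ reaches $f$ at time $\tau_{2}$, forcing $\tau_{1}'\in[\tau_{1},\tau_{2}]$. Once this is in hand, $[\tau_{1},\tau_{2}]\cup[\tau_{1}',\tau_{2}^{\star}]=[\tau_{1},\max(\tau_{2},\tau_{2}^{\star})]\supseteq[\tau_{1},\tau_{2}^{\star}]$, and combining (i) and (ii) shows that $\opt$ opens no facility in $T_{u}$ throughout this union, completing the induction.
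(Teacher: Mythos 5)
Your proof is correct and takes the same inductive-on-depth route as the paper's, with the same three ingredients: $\Color[\mu]\neq\Special$ gives no facility in $T_{u}$ during $[\tau_{1},\tau_{2}]$; the induction hypothesis applied to the parent charging node $\mu'$ gives no facility in $T_{p(u)}\supseteq T_{u}$ during $[\tau_{1}',\tau_{2}^{\star}]$; and $\tau_{1}'\in[\tau_{1},\tau_{2}]$ glues the two intervals. You simply spell out the $\SetColor$ bookkeeping and the interval union more explicitly than the paper does.
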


\begin{proof}
Since $\texttt{Color}[\mu]\neq\Special$, we have that $\opt$ did
not open a facility in $T_{u}$ during $[\tau_{1},\tau_{2}]$.

The proof is by induction on the depth of $u$. If $u=r$, then it
must be that $\mu=\mu^{\star}$, completing the proof. Otherwise,
observe the node $\mu^{\prime}=(p(u),[\tau_{1}^{\prime},\tau_{2}^{\prime}])$
from which $\mu$ inherited its color. By the induction hypothesis,
$\opt$ did not open a facility in $T_{p(u)}$ during $[\tau_{1}^{\prime},\tau_{2}^{\star}]$.
Since there exists an edge from $\mu$ to $\mu^{\prime}$, we must
have that $\tau_{1}^{\prime}\in[\tau_{1},\tau_{2}]$, which completes
the proof.
\end{proof}
\begin{lem}
\label{lem:FLDeadline_ValidPreflow}$Z=(G,s,\alpha)$ is a preflow.
That is, for every charging node $\mu=(u,[\tau_{1},\tau_{2}])\in M$
we have $\chi_{\mu}\ge0$.
\end{lem}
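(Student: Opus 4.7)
My plan is to prove $\chi_\mu \ge 0$ for every $\mu \in M$ via case analysis on $\Color[\mu]$. The preparatory observation is that the edges incident to $\mu$ in $E$ consist of: (a) the source edge $(s,\mu)$, present iff $c(\mu) > 0$ and carrying weight $c(\mu)$; (b) an incoming $\bar{E}$-edge $(\mu^\prime,\mu)$, present in $E$ iff $\SetColor$ succeeded on $\mu$ (which requires $\Color[\mu] \notin \{\None,\Special\}$, since $\Special$ is set in initialization and never via $\SetColor$, and $\None$ means no successful $\SetColor$); and (c) an outgoing edge $(\mu,\mu_c)$, present in $E$ iff $\SetColor$ succeeded on the investee $\mu_c$. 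Whenever incoming weight reaches $f$, Corollary \ref{cor:FLDeadline_FIsEnough} immediately delivers the bound.

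If $\Color[\mu] = \Special$, then by the initialization of Procedure \ref{proc:FLDeadline_PreflowBuilder}, $\opt$ opened a facility in $T_u$ during $[\tau_1,\tau_2]$, so $c_b(\mu) \ge f$ and the source contributes $c(\mu) \ge f$; Corollary \ref{cor:FLDeadline_FIsEnough} yields $\chi_\mu \ge 0$. If $\Color[\mu] = \mu^\star$ is an inherited color, then the successful $\SetColor$ added all of $\bar{E}$'s incoming edges to $\mu$ into $E$; their total weight equals the amount $c_u$ was raised during $[\tau_1,\tau_2]$, which is exactly $f$ when $\tau_2 < \infty$ (the counter cycles from $0$ at $\tau_1$ to $f$ at $\tau_2$). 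The boundary $\tau_2 = \infty$ is handled by observing that $\mu$'s outgoing investment is then bounded by the partial incoming amount received during $[\tau_1,\infty)$.

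The case $\Color[\mu] = \None$ splits further. When $\tau_1 = -\infty$, no $\Explore$ call of $u$ exists that could have invested in children, so $\mu$ has no outgoing $\bar{E}$-edges at all, and $\chi_\mu = c(\mu) \ge 0$ trivially. The delicate sub-case is $\tau_1 \ne -\infty$ with $\Color[\mu] = \None$: here $\mu$ has no incoming $\bar{E}$-edges in $E$, yet may still have outgoing edges $(\mu,\mu_c)$ to investees $\mu_c$ that were inherited-colored through an investor other than $\mu$. My plan for this sub-case is to charge each such outgoing weight $\alpha((\mu,\mu_c))$ against a corresponding contribution to $c_c(\mu)$. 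Specifically, the investment in $\mu_c$ arose because $\mu$'s $\Explore$ considered a pending request $q \in T_{u_c}$, raising $c_{u_c}$ by at most $\delta(u,v_q)$; the failure condition $\lambda_\mu > \tau_2^\star$ (for every color $\mu^\star$ the procedure could have tried for $\mu$) combined with $\mu_c$'s successful inheritance ought to force $\opt$ to serve $q$ outside $T_u$, making $\delta(p(u),v_q) \ge \delta(u,v_q)$ a summand of $c_c(\mu)$; summing over the relevant $q$'s gives $c(\mu) \ge$ outgoing weight.

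The main obstacle is this last sub-case. Converting the failure of $\mu$'s own inheritance into a structural statement about $\opt$'s behavior on the requests handled by $\mu$'s $\Explore$ requires carefully tracking the deadline-timing relations across the chain of intervals at consecutive depths, and verifying that each implicated request $q$ actually satisfies $r_q \in [\tau_1,\tau_2]$ so that it legitimately counts toward $c_c(\mu)$; the disjointness of the investor sets at each depth (distinct charging nodes at $p(u)$ cover disjoint intervals) should be the key tool for making the timing argument precise.
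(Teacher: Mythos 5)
Your plan inherits a misreading of the edge orientation of $\bar{E}$, and that misreading propagates into errors in each non-trivial case. The text defining $\bar{E}$ (``edge from $\mu_1$ to $\mu_2$ if $\mu_1$ invested in $\mu_2$'') conflicts with the rest of the section; the orientation that actually makes the procedure and the subsequent lemmas coherent is from the \emph{investee} $\mu$ to its \emph{investor} $\mu'=(p(u),\cdot)$, i.e.\ from the child-level charging node up to the parent-level one. This is forced by Lemma \ref{lem:FLDeadline_RootExcesses} (which asserts $E_\mu^-=\emptyset$ for every root charging node, impossible if root nodes had outgoing edges to the children they invest in) and by the propagation loop in Procedure \ref{proc:FLDeadline_PreflowBuilder} (when processing a node at depth $j$, the edges in $E_\mu^-$ must already be in $E$, which requires their other endpoint to have been colored at a shallower depth). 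Consequently, $E_\mu^+$ collects the amount $\Explore_{\tau_1}(u)$ invested downward, and $E_\mu^-$ collects the amount $c_u$ was raised by parent explorations during $\mu$'s phase. Once this is fixed, your Case~2 has no $\tau_2=\infty$ boundary at all: $\Color[\mu]\neq\None$ forces $\lambda_\mu<\infty$, so by Observation \ref{obs:FLDeadline_RaisesCounterAtMostF} the exploration at $\tau_1$ spent exactly $f$, giving $\sum_{\sigma\in E_\mu^+}\alpha(\sigma)\ge f$ independently of $\tau_2$. (Indeed, your proposed patch for $\tau_2=\infty$ — bounding outgoing by partial incoming — is false even in your own orientation: there the outgoing expenditure is the full $f$ while the incoming is strictly less.)

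The same confusion undermines your treatment of the $\Color[\mu]=\None$ case. The sub-case $\tau_1=-\infty$ is \emph{not} vacuous: even with no exploration at $\tau_1$, parent charging nodes may have invested in $\mu$ and been colored, so $E_\mu^-$ can be non-empty and the connection-cost bound is still needed; the paper simply absorbs $\tau_1=-\infty$ into that bound via the trivial inequality $r_q\ge -\infty$. More seriously, your plan for the remaining sub-case charges against requests considered by $\mu$'s own call $\Explore_{\tau_1}(u)$. Those requests are pending at $\tau_1$, hence $r_q\le\tau_1$, and therefore (except for the measure-zero boundary $r_q=\tau_1$) they are \emph{excluded} from the set $Q$ in the definition of $c_c(\mu)$, which requires $r_q\in[\tau_1,\tau_2]$. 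The requests that can actually be charged are those considered by the parent call $\Explore_{\tau_1'}(p(u))$ for the investing node $\mu'=(p(u),[\tau_1',\tau_2'])$: there one has $r_q\le\tau_1'\le\tau_2$ automatically, and $r_q\ge\tau_1$ follows from $\lambda_\mu>\tau_2^\star\ge d_q$ (so $q$ could not have been pending in $T_u$ right after $\Explore_{\tau_1}(u)$). You correctly anticipated that pinning $r_q$ into $[\tau_1,\tau_2]$ was ``the main obstacle,'' but the set of requests you chose to pin cannot land there; the remedy is to switch to the parent's requests, which also realigns the charge with the actual outgoing weight $\sum_{\sigma\in E_\mu^-}\alpha(\sigma)$.
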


\begin{proof}
We observe the following cases according to the final values of the
variables in the graph construction procedure.

\textbf{Case 1: }$\texttt{\ensuremath{\Color}}[\mu]=\Special$. In
this case, $\opt$ opened a facility in $T_{u}$ during $[\tau_{1},\tau_{2}]$,
implying that $c(\mu)\ge f$. In the initialization of Procedure \ref{proc:FLDeadline_PreflowBuilder},
an edge $\sigma$ from $s$ to $\mu$ with $\alpha(\sigma)=c(\mu)\ge f$
is created, and thus Corollary \ref{cor:FLDeadline_FIsEnough} implies
that $\chi_{\mu}\ge0$.

\textbf{Case 2: }$\texttt{Color}[\mu]=\mu^{\star}$ for some charging
node $\mu^{\star}$. In this case, incoming edges to $\mu$ were added,
with a total weight which is the total amount invested by $\mu$.
Since $\texttt{Color}[\mu]=\mu^{\star}$ was set by $\SetColor$,
we must have that $\tau_{1}\neq-\infty$ and that $\lambda_{\mu}<\infty$.
Using Observation \ref{obs:FLDeadline_RaisesCounterAtMostF}, we have
that $\Explore_{\tau_{1}}(u)$ raised counters by a total of exactly
$f$. Corollary \ref{cor:FLDeadline_FIsEnough} therefore proves the
lemma for this case.

\textbf{Case 3:} $\texttt{Color}[\mu]=\None$. Observe any edge $\sigma\in E_{\mu}^{-}$,
incoming to some node $\mu^{\prime}=(u^{\prime},[\tau_{1}^{\prime},\tau_{2}^{\prime}])$.
It must be that $\texttt{\texttt{Color}}[\mu^{\prime}]=\mu^{\star}$,
for some charging node $\mu^{\star}=(r,[\tau_{1}^{\star},\tau_{2}^{\star}])$.
Note that $\mu^{\prime}$ invested in $\mu$, and thus $\tau_{1}^{\prime}\in[\tau_{1},\tau_{2}]$.
Combining Proposition \ref{prop:FLDeadline_NotBought} for $\mu^{\prime}$
and the fact that $\Color[\mu]\neq\Special$, we have that $\opt$
did not open a facility in $T_{u}$ during $[\tau_{1},\tau_{2}^{\star}]$. 

We therefore have that for every request $q$ such that $v_{q}\in T_{u}$
and $[r_{q},d_{q}]\subseteq[\tau_{1},\tau_{2}^{\star}]$, $\opt$
must connect $q$ to some facility at a node $v\notin T_{u}$. If
it also holds that $r_{q}\le\tau_{2}$, then $\opt$ incurs a connection
cost of $\delta(v_{q},u^{\prime})$ in $\mu$ on connecting $q$.
We proceed to find such requests $q$. 

Now observe that $\mu^{\prime}$ invested $\alpha(\sigma)$ in $\mu$.
Thus, there exists a set of requests $L_{\sigma}$ that are considered
in $\Explore_{\tau_{1}^{\prime}}(u^{\prime})$ such that $\alpha(\sigma)\le\sum_{q\in L_{\sigma}}\delta(a_{q},u^{\prime})$
and $a_{q}\in T_{u}$ for every $q\in L_{\sigma}$. Since the requests
of $L_{\sigma}$ are considered in $\Explore_{\tau_{1}^{\prime}}(u^{\prime})$,
we have that $\lambda_{\mu^{\prime}}\ge d_{q}$ for every $q\in L_{\sigma}$.
Since $\texttt{Color}[\mu^{\prime}]=\mu^{\star}$, we have that $\lambda_{\mu^{\prime}}\le\tau_{2}^{\star}$,
and thus $d_{q}\le\tau_{2}^{\star}$.

Observe any $q\in L_{\sigma}$. It holds that $r_{q}\le\tau_{1}^{\prime}\le\tau_{2}$,
since $q$ is considered in $\Explore_{\tau_{1}^{\prime}}(u^{\prime})$.
Now, observe that $\texttt{Color}[\mu]=\None$ even though $\texttt{Color}[\mu^{\prime}]=\mu^{\star}$.
Hence, either $\tau_{1}=-\infty$ or $\lambda_{\mu}>\tau_{2}^{\star}$.
If $\tau_{1}=-\infty$, then $r_{q}\ge\tau_{1}$. Otherwise, $\tau_{1}\neq-\infty$
and $\lambda_{\mu}>\tau_{2}^{\star}$. Since $d_{q}\le\tau_{2}^{\star}$,
it must be that $q$ was not pending immediately after the return
of $\Explore_{\tau_{1}}(u)$. However, $\Explore_{\tau_{1}^{\prime}}(u^{\prime})$
considered $q$ when raising $c_{u}$ toward $\Explore_{\tau_{2}}(u)$.
Thus, $q$ was released after $\tau_{1}$.

Overall, for every $q\in L_{\sigma}$ we have that $r_{q}\in[\tau_{1},\tau_{2}]$
and $d_{q}\le\tau_{2}^{\star}$. Thus, $\opt$ incurs a connection
cost of at least $\sum_{q\in L_{\sigma}}\delta(v_{q},u^{(1)})$ in
the charging node $\mu$ due to $L_{\sigma}$, which is at least $\alpha(\sigma)$.

Now, if for every distinct $\sigma_{1},\sigma_{2}\in E_{\mu}^{-}$
we have that $L_{\sigma_{1}}\cap L_{\sigma_{2}}=\emptyset$, then
the connection cost $\opt$ incurs in $\mu$ is at least $\sum_{\sigma\in E_{\mu}^{-}}\alpha(\sigma)$.
Indeed, observe that $\sigma_{1}$ and $\sigma_{2}$ enter two distinct
charging nodes $\mu^{(1)}=(p(u),[\tau_{1}^{(1)},\tau_{2}^{(1)}])$
and $\mu^{(2)}=(p(u),[\tau_{1}^{(2)},\tau_{2}^{(2)}])$. Lemma \ref{obs:FLDeadline_SingleService}
implies that $\tau_{1}^{(1)}\neq\tau_{1}^{(2)}$. It is enough to
observe that for $b\in\{1,2\}$, each request $q\in L_{\sigma_{b}}$
is pending before $\tau_{1}^{(b)}$ and is served after $\tau_{1}^{(b)}$. 

Overall, we have that $c(\mu)\ge\sum_{\sigma\in E_{\mu}^{-}}\alpha(\sigma)$.
In the initialization of Procedure \ref{proc:FLDeadline_PreflowBuilder},
an edge $\sigma$ from $s$ to $\mu$ with $\alpha(\sigma)=c(\mu)$
is created, and thus $\chi_{\mu}\ge0$ as required. This concludes
the proof of the current case, and the lemma.
\end{proof}
\begin{figure}[tb]
\subfloat[Case 1: $\protect\Color=\protect\Special$]{\includegraphics[width=0.3\textwidth]{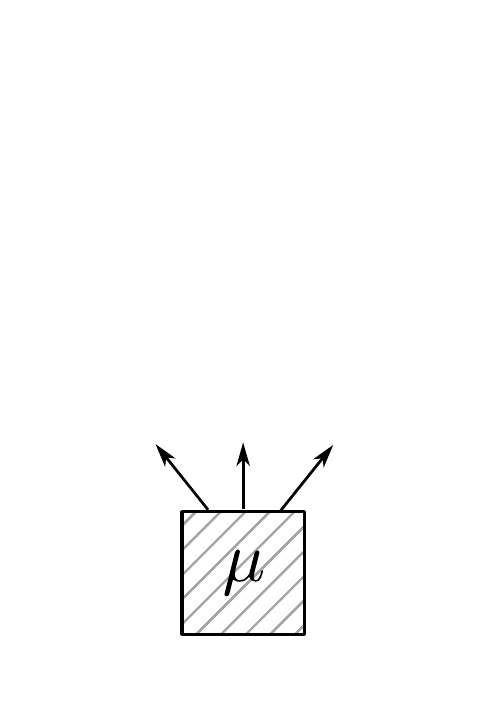}

}\hfill{}\subfloat[Case 2: $\protect\Color=\mu^{\star}$]{\includegraphics[width=0.3\textwidth]{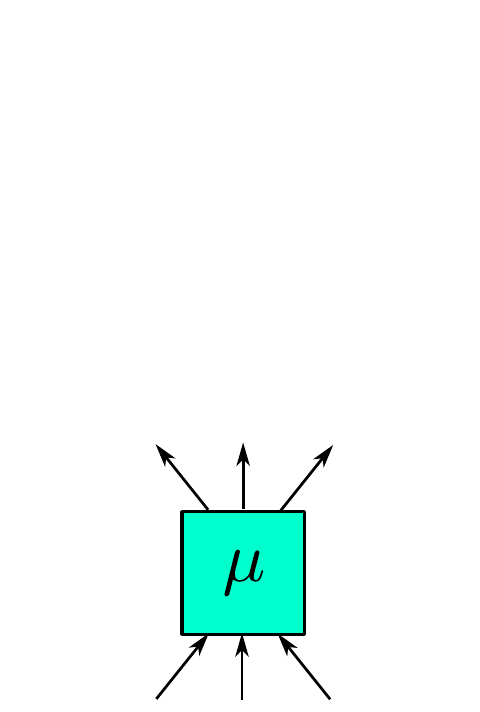}

}\hfill{}\subfloat[Case 3: $\protect\Color=\protect\None$]{\includegraphics[width=0.3\textwidth]{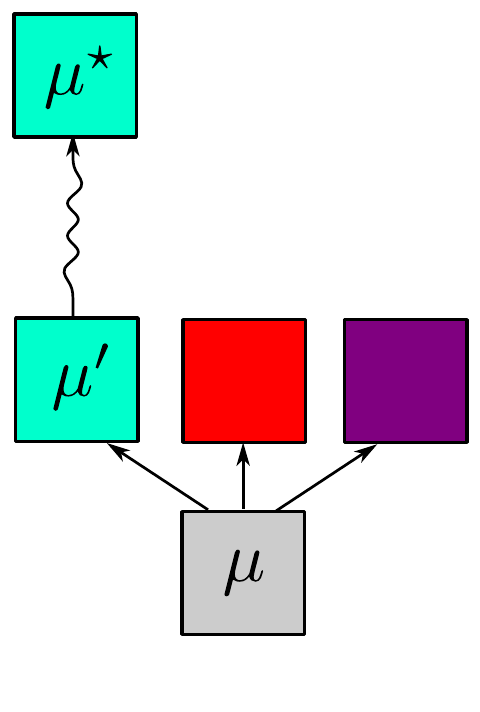}

}

\caption{Cases of Lemma \ref{lem:FLDeadline_ValidPreflow}}
\end{figure}

\begin{lem}
\label{lem:FLDeadline_RootExcesses}For each $i\in[k]$ and charging
node $\mu=(r,[t_{i-1},t_{i}])$, we have $\chi_{\mu}\ge f$.
\end{lem}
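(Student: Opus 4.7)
The key observation is that $\mu=(r,[t_{i-1},t_i])$ has no outgoing edges in $E$: since $r$ is the root of $T$, no charging node exists that $\mu$ can point to via $\bar E$, so $\sum_{\sigma\in E_\mu^-}\alpha(\sigma)=0$ and hence $\chi_\mu$ equals the total weight of the edges entering $\mu$. It therefore suffices to show that the total incoming weight at $\mu$ is at least $f$, and I plan to do this by a case analysis on the final value of $\Color[\mu]$.

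In the first case, $\Color[\mu]=\Special$: then $\opt$ opened at least one facility in $T_r=T$ during $[t_{i-1},t_i]$, so $c(\mu)\ge f$, and the initialization of Procedure \ref{proc:FLDeadline_PreflowBuilder} puts an edge $(s,\mu)\in E$ of weight $c(\mu)\ge f$. In the second case, $\Color[\mu]=\mu$: the call $\SetColor(\mu,\mu)$ during the first loop of the procedure must have succeeded, which requires $\tau_1\ne-\infty$ and $\lambda_\mu\le t_i<\infty$; upon success, the procedure adds every $\bar E$-edge incoming to $\mu$ into $E$, and by Observation \ref{obs:FLDeadline_RaisesCounterAtMostF} (a pending request remains in $T_r$ after $\Explore_{t_{i-1}}(r)$ because $\lambda_\mu<\infty$) the total weight of those added edges is exactly $f$.

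The last case, $\Color[\mu]=\None$, I plan to rule out. If $i=1$ then $\tau_1=-\infty$ forces $\SetColor$ to fail, yet the request $q$ whose deadline $d_q=t_1$ triggered the first $\UponDeadline$ forces $\opt$ to open a facility at some time in $[r_q,t_1]\subseteq(-\infty,t_1]=[\tau_1,\tau_2]$, making $\Color[\mu]=\Special$ at initialization, a contradiction. If $i\ge 2$, the failure of $\SetColor(\mu,\mu)$ must be due to $\lambda_\mu>t_i$; let $q$ be the request with $d_q=t_i$ firing $\UponDeadline$ at $t_i$, and note that if $q$ were pending immediately after $\Explore_{t_{i-1}}(r)$ we would have $\lambda_\mu\le d_q=t_i$, a contradiction, so $r_q>t_{i-1}$; but then $\opt$ must open a facility at some time in $[r_q,t_i]\subseteq(t_{i-1},t_i]$ to serve $q$, again setting $\Color[\mu]=\Special$.

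The only real obstacle is this last case: one has to couple the failure semantics of $\SetColor$ with the fact that $\opt$ is forced to act within $[t_{i-1},t_i]$ because of the request triggering $\UponDeadline$. Once that is in hand, the two colored cases are short and follow immediately from the initialization and from Observation \ref{obs:FLDeadline_RaisesCounterAtMostF}.
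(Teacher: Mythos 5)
Your proposal is correct and follows essentially the same route as the paper: observe that a root charging node has no outgoing edges, dispatch the cases $\Color[\mu]=\Special$ and $\Color[\mu]=\mu$ via the same reasoning as Cases 1 and 2 of Lemma~\ref{lem:FLDeadline_ValidPreflow}, and rule out $\Color[\mu]=\None$ by showing that the request $q$ whose deadline $d_q=t_i$ triggered $\UponDeadline$ has $[r_q,d_q]\subseteq[t_{i-1},t_i]$, forcing $\opt$ to open a facility in that window and hence forcing $\Color[\mu]=\Special$. Your write-up just unpacks the colored cases inline and splits the $\None$ contradiction by $i=1$ versus $i\ge2$, which is the same dichotomy ($t_{i-1}=-\infty$ versus $\lambda_\mu>t_i$) the paper uses.
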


\begin{proof}
Observe that $E_{\mu}^{-}=\emptyset$. It remains to see that $\sum_{\sigma\in E_{\mu}^{+}}\alpha(\sigma)\ge f$.

If $\Color[\mu]\neq\None$, it holds that $\sum_{\sigma\in E_{\mu}^{+}}\alpha(\sigma)\ge f$
identically to Cases 1 and 2 in Lemma \ref{lem:FLDeadline_ValidPreflow}.

Otherwise, we must have that $\SetColor(\mu,\mu)$ returned $\None$.
Thus, it must be that either $t_{i-1}=-\infty$ or $\lambda_{\mu}>t_{i}$.
We claim that either of these cases contradicts $\Color[\mu]\neq\Special$.
Indeed, observe that $\Explore_{t_{i}}(r)$ is called upon a deadline
of a request $q$. If $t_{i-1}=-\infty$, it holds that $r_{q}\ge t_{i-1}$.
If $\lambda_{\mu}>t_{i}$, it must be that $r_{q}\ge t_{i-1}$ as
well. Overall, $[r_{q},d_{q}]\in[t_{i-1},t_{i}]$, and thus $\opt$
must open a facility in that interval, in contradiction.
\end{proof}
\begin{lem}
\label{lem:FLDeadline_OPT}$kf\le2(D+1)\cdot\opt^{B}+4\cdot\opt^{C}$
\end{lem}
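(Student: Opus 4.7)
The plan is to assemble the three main previously-established facts about the preflow $Z = (G, s, \alpha)$: Lemma \ref{lem:FLDeadline_ValidPreflow} (that $Z$ is a preflow), Lemma \ref{lem:FLDeadline_RootExcesses} (that root charging nodes have excess at least $f$), and Lemma \ref{lem:FLDeadline_ChargeLBOpt} (that the total incurred cost $\sum_\mu c(\mu)$ is bounded by $2(D+1)\cdot\opt^B + 4\cdot\opt^C$), via Proposition \ref{prop:Flow_SubsetLBSource}.

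First I would instantiate Proposition \ref{prop:Flow_SubsetLBSource} on $Z$ with $S$ being the specific set of $k$ root charging nodes $\{(r,[t_{i-1},t_i]) : i \in [k]\}$ appearing in Lemma \ref{lem:FLDeadline_RootExcesses}. This yields
\[
\sum_{i=1}^k \chi_{(r,[t_{i-1},t_i])} \;\le\; \omega_Z.
\]
Lemma \ref{lem:FLDeadline_RootExcesses} lower bounds each of the $k$ excesses on the left by $f$, giving $kf \le \omega_Z$.

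Next I would identify $\omega_Z$ explicitly. By definition $\omega_Z = -\chi_s = \sum_{\sigma \in E_s^-} \alpha(\sigma)$, and the only edges leaving $s$ in $G$ are those added in the initialization of Procedure \ref{proc:FLDeadline_PreflowBuilder} — one edge $s \to \mu$ of weight $c(\mu)$ for each $\mu \in M$ with $c(\mu) > 0$. Thus $\omega_Z = \sum_{\mu \in M} c(\mu)$. Chaining this with Lemma \ref{lem:FLDeadline_ChargeLBOpt} yields $kf \le \omega_Z \le 2(D+1)\cdot\opt^B + 4\cdot\opt^C$, as required.

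There is essentially no obstacle here: the entire substance of the argument lies in the preceding lemmas, and this lemma is simply the step where those pieces are combined via the preflow inequality. The only minor care needed is to confirm that no edges leave $s$ other than those from the initialization (which follows immediately from inspection of Procedure \ref{proc:FLDeadline_PreflowBuilder}, since $\SetColor$ only adds edges \emph{incoming} to the node being colored, never edges out of $s$).
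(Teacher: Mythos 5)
Your proof is correct and follows the same route as the paper: both combine Lemma \ref{lem:FLDeadline_ValidPreflow}, Lemma \ref{lem:FLDeadline_RootExcesses}, and Proposition \ref{prop:Flow_SubsetLBSource} to obtain $kf \le \omega_Z$, then identify $\omega_Z = \sum_{\mu\in M} c(\mu)$ and finish with Lemma \ref{lem:FLDeadline_ChargeLBOpt}. The observation that all edges out of $s$ come from the initialization (equivalently $E_s^+ = \emptyset$) matches the paper's remark as well.
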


\begin{proof}
Lemma \ref{lem:FLDeadline_ValidPreflow} yields that $Z$ is a valid
preflow. For $i\in[k]$, let $\mu_{i}=(r,[t_{i-1},t_{i}])$. Using
Lemma \ref{lem:FLDeadline_RootExcesses} and Proposition \ref{prop:Flow_SubsetLBSource},
we have that 
\[
kf\le\sum_{i=1}^{k}\chi_{\mu_{i}}\le\omega_{Z}
\]

Now observe that $E_{s}^{+}=\emptyset$, and that $\sum_{\sigma\in E_{s}^{-}}\alpha(\sigma)=\sum_{\mu\in M}c(\mu)$.
Using Lemma \ref{lem:FLDeadline_ChargeLBOpt}, we obtain
\[
kf\le\omega_{Z}=\sum_{\sigma\in E_{s}^{-}}\alpha(\sigma)=\sum_{\mu\in M}c(\mu)\le2(D+1)\cdot\opt^{B}+4\cdot\opt^{C}
\]

as required.
\end{proof}
\begin{proof}[Proof of Theorem \ref{thm:FLDeadline_HSTTheorem}]
Combining Lemmas \ref{lem:FLDeadline_ALG} and \ref{lem:FLDeadline_OPT},
we have that 
\[
\alg\le Dkf\le O(D^{2})\cdot\opt^{B}+O(D)\cdot\opt^{C}
\]
\end{proof}
\begin{rem}
\label{rem:FLDeadline_WorksForNondecreasing}Our algorithm and its
analysis also work in the case that the cost of opening a facility
is different between nodes in the tree, as long as the cost of opening
a facility at a node is at least the cost of opening a facility at
its parent node. If this is not the case, observe that the analysis
of Case 1 of Lemma \ref{lem:FLDeadline_ValidPreflow} would no longer
hold.
\end{rem}

\subsection{From HST to General Metric Space}

In this subsection, we show how the deterministic Algorithm \ref{alg:FLDeadline}
for $\left(\ge2\right)$-HSTs yields a randomized $O(\log^{2}n)$-competitive
algorithm for facility location with deadlines on a general metric
space with $n$ points, thus proving Theorem \ref{thm:FLDeadline_GMSLogSquared}.
To do this, we consider a standard probabilistic embedding of a metric
space to an HST.
\begin{thm}
\label{thm:HSTEmbedding}For any metric space $\mathcal{X}=(X,\delta_{\mathcal{X}})$
such that $|\mathcal{X}|=n$, there exists a distribution $\mathcal{D}$
over $\left(\ge2\right)$-HSTs of depth $O(\log n)$ such that $X$
are the leaves of the HST, such that the expected distortion is $O(\log n)$.
That is, for every $x_{1},x_{2}\in X$ we have that 
\[
\delta_{\mathcal{X}}(x_{1},x_{2})\le\E_{\mathcal{T}\sim\mathcal{D}}\left[\delta_{\mathcal{T}}(x_{1},x_{2})\right]\le O(\log n)
\]
where $\delta_{\mathcal{T}}$ is the distance in the tree $\mathcal{T}$.
\end{thm}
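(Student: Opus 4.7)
The plan is to apply the standard Fakcharoenphol--Rao--Talwar (FRT) probabilistic embedding, combined with a preprocessing step that controls the depth. First I would rescale the metric so the minimum distance is at least $1$, and then observe that we may assume the aspect ratio $\Delta$ is at most $n^{O(1)}$: distances above roughly $\text{poly}(n)$ times the relevant scale can be truncated (points beyond that are effectively handled in separate subinstances) and extremely small distances can be rounded up with negligible effect. This ensures $\log \Delta = O(\log n)$, so that the depth of the eventual tree, which is $O(\log \Delta)$, is $O(\log n)$.

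For the FRT construction itself, I would draw an independent uniformly random permutation $\pi$ of the $n$ points together with an independent scalar $\beta$ uniform in $[1,2)$. For each integer $i$ from $0$ up to $\lceil \log \Delta \rceil$, define a partition $P_i$ of $X$ as follows: each point $x$ is assigned to the cluster centered at the $\pi$-earliest point $y$ with $\delta_{\mathcal{X}}(x,y) \le \beta \cdot 2^i$. By construction the sequence $P_0, P_1, \ldots$ forms a laminar family; the HST $\mathcal{T}$ takes these clusters as internal nodes, with level-$i$ edges of weight $2^i$. Since weights at least double going up a level, $\mathcal{T}$ is a $(\ge 2)$-HST of the required depth, and the points of $X$ appear as the leaves.

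The analysis splits into two parts. The lower bound $\delta_{\mathcal{X}}(x_1, x_2) \le \E[\delta_{\mathcal{T}}(x_1, x_2)]$ is essentially immediate: any two leaves are joined by a tree path whose weight telescopes to the cluster radius at which they first merge, which is at least their metric distance. The upper bound is the classical FRT charging argument: for a fixed pair $(x_1, x_2)$, enumerate the remaining points $y$ by increasing distance to $\{x_1, x_2\}$; conditioned on $y$ being the $\pi$-earliest point ``eligible'' to cut $x_1, x_2$ at level $i$, the randomness of $\beta$ yields a probability bounded by $O(\delta_{\mathcal{X}}(x_1,x_2) / 2^i)$ that the cut actually happens, contributing $O(\delta_{\mathcal{X}}(x_1,x_2))$ to the expected tree distance. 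Summing over the $n$ candidate points $y$ contributes a harmonic $O(\log n)$ factor.

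The hard part is obtaining depth $O(\log n)$ and distortion $O(\log n)$ simultaneously, since the raw FRT tree has depth $O(\log \Delta)$ which is a priori unbounded in $n$. This is precisely why the aspect-ratio preprocessing step is essential: without it, one could try to exploit the flexibility of $(\ge 2)$-HSTs to contract levels at which no genuine split occurs, but this by itself does not give a logarithmic depth bound in the worst case (e.g.\ a ``caterpillar'' split pattern). Once the preprocessing caps $\Delta$ at $\text{poly}(n)$, both bounds align and the theorem follows.
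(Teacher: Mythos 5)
Your FRT construction and the charging argument for the $O(\log n)$ distortion are essentially correct. The gap is in how you control the depth, which the paper handles by composing FRT with a separate depth-reduction embedding of Bansal et al.\ \cite{DBLP:journals/jacm/BansalBMN15}, whereas you substitute a preprocessing step asserting that the aspect ratio may be taken to be $n^{O(1)}$. That assertion fails for a pure metric-embedding theorem: there is no ``instance'' here on which small or large distances are algorithmically immaterial, and the two-sided bound must hold for every pair. Consider $n$ collinear points at positions $1,2,4,\dots,2^{n-1}$, so $\Delta = 2^{n-1}$. Every point has a neighbor within a constant factor of every relevant scale, so there is no decomposition into far-apart subinstances, and truncating the large distances or rounding up the small ones breaks the guarantee $\delta_{\mathcal{X}}(x_1,x_2)\le\E\left[\delta_{\mathcal{T}}(x_1,x_2)\right]\le O(\log n)\cdot\delta_{\mathcal{X}}(x_1,x_2)$ for some pair by an exponential factor. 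Raw FRT on this metric produces, after contracting non-branching levels, exactly the depth-$\Theta(n)$ caterpillar you warn about, so your step $2$ alone does not deliver the stated depth.

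Your caterpillar remark is the right warning sign, but it points at the missing lemma rather than at aspect-ratio reduction. The $(\ge 2)$ relaxation is not merely a license to contract chains of degree-two nodes: it permits arbitrarily large jumps between a parent edge and its child edges, and this is what enables a genuine depth reduction. For the caterpillar example the induced leaf metric is an ultrametric and is realized to within a constant factor by a depth-$1$ star whose leaf edges have geometrically increasing weights; more generally, the Bansal et al.\ construction converts any HST with $n$ leaves into a $(\ge 2)$-HST of depth $O(\log n)$ while preserving leaf-to-leaf distances up to $O(1)$. Composing this with FRT (whose output has depth $O(\log\Delta)$ but expected distortion $O(\log n)$) yields the theorem. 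That tree-restructuring step is the ingredient your argument is missing; aspect-ratio preprocessing is not a valid substitute for it.
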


Theorem \ref{thm:HSTEmbedding} is a direct result of composing the
embeddings of Fakcharoenphol et al. \cite{DBLP:journals/jcss/FakcharoenpholRT04}
and Bansal et al. \cite{DBLP:journals/jacm/BansalBMN15}. 

We observe the following randomized algorithm for facility location
with delay on a general metric space:
\begin{enumerate}
\item Embed the metric space to a $\left(\ge2\right)$-HST according to
the distribution in Theorem \ref{thm:HSTEmbedding}.
\item Run Algorithm \ref{alg:FLDeadline} on the resulting $\left(\ge2\right)$-HST.
\end{enumerate}
\begin{proof}[Proof of Theorem \ref{thm:FLDeadline_GMSLogSquared}]
We show that the randomized algorithm described above is indeed $O(\log^{2}n)$-competitive.
Fix any instance of facility location with deadlines. We denote by
$\alg^{\T}$ the cost of the algorithm on the instance with regard
to distances on the chosen $\left(\ge2\right)$-HST $\T$. Since $\delta_{\T}(x_{1},x_{2})\ge\delta_{\mathcal{X}}(x_{1},x_{2})$,
we have that $\alg^{\mathcal{X}}\le\alg^{\T}$, where $\alg^{\mathcal{X}}$
is the actual cost incurred by the algorithm on this instance.

From Theorem \ref{thm:FLDeadline_HSTTheorem}, we know that for any
solution $\opt^{\T}$ for the instance on $\T$, it holds that $\alg^{\T}\le O(D^{2})\cdot\opt^{\T,B}+O(D)\cdot\opt^{\T,C}$
, where $D$ is the depth of $\T$ (and thus $D=O(\log n)$).

Now, denote by $\opt^{\mathcal{X}}$ the optimal solution for the
instance over $\mathcal{X}$. Observe that for every $\T$ in the
support of $\mathcal{D}$, $\opt$ yields a solution $\opt^{\T}$
by opening facilities at the same locations, at the same times, and
connecting the same requests. It holds that $\opt^{\T,B}=\opt^{\mathcal{X},B}$,
and that $\E_{\T\sim\mathcal{D}}\left[\opt^{\T,C}\right]\le O(\log n)\cdot\opt^{\mathcal{X},C}$.

Combining the above facts, we have that 
\begin{align*}
\E\left[\alg^{\mathcal{X}}\right] & \le\E_{\T\sim\mathcal{D}}\left[\alg^{\T}\right]\le\E_{\T\sim\mathcal{D}}\left[O(\log^{2}n)\cdot\opt^{\T,B}+O(\log n)\cdot\opt^{\T,C}\right]\\
 & \le O(\log^{2}n)\cdot\opt^{\mathcal{X},B}+O(\log^{2}n)\cdot\opt^{\mathcal{X},C}=O(\log^{2}n)\cdot\opt^{\mathcal{X}}
\end{align*}
proving the theorem.
\end{proof}
The reasoning behind the main theorem of this subsection is that the
connection cost is distorted upon HST embedding, while the buying
cost is not. Thus, the HST algorithm is allowed to lose a larger factor
over $\opt^{B}$ ($\Theta(\log^{2}n)$) compared to the factor it
loses over $\opt^{C}$ ($\Theta(\log n)$). This property is used
to analyze the other problems considered in this paper in a similar
manner.
\begin{rem}
For the case of different costs for opening facilities at different
points in the metric space, we obtain a $O(\log^{2}\Delta+\log\Delta\log n)$-competitive
randomized algorithm, with $\Delta$ the aspect ratio of the metric
space, in the following manner. We use the embedding of Fakcharoenphol
et al. \cite{DBLP:journals/jcss/FakcharoenpholRT04} without composing
it with the embedidng of \cite{DBLP:journals/jacm/BansalBMN15}. This
yields a 2-HST (rather than a $\left(\ge2\right)$-HST), which has
a depth of $O(\log\Delta)$, with $\Delta$ the aspect ratio of the
original metric space. This tree has an distortion of $O(\log n)$. 

In this $2$-HST, for each node $u$, the distances between $u$ and
the leaves in $T_{u}$ are equal. Thus, we can allow the algorithm
to open a facility at $u$, at a cost which is the minimal cost of
opening a facility at a leaf in $T_{u}$, at a loss of a factor of
$2$ in connection cost. The resulting tree has non-decreasing opening
costs from the root to any leaf, and is (in particular) a $\left(\ge2\right)$-HST
of depth $D=O(\log\Delta)$. Thus, using the algorithm for $\left(\ge2\right)$-HSTs
and Remark \ref{rem:FLDeadline_WorksForNondecreasing}, and applying
the distortion of $O(\log n)$ to the connection cost as in the proof
of Theorem \ref{thm:FLDelay_GMSLogSquared}, we obtain the $O(\log^{2}\Delta+\log\Delta\log n)$-competitive
algorithm.
\end{rem}

\section{\label{sec:FLDelay}Facility Location with Delay}

\subsection{Problem and Notation}

We now describe the facility location with delay problem. The problem
is an extension of the facility location with deadlines problem, in
which the deadline for each request $q$ is replaced with an arbitrary
delay function $d_{q}(t)$ associated with that request. Each delay
function is required to be continuous and monotonically non-decreasing.
This is indeed an extension of the deadline problem, as a deadline
can be described as a step function, which goes from $0$ to infinity
at the time of the deadline. Such a step function can be approximated
arbitrarily well by a continuous delay function.

A feasible solution for a facility location with delay instance consists
of opening facilities and connecting each request to some facility,
as in the deadline case. In addition to the opening costs and connection
costs incurred, the solution also pays $d_{q}(t)$ for each request
$q$ connected at time $t$. Overall, for an instance of the problem
with requests $Q$, the algorithm incurs the delay cost $\alg^{D}=\sum_{q\in Q}d_{q}(t_{q})$,
where $t_{q}$ is the time in which $q$ is served by the algorithm.
Thus, the algorithm's goal is to minimize the total cost
\[
\alg=\alg^{B}+\alg^{C}+\alg^{D}
\]

Without loss of generality, we assume that $d_{q}(r_{q})=0$. Indeed,
if this is not the case, observe that any solution (including the
optimal one) must pay this initial amount of $d_{q}(r_{q})$ in delay
for that request, which only reduces the competitive ratio of any
online algorithm.

In this section, we prove the following theorem.
\begin{thm}
\label{thm:FLDelay_GMSLogSquared}There exists an $O(\log^{2}n)$-competitive
randomized algorithm for facility location with delay for a general
metric space of size $n$.
\end{thm}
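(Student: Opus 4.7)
The plan is to prove this in direct parallel to Theorem \ref{thm:FLDeadline_GMSLogSquared}: first design an HST algorithm for facility location with delay and prove an HST competitiveness result of the form $\alg \le O(D^2)\opt^B + O(D)\opt^C + O(D)\opt^D$, and then apply the same randomized $(\ge 2)$-HST embedding from Theorem \ref{thm:HSTEmbedding} to lift this to general metric spaces.

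First I would design the HST algorithm as an extension of Algorithm \ref{alg:FLDeadline}. As indicated in the introduction, $\Explore(r)$ should now be triggered not by a single deadline but by a \emph{critical} set of pending requests whose accumulated delay reaches $f$. Within an exploration of a node $u$, the routine must consider two kinds of ``events'' in the subtree $T_u$: (i) a single pending request whose delay has grown large enough to justify directly connecting it to $u$ (the deadline-case analogue), and (ii) a tight coalition of pending requests whose aggregate delay reaches a threshold that justifies recursing into a child, even though no individual request has large enough delay. The algorithm still maintains counters of size $f$ on each non-root node, still exhausts a budget of exactly $f$ per exploration by calls to $\Invest$ on child counters, and still follows the DFS-preemption rule from Section \ref{sec:FLDeadline}.

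Second I would prove the HST bound. The upper bound on $\alg$ in terms of $kf$ (where $k$ is the number of top-level explorations) carries over essentially unchanged from Lemma \ref{lem:FLDeadline_ALG}, since it only uses Observation \ref{obs:FLDeadline_RaisesCounterAtMostF} and the fact that an exploration incurs $O(f)$ in opening and connection cost. One must additionally account for delay cost paid by the algorithm during exploration, but by the critical-set triggering rule each exploration incurs only $O(f)$ of delay as well, so the same style of argument gives $\alg \le O(D)\cdot kf$. The lower bound on $\opt$ requires modifying the preflow construction of Procedure \ref{proc:FLDeadline_PreflowBuilder}: the $\Special$ color should still be assigned to charging nodes where $\opt$ opens a facility, and the $\None$-colored case of Lemma \ref{lem:FLDeadline_ValidPreflow} must now argue that for each relevant request $q$, $\opt$ either connects $q$ across the cut or pays significant delay on $q$. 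This yields an analogue of Lemma \ref{lem:FLDeadline_OPT} of the form $kf \le O(D)\opt^B + O(1)(\opt^C + \opt^D)$, giving the desired HST competitiveness.

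Third I would apply Theorem \ref{thm:HSTEmbedding} exactly as in the proof of Theorem \ref{thm:FLDeadline_GMSLogSquared}. The crucial observation is that opening cost and delay cost are unaffected by the embedding (they depend only on which facility is opened and when a request is served, not on distances), while connection cost suffers an expected $O(\log n)$ distortion. With $D = O(\log n)$, this gives
\[
\E[\alg^{\mathcal{X}}] \le O(\log^2 n)\cdot \opt^{\mathcal{X},B} + O(\log^2 n)\cdot \opt^{\mathcal{X},C} + O(\log n)\cdot \opt^{\mathcal{X},D} = O(\log^2 n)\cdot \opt^{\mathcal{X}}.
\]
The main obstacle I expect is the delay-side analysis of the preflow: in the deadline case, the $\None$-colored analysis used the hard constraint $d_q \le \tau_2^{\star}$ to force $\opt$ to connect $q$ across the cut, whereas here one must design the critical-set trigger and the exploration's two-event structure carefully enough that the same cut-crossing argument can be replaced by a charge to either $\opt^C$ or $\opt^D$ without losing more than a constant factor.
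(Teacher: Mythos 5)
Your overall strategy is correct and matches the paper's: design an HST algorithm for the delay variant (Algorithm~\ref{alg:FLDelay}, Theorem~\ref{thm:FLDelay_HSTTheorem}), then apply the embedding of Theorem~\ref{thm:HSTEmbedding} exactly as in Theorem~\ref{thm:FLDeadline_GMSLogSquared}. Your last step -- observing that opening and delay costs are unaffected by the embedding while connection cost suffers $O(\log n)$ distortion -- is precisely what the paper does. However, there are two substantive inaccuracies in the middle step.

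First, your description of the root-level trigger is wrong in a way that matters. You say $\Explore(r)$ fires when ``a critical set of pending requests whose accumulated delay reaches $f$,'' but the paper's Definition~\ref{def:FLDelay_CriticalSet} triggers when $d_Q(t) \ge \psi(Q)$, where $\psi(Q)$ is the cost of the cheapest \emph{ancestor-closed} solution for $Q$. Since $\psi(Q) \ge f$ always but can be much larger for spread-out request sets, the $f$-trigger would fire too early. The $\psi$-threshold is essential twice over: the upper bound $\alg^D \le \alg^B + \alg^C$ (Proposition~\ref{prop:FLDelay_DelayBoundedByBuyingAndConnection}) rests on the invariant $d_Q(t) \le \psi(Q)$ being maintained, and the excess bound for root charging nodes (Lemma~\ref{lem:FLDelay_RootExcesses}) uses the decomposition $\psi_u(Q) = \sum_{u'} \psi_{u'}(Q_1^{u'}) + f + \sum \delta(u, v_q)$ from Proposition~\ref{prop:FLDelay_MinimumPrefixDecomposition} to subtract out previously pending requests $\hat{Q}$ and still be left with $d_{Q\setminus\hat{Q}} \ge f$. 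With the weaker $f$-trigger neither argument goes through.

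Second, your target HST bound $kf \le O(D)\opt^B + O(1)(\opt^C + \opt^D)$ is stronger than the paper proves or than the charging-node scheme supports. The paper's Lemma~\ref{lem:FLDelay_ChargeLBOPT} gives $kf \le (D+1)\opt^B + 2\opt^C + (D+1)\opt^D$: the delay of a single request can be counted at charging nodes for every ancestor of $v_q$, so a factor of $D+1$ on $\opt^D$ is inherent. Correspondingly Theorem~\ref{thm:FLDelay_HSTTheorem} states $\alg \le O(D^2)\opt^B + O(D)\opt^C + O(D^2)\opt^D$. This does not endanger the final result -- with $D = O(\log n)$ every coefficient is $O(\log^2 n)$ -- but if you attempted to actually prove your stated $O(1)\opt^D$ bound you would fail.
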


\subsection{Algorithm for HSTs}

In this subsection, we present a deterministic algorithm for facility
location with delay on a $\left(\ge2\right)$-HST. This algorithm
yields a randomized $O(\log^{2}n)$-competitive algorithm for general
metric spaces, in a similar way to the deadline case.

We require the following definitions.
\begin{defn}[Solution]
 Let $Q$ be a set of requests. For $S\subseteq X$, and a function
$\phi:Q\rightarrow S$ we say that $(S,\phi)$ is a \emph{solution
for }$Q$, with a cost $|S|\cdot f+\sum_{q\in Q}\delta(v_{q},\phi(q))$.
If $S\subseteq T_{u}$ for some node $u$, we write that $(S,\phi)$
is a \emph{solution for $Q$ under $u$.}
\end{defn}

\begin{defn}[Ancestor-closed solution]
 Let $Q$ be a set of requests, and let $(S,\phi)$ be a solution
for $Q$ under a node $u$. We say that $(S,\phi)$ is an \emph{ancestor-closed
solution for $Q$ under $u$ }if for every $s\in S$ such that $s\neq u$
we have that $p(s)\in S$.

If $u=r$, we simply write that $(S,\phi)$ is an ancestor-closed
solution for $Q$.
\end{defn}

\begin{defn}[$\psi(Q)$ and $\psi_{u}(Q)$]
We define $\psi(Q)$ to be the cost of the minimal-cost ancestor-closed
solution for $Q$. Similarly, we define $\psi_{u}(Q)$ to be the cost
of the minimal-cost ancestor-closed solution for $Q$ under $u$.
\end{defn}

\begin{defn}[Critical request set]
\label{def:FLDelay_CriticalSet}We say that a set of pending requests
$Q$ at time $t$ is \emph{critical }if $d_{Q}(t)\ge\psi(Q)$.
\end{defn}

\textbf{Algorithm's description. }Our algorithm is given in Algorithm
\ref{alg:FLDelay}. The algorithm calls $\texttt{\ensuremath{\UponCritical}}$
whenever a set of pending requests $Q$ becomes critical. It uses
the $\Open$ and $\Invest$ functions from Algorithm \ref{alg:FLDeadline},
but redefines the $\Explore$ function. The function call $\Explore(u)$
now forwards time until the first occurrence of one of two events. 

The first event is a pending request $q$ in $T_{u}$, the delay of
which exceeds the cost of connecting it to $u$. Handling this event
is similar to handling the event considered in Algorithm \ref{alg:FLDeadline}
-- we attempt to raise the counter of the child node $v$ in $q$'s
direction by $\delta(u,v_{q})$. If this fills the counter of $v$,
this triggers an immediate call to $\Explore(v)$. However, in contrast
to the deadline case, calling $\Explore(v)$ is not guaranteed to
connect the request $q$. For this reason, $\Explore(u)$ must invest
the remainder of $\delta(u,v_{q})$ (or until $b_{u}=0$) in $v$'s
counter before connecting $q$ to $u$.

The second event is not analogous to the deadline case. In this event,
for a child $v$ of $u$ and a ``coalition'' of pending requests
$Q$ in $T_{v}$, we have that the delay of $Q$ exceeds $\psi_{v}(Q)$.
In this case, the algorithm invests in $v$ until either it is out
of budget ($b_{u}=0$) or $v$'s counter is full $(c_{v}=f$). It
is important to note that in contrast to the first event, the fact
that $\Explore(u)$ considered $Q$ does not provide any guarantees
for connecting the requests of $Q$. 

Algorithm \ref{alg:FLDelay} changes $\Explore(u)$ so that time is
forwarded until one of two events happens, rather than the single
event in Algorithm \ref{alg:FLDeadline} (i.e. expired deadline).
These two events are shown in Figure \ref{fig:FLDelay_TimeForwarding}.

\noindent \LinesNumbered \RestyleAlgo{boxruled}\renewcommand{\algorithmcfname}{Algorithm}
\begin{algorithm}[tb]
\caption{\label{alg:FLDelay}Facility Location with Delay}

\SetKwProg{Fn}{Function}{}{end}
\SetKwProg{EFn}{Event Function}{}{end}
\SetKwFunction{UponCritical}{UponCritical}
\SetKwFunction{Open}{Open}
\SetKwFunction{Explore}{Explore}

\textbf{Initialization.}

Initialize $c_{u}\leftarrow0$ for any node $u\in T\backslash\{r\}$.

Declare $b_{u}$ for any node $u\in T$.

\;

\EFn(\tcp*[h]{Upon request set becoming critical as per Definition
\ref{def:FLDelay_CriticalSet}}){\UponCritical{}}{

\Explore{$r$}

}

\;

\Fn{\Explore{$u$}}{

\Open{$u$}

set $b_{u}\leftarrow f$

\While{$b_{u}\neq0$ \normalfont{\textbf{and}} there remain pending
requests in $T_{u}$}{

let $Q$ be the set of pending requests in $T_{u}$.

let $t_{1}^{\prime}\ge t$ be the earliest time in which a request
$q\in Q$ satisfies $d_{q}(t_{1}^{\prime})\ge\delta(v_{q},u)$.

let $t_{2}^{\prime}\ge t$ be the earliest time in which there exists
$v$ such that $p(v)=u$, and a set of requests $Q^{\prime}\subseteq Q\cap T_{v}$
such that $d_{Q^{\prime}}(t_{2}^{\prime})\ge\psi_{v}(Q^{\prime})$.

\eIf{$t_{1}^{\prime}\le t_{2}^{\prime}$}{

let $q\in Q$ be the request in the definition of $t_{1}^{\prime}$.
let $v$ be the child of $u$ on the path to $v_{q}$.

call \Invest{$u$,$v$,$\delta(u,v_{q})$}, and let $y$ be the
return value.

\lIf{$c_{v}=f$}{set $c_{v}\leftarrow0$ $\texttt{\textbf{;}}$
call \Explore{$v$}.}

\If{ $q$ is still pending }{

call \Invest{$u$,$v$,$\delta(u,v_{q})-y$}

connect $q$ to facility at $u$

}

} { 

let $v$ be as in the definition of $t_{2}^{\prime}$. 

call \Invest{$u$,$v$,$\infty$}.

\lIf{$c_{v}=f$}{set $c_{v}\leftarrow0$ $\texttt{\textbf{;}}$
call \Explore{$v$}.}

}

}

}

\;

\lFn(\tcp*[h]{As in Algorithm \ref{alg:FLDeadline}}){\Open{$u$}}{}

\lFn(\tcp*[h]{As in Algorithm \ref{alg:FLDeadline}}){\Invest{$u$,$v$,$x$}}{}
\end{algorithm}
\begin{figure}[h]
\subfloat[Event 1: single request's delay exceeds connection]{\includegraphics[scale=0.7]{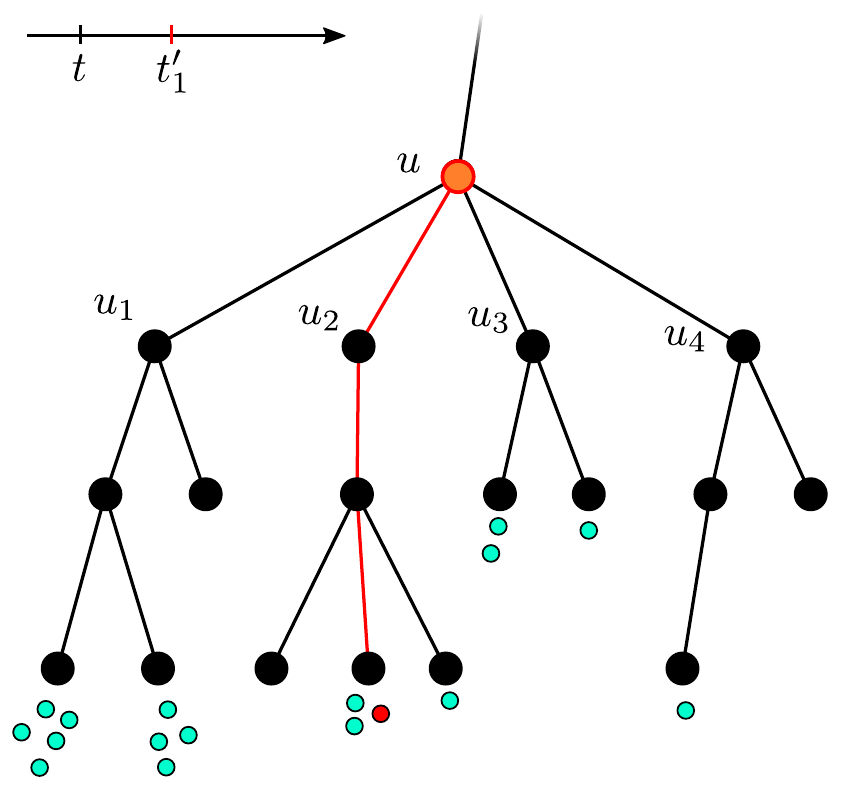}

}\hfill{}\subfloat[Event 2: coalition of low-delay requests exceeds ]{\includegraphics[scale=0.7]{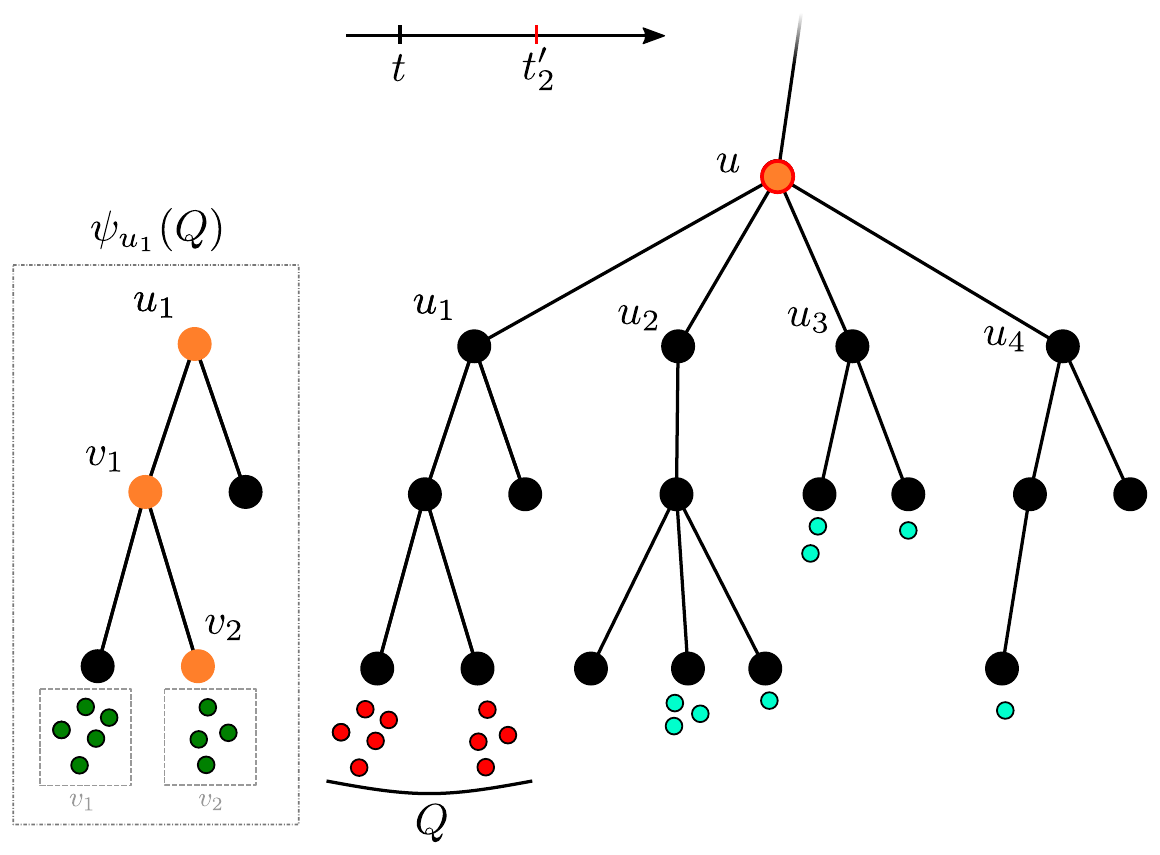}

}\\
\\

This figure shows the two events considered in the time-forwarding
stage of $\Explore(u)$. The first event is a request whose delay
exceeds the cost of connecting it to $u$. This event is very similar
to the single event in the deadline case. 

The second event is a ``coalition'' $Q$ of low-delay requests under
some child node of $u$, denoted $u_{1}$. Though the delay of no
single request in the coalition exceeds the cost of its connection
to $u$, the total delay of the coalition exceeds $\psi_{u_{1}}(Q)$
-- the cost of an optimal ancestor-closed solution under $u_{1}$
for that coalition (the optimal solution is shown in the dash-dotted
rectangle).

\caption{\label{fig:FLDelay_TimeForwarding}Time forwarding conditions of Algorithm
\ref{alg:FLDelay}}
\end{figure}

\subsection{Analysis}

Fix any instance of facility location with delay on a $\left(\ge2\right)$-HST. 
\begin{thm}
\label{thm:FLDelay_HSTTheorem}$\alg\le O(D^{2})\cdot\opt^{B}+O(D)\cdot\opt^{C}+O(D^{2})\cdot\opt^{D}$.
\end{thm}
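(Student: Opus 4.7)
The plan is to mirror the three-part structure of the proof of Theorem \ref{thm:FLDeadline_HSTTheorem}: show the algorithm is valid, upper-bound $\alg$ by $O(D) \cdot kf$ where $k$ is the number of calls to $\UponCritical$, and lower-bound $kf$ by $O(D) \opt^B + O(1) \opt^C + O(D) \opt^D$ via an augmented preflow construction. Multiplying these bounds through yields the claim with factor $O(D^2)$ on $\opt^B$ and $\opt^D$ and factor $O(D)$ on $\opt^C$.

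For the upper bound on $\alg^B + \alg^C$, the counter-propagation apparatus from Observation \ref{obs:FLDeadline_RaisesCounterAtMostF} through Proposition \ref{prop:FLDeadline_ALGBoundedByCounters} transfers verbatim, since its only requirements are that each $\Explore(u)$ invests at most $f$ in child counters and opens a single facility of cost $f$, both of which Algorithm \ref{alg:FLDelay} preserves. The new ingredient is bounding $\alg^D$. For this I would argue that between consecutive $\UponCritical$ firings on a given subtree, the total pending delay on any subset cannot exceed $\psi$ of that subtree (which itself is $O(Df)$ via an ancestor-closed construction), so the delay accumulated by the algorithm on requests served during the $i$-th exploration is also $O(Df)$, summing to $\alg^D = O(D) \cdot kf$.

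For the lower bound I would define charging nodes exactly as in the deadline case but augment their incurred cost to $c(\mu) = c_b(\mu) + c_c(\mu) + c_d(\mu)$, where $c_d(\mu)$ is the delay $\opt$ pays on requests with $v_q \in T_u$ released during $[\tau_1,\tau_2]$. The analog of Lemma \ref{lem:FLDeadline_ChargeLBOpt} then reads $\sum_\mu c(\mu) \le O(D) \opt^B + O(1) \opt^C + O(D) \opt^D$: each $\opt$ facility and each unit of $\opt$'s delay falls under $O(D)$ ancestor nodes (in at most two intervals per ancestor by Observation \ref{obs:FLDeadline_SingleService}), while connection-cost contributions contract geometrically down the $(\ge 2)$-HST as in the deadline proof. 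Procedure \ref{proc:FLDeadline_PreflowBuilder} can be reused verbatim, and the three-case analysis of Lemma \ref{lem:FLDeadline_ValidPreflow} is what requires extension. Cases $1$ and $2$ ($\Color[\mu] \in \{\Special,\mu^\star\}$) go through unchanged. Case $3$ ($\Color[\mu] = \None$) splits along the two event types of $\Explore$: for a $t_1'$ investment edge $\sigma$ with witness $q \in L_\sigma$, the usual argument gives that $\opt$ opens no facility in $T_u$ during the critical window, and the delay $d_q(t_1') \ge \delta(v_q,u')$ must be paid by $\opt$ either as a connection outside $T_u$ (charged to $c_c(\mu)$) or as delay (charged to $c_d(\mu)$); for a $t_2'$ coalition edge, the witness $Q'$ satisfies $d_{Q'}(t_2') \ge \psi_v(Q')$, which lower-bounds any combination of buying, connection, and delay cost that $\opt$ incurs on $Q'$ within the relevant window.

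The principal obstacle is the coalition case within the no-color analysis. Unlike the single-request event, where one request immediately witnesses a cost paid by $\opt$, the coalition event only asserts the aggregated ancestor-closed lower bound $\psi_v(Q')$, which mixes buying, connection, and delay components. Charging $\alpha(\sigma)$ to $\opt$ without double-counting across sibling charging nodes will require extending the disjointness property $L_{\sigma_1} \cap L_{\sigma_2} = \emptyset$ from the deadline argument, and verifying that the buying component of $\psi_v(Q')$ is either absorbed by $c_b$ at $\mu$ (via the $\Special$ color at a descendant) or forced into $c_c + c_d$ at $\mu$ by the no-color hypothesis. This bookkeeping, together with controlling the interaction of the geometric HST contraction with the delay component, is the technical heart of the proof.
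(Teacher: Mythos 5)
Your overall three-part skeleton matches the paper's, but there are two concrete gaps. First, the upper bound on $\alg^{D}$: you argue that $\psi(Q)$ for the served set ``is $O(Df)$ via an ancestor-closed construction,'' which is false -- $\psi(Q)$ can grow linearly in $|Q|$ (e.g.\ a star with $n$ leaves, one pending request per leaf at distance $f/2$, gives $\psi(Q)=\Theta(nf)$), so no per-service bound of $O(Df)$ on the delay exists. The paper instead proves $\alg^{D}\le\alg^{B}+\alg^{C}$ directly (Proposition~\ref{prop:FLDelay_DelayBoundedByBuyingAndConnection}): the algorithm maintains the invariant $d_{Q}(t)\le\psi(Q)$ for every set of pending requests, and each actual service $(S,\phi)$ is itself an ancestor-closed solution for the set $Q_i$ it serves, so the service's buying-plus-connection cost is at least $\psi(Q_i)\ge d_{Q_i}(t_i)$. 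Summed over services this gives $\alg^{D}\le\alg^{B}+\alg^{C}$, and the remaining $\alg^{B}+\alg^{C}\le O(D)kf$ comes from the counter machinery as you describe. You get the right conclusion but your stated reason for it does not hold.

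Second, for the lower bound: you correctly identify the coalition case as the heart of the difficulty but leave it unresolved, and you also omit a necessary change that reusing Procedure~\ref{proc:FLDeadline_PreflowBuilder} requires -- the quantity $\lambda_u^t$ must be redefined to reflect the two event types of $\Explore$ (it becomes the earliest time at which either a single pending request's delay exceeds its connection cost to $u$, or a coalition in some child subtree accumulates delay exceeding $\psi$ of that subtree); without this the $\SetColor$ condition check is meaningless for the delay problem. The resolution of the coalition case (the paper's Case~4 of Lemma~\ref{lem:FLDelay_ValidPreflow}) hinges on a decomposition lemma you don't have: Proposition~\ref{prop:FLDelay_MinimumPrefixDecomposition} writes $\psi_u(Q)$ as $f$ plus the sum of the child-subtree optima $\psi_{u^{\prime}}(Q_1^{u^{\prime}})$ plus the direct connection costs. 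Combined with the redefined $\lambda_\mu>\tau_2^{\star}$ hypothesis, this lets one split $Q$ into $\hat{Q}$ (pending already before $\tau_1$) and $Q\setminus\hat{Q}$, show $d_{\hat{Q}}(\hat{t})\le\sum_{u^{\prime}}\psi_{u^{\prime}}(Q_1^{u^{\prime}})$, and hence extract $d_{Q\setminus\hat{Q}}(\hat{t})\ge f$ paid by requests genuinely released in $[\tau_1,\tau_2)$, on which $\opt$ must pay either delay or connection outside $T_u$. Your worry about ``the buying component of $\psi_v(Q^{\prime})$'' being hard to charge is resolved precisely by this decomposition together with the $\Color[\mu]\ne\Special$ hypothesis, which already rules out $\opt$ opening a facility in $T_u$ during the window -- so no buying cost appears and the $f$ summand from the decomposition is charged to delay/connection, not to $c_b$.
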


Observe that the connection cost is distorted by the embedding, while
the buying and delay costs are not. Thus, using an identical argument
to the proof of Theorem \ref{thm:FLDeadline_GMSLogSquared} of the
deadline problem, Theorem \ref{thm:FLDelay_HSTTheorem} implies Theorem
\ref{thm:FLDelay_GMSLogSquared}.

We devote this subsection to prove Theorem \ref{thm:FLDelay_HSTTheorem}.

\subsubsection{Upper Bounding $\protect\alg$}

To upper bound the cost of the algorithm, we show the following Lemma. 
\begin{lem}
\label{lem:FLDelay_ALG}$\alg\le6\cdot(D+1)\cdot kf$
\end{lem}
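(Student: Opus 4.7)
The plan is to mirror the proof of Lemma \ref{lem:FLDeadline_ALG} almost step for step, inserting a new ingredient for the delay cost. The skeleton remains: define virtual counters $\bar{c}_u$ which accumulate every unit ever added to $c_u$ (with $\bar{c}_r$ bumped by $f$ at each of the $k$ calls to $\UponCritical$), show by depth-wise induction that $\sum_{u\in T}\bar{c}_u\le (D+1)kf$, and then bound the cost directly incurred by each $\Explore(u)$ invocation by a constant multiple of $f$.

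First I would reestablish the analog of Observation \ref{obs:FLDeadline_RaisesCounterAtMostF}: in both branches of the main loop of $\Explore(u)$, every unit spent on raising children counters is deducted from $b_u$ by the $\Invest$ function, and the loop terminates the instant $b_u=0$ or $T_u$ is empty of pending requests. So $\Explore(u)$ raises children counters by at most $f$, with equality if pending requests remain in $T_u$ on return. The depth-wise statement (analog of Proposition \ref{prop:FLDeadline_CounterDecreaseWithLevel}) and the global counter bound (analog of Corollary \ref{cor:FLDeadline_CountersBoundedByDKF}) then transfer verbatim, yielding $\sum_u\bar{c}_u\le(D+1)kf$.

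The substantive new step is to bound the cost directly attributable to a single $\Explore(u)$ call by at most $6f$. The opening cost in $\Open(u)$ contributes $f$. For connection cost, I would argue as in Proposition \ref{prop:FLDeadline_ALGBoundedByCounters}: in a Case 1 iteration that does not drain the budget, the two $\Invest$ calls together spend exactly $\delta(u,v_q)$ from $b_u$ and the connection (if $q$ is still pending) costs $\delta(u,v_q)$ too, hence no more than the budget spent; at most one final iteration can be cut short by $b_u=0$, adding at most $\delta(u,v_q)\le f$. Case 2 iterations contribute no direct connection cost. Summing gives direct connection cost at most $2f$. For the delay cost, the key identity is that whenever $\Explore(u)$ directly connects a request $q$ (necessarily in a Case 1 iteration, at time $t_1'$), the minimality of $t_1'$ and the continuity of the delay functions force $d_q(t_1')=\delta(v_q,u)$. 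So delay paid at this direct connection equals the corresponding connection cost, bounding the direct delay contribution by at most $2f$ as well, and leaving room for an extra $f$ of slack to absorb a possible final-iteration partial payment. Altogether each $\Explore(u)$ invocation carries cost at most $6f$, so $\alg\le 6\sum_u\bar{c}_u\le 6(D+1)kf$.

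The main obstacle is the delay accounting: delay is a continuous cost, and it is easy to either double-count delay between an $\Explore(u)$ and its recursively triggered $\Explore(v)$, or to under-count by forgetting delay accumulated before a recursive service. The clean fix is to attribute each delay payment to the unique $\Explore$ call that actually performs the connection (delay is only paid at service time), so the sum over all $\Explore$ calls of their direct delay equals $\alg^D$ with no overlap, and the equation $d_q(t_1')=\delta(v_q,u)$ converts the delay bound into an already-known connection-cost bound. A secondary mild issue is that Case 2 can trigger a recursive $\Explore(v)$ whose own direct costs must be booked against $\bar{c}_v$ (not $\bar{c}_u$); this is handled automatically by Observation \ref{obs:FLDeadline_SingleService}'s analog, which guarantees each recursion corresponds to a distinct full cycle of some counter and hence to its own $f$-unit of $\bar{c}_v$.
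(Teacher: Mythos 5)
Your structural skeleton (virtual counters, depth-wise bound $\sum_u \bar c_u\le (D+1)kf$, per-call opening and connection bound of $3f$) matches the paper's Lemma~\ref{lem:FLDelay_BuyingAndConnectionBounded} and its Propositions~\ref{prop:FLDelay_InvestServes}, \ref{prop:FLDelay_ALGBoundedByCounters}. But your treatment of the delay cost diverges from the paper and, as written, has a genuine gap. You attempt a \emph{local} accounting: attribute each request's delay to the $\Explore$ call that connects it and argue that the per-call delay is bounded via the identity $d_q(t_1')=\delta(v_q,u)$. That identity does not hold in general. The time $t_1'$ is the earliest time $\ge t$ at which $d_q(t_1')\ge\delta(v_q,u)$; if some pending $q$ already satisfies $d_q(t)\ge\delta(v_q,u)$ when the loop iteration begins (which can and does happen, e.g.\ when $\Explore(r)$ is first invoked a request can already carry delay close to $f$ while $\delta(v_q,r)$ is tiny, or after a recursive $\Explore(v)$ returns with $q$ still pending and time having advanced), then $t_1'=t$ and $d_q(t_1')$ can strictly exceed $\delta(v_q,u)$, so the delay is not controlled by the connection cost. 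A second omission: requests served inside $\Open(u)$ when $u$ is a leaf have zero connection cost but nonzero delay, and your per-call accounting gives nothing to charge them to.

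The paper takes a \emph{global} route that sidesteps both issues. It first shows (Proposition~\ref{prop:FLDelay_DelayBoundedByBuyingAndConnection}) that $\alg^D\le\alg^B+\alg^C$, using the invariant that for every set of pending requests $Q$ at time $t$ we have $d_Q(t)\le\psi(Q)$, together with the fact that each service transmits an ancestor-closed solution whose cost is at least $\psi$ of the set it serves. Combined with the $\alg^B+\alg^C\le 3(D+1)kf$ bound (whose proof you reproduce correctly), this gives $\alg=\alg^B+\alg^C+\alg^D\le 2(\alg^B+\alg^C)\le 6(D+1)kf$. If you want to repair your proposal, you should replace the per-call delay argument by this global comparison $\alg^D\le\alg^B+\alg^C$; the rest of your proof then goes through.
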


\begin{prop}
\label{prop:FLDelay_DelayBoundedByBuyingAndConnection}$\alg^{D}\le\alg^{B}+\alg^{C}$
\end{prop}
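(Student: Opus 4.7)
My plan is to decompose the algorithm's delay cost by trigger times and match each piece to the algorithm's local buying and connection cost.

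First, let $t_1 < t_2 < \cdots < t_k$ be the times at which $\UponCritical$ fires, and let $S_i$ be the set of requests that the algorithm connects during the $i$-th invocation of $\Explore(r)$. Since the algorithm connects requests only inside $\Explore$ calls, and $\Explore(r)$ is invoked only from $\UponCritical$, the sets $S_i$ partition the served requests, so $\alg^D = \sum_{i=1}^k d_{S_i}(t_i)$. Writing $\alg^B_i$ and $\alg^C_i$ for the buying and connection costs incurred during the $i$-th $\Explore(r)$ call, we have $\alg^B = \sum_i \alg^B_i$ and $\alg^C = \sum_i \alg^C_i$.

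Second, I would prove $d_{S_i}(t_i) \le \psi(S_i)$. By the definition of $\UponCritical$, no pending subset is critical at any $t \in (t_{i-1}, t_i)$, so $d_Q(t) < \psi(Q)$ for every subset $Q$ of requests pending at such $t$. Requests of $S_i$ released exactly at $t_i$ contribute $0$ to $d_{S_i}(t_i)$ by the assumption $d_q(r_q)=0$, so I may discard them; the remaining requests are pending throughout a left-neighborhood of $t_i$. Applying the non-criticality inequality to this subset, letting $t \to t_i^{-}$, and using continuity of the delay functions together with monotonicity of $\psi$ under inclusion ($Q' \subseteq Q$ implies any ancestor-closed solution for $Q$ restricts to one for $Q'$), we obtain $d_{S_i}(t_i) \le \psi(S_i)$.

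Third, I would argue $\psi(S_i) \le \alg^B_i + \alg^C_i$. Let $F_i$ be the set of nodes at which the algorithm opens facilities during the $i$-th $\Explore(r)$ call. Then $F_i$ contains $r$ (since $\Explore(r)$ immediately calls $\Open(r)$) and is closed under parents: the only place $\Explore(v)$ with $v \ne r$ is invoked is from within a running $\Explore(p(v))$, which has already called $\Open(p(v))$. Moreover, the algorithm connects every $q \in S_i$ to some facility in $F_i$, yielding a map $\phi_i : S_i \to F_i$. Hence $(F_i, \phi_i)$ is an ancestor-closed solution for $S_i$ of cost exactly $\alg^B_i + \alg^C_i$, and the minimality in the definition of $\psi(S_i)$ gives $\psi(S_i) \le \alg^B_i + \alg^C_i$.

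Combining the three ingredients and summing over $i$ yields $\alg^D \le \sum_i \psi(S_i) \le \sum_i (\alg^B_i + \alg^C_i) = \alg^B + \alg^C$. The only subtle point is the ancestor-closedness of $F_i$, which is purely structural and follows from the DFS pattern of $\Explore$ recursions; the rest reduces to the defining inequality of critical sets.
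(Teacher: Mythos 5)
Your proof is correct and follows the same strategy as the paper's: the delay of the request set $Q$ served in a call to $\Explore(r)$ is bounded by $\psi(Q)$ via the invariant that no pending set is critical before $\UponCritical$ fires, and $\psi(Q)$ is in turn bounded by the algorithm's buying and connection cost because the facilities opened during that call together with the connection map form an ancestor-closed solution for $Q$. You spell out the continuity and monotonicity argument for the invariant at $t_i$ and the ancestor-closedness of $F_i$ from the DFS structure of $\Explore$, both of which the paper asserts more tersely, but the underlying argument is the same.
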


\begin{proof}
The algorithm explicitly maintains that for every set of pending requests
$Q$ at any time $t$ we have that $\psi(Q)\ge d_{Q}(t)$. Now, consider
that since the delay of a pending request goes to infinity, the algorithm
ultimately serves every request. Consider a specific service made
by the algorithm, described by a solution $(S,\phi)$ to some set
of requests $Q$, and note that $(S,\phi)$ is an ancestor-closed
solution to $Q$. Thus, its total cost is at least $\psi(Q)$, completing
the proof.
\end{proof}
\begin{lem}
\label{lem:FLDelay_BuyingAndConnectionBounded}$\alg^{B}+\alg^{C}\le3\cdot(D+1)\cdot kf$.
\end{lem}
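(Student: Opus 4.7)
The plan is to mimic the structure of the deadline-case Lemma~\ref{lem:FLDeadline_ALG}, splitting the bound into (i) a counter-propagation bound $\sum_{u\in T}\bar{c}_{u}\le(D+1)kf$, and (ii) a per-exploration accounting $\alg^{B}+\alg^{C}\le 3\sum_{u}\bar{c}_{u}$. Multiplying these gives exactly the desired $3(D+1)kf$.

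For (i), the analog of Observation~\ref{obs:FLDeadline_RaisesCounterAtMostF} still holds: since $\Invest$ strictly enforces the budget variable $b_u$, every $\Explore(u)$ call raises its children's counters by a total of at most $f$, and by exactly $f$ whenever the while loop exits because $b_{u}=0$ while pending requests remain. Because $\bar{c}_{r}$ is raised by $f$ at each of the $k$ calls to $\UponCritical$, and because counters at depth $j$ are only raised by explorations at depth $j-1$, the layer-by-layer argument of Proposition~\ref{prop:FLDeadline_CounterDecreaseWithLevel} transfers verbatim and yields $\bar{C}_{j}\le\bar{C}_{j-1}$, and hence the $(D+1)kf$ bound.

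For (ii), the opening cost per $\Explore(u)$ call is exactly $f$. The only connection costs charged within $\Explore(u)$ itself (rather than inside a recursive $\Explore(v)$) come from Event~1 iterations in which $q$ is still pending after the first $\Invest$ and the possible recursive $\Explore(v)$; Event~2 iterations and Event~1 iterations whose request is served by the recursive $\Explore(v)$ contribute nothing to this accounting. I would split the connecting iterations into ``full'' iterations, in which both $\Invest(u,v,\cdot)$ calls together invest exactly $\delta(u,v_{q})$ into $\bar{c}_{v}$, and at most one ``incomplete'' iteration, in which the budget $b_{u}$ is exhausted mid-iteration and the while loop terminates. Full iterations have connection cost equal to the investment charged to $\bar{c}_{v}$, so their total is bounded by the initial budget $f$; the incomplete iteration has connection cost $\delta(u,v_{q})\le f$ using the assumption $w(e)\le f$. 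Hence the total cost of $\Explore(u)$ is at most $3f$, and summing over all explorations gives the target bound.

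The main obstacle, compared with the deadline case, is that the recursive call $\Explore(v)$ is no longer guaranteed to serve the triggering request $q$ (the guarantee from Proposition~\ref{prop:FLDeadline_ConsideredMeansServed} relied on deadline ordering, which is absent here), which is precisely why Algorithm~\ref{alg:FLDelay} invokes a two-step $\Invest$ inside a single Event~1 iteration. The cleanest way to avoid a delicate case split on whether $c_{v}$ filled or whether the recursion happened to serve $q$ is to classify iterations uniformly by whether they invested the full requested amount or exhausted the budget; once phrased this way, the deadline proof of Proposition~\ref{prop:FLDeadline_ALGBoundedByCounters} extends with only notational changes.
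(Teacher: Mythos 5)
Your proposal matches the paper's proof essentially step for step: the paper also decomposes the bound into (i) the counter-propagation bound $\sum_{u}\bar{c}_{u}\le(D+1)kf$, which it reuses from Corollary~\ref{cor:FLDeadline_CountersBoundedByDKF} with the same layer-by-layer argument, and (ii) a per-exploration bound $\alg^{B}+\alg^{C}\le3\sum_{u}\bar{c}_{u}$, via Proposition~\ref{prop:FLDelay_InvestServes} (which classifies each Event-1 iteration into ``invested the full $\delta(u,v_q)$'', ``exhausted $b_u$'', or ``$q$ served by the recursion'') and Proposition~\ref{prop:FLDelay_ALGBoundedByCounters} (which charges $f$ for the opening, at most $f$ across full iterations, and at most $f$ for the single incomplete iteration). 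Your ``full vs.\ incomplete'' phrasing is exactly the paper's accounting, and your observation that iterations in which the recursion serves $q$ contribute no connection cost corresponds to the paper's third case.
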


Proving Lemma \ref{lem:FLDelay_BuyingAndConnectionBounded} is very
similar to proving Lemma \ref{lem:FLDeadline_ALG} of the deadline
case. Defining cumulative counters as in the deadline case, we can
prove Corollary \ref{cor:FLDeadline_CountersBoundedByDKF} holds in
the delay case using an identical proof. It remains to show and prove
analogues to Propositions \ref{prop:FLDeadline_InvestServes} and
\ref{prop:FLDeadline_ALGBoundedByCounters}. 

Note that the connection costs in $\Explore(u)$ only occur during
iterations of the main loop in which the main \textbf{if} condition
is entered.
\begin{prop}[analogue of Proposition \ref{prop:FLDeadline_InvestServes} ]
\label{prop:FLDelay_InvestServes} Suppose the function $\Explore(u)$
enters the main \textbf{if }condition in an iteration, and let $q$
be the pending request under consideration. Then at least one of the
following holds:
\begin{enumerate}
\item The sum of return values of calls to $\Invest$ in that iteration
is $\delta(u,v_{q})$.
\item $b_{u}=0$ at the end of the iteration.
\item $q$ is no longer pending after the first call to $\Invest$.
\end{enumerate}
\end{prop}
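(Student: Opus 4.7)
The plan is to follow the case analysis that works for the analogous Proposition \ref{prop:FLDeadline_InvestServes}, while carefully tracking both $b_u$ and $c_v$ through the additional second $\Invest$ call that the delay case introduces. I would denote by $y$ the return value of the first call $\Invest(u,v,\delta(u,v_q))$, and suppose that condition 3 fails, i.e., that $q$ is still pending when the inner \textbf{if} check is reached. Under this assumption, the algorithm executes the second call $\Invest(u,v,\delta(u,v_q)-y)$; let $y'$ be its return value. The goal then reduces to showing that either $y + y' = \delta(u,v_q)$ (condition 1) or $b_u = 0$ at the end of the iteration (condition 2).

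The argument would proceed by asking which constraint was binding in the first $\Invest$ call. By the definition of $\Invest$, we have $y = \min\bigl(\delta(u,v_q),\, b_u^{\mathrm{pre}},\, f - c_v^{\mathrm{pre}}\bigr)$. If $y = \delta(u,v_q)$, the second $\Invest$ is called with argument $0$ and returns $0$, so condition 1 holds trivially. Otherwise, either $b_u^{\mathrm{pre}} = y$, in which case $b_u$ is already $0$ after the first $\Invest$ and condition 2 holds, or $f - c_v^{\mathrm{pre}} = y$, in which case $c_v$ reaches $f$, gets reset to $0$, and $\Explore(v)$ is invoked before the second $\Invest$.

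The crucial observation in the $\Explore(v)$ sub-case is that $\Explore(v)$ does not modify $c_v$ itself, since its internal accounting uses the fresh budget $b_v$ and only raises the counters of $v$'s children. Hence at the moment the second $\Invest$ is called we have $c_v = 0$, and using the HST assumption $\delta(u,v_q) \le f$, the cap $f - c_v$ is not binding. The second $\Invest$ therefore returns $\min\bigl(\delta(u,v_q) - y,\, b_u\bigr)$, which either achieves $y + y' = \delta(u,v_q)$ (condition 1) or drains $b_u$ to $0$ (condition 2). The main obstacle is precisely this bookkeeping step: one must argue that between the two $\Invest$ calls, no recursive activity can alter $b_u$ or re-raise $c_v$. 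Both follow from the event-driven nature of the algorithm and the fact that $\Explore(v)$'s scope is confined to the subtree strictly below $u$, disjoint from $b_u$ and $c_v$.
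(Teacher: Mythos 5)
Your proof is correct and follows essentially the same route as the paper: split on which term of the min was binding in the first $\Invest$, and in the $c_v = f$ sub-case, note that $c_v$ is $0$ when the second $\Invest$ runs so that (since $\delta(u,v_q)-y \le f$) its return value is governed by $b_u$ alone. The only thing you add is the explicit remark that $\Explore(v)$ touches neither $b_u$ nor $c_v$, which the paper leaves implicit.
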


\begin{proof}
Consider the state after the return of the first call to $\Invest$.
Either $\Invest$ returned $\delta(u,v_{q})$, or $b_{u}=0$, or $c_{v}=f$.
In the first two cases, we are done. In the third case, $\Explore(u)$
then calls $\Explore(v)$. If $q$ is connected during $\Explore(v)$,
we are done. Otherwise, $\Explore(u)$ enters the nested \textbf{if
}condition upon observing that $q$ is still pending. 

Denote by $y$ the return value of the first call to $\Invest$, and
consider the return value of the second call to $\Invest$ made in
the nested \textbf{if}. If the return value is $\delta(u,v_{q})-y$,
we are done. Otherwise, consider that $c_{v}=0$ before the call to
$\Invest$, and since $\delta(u,v_{q})-y\le f$ it must thus be that
$b_{u}=0$ after the return of the second call to $\Invest$. This
concludes the proof.
\end{proof}
\begin{prop}[analogue of Proposition \ref{prop:FLDeadline_ALGBoundedByCounters}]
\label{prop:FLDelay_ALGBoundedByCounters}$\alg^{B}+\alg^{C}\le3\cdot\sum_{u\in T}\bar{c}_{u}$
\end{prop}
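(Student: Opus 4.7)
The plan is to mirror the structure of the deadline-case proof (Proposition \ref{prop:FLDeadline_ALGBoundedByCounters}), charging both the opening cost and the connection cost of each invocation of $\Explore(u)$ against the $f$ units added to $\bar c_u$ that triggered the call. Concretely, I would argue that every call to $\Explore(u)$ contributes at most $3f$ to $\alg^B+\alg^C$: exactly $f$ from the single call to $\Open(u)$, at most $f$ in connection costs paid through budgeted calls to $\Invest$ that return their requested amount, and at most $f$ from a single final iteration in which the budget $b_u$ is exhausted.

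First I would observe that the buying contribution is immediate: each call to $\Explore(u)$ opens one facility at $u$ for cost $f$. Next I would restrict attention to connection costs, noting (as the text preceding the proposition already states) that these can only arise in iterations of $\Explore(u)$'s main loop that enter the first \textbf{if} branch, since the $t_2'$ branch performs only an $\Invest$ and possibly a recursive $\Explore$, but never connects a request to $u$. For each such iteration, invoke Proposition \ref{prop:FLDelay_InvestServes}: either (i) the calls to $\Invest$ in the iteration return a total of $\delta(u,v_q)$, or (ii) $b_u = 0$ at the end of the iteration, or (iii) $q$ is already served after the first $\Invest$ and no connection is made.

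In case (iii) no connection cost is incurred in that iteration. In case (i), the connection cost paid, namely $\delta(u,v_q)$, exactly equals the budget consumed from $b_u$ during the iteration, so summing over all such iterations in the call bounds the total connection cost by the initial budget $f$. Case (ii) can occur in at most one iteration of the loop (since afterwards the loop terminates due to $b_u = 0$); the connection cost incurred in that single iteration is $\delta(u,v_q)$, which is bounded by the total weight of a root-to-leaf path in the HST, and thus by $f$ by the standing assumption at the start of the subsection. Summing the three contributions yields at most $3f$ per call to $\Explore(u)$.

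Finally I would close the argument exactly as in the deadline case: $\Explore(u)$ is invoked only when the counter $c_u$ fills, i.e.\ each invocation is paired with a fresh increase of $\bar c_u$ by exactly $f$ (with the root's virtual counter $\bar c_r$ treated by convention as being raised by $f$ per call to $\Explore(r)$). Therefore the total buying and connection cost satisfies
\[
\alg^{B}+\alg^{C} \;\le\; 3f \cdot \sum_{u \in T} \frac{\bar c_u}{f} \;=\; 3 \sum_{u \in T} \bar c_u,
\]
proving the proposition. The only subtle step is the argument for case (ii): one must verify that the "budget-exhausting" iteration happens at most once per call and that the single connection cost paid there is genuinely bounded by $f$ using the HST structure; the rest of the accounting is a direct book-keeping translation of the deadline proof.
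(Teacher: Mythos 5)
Your proposal is correct and mirrors the paper's own proof essentially step for step: $f$ for the opening, at most $f$ for connection costs charged against the budget $b_u$ via Proposition~\ref{prop:FLDelay_InvestServes}, and at most one extra $\delta(u,v_q)\le f$ for the budget-exhausting final iteration, for a total of $3f$ per call to $\Explore(u)$, each paired with an $f$-increase of $\bar c_u$. The only cosmetic difference is that you split the connection bound into two explicit $f$'s while the paper writes it as $f+\delta(u,v_q)\le 2f$; the underlying accounting is the same.
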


\begin{proof}
The costs of the algorithm (both opening and connection) are again
contained in calls to $\Explore$, as in the proof of Proposition
\ref{prop:FLDeadline_ALGBoundedByCounters}. Each call to $\Explore(u)$
has an opening cost of $f$. 

As for connection costs, note that they only occur in iterations of
the main loop in $\Explore(u)$ in which the main \textbf{if }condition
is entered, and not in the \textbf{else} condition. In each iteration
in which the main \textbf{if} condition is entered, a request $q$
is considered, which may be connected to $u$ at cost $\delta(u,v_{q})$.
Through Proposition \ref{prop:FLDelay_InvestServes}, in each such
iteration either the return values of calls to $\Invest$ sum to $\delta(u,v_{q})$
(and thus $b_{u}$ decreases by $\delta(u,v_{q})$), $b_{u}=0$ at
the end of the iteration (in which case this is the last iteration),
or $\Explore(u)$ does not connect $q$. 

Since $b_{u}$ can decrease by at most $f$, the connection cost of
the algorithm is bounded by $f+\delta(u,v_{q})$ for $q$ the last
request considered, which is at most $2f$.

Noting that the total cost of $\Explore(u)$ is at most $3f$, and
that $\bar{c}_{u}$ is raised by $f$ before calling $\Explore(u)$
yields the proposition.
\end{proof}
\begin{proof}[Proof of Lemma \ref{lem:FLDelay_BuyingAndConnectionBounded}]
Results directly from Proposition \ref{prop:FLDelay_ALGBoundedByCounters}
and Corollary \ref{cor:FLDeadline_CountersBoundedByDKF} (which holds
for the delay case as well).
\end{proof}
\begin{proof}[Proof of Lemma \ref{lem:FLDelay_ALG}]
 The lemma results directly from Proposition \ref{prop:FLDelay_DelayBoundedByBuyingAndConnection}
and Lemma \ref{lem:FLDelay_BuyingAndConnectionBounded}.
\end{proof}

\subsubsection{Lower Bounding $\protect\opt$}

To lower bound the cost of the optimum, we prove the following lemma,
which is analogous to Lemma \ref{lem:FLDeadline_OPT} of the deadline
case. 
\begin{lem}
\label{lem:FLDelay_OPT}$kf\le(D+1)\cdot\opt^{B}+2\cdot\opt^{C}+(D+1)\cdot\opt^{D}$.
\end{lem}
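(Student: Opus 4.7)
The plan is to mirror the charging-node/preflow machinery of Lemma \ref{lem:FLDeadline_OPT} from the deadline case, enriching the incurred cost at each charging node with a delay component. I would keep the set $M$ of charging nodes $\mu=(u,[\tau_1,\tau_2])$ formed from consecutive $\Explore(u)$ calls and extend $c(\mu)$ to $c(\mu)=c_b(\mu)+c_c(\mu)+c_d(\mu)$, where $c_d(\mu)$ is the delay that $\opt$ pays during $[\tau_1,\tau_2]$ on requests whose locations lie in $T_u$ and whose release times lie in $[\tau_1,\tau_2]$. An analogue of Lemma \ref{lem:FLDeadline_ChargeLBOpt} would then yield $\sum_\mu c(\mu)\le(D+1)\opt^B+2\opt^C+(D+1)\opt^D$: each facility opened by $\opt$ at a node $x$ is charged to at most one charging node per ancestor of $x$ (hence at most $D+1$); each request's accumulated delay is charged to charging nodes along its root-to-leaf path (again at most $D+1$); and $\opt$'s connection cost is handled via the $\left(\ge2\right)$-HST geometric decay as in the deadline case. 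The tighter constants (versus the $2(D+1)$ and $4$ of the deadline case) come from a tightened interval accounting in which each point of time belongs to at most one interval per depth.

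The next step is to build a preflow $Z=(G,s,\alpha)$ over $M\cup\{s\}$ by replaying Procedure \ref{proc:FLDeadline_PreflowBuilder}, using the new $c(\mu)$ values to weight the edges out of $s$. I would then prove the analog of Lemma \ref{lem:FLDeadline_ValidPreflow} by the same case split on $\Color[\mu]$. The $\Special$ case and the case where $\mu$ is colored by some $\mu^{\star}$ carry over verbatim, since Observation \ref{obs:FLDeadline_RaisesCounterAtMostF} still holds in the delay setting and gives that $\Explore_{\tau_1}(u)$ spends exactly $f$ on child counters, matching either $c(\mu)\ge f$ or the incoming investment of $f$.

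The main obstacle is the $\None$ case, because $\Explore(u)$ now invests in a child $v$ for two distinct reasons: a single pending request whose delay has grown to match its connection cost to $u$ (Event 1), and a coalition $Q\subseteq T_v$ whose aggregate delay has reached $\psi_v(Q)$ (Event 2). For Event 1 the deadline-case argument adapts cleanly, with ``deadline expired'' replaced by ``$\opt$ either connects the request outside $T_u$ or pays its delay,'' so the corresponding investment is absorbed by $c_c(\mu)+c_d(\mu)$. For Event 2 I would use the fact that if $\opt$ opens no facility inside $T_v$ during the relevant window, then by definition of $\psi_v(Q)$ the coalition forces $\opt$ to pay at least $d_Q(t_2')\ge\psi_v(Q)$ in some combination of connection cost for requests in $Q$ (outside $T_v$, contributing to $c_c(\mu)$) and delay on those requests (contributing to $c_d(\mu)$). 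The investment amount along each incoming edge is at most this quantity, and a disjointness argument on the request sets across distinct incoming edges, exactly as in the deadline case, avoids double counting.

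Finally, the analog of Lemma \ref{lem:FLDeadline_RootExcesses} follows because $\Explore(r)$ is triggered precisely when some pending set $Q$ is critical, i.e.\ $d_Q(t_i)\ge\psi(Q)\ge f$ (any non-empty ancestor-closed solution contains the root at cost $f$). If the root charging node $\mu=(r,[t_{i-1},t_i])$ has $\Color[\mu]=\None$, then $\opt$ opens no facility in $T_r=T$ during the window, so $\opt$ must cover $Q$ through connection or delay of total cost at least $f$, giving $c(\mu)\ge f$ and hence $\chi_\mu\ge f$. Combining this with Proposition \ref{prop:Flow_SubsetLBSource} and the first step's inequality yields $kf\le\sum_i\chi_{\mu_i}\le\omega_Z=\sum_\mu c(\mu)\le(D+1)\opt^B+2\opt^C+(D+1)\opt^D$, as required.
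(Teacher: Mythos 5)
Your high-level structure matches the paper's proof: extend $c(\mu)$ with a delay term $c_d(\mu)$, rerun Procedure~\ref{proc:FLDeadline_PreflowBuilder}, case-split on $\Color[\mu]$, and handle the root charging nodes separately. However, there are several concrete gaps that you would need to close, and the most important one concerns Event~2.

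First, the improved constants $(D+1)$ and $2$ do not come for free from a generic ``tightened interval accounting''; the paper switches to half-closed intervals $[\tau_1,\tau_2)$ precisely because in the delay setting one cannot guarantee a single call to $\UponCritical$ per time $t$, so that the deadline proof's closed intervals would break Lemma~\ref{lem:FLDeadline_ChargeLBOpt}. You must make this change explicit. Relatedly, the notion $\lambda_u^t$ must be redefined to track the earliest of the two triggering events used in Algorithm~\ref{alg:FLDelay} (rather than the earliest deadline); this redefinition is what makes $\SetColor$ behave correctly in the delay setting.

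Second, and more seriously, your Event~2 argument is not sound as stated. Saying ``the coalition forces $\opt$ to pay at least $d_Q(t_2')\ge\psi_v(Q)$ in some combination of connection and delay'' does not show that $\opt$ pays this amount in the charging node $\mu=(u,[\tau_1,\tau_2))$: by definition, $c_c(\mu)$ and $c_d(\mu)$ only count requests released in $[\tau_1,\tau_2)$, and nothing in the coalition definition forces the requests of $Q$ to have been released after $\tau_1$. The paper handles this using Proposition~\ref{prop:FLDelay_MinimumPrefixDecomposition}: it decomposes $\psi_u(Q)$ along the top-level facilities $\bar{S}$ of the optimal ancestor-closed solution, then shows (using $\lambda_\mu > \tau_2^\star$) that the delay of the ``old'' requests $\hat{Q}$ surviving $\Explore_{\tau_1}(u)$ is dominated by $\sum_{u'\in\bar S}\psi_{u'}(Q_1^{u'})$, leaving $d_{Q\setminus\hat Q}(\hat t)\ge f$ for the ``new'' requests released after $\tau_1$. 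Without this decomposition step your claim ``the investment is absorbed by $c_c(\mu)+c_d(\mu)$'' is unjustified. Finally, the per-edge disjointness argument you invoke for Event~2 is not the right mechanism there: in the paper's Case~4 a single bound $c(\mu)\ge f$ covers all outgoing edges of $\mu$ at once (since the total investment into $c_u$ during one phase is at most $f$); the disjointness-of-$L_\sigma$ argument is used only in the Event~1-only case (the analogue of deadline Case~3). The same decomposition machinery is needed again to prove the analogue of Lemma~\ref{lem:FLDeadline_RootExcesses} when $\Color[\mu]=\None$, which your proposal also glosses over.
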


\paragraph*{Charging nodes and incurred costs.}

We again use charging nodes, defined as in the deadline case. However,
the charging nodes for the delay case use half-closed intervals instead
of the closed intervals of the deadline case. The reason for this
is that we do not have the guarantee that only one call to $\UponCritical$
is made at a given time, so that using closed intervals would break
the analogue of Lemma \ref{lem:FLDeadline_ChargeLBOpt} for our case.

Let $M$ be the set of charging nodes. The definitions of $c_{b}(\mu)$
(buying costs) and $c_{c}(\mu)$ (connection costs) are identical
to the definition in the deadline case. For the delay case, we also
define incurring \emph{delay }costs, 
\begin{defn}[$c_{d}(\mu)$]
Let $\mu=(u,[\tau_{1},\tau_{2}))$ be a charging node. Let $c_{d}(\mu)$
be the \emph{delay cost incurred by $\opt$ on }$\mu$, defined to
be the total delay cost incurred by $\opt$ on requests in $T_{u}$
released in $[\tau_{1},\tau_{2})$.
\end{defn}

We write $c(\mu)=c_{b}(\mu)+c_{c}(\mu)+c_{d}(\mu)$.

We use Procedure \ref{proc:FLDeadline_PreflowBuilder} to create the
preflow. However, we give a different definition to $\lambda$ than
in the deadline case. The definition follows.
\begin{defn}[$\lambda_{u}^{t}$ and $\lambda_{\mu}$]
For every function call $\Explore_{t}(u)$ for some $u\in T$ and
time $t$, let $Q$ be the set of requests pending in $T_{u}$ immediately
after the return of $\Explore(u)$. We define $\lambda_{u}^{t}$ to
be the first time $t^{\prime}\ge t$ in which one of the following
conditions occurs:
\begin{enumerate}
\item There is a request $q\in Q$ such that $d_{q}(t^{\prime})\ge\delta(v_{q},u)$.
\item There exists a set of requests $Q^{\prime}\subseteq Q$ such that
$Q^{\prime}\subseteq T_{u^{\prime}}$, for some $u^{\prime}$ a child
of $u$, and also $d_{Q^{\prime}}(t^{\prime})\ge\psi_{u^{\prime}}(Q^{\prime})$.
\end{enumerate}
Like in the deadline case, we write $\lambda_{\mu}=\lambda_{u}^{\tau_{1}}$
where $\mu=(u,[\tau_{1},\tau_{2}))$.
\end{defn}

\noindent 
\begin{lem}
\label{lem:FLDelay_ChargeLBOPT}$\sum_{\mu}c(\mu)\le(D+1)\cdot\opt^{B}+2\cdot\opt^{C}+(D+1)\cdot\opt^{D}$.
\end{lem}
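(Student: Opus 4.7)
The plan is to repeat the three-part counting argument from the proof of Lemma~\ref{lem:FLDeadline_ChargeLBOpt}, tracking separately how many charging nodes each individual action of $\opt$ (an opened facility, a connection, or a unit of paid delay) can contribute to. The only essential difference from the deadline case is that the charging intervals here are half-closed $[\tau_1,\tau_2)$ instead of closed, so each time $t$ belongs to \emph{exactly one} interval per tree node rather than possibly two. This removes the extra factor of $2$ that appeared in the deadline statement and makes the buying- and connection-cost bounds tighter by a factor of two, which is exactly what the statement claims (compare $2(D+1)$ vs.\ $(D+1)$ and $4$ vs.\ $2$).

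I would proceed in three steps, one per cost component.

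\emph{Buying cost.} For a facility opened by $\opt$ at node $u$ at time $t$ to contribute to $c_b((u',[\tau_1,\tau_2)))$, we need $u\in T_{u'}$, so $u'$ lies on the root-to-$u$ path and has at most $D+1$ choices. Because the intervals are half-closed and partition the timeline per tree node, $t$ lies in exactly one interval for each such $u'$. Hence each opened facility is counted at most $D+1$ times, giving $\sum_\mu c_b(\mu)\le(D+1)\cdot\opt^B$.

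\emph{Connection cost.} For a request $q$ that $\opt$ connects to a facility at $v$, let $u$ be the LCA of $v_q$ and $v$. A charging node $(u',[\tau_1,\tau_2))$ collects connection cost from $q$ only if $v_q\in T_{u'}$ and $v\notin T_{u'}$, i.e.\ $u'$ lies on the path from $v_q$ strictly before $u$. Writing this path as $v_q=u^{(m)},\ldots,u^{(0)}=u$, the half-closed intervals give at most one qualifying charging node per level, each contributing $\delta(u^{(l)},v_q)$. Using the $(\ge2)$-HST property exactly as in the deadline proof, $\delta(u^{(l)},v_q)\le2^{-l}\,\delta(u,v_q)$, and the resulting geometric sum is bounded by $2\delta(u,v_q)\le2\delta(v,v_q)$. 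Summing over $q$ yields $\sum_\mu c_c(\mu)\le 2\cdot\opt^C$.

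\emph{Delay cost.} For a request $q$ to contribute to $c_d((u',[\tau_1,\tau_2)))$, we need $v_q\in T_{u'}$ and $r_q\in[\tau_1,\tau_2)$. The first condition again allows at most $D+1$ choices of $u'$, and the second condition selects exactly one interval per such $u'$ (half-closed partition). Therefore each unit of delay cost $\opt$ pays on any single request is counted at most $D+1$ times, giving $\sum_\mu c_d(\mu)\le(D+1)\cdot\opt^D$. Adding the three bounds proves the lemma.

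The routine parts are the buying and delay bounds, which are just interval/ancestor counting. The only mildly delicate step is the connection-cost bound, where one must be careful to observe that the switch to half-closed intervals removes the extra factor of $2$ from the deadline analysis while the HST geometric-series computation goes through unchanged; this is where I would be most careful to verify that using $[\tau_1,\tau_2)$ does not create any gap issue at interval boundaries when charging connection costs for requests released exactly at some $\tau_i$.
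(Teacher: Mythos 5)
Your proof is correct and follows essentially the same approach as the paper's: the paper's own proof is terse, simply invoking Lemma~\ref{lem:FLDeadline_ChargeLBOpt} for the buying and connection parts (noting the factor-of-$2$ improvement from half-open intervals) and handling delay by ancestor-counting, which is exactly what you do in expanded form. Your concern about boundary behavior at the $\tau_i$ is not an issue: the half-open intervals $[\tau_1,\tau_2)$ for a fixed tree node genuinely partition the timeline, so every $r_q$ lands in exactly one interval per ancestor and the counting is clean.
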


\begin{proof}
$\sum_{\mu}c_{b}(\mu)$ can be charged to $(D+1)\cdot\opt^{B}$ and
$\sum_{\mu}c_{c}(\mu)$ can be charged to $2\cdot\opt^{C}$ as in
Lemma \ref{lem:FLDeadline_ChargeLBOpt} (since the intervals are not
closed, this improves by a factor of $2$). It remains to charge $\sum_{\mu}c_{d}(\mu)$
to $(D+1)\cdot\opt^{D}$. To do so, observe that the delay incurred
by $\opt$ on a request $q$ can only be counted in charging nodes
with intervals containing $r_{q}$, and defined by a node which is
an ancestor of $v_{q}$. There are at most $D+1$ such nodes.
\end{proof}
\begin{obs}
\label{obs:FLDelay_AncestorClosedOptimalConnectsToAncestor}Let $(S,\phi)$
be a minimal-cost ancestor-closed solution for $Q$ under $u$. Then
it holds for every $q\in Q$ that $\phi(q)$ the least ancestor of
$v_{q}$ in $S$. 
\end{obs}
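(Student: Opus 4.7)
The plan is to argue by contradiction: assume that for some $q \in Q$, $\phi(q)$ differs from $s'$, the least ancestor of $v_q$ in $S$. Note first that $s'$ is well-defined since $u \in S$ and $u$ is an ancestor of $v_q$. The goal is to show that reassigning $\phi(q) \leftarrow s'$ strictly decreases the connection cost without affecting the buying cost (since $s'$ already belongs to $S$), contradicting minimality.

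The argument splits into two cases depending on the position of $\phi(q)$ relative to $v_q$. First, if $\phi(q)$ is an ancestor of $v_q$, then by the minimality of $s'$ among ancestors of $v_q$ in $S$, $\phi(q)$ must be a strict (proper) ancestor of $s'$; consequently $\delta(v_q, \phi(q)) = \delta(v_q, s') + \delta(s', \phi(q)) > \delta(v_q, s')$. Second, if $\phi(q)$ is not an ancestor of $v_q$, let $a$ be the least common ancestor of $\phi(q)$ and $v_q$ in the tree. Since $\phi(q) \in S \subseteq T_u$ and $v_q \in T_u$, we have $a \in T_u$; since $S$ is ancestor-closed under $u$ and $\phi(q) \in S$, every ancestor of $\phi(q)$ within $T_u$ belongs to $S$, so in particular $a \in S$. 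Then $a$ is an ancestor of $v_q$ in $S$, so $a$ is an ancestor of $s'$ (possibly equal), and $\delta(v_q, \phi(q)) = \delta(v_q, a) + \delta(a, \phi(q)) \ge \delta(v_q, s') + \delta(a, \phi(q)) > \delta(v_q, s')$, where the last strict inequality uses $a \neq \phi(q)$ (which follows from $\phi(q)$ not being an ancestor of $v_q$).

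In either case the reassignment strictly reduces $\sum_{q \in Q} \delta(v_q, \phi(q))$ while leaving $S$ unchanged, giving a strictly cheaper ancestor-closed solution for $Q$ under $u$ and contradicting the minimality of $(S, \phi)$. The main subtlety is the second case: one must invoke ancestor-closure precisely to conclude that the LCA $a$ lies in $S$, which is what forces $s'$ to be no higher than $a$ and delivers the strict inequality. The rest is straightforward HST path arithmetic and does not use the $(\ge 2)$ separation; only the tree structure and ancestor-closure are needed.
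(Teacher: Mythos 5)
Your proof is correct. The paper states this as an unproven Observation, and your exchange argument is exactly the natural way to fill that gap: if $\phi(q)\neq s'$, reassigning $q$ to $s'$ leaves $S$ (and hence the buying cost) unchanged while strictly reducing $\delta(v_q,\phi(q))$, contradicting minimality. The two cases are handled cleanly, and you correctly identified the one place where ancestor-closure is actually used — to conclude that the LCA $a$ of $\phi(q)$ and $v_q$ lies in $S$ in the non-ancestor case, which forces $s'$ to lie on the $v_q$–$a$ path and gives the strict inequality via the extra $\delta(a,\phi(q))>0$ term. The only implicit assumption (which the paper also makes) is that edge weights are positive, so that $a\neq\phi(q)$ (resp. $\phi(q)$ a strict ancestor of $s'$) yields a strict cost decrease rather than a tie.
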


\begin{obs}
\label{obs:FLDelay_AncestorClosedOptimalSubOptimal}Let $(S,\phi)$
be a minimal-cost ancestor-closed solution for $Q$ under $u$. Let
$u^{\prime}\in S$ be a descendant of $u$. Observing the set $Q^{\prime}=Q\cap T_{u^{\prime}}$,
we have that $(S\cap T_{u^{\prime}},\phi\restriction_{Q^{\prime}})$
is a minimal-cost ancestor-closed solution for $Q^{\prime}$ under
$u^{\prime}$.
\end{obs}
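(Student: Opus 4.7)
The plan is a standard cut-and-paste argument: restrict the optimal solution to the subtree, show it is a valid ancestor-closed solution under $u'$, and then argue that any strictly cheaper ancestor-closed solution for $Q'$ under $u'$ could be spliced back in to contradict the optimality of $(S,\phi)$.

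First I would verify that $(S\cap T_{u'},\phi\restriction_{Q'})$ is indeed an ancestor-closed solution for $Q'$ under $u'$. Ancestor-closedness of $S\cap T_{u'}$ follows easily: for any $s\in S\cap T_{u'}$ with $s\neq u'$, ancestor-closedness of $S$ under $u$ gives $p(s)\in S$, and since $s$ is a strict descendant of $u'$ its parent also lies in $T_{u'}$. To see that $\phi\restriction_{Q'}$ maps into $S\cap T_{u'}$, I would invoke Observation \ref{obs:FLDelay_AncestorClosedOptimalConnectsToAncestor}: for $q\in Q'$, $\phi(q)$ is the least ancestor of $v_{q}$ in $S$, and since $u'\in S$ is itself an ancestor of $v_{q}$, the least such ancestor must lie in $T_{u'}$.

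Then I would assume for contradiction the existence of an ancestor-closed solution $(S'',\phi'')$ for $Q'$ under $u'$ with cost strictly less than $(S\cap T_{u'},\phi\restriction_{Q'})$, and form the hybrid $S^{\ast}=(S\setminus T_{u'})\cup S''$, with $\phi^{\ast}$ agreeing with $\phi''$ on $Q'$ and with $\phi$ elsewhere. The key verifications are (i) $S^{\ast}$ remains ancestor-closed under $u$: nodes in $S\setminus T_{u'}$ keep their parents in $S\setminus T_{u'}$ (their parents cannot lie in $T_{u'}$ for tree-structural reasons), nodes in $S''\setminus\{u'\}$ keep their parents in $S''$ by ancestor-closedness of $S''$ under $u'$, and $u'$ itself either equals $u$ or has its parent in $S\setminus T_{u'}$; and (ii) $\phi^{\ast}$ is well-defined into $S^{\ast}$: for $q\in Q\setminus Q'$ we have $v_{q}\notin T_{u'}$, so no ancestor of $v_{q}$ lies in $T_{u'}$ and thus $\phi(q)\in S\setminus T_{u'}$.

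Finally I would compare costs: the facility count splits as $|S^{\ast}|=|S\setminus T_{u'}|+|S''|$ and the connection costs split along $Q'$ and $Q\setminus Q'$, so $\mathrm{cost}(S^{\ast},\phi^{\ast})-\mathrm{cost}(S,\phi)=\mathrm{cost}(S'',\phi'')-\mathrm{cost}(S\cap T_{u'},\phi\restriction_{Q'})<0$, contradicting the minimality of $(S,\phi)$. The only delicate step I foresee is the ancestor-closedness check for $S^{\ast}$ at the boundary node $u'$, but this is handled cleanly by the observation that the parent of $u'$ (if $u'\neq u$) is forced to lie in $S\setminus T_{u'}$ by the ancestor-closedness of the original $(S,\phi)$ under $u$.
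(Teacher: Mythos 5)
The paper states this as an \emph{Observation} and supplies no proof, treating it as self-evident; your task was therefore to fill in the standard exchange argument, and you have done so correctly. Both halves of your proposal hold up: the restricted pair $(S\cap T_{u'},\phi\restriction_{Q'})$ is a well-defined ancestor-closed solution under $u'$ because Observation \ref{obs:FLDelay_AncestorClosedOptimalConnectsToAncestor} forces $\phi(q)$ to be the least ancestor of $v_q$ in $S$, which for $q\in Q'$ is a descendant of $u'$ (since $u'\in S$ is itself an ancestor of $v_q$); and the splice $(S\setminus T_{u'})\cup S''$ with the corresponding hybrid connection map is a valid ancestor-closed solution for $Q$ under $u$ whose cost difference from $(S,\phi)$ is exactly the cost difference between $(S'',\phi'')$ and the restricted solution, giving the contradiction.

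Your boundary checks are the right ones and are correct. For $s\in S\setminus T_{u'}$ with $s\neq u$, $p(s)\in S$ and $p(s)\notin T_{u'}$ (a child of a node of $T_{u'}$ lies in $T_{u'}$, so $p(s)\in T_{u'}$ would force $s\in T_{u'}$). For $q\in Q\setminus Q'$, $v_q\notin T_{u'}$, so no ancestor of $v_q$ (in particular $\phi(q)$) lies in $T_{u'}$, and the hybrid map stays inside $S^{\ast}$. The one place to be slightly more explicit is that $S\setminus T_{u'}$ and $S''$ are disjoint (since $S''\subseteq T_{u'}$), so the facility counts add; you use this implicitly in the cost split. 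No gaps — this is the intended cut-and-paste argument, cleanly executed.
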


\begin{prop}[Decomposition of minimum-cost ancestor-closed solutions]
\label{prop:FLDelay_MinimumPrefixDecomposition}Let $(S,\phi)$ be
a minimum-cost ancestor-closed solution for $Q\subseteq T_{u}$ under
$u$, and let $\bar{S}\subseteq S$ be the children of $u$ in $S$.
Define $Q_{1}^{u^{\prime}}=Q\cap T_{u^{\prime}}$, and define $Q_{2}=Q\backslash\left(\bigcup_{u^{\prime}\in\bar{S}}Q_{1}^{u^{\prime}}\right)$.
Then 
\[
\psi_{u}(Q)=\left(\sum_{u^{\prime}\in\bar{S}}\psi_{u^{\prime}}(Q_{1}^{u^{\prime}})\right)+f+\sum_{q\in Q_{2}}\delta(u,v_{q})
\]
\end{prop}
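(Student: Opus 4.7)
The plan is to evaluate the cost of the given optimum $(S,\phi)$ directly and show it equals the right-hand side; since that cost is exactly $\psi_u(Q)$ by hypothesis, this proves the proposition. The whole argument rests on partitioning $S$ according to which subtree $T_{u'}$ (for $u'\in\bar S$) each facility lies in, and matching that with the partition of $Q$ induced by $\phi$.

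First I would establish that $u\in S$ (assuming $Q\neq\emptyset$; otherwise the statement is trivial and the optimum is empty). Since every request must be connected to some facility in $S$, the set $S$ is nonempty; by ancestor-closedness under $u$, following parents inside $S$ from any node eventually reaches a node whose parent is not in $S$, which must be $u$ itself. The same chain argument shows every element of $S\setminus\{u\}$ lies in $T_{u'}$ for a unique $u'\in\bar S$, so $S$ decomposes as the disjoint union $\{u\}\sqcup\bigsqcup_{u'\in\bar S}(S\cap T_{u'})$, giving
\[
|S|\cdot f \;=\; f \;+\; \sum_{u'\in\bar S}|S\cap T_{u'}|\cdot f.
\]

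Next I would split the connection cost using Observation \ref{obs:FLDelay_AncestorClosedOptimalConnectsToAncestor}, which states that $\phi(q)$ is the least ancestor of $v_q$ in $S$. For $q\in Q_1^{u'}$, the ancestor $u'$ of $v_q$ already lies in $S$, so $\phi(q)\in T_{u'}$. For $q\in Q_2$, by definition the path from $v_q$ up to $u$ avoids every $u'\in\bar S$; since every element of $S\setminus\{u\}$ lies in some $T_{u'}$ by the previous paragraph, this path contains no element of $S\setminus\{u\}$, forcing $\phi(q)=u$. Consequently,
\[
\sum_{q\in Q}\delta(v_q,\phi(q)) \;=\; \sum_{q\in Q_2}\delta(u,v_q) \;+\; \sum_{u'\in\bar S}\sum_{q\in Q_1^{u'}}\delta(v_q,\phi(q)).
\]

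Finally, I would invoke Observation \ref{obs:FLDelay_AncestorClosedOptimalSubOptimal} for each $u'\in\bar S$: the restricted pair $(S\cap T_{u'},\phi\!\restriction_{Q_1^{u'}})$ is itself a minimum-cost ancestor-closed solution for $Q_1^{u'}$ under $u'$, so its total cost equals $\psi_{u'}(Q_1^{u'})$. Adding this over $u'\in\bar S$ and combining with the $f$ from $u\in S$ and the $\sum_{q\in Q_2}\delta(u,v_q)$ term yields the claimed identity. The main obstacle is the partition step: one must verify that every non-$u$ element of $S$ lies in a unique $T_{u'}$ with $u'\in\bar S$, so that $\phi$ cannot send a request in $Q_2$ anywhere except to $u$. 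This is a direct consequence of ancestor-closedness, but it is precisely the structural fact that enables the cost to split cleanly along the children of $u$.
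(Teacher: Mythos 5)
Your proof is correct and takes essentially the same route as the paper's: both rest on Observation \ref{obs:FLDelay_AncestorClosedOptimalConnectsToAncestor} (each request connects to its least ancestor in $S$) and Observation \ref{obs:FLDelay_AncestorClosedOptimalSubOptimal} (the restriction to each $T_{u'}$ is itself a minimum-cost ancestor-closed solution), combined with the partition of $S$ into $\{u\}$ and the pieces $S\cap T_{u'}$. You merely spell out the parent-chain argument for $u\in S$ and for why $Q_2$-requests must connect to $u$, details the paper leaves implicit; one small inaccuracy is the aside that the $Q=\emptyset$ case is ``trivial'' --- the identity in fact reads $0=f$ there, so the proposition implicitly presupposes $Q\neq\emptyset$, a hypothesis the paper also glosses over.
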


\begin{proof}
For every $u^{\prime}\in\bar{S}$, Observation \ref{obs:FLDelay_AncestorClosedOptimalConnectsToAncestor}
implies that the requests of $Q_{1}^{u^{\prime}}$ only connect to
facilities in $S\cap T_{u^{\prime}}$. The opening costs of $S\cap T_{u^{\prime}}$,
plus the connection costs of $Q_{1}^{u^{\prime}}$, are exactly $\psi_{u^{\prime}}(Q_{1}^{u^{\prime}})$
according to Observation \ref{obs:FLDelay_AncestorClosedOptimalSubOptimal},
for a total of $\sum_{u^{\prime}\in\bar{S}}\psi_{u^{\prime}}(Q_{1}^{u^{\prime}})$.

In addition, opening the facility at $u$ costs $f$. Observation
\ref{obs:FLDelay_AncestorClosedOptimalConnectsToAncestor} implies
that the requests of $Q_{2}$ are connected to the facility at $u$,
at a total cost of $\sum_{q\in Q_{2}}\delta(u,v_{q})$. This finishes
the proof of the proposition.
\end{proof}
\begin{lem}
\label{lem:FLDelay_ValidPreflow}$\chi_{\mu}\ge0$ for every $\mu=(u,[\tau_{1},\tau_{2}))\in M$.
That is, the preflow $Z=(G,s,\alpha)$ defined in Procedure \ref{proc:FLDeadline_PreflowBuilder}
is valid.
\end{lem}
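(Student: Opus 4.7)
The plan is to mirror the three-case analysis in the proof of Lemma~\ref{lem:FLDeadline_ValidPreflow}, indexed by the final value of $\Color[\mu]$. Cases~1 and~2 transfer essentially verbatim from the deadline setting, and the bulk of the work lies in Case~3.

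In Case~1 ($\Color[\mu]=\Special$), $\opt$ opened a facility in $T_u$ during $[\tau_1,\tau_2)$, so $c(\mu)\ge c_b(\mu)\ge f$; the source edge $s\to\mu$ of weight $c(\mu)$ combined with the fact that $\sum_{\sigma\in E_\mu^-}\alpha(\sigma)\le f$ gives $\chi_\mu\ge 0$. In Case~2 ($\Color[\mu]=\mu^\star$ for some root charging node), I would verify an analogue of Observation~\ref{obs:FLDeadline_RaisesCounterAtMostF} for Algorithm~\ref{alg:FLDelay}: its $\Explore(u)$ initializes $b_u\leftarrow f$ and loops while $b_u\neq 0$ and $T_u$ contains a pending request, so whenever $\lambda_\mu<\infty$ (guaranteed by the success of $\SetColor$), the entire budget $f$ is invested into children counters. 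The investment edges added by $\SetColor(\mu,\mu^\star)$ therefore contribute exactly $f$ of incoming weight to $\mu$, and Corollary~\ref{cor:FLDeadline_FIsEnough} closes the case.

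Case~3 ($\Color[\mu]=\None$) requires new arguments because an outgoing edge $\sigma\in E_\mu^-$ may now have been triggered, inside the parent-charging exploration $\Explore_{\tau_1'}(p(u))$, by either Event~1 (a single request $q$ with $d_q(t_1')\ge\delta(v_q,p(u))$) or Event~2 (a coalition $Q'\subseteq T_u$ with $d_{Q'}(t_2')\ge\psi_u(Q')$). I would assign to each edge a witness set $L_\sigma$---singleton $\{q\}$ for Event~1 and the full coalition $Q'$ for Event~2---and first extend the deadline-case argument that every $q\in L_\sigma$ satisfies $r_q\in[\tau_1,\tau_2)$. The argument: if $\tau_1\neq-\infty$, then $\Color[\mu]=\None$ together with the color $\mu^\star$ of the other endpoint of $\sigma$ forces $\lambda_\mu>\tau_2^\star$, and the new delay-case definition of $\lambda_\mu$ rules out $q$ being pending right after the return of $\Explore_{\tau_1}(u)$, because the corresponding event condition in Algorithm~\ref{alg:FLDelay} would otherwise have been met by some time at most $\tau_2^\star$.

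Having ensured $r_q\in[\tau_1,\tau_2)$ for every $q\in L_\sigma$, and using $\Color[\mu]\neq\Special$ to conclude that $\opt$ opens no facility in $T_u$ during $[\tau_1,\tau_2)$, the charging splits by event type. For Event~1, $\alpha(\sigma)\le\delta(v_q,p(u))$ is absorbed either by $c_c(\mu)$ (if $\opt$ serves $q$ via a facility outside $T_u$, contributing at least $\delta(v_q,p(u))$) or by $c_d(\mu)$ (otherwise $\opt$ must hold $q$ pending until at least $t_1'$, paying delay $d_q(t_1')\ge\delta(v_q,p(u))$). For Event~2, the main obstacle, I would show that the sum of $\opt$'s delay on $Q'$ plus the connection cost of any $q\in Q'$ served outside $T_u$ is at least $\psi_u(Q')\ge f\ge\alpha(\sigma)$: any $q\in Q'$ served by $\opt$ inside $[\tau_1,\tau_2)$ must be served via a facility outside $T_u$, contributing $\delta(v_q,p(u))$ to $c_c(\mu)$, while any $q\in Q'$ held past $\tau_2$ contributes at least $d_q(\tau_2)$ to $c_d(\mu)$; combining these contributions with $d_{Q'}(t_2')\ge\psi_u(Q')$ and the recursive structure of Proposition~\ref{prop:FLDelay_MinimumPrefixDecomposition} yields the desired bound. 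Finally, as in the deadline case, distinct outgoing edges $\sigma_1\neq\sigma_2\in E_\mu^-$ have disjoint witness sets, since each $q\in L_\sigma$ is pending in $T_u$ at $\sigma$'s trigger time and the triggers correspond to distinct parent-charging-node explorations. Summing the per-edge charges produces $c(\mu)\ge\sum_{\sigma\in E_\mu^-}\alpha(\sigma)$, and together with the source edge of weight $c(\mu)$ and the bound $\sum_{\sigma\in E_\mu^-}\alpha(\sigma)\le f$ this yields $\chi_\mu\ge 0$.
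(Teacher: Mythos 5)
Your Cases 1 and 2 match the paper, and your idea of attaching a witness set to each outgoing edge is sound for Event-1 (single-request) investments. The genuine gap is in your treatment of Event-2 (coalition) edges: you assert that every $q\in L_\sigma$ satisfies $r_q\in[\tau_1,\tau_2)$ and try to justify this by saying that otherwise ``the corresponding event condition in Algorithm~\ref{alg:FLDelay} would have been met by some time at most $\tau_2^{\star}$.'' This does not follow. The Event-2 coalition $Q'$ at the parent exploration satisfies $d_{Q'}(\hat t)\ge\psi_u(Q')$ --- a saturation condition \emph{at $u$ itself} --- whereas the definition of $\lambda_\mu$ only fires on (i) a single request with delay at least its distance to $u$, or (ii) a coalition under a \emph{child} of $u$ whose delay exceeds the child's ancestor-closed solution cost. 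A request $q\in Q'$ that was already pending after $\Explore_{\tau_1}(u)$ need not trigger either of these individually. Indeed, such ``old'' requests can generally exist in the coalition, so the claim $r_q\in[\tau_1,\tau_2)$ for all of $L_\sigma$ is simply false, and the per-edge sum you want to form breaks at this point.

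The paper sidesteps this by splitting your Case 3 into two subcases. If all outgoing edges were produced by Event-1 investments, it re-runs the deadline-case Case~3 argument (adding that $\opt$ may pay delay instead of connection cost). If at least one outgoing edge came from an Event-2 investment, it observes that a \emph{single} such edge already forces $c(\mu)\ge f\ge\sum_{\sigma\in E_\mu^-}\alpha(\sigma)$, so there is no per-edge summation at all. To get that $f$, it does exactly what your last half-sentence gestures at but does not carry out: it uses Proposition~\ref{prop:FLDelay_MinimumPrefixDecomposition} to write $\psi_u(Q)=\bigl(\sum_{u'\in\bar S}\psi_{u'}(Q_1^{u'})\bigr)+f+\sum_{q\in Q_2}\delta(u,v_q)$, then lets $\hat Q\subseteq Q$ be the requests still pending after $\Explore_{\tau_1}(u)$, and shows (a) $Q_2\cap\hat Q=\emptyset$ since a surviving $Q_2$-request would trigger $\lambda_\mu$-condition~(1), and (b) $d_{\hat Q_1^{u'}}(\hat t)<\psi_{u'}(\hat Q_1^{u'})\le\psi_{u'}(Q_1^{u'})$ since otherwise $\lambda_\mu$-condition~(2) fires. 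Subtracting gives $d_{Q\setminus\hat Q}(\hat t)\ge f$, and $Q\setminus\hat Q$ are precisely the requests with $r_q\in[\tau_1,\tau_2)$. Your proposal is missing this aggregate carve-out; without it the Event-2 charge does not land.
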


\begin{proof}
We observe the following cases for $\mu$.

\textbf{Case 1: }$\Color[\mu]=\Special$. This case is identical to
Case 1 in Lemma \ref{lem:FLDeadline_ValidPreflow}.

\textbf{Case 2: }$\texttt{Color}[\mu]=\mu^{\star}$ for a charging
node $\mu^{\star}$. Again, this case is similar to Case 2 in Lemma
\ref{lem:FLDeadline_ValidPreflow}.

From now on, assume we are not in the previous two cases, and thus
$\Color[\mu]=\None$. Every outgoing edge from $\mu$ to some charging
node $\mu^{\prime}=(u^{\prime},[\tau_{1}^{\prime},\tau_{2}^{\prime}))$
is created from $\mu^{\prime}$ investing in $\mu$, which means that
$\Explore_{\tau_{1}^{\prime}}(u^{\prime})$ raised the counter $c_{u}$
towards $\Explore_{\tau_{2}}(u)$. 

\textbf{Case 3: }For every such $\mu^{\prime}$, we have that $\Explore_{\tau_{1}^{\prime}}(u^{\prime})$
raised $c_{u}$ towards $\Explore_{\tau_{2}}(u)$ only through calls
to $\Invest$ inside the main \textbf{if }condition of $\Explore$,
and \emph{not} through the main \textbf{else }condition. In this case,
we show that $c_{c}(\mu)+c_{d}(\mu)\ge\sum_{\sigma\in E_{\mu}^{-}}\alpha(\sigma)$,
proving the lemma for this case. The proof is almost identical to
the proof of Case 3 of Lemma \ref{lem:FLDeadline_ValidPreflow}, in
which we showed for the deadline case that $c_{c}(\mu)\ge\sum_{\sigma\in E_{\mu}^{-}}\alpha(\sigma)$.
The argument for the deadline case consisted of finding a set of requests
which the optimum had to connect, all released in $[\tau_{1},\tau_{2})$.
The difference between our delay case and the deadline case is that
$\opt$ might choose not to connect some of those requests, in which
case it must incur a delay cost which is at least its connection cost.

\textbf{Case 4: }There exists an outgoing edge from $\mu$ to a charging
node $\mu^{\prime}=(u^{\prime},[\tau_{1}^{\prime},\tau_{2}^{\prime}))$,
such that $\Explore_{\tau_{1}^{\prime}}(u^{\prime})$ raised $c_{u}$
towards $\Explore_{\tau_{2}}(u)$ through calls to $\Invest$ inside
the main \textbf{else }condition of $\Explore$. Let $\texttt{Color}[\mu^{\prime}]=\mu^{\star}=(r,[\tau_{1}^{\star},\tau_{2}^{\star}))$.
Observing that Proposition \ref{prop:FLDeadline_NotBought} holds
for the delay problem as well, and using $\Color[\mu]\neq\Special$,
we have that $\opt$ did not open a facility in $T_{u}$ during $[\tau_{1},\tau_{2}^{\star})$. 

Since $\Explore_{\tau_{1}^{\prime}}(u^{\prime})$ raised $c_{u}$
towards $\Explore_{\tau_{2}}(u)$ inside the main \textbf{else }condition,
there was a set $Q$ of requests pending at $\tau_{1}^{\prime}$ such
that there exists a time $\hat{t}\le\lambda_{\mu^{\prime}}\le t_{i}$
in which $d_{Q}(\hat{t})\ge\psi_{u}(Q)$. In addition, the main \textbf{else
}condition is only reached if $t_{1}^{\prime}>t_{2}^{\prime}=\hat{t}$.
Thus, for every request $q\in Q$ we have that $d_{q}(\hat{t})<\delta(u^{\prime},v_{q})$.

Observe that every $q\in Q$ is pending at $\tau_{1}^{\prime}\le\tau_{2}$,
and thus released prior to $\tau_{2}$. Showing that $r_{q}\ge\tau_{1}$,
together with the fact that $\opt$ did not open a server in $T_{u}$
during $[\tau_{1},\tau_{2}^{\star})$, would yield that $\opt$ either:
\begin{itemize}
\item connected $q$ to a facility outside $T_{u}$ at a cost of at least
$\delta(v_{q},p(u)=u^{\prime})$, which is at least $d_{q}(\hat{t})$,
or
\item did not connect $q$ until time $\tau_{2}^{\star}$, in which case
it paid a delay cost of $d_{q}(\tau_{2}^{\star})\ge d_{q}(\hat{t})$.
\end{itemize}
In either case, $\opt$ paid at least $d_{q}(\hat{t})$ in delay and
connection costs on $q$. Since we have that $r_{q}\in[\tau_{1},\tau_{2})$,
we have that $\opt$ incurred a cost of $d_{q}(\hat{t})$ in $u$
due to $q$. It remains to find a set of such requests $Q^{\prime}\subseteq Q$
such that $r_{q}\ge\tau_{1}$ for every $q\in Q^{\prime}$, and such
that $d_{Q^{\prime}}(\hat{t})\ge f$.

\textbf{Claim: }there exists a set of requests $Q^{\prime}\subseteq Q$
such that $r_{q}\ge\tau_{1}$ for every $q\in Q^{\prime}$, and such
that $d_{Q^{\prime}}(\hat{t})\ge f$.

Now, since $\texttt{Color}[u]=\None$, we have that either $\tau_{1}=-\infty$
or $\lambda_{\mu}>\tau_{2}^{\star}$. If $\tau_{1}=-\infty$, then
$r_{q}\ge\tau_{1}$ for every $q\in Q$. Since $d_{Q}(\hat{t})\ge\psi_{u}(Q)\ge f$,
choosing $Q^{\prime}=Q$ completes the proof of the claim.

Otherwise, $\tau_{1}\neq-\infty$ and $\lambda_{\mu}>t_{i}$. Let
$(S,\phi)$ be the minimal-cost ancestor-closed solution for $Q$
under $u$. Defining $\bar{S},Q_{2}$, and $Q_{1}^{u^{\prime}}$ for
every $u^{\prime}\in\bar{S}$ as in Proposition \ref{prop:FLDelay_MinimumPrefixDecomposition},
we have that 
\[
d_{Q}(\hat{t})\ge\psi_{u}(Q)=\left(\sum_{u^{\prime}\in\bar{S}}\psi_{u^{\prime}}(Q_{1}^{u^{\prime}})\right)+f+\sum_{q\in Q_{2}}\delta(u,v_{q})
\]

Now, denote by $\hat{Q}\subseteq Q$ the subset of $Q$ that was pending
immediately after $\Explore_{\tau_{1}}(u)$. We make the following
observations.
\begin{enumerate}
\item For every $q\in Q_{2}$, we have that $d_{q}(\hat{t})\ge\delta(v_{q},u)$.
Otherwise, $Q\backslash\{q\}$ would become critical before $\hat{t}$.
But since $\lambda_{\mu}>\tau_{2}^{\star}\ge\hat{t}$, we must have
that $q\notin\hat{Q}$. Thus, $Q_{2}\cap\hat{Q}=\emptyset$.
\item Writing $\hat{Q}_{1}^{u^{\prime}}=\hat{Q}\cap Q_{1}^{u^{\prime}}$,
we observe that since $\lambda_{\mu}>\hat{t}$, we have that $d_{\hat{Q}_{1}^{u^{\prime}}}(\hat{t})\le\psi_{u^{\prime}}(\hat{Q}_{1}^{u^{\prime}})\le\psi_{u^{\prime}}(Q_{1}^{u^{\prime}})$
\end{enumerate}
Overall, we get that 
\[
d_{\hat{Q}}(\hat{t})=\sum_{u^{\prime}\in\bar{S}}d_{\hat{Q}_{1}^{u^{\prime}}}(\hat{t})\le\sum_{u^{\prime}\in\bar{S}}\psi_{u^{\prime}}(Q_{1}^{u^{\prime}})
\]

Thus, we have that 
\[
d_{Q\backslash\hat{Q}}(\hat{t})=d_{Q}(\hat{t})-d_{\hat{Q}}(\hat{t})\ge f+\sum_{q\in Q_{2}}\delta(u,v_{q})\ge f
\]
Observing that $r_{q}\ge\tau_{1}$ for each $q\in Q\backslash\hat{Q}$
yields the claim, and thus the lemma.
\end{proof}
\begin{lem}
\label{lem:FLDelay_RootExcesses}For every $i\in[k]$, the charging
node $\mu=(r,[t_{i-1},t_{i}))$ has $\chi_{\mu}\ge f$.
\end{lem}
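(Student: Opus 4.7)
The plan is to mirror Lemma \ref{lem:FLDeadline_RootExcesses}. Since $r$ has no parent, $E_\mu^- = \emptyset$, so it suffices to prove $\sum_{\sigma \in E_\mu^+}\alpha(\sigma) \ge f$. If $\Color[\mu] \ne \None$, I would invoke Cases 1 and 2 of Lemma \ref{lem:FLDelay_ValidPreflow} verbatim: in Case 1 the $s \to \mu$ edge has weight $c(\mu) \ge f$, and in Case 2 the edges to $\mu$ from the depth-$1$ charging nodes that $\mu$ invested in sum to exactly $f$ by Observation \ref{obs:FLDeadline_RaisesCounterAtMostF}.

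The interesting case is $\Color[\mu] = \None$, where the goal is to show $c_d(\mu) \ge f$; this suffices because the $s \to \mu$ edge then carries weight $c(\mu) \ge c_d(\mu) \ge f$. Here $\Color[\mu] \ne \Special$ forces $\opt$ to open no facility in $T_r$ during $[t_{i-1}, t_i)$, so $c_b(\mu) = 0$, and $c_c(\mu) = 0$ holds trivially since $u = r$. The failure of $\SetColor(\mu, \mu)$ yields either $t_{i-1} = -\infty$ or $\lambda_\mu > t_i$, and the firing of $\UponCritical$ at $t_i$ supplies a critical set $Q$ with $d_Q(t_i) \ge \psi(Q) \ge f$.

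If $t_{i-1} = -\infty$, every $q \in Q$ is trivially released after $t_{i-1}$ and is still pending at $t_i$, so since $\opt$ opens no facility in $T_r$ during the interval and facilities are ephemeral, each $q$ must still be pending in $\opt$'s schedule at $t_i$, giving $c_d(\mu) \ge d_Q(t_i) \ge f$. If instead $\lambda_\mu > t_i$, the approach is to adapt the claim from Case 4 of Lemma \ref{lem:FLDelay_ValidPreflow}. Let $\hat{Q} \subseteq Q$ denote the requests pending immediately after $\Explore_{t_{i-1}}(r)$, and apply Proposition \ref{prop:FLDelay_MinimumPrefixDecomposition} to a minimum-cost ancestor-closed solution $(S, \phi)$ for $Q$ under $r$, obtaining $\psi(Q) = \sum_{u' \in \bar{S}} \psi_{u'}(Q \cap T_{u'}) + f + \sum_{q \in Q_2} \delta(r, v_q)$. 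The two conditions in the definition of $\lambda_\mu > t_i$ bound $d_{\hat{Q} \cap T_{u'}}(t_i) \le \psi_{u'}(Q \cap T_{u'})$ for each $u' \in \bar{S}$ (by condition 2 applied to $Q' = \hat{Q} \cap T_{u'}$ together with monotonicity of $\psi_{u'}$) and $d_{\hat{Q} \cap Q_2}(t_i) \le \sum_{q \in Q_2} \delta(r, v_q)$ (by applying condition 1 to each $q \in \hat{Q} \cap Q_2$ and noting such $q$ must connect to $r$ in $(S, \phi)$). Summing yields $d_{\hat{Q}}(t_i) \le \psi(Q) - f$, hence $d_{Q \setminus \hat{Q}}(t_i) \ge f$. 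Requests in $Q \setminus \hat{Q}$ are pending at $t_i$ but were not pending immediately after $\Explore_{t_{i-1}}(r)$, forcing $r_q > t_{i-1}$; combined with $c_b(\mu) = 0$ and the ephemerality of facilities, $\opt$ must still hold these requests at $t_i$, so the $f$ units of delay are absorbed into $c_d(\mu)$.

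The main obstacle is the $\lambda_\mu > t_i$ subcase, where the additive $f$ gap in Proposition \ref{prop:FLDelay_MinimumPrefixDecomposition} must be pushed onto the ``new'' requests $Q \setminus \hat{Q}$, and one has to verify carefully that each such request is released strictly after $t_{i-1}$ and therefore contributes delay to $c_d(\mu)$ under $\opt$'s schedule.
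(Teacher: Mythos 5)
Your proof is correct and follows essentially the same structure as the paper's: handle $\Color[\mu]\neq\None$ by the deadline argument, and for $\Color[\mu]=\None$ take the critical set $Q$ at $t_i$, subtract the survivors $\hat{Q}$ of $\Explore_{t_{i-1}}(r)$, and use $\lambda_\mu > t_i$ together with Proposition~\ref{prop:FLDelay_MinimumPrefixDecomposition} to push at least $f$ of delay onto $Q\setminus\hat{Q}$, all of which were released in $[t_{i-1},t_i)$ and unservable by $\opt$ there. Your bound $d_{\hat{Q}}(t_i)\le\psi(Q)-f$ is a tiny variant of the paper's $d_{\hat{Q}}(t_i)\le\sum_{u'\in\bar S}\psi_{u'}(Q_1^{u'})$ — you bound $d_{\hat Q\cap Q_2}(t_i)$ via condition~1 instead of showing $\hat Q\cap Q_2=\emptyset$ — but the conclusion $d_{Q\setminus\hat Q}(t_i)\ge f$ is identical.
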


\begin{proof}
Observe that $E_{\mu}^{-}=\emptyset$. It remains to see that $\sum_{\sigma\in E_{\mu}^{+}}\alpha(\sigma)=f$.

If $\texttt{Color}[\mu]\neq\None$, this holds similarly to Lemma
\ref{lem:FLDeadline_RootExcesses}. 

Otherwise, assume that $\texttt{Color}[\mu]=\None$. Since $\Color[\mu]\neq\Special$,
$\opt$ did not open a facility in $[t_{i-1},t_{i})$. We find a set
of requests $Q^{\prime}$ released in $[t_{i-1},t_{i})$ on which
$\opt$ incurs at least $f$ delay. The argument that follows is similar
to that of Case 4 of Lemma \ref{lem:FLDelay_ValidPreflow}, the structure
of which we repeat for clarity.

We must have that either $t_{i-1}=-\infty$ or $\lambda_{\mu}>t_{i}$.
Denote by $Q$ the set of requests that triggered the service at $t_{i}$.
We have that $d_{Q}(t_{i})\ge\psi(Q)$. Observe that $r_{q}<t_{i}$
for every $q\in Q$. If $t_{i-1}=-\infty$, then $r_{q}\ge t_{i-1}$
for every $q\in Q$, and since $\psi(Q)\ge f$ the proof is complete.

Otherwise, $\lambda_{\mu}>t_{i}$. Denoting by $(S,\phi)$ the minimum-cost
ancestor-closed solution for $Q$, we define $\bar{S}$, $Q_{2}$
and $Q_{1}^{u^{\prime}}$ for every $u^{\prime}\in S$ as in Proposition
\ref{prop:FLDelay_MinimumPrefixDecomposition}. Proposition \ref{prop:FLDelay_MinimumPrefixDecomposition}
yields
\[
d_{Q}(t_{i})\ge\left(\sum_{u^{\prime}\in\bar{S}}\psi_{u^{\prime}}(Q_{1}^{u^{\prime}})\right)+f+\sum_{q\in Q_{2}}\delta(u,v_{q})
\]
Define $\hat{Q}\subseteq Q$ to be the subset of $Q$ alive immediately
after the return of $\Explore_{t_{i-1}}(r)$. Using $\lambda_{\mu}>t_{i}$,
and choosing $\hat{t}=t_{i}$, we use an identical argument to Case
4 of Lemma \ref{lem:FLDelay_ValidPreflow} to show that 
\[
d_{\hat{Q}}(t_{i})\le\left(\sum_{u^{\prime}\in\bar{S}}\psi_{u^{\prime}}(Q_{1}^{u^{\prime}})\right)
\]
Choosing $Q^{\prime}=Q\backslash\hat{Q}$ completes the proof of lemma,
identically to Case 4 of Lemma \ref{lem:FLDelay_ValidPreflow}.
\end{proof}
We can now prove Lemma \ref{lem:FLDelay_OPT}.
\begin{proof}[of Lemma \ref{lem:FLDelay_OPT}]
Lemma \ref{lem:FLDelay_ValidPreflow} yields that $Z$ is a valid
preflow. For $i\in[k]$, let $\mu_{i}=(r,[t_{i-1},t_{i}))$. Using
Lemma \ref{lem:FLDelay_RootExcesses} and Proposition \ref{prop:Flow_SubsetLBSource},
we have that 
\[
kf\le\sum_{i=1}^{k}\chi_{\mu_{i}}\le\omega_{Z}
\]

Now observe that $E_{s}^{+}=\emptyset$, and that $\sum_{\sigma\in E_{s}^{-}}\alpha(\sigma)=\sum_{\mu\in M}c(\mu)$.
Using Lemma \ref{lem:FLDelay_ChargeLBOPT}, we obtain
\[
kf\le\omega_{Z}=\sum_{\sigma\in E_{s}^{-}}\alpha(\sigma)=\sum_{\mu\in M}c(\mu)\le(D+1)\cdot\opt^{B}+2\cdot\opt^{C}+(D+1)\cdot\opt^{D}
\]

as required.
\end{proof}
\begin{proof}[Proof of Theorem \ref{thm:FLDelay_HSTTheorem}]
 Using Lemmas \ref{lem:FLDelay_ALG} and \ref{lem:FLDelay_OPT} completes
the proof.
\end{proof}

\section{\label{sec:MAD}Online Multilevel Aggregation with Delay}

\subsection{Problem and Notation}

In the online multilevel aggregation with delay problem, requests
arrive on the leaves of a rooted tree over time. Each such request
accumulates delay until served. At any point in time, an algorithm
for this problem may choose to transmit a subtree which contains the
root, at a cost which is the weight of that subtree. Any pending requests
on a leaf in the transmitted subtree are served by the transmission.

Formally, as in the facility location with delay problem, a request
is a tuple $(v_{q},r_{q},d_{q}(t))$ where the leaf of the request
is $v_{q}$, the arrival time of the request is $r_{q}$ and $d_{q}(t)$
is the request's delay function. The function $d_{q}(t)$ is again
required to be non-decreasing and continuous.

We observe online multilevel aggregation with delay on a $\left(\ge2\right)$-HST.
We assume, without loss of generality, that only a single edge exits
the root node, called the root edge. Otherwise, we operate on each
edge that exits the root node separately, as there is no interaction
between the subtrees rooted at those edges. We denote the tree by
$T$, and its root edge by $r$. 

For a request $q$, and a set of edges $E$ we write that $q\in E$
if the leaf edge on which $q$ is released is in $E$. In accordance,
we write $Q\subseteq E$ if $q\in E$ for every $q\in Q$. For a set
of pending requests $Q$ at time $t$, we denote by $d_{Q}(t)$ the
total delay incurred by the requests of $Q$ until time $t$. We denote
by $w(e)$ the weight of an edge, and by $w(E)=\sum_{e\in E}w(e)$
the total weight of a set of edges.

We assume that each request would gather infinite delay if it remains
pending forever.

The following notations are similar to those for facility location,
but refer to edges instead of nodes.
\begin{defn}[Similar to Definition \ref{def:FLDeadline_TreeDefinitions}]
For every tree edge $e\in T$, we use the following notations:
\begin{itemize}
\item For $e\neq r$, we denote by $p(e)$ the parent edge of $e$ in the
tree. 
\item We denote by $T_{e}$ the subtree rooted at $e$.
\item For a set of requests $Q\subseteq T_{e}$, we denote by $T_{e}^{Q}\subseteq T_{e}$
the subtree spanned by $e$ and the leaves of $Q$. We denote $T^{Q}=T_{r}^{Q}$.
\item We define the \emph{height} of $e$, denoted $h_{e}$, to be the depth
of $T_{e}$.
\end{itemize}
\end{defn}

In this section, we prove the following theorem.
\begin{thm}
\label{thm:MAD_GeneralTree}There exists a $O(D^{2})$-competitive
deterministic algorithm for online multilevel aggregation with delay
on any tree of depth $D$.
\end{thm}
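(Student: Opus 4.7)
The plan is to prove Theorem \ref{thm:MAD_GeneralTree} in two stages: first establish an $O(D^2)$-competitive deterministic algorithm for $(\ge 2)$-HSTs of depth $D$, then reduce the general tree case to the HST case via a deterministic decomposition of the tree into a forest of $(\ge 2)$-HSTs, in the spirit of \cite{DBLP:conf/soda/BuchbinderFNT17}.

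For the HST algorithm, I would place a counter of size $w(e)$ on each edge $e$ (in contrast to the facility location case, in which all counters had the uniform size $f$). The algorithm triggers an exploration of the root edge $r$ whenever a set of pending requests becomes \emph{critical} in a sense analogous to Definition \ref{def:FLDelay_CriticalSet}, that is, when the total delay of the set meets or exceeds the cost of augmenting the transmission to serve it. An exploration of edge $e$ is allocated budget exactly $w(e)$, and uses the DFS spending pattern from the framework: repeatedly locate the earliest time at which a pending request set's delay justifies reaching it, raise the counter of the descendant edge toward that set by the corresponding cost, and recursively call $\Explore$ on any descendant whose counter fills. A crucial departure from the facility location algorithms is that an exploration is allowed to raise the counter of any descendant edge, not just an immediate child.

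For the HST analysis, I would adapt the preflow framework of Procedure \ref{proc:FLDeadline_PreflowBuilder}, constructing charging nodes corresponding to individual explorations of each edge. However, now \emph{two} preflows are required rather than one. One preflow would lower bound $\opt$ as before, via top-layer coloring and downward propagation; the second preflow would be used to upper bound $\alg$ in terms of the number of root-edge explorations, since the variable counter sizes preclude the simple direct counter accounting used in Lemma \ref{lem:FLDeadline_ALG}. Combining the two preflow bounds yields $\alg \le O(D^2) \cdot \opt$ on a $(\ge 2)$-HST of depth $D$.

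For the reduction, I would decompose an arbitrary tree of depth $D$ by grouping edges by weight scale, producing a forest of $(\ge 2)$-HSTs in which consecutive ancestor-descendant edges differ in weight by at least factor $2$. Each such HST has depth at most $D$, and the decomposition preserves the total weight of any root-to-leaf path up to a constant factor. Running the HST algorithm independently on each HST in the forest produces a valid transmission schedule for the original tree, and any optimal $\opt$ induces solutions on the decomposed HSTs whose total cost is $O(1) \cdot \opt$, yielding the $O(D^2)$ competitive ratio. The main obstacle I anticipate lies in the HST analysis: ensuring that each exploration spends its budget \emph{exactly} is essential, since any constant-factor slack would compound to $2^{\Theta(D)}$; and coordinating the two preflow constructions so that a single coloring procedure simultaneously supports both the $\opt$ lower bound and the $\alg$ upper bound requires careful design of the propagation rules.
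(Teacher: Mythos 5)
Your proposal matches the paper's approach essentially step for step: a per-edge counter of size $w(e)$ with an exploration budget of exactly $w(e)$, DFS-style charging of arbitrary descendant-edge counters (via the live cut), a service trigger when a pending request set $Q$ saturates $T^{Q}$, a preflow to lower-bound $\opt$ and a second preflow to upper-bound $\alg^{B}$, and a deterministic decomposition of the general tree into a forest of $\left(\ge2\right)$-HSTs in the style of Buchbinder et al., with each virtual transmission replaced by its concretization in the original tree.

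Two small corrections worth noting. First, the two preflows are not coordinated: in the paper only the $\opt$ lower-bound preflow (Procedure \ref{proc:MAD_PreflowBuilder}) uses the coloring and propagation machinery, while the $\alg$ upper-bound preflow (used in Lemma \ref{lem:MAD_ALG}) is built directly from the investment amounts, weighting each incoming edge by $h_{e}$ and assigning $s$-to-root edges of weight $D\cdot w(r)$, with no coloring at all. Second, the general-tree reduction needs slightly more than ``root-to-leaf path weight is preserved up to a constant''; the concretization $\bar{\T}_{i}$ of a virtual transmission $\T_{i}$ inserts, for each $e\in\T_{i}$, the full path $B_{e}$ to its virtual parent, and the paper bounds the total concretization cost (Lemma \ref{lem:MAD_GeneralTreeALG}) via a third preflow whose vertex set includes charging nodes $(e,j)$ for concretization edges that were never themselves explored, using $x$-routes to propagate excess $w(e)/2$ to every such node. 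The net effect is the factor-$2$ you anticipated, but it is established by a preflow argument, not by a direct path-length comparison.
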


\subsection{Algorithm for HSTs}

We now present an algorithm for the online multilevel aggregation
with delay problem over a $\left(\ge2\right)$-HST of depth $D$. 
\begin{defn}[saturation and critical sets]
\label{def:MAD_CriticalSet}For any edge $e$, we say that a set
of pending requests $Q\subseteq T_{e}$ \emph{saturates }$T_{e}$
if $d_{Q}(t)\ge w(T_{e}^{Q})$. We say that a set of pending requests
$Q$ is \emph{critical} at time $t$ if $Q$ saturates the root edge
$r$. 
\end{defn}

Upon a set of critical requests, the algorithm starts a service. In
every service, the algorithm maintains a tree $\T$, which it expands
and ultimately transmits. 
\begin{defn}[live cut]
At any time during the construction of $\T$, we define the \emph{live
cut under $e\in\T$ }to be the set of edges $E=\{e^{\prime}|e^{\prime}\in T_{e}\backslash\T\wedge p(e^{\prime})\in\T\}$. 
\end{defn}

\textbf{Algorithm's description. }The algorithm is given in Algorithm
\ref{alg:MAD_Algorithm}. When a set of requests is critical, a call
is made to $\UponCritical$, which resets the tree to transmit $\T$,
calls $\Explore(r)$ to expand $\T$, then transmits $\T$. 

The exploration of an edge $e$ adds $e$ to $\T$. It then considers
the live cut underneath $e$, which is the set of potential candidates
for expanding $\T$. The exploration forwards time until a set of
pending requests saturates $T_{e^{\prime}}$ for an edge $e^{\prime}$
in the the live cut. It then invests in raising the counter of $e^{\prime}$,
until either the counter is full (which triggers $\Explore(e^{\prime})$
immediately) or $\Explore(e)$ is out of budget. The counter of $e$,
as well as the budget of $\Explore(e)$, is equal to $w(e)$. 

Note that the live cut under $e$ can change significantly after every
iteration of the loop in $\Explore(e)$, as making a recursive call
to $\Explore(e^{\prime})$ can add many additional edges to $\T$.

\noindent \LinesNumbered \RestyleAlgo{boxruled}

\noindent \renewcommand{\algorithmcfname}{Algorithm}

\noindent 
\begin{algorithm}[tb]
\caption{\label{alg:MAD_Algorithm}Online Multilevel Aggregation with Delay}

\SetKwProg{Fn}{Function}{}{end}
\SetKwProg{EFn}{Event Function}{}{end}
\SetKwFunction{UponCritical}{UponCritical}
\SetKwFunction{Explore}{Explore}
\SetKwFunction{Add}{Add}

\textbf{Initialization.}

Initialize $c_{e}\leftarrow0$ for any edge $e\in T\backslash\{r\}$

Declare $b_{e}$ for every edge $e\in T$.

Declare $\T$.

\;

\EFn(\tcp*[h]{Upon request set becoming critical as per Definition
\ref{def:MAD_CriticalSet}}){\UponCritical{}}{

set $\T\leftarrow\emptyset$

\Explore{$r$}

transmit $\T$

}

\;

\Fn{\Explore{$e$}}{

\Add{$e$}

set $b_{e}\leftarrow w(e)$

\While{$b_{e}\ne0$ \normalfont{\textbf{and}} there remain pending
requests in $T_{e}$}{

let $H$ be the live cut under $e$.

let $Q$ be the set of pending requests in $T_{e}$.

let $t^{\prime}$ be the earliest time such that there exists a set
of requests $Q^{\prime}\subseteq Q$ that saturates $T_{e^{\prime}}$
for some $e^{\prime}\in H$.

call \Invest{$e$,$e^{\prime}$} 

\lIf{$c_{e^{\prime}}=w(e^{\prime})$}{set $c_{e^{\prime}}=0$ $\texttt{\textbf{;}}$
call \Explore{$e^{\prime}$}.}

}

}
\end{algorithm}

\noindent \LinesNumbered \RestyleAlgo{boxruled}

\noindent \renewcommand{\algorithmcfname}{Algorithm}
\begin{algorithm}[tb]
\caption{Online Multilevel Aggregation with Delay (cont.)}

\ContinuedFloat   \caption*{Online Multilevel Aggregation with Delay (cont.)}
\SetKwProg{Fn}{Function}{}{end}
\SetKwProg{EFn}{Event Function}{}{end}
\SetKwFunction{UponCritical}{UponCritical}
\SetKwFunction{Explore}{Explore}
\SetKwFunction{Add}{Add}

\Fn{\Add{$e$}}{

$\T\leftarrow\T\cup\{e\}$

\lIf{$e$ is a leaf edge}{mark all pending requests on $e$ as served}

}

\;

\Fn{\Invest{$e$,$e^{\prime}$}}{

let $y\leftarrow\min(b_{e},w(e^{\prime})-c_{e^{\prime}})$

increase $c_{e^{\prime}}$ by $y$

decrease $b_{e}$ by $y$

\Return{$y$}.

}

\end{algorithm}

\subsection{Analysis}

Fix any instance of online multilevel aggregation with delay, and
observe the behavior of Algorithm \ref{alg:MAD_Algorithm} for that
instance. We denote by $\alg$ the algorithm's total cost. We also
define $\alg^{B}$ to be the algorithm's buying cost, and $\alg^{D}$
to be the algorithm's delay cost, such that $\alg=\alg^{B}+\alg^{D}$.
We similarly define $\opt,\opt^{B}$ and $\opt^{D}$ for the optimal
solution for the instance.

In this subsection, we prove the following theorem.
\begin{thm}
\label{thm:MAD_HSTTheorem}$\alg\le O(D)\cdot\opt^{B}+O(D^{2})\cdot\opt^{D}$
\end{thm}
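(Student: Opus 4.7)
The plan is to follow the two-preflow framework used throughout the paper. Let $k$ denote the number of $\UponCritical$ calls and set $K \defi k \cdot w(r)$. The proof will establish $\alg \le O(D) \cdot K$ and $K \le O(1) \cdot \opt^{B} + O(D) \cdot \opt^{D}$, which compose to yield the theorem. As a preliminary step, one reduces bounding $\alg$ to bounding $\alg^{B}$: the saturation invariant built into $\UponCritical$ via Definition \ref{def:MAD_CriticalSet} guarantees that at the moment a service is triggered, the total delay of any pending request set is at most the cost of serving it from the root, so that the delay cost paid by the algorithm per service is dominated by the weight of the transmitted subtree, giving $\alg^{D} \le \alg^{B}$.

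To upper bound $\alg^{B} = \sum_{\mu} w(e_{\mu})$ (summed over exploration instances), I will construct the additional preflow described in the framework. Its nodes are exploration instances, a source feeds each root exploration with weight $w(r)$, and edges between explorations carry the corresponding investment amounts. The key inductive claim, provable by downward induction along the exploration tree of a single service, is: for each exploration $\mu$ of an edge $e$ at depth $j$ and each depth $j' \ge j$, the total cost of explorations in the subtree of $\mu$ at depth $j'$ is at most $w(e)$. The inductive step uses that each exploration $\mu$ triggered as a child of $\mu$ has cost exactly equal to the investment it received from $\mu$, and that $\mu$'s total outgoing investment is at most its budget $w(e)$. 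Applying the claim to each of the $k$ root explorations at each of the $D+1$ depths yields $\alg^{B} \le (D+1) \cdot k \cdot w(r)$.

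For the lower bound $K \le O(1) \cdot \opt^{B} + O(D) \cdot \opt^{D}$, I will adapt the preflow construction of Procedure \ref{proc:FLDeadline_PreflowBuilder} and the analysis of Lemma \ref{lem:FLDelay_OPT}. Charging nodes have the form $\mu = (e, [\tau_{1}, \tau_{2}))$ for consecutive exploration times of $e$, the $\SetColor$ predicate uses the saturation condition of Definition \ref{def:MAD_CriticalSet} in place of the critical-coalition condition, and the $\Special$ color marks a charging node exactly when $\opt$ transmits the edge $e$ itself during $[\tau_{1}, \tau_{2})$. Since any $\opt$-transmission that serves a request at a leaf of $T_{e}$ must include $e$, this edge-level $\Special$ still provides the structural constraint needed in the $\Color = \None$ case (no $\opt$-transmission can serve requests in $T_{e}$ during the propagated interval $[\tau_{1}, \tau_{2}^{\star})$), while making the buying-cost accounting tight: $\sum_{\mu} c_{b}(\mu) \le \opt^{B}$, since each transmission of each edge by $\opt$ is counted in exactly one charging node. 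The delay-cost accounting remains standard, giving $\sum_{\mu} c_{d}(\mu) \le (D+1) \cdot \opt^{D}$. The case analysis showing $\chi_{\mu} \ge 0$ proceeds by the $\Color \in \{\Special, \mu^{\star}, \None\}$ trichotomy, mirroring Lemma \ref{lem:FLDelay_ValidPreflow}; the $\None$ case exhibits a set of requests released in $[\tau_{1}, \tau_{2})$ within $T_{e}$ whose delay $\opt$ must pay because $\opt$ did not transmit $e$ during the propagated interval. Each root charging node has excess at least $w(r)$, and Proposition \ref{prop:Flow_SubsetLBSource} then yields $K \le \omega_{Z} \le \sum_{\mu} c(\mu)$.

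The main obstacle is the upper bound on $\alg^{B}$, which has no counterpart in the facility location analysis. In the FL setting, uniform exploration cost $f$ and the property that an exploration charged only its children's counters gave the simple layered bound $\bar{C}_{j} \le \bar{C}_{j-1}$. In MAD, both the variable costs $w(e)$ and—more critically—the dynamic live cut, which allows a shallow exploration to invest directly in the counters of arbitrarily deep descendants, invalidate the straightforward layer-by-layer argument. The additional preflow and its inductive analysis must carefully trace the budget flow through the exploration subtree, relying on the fact that the budget of each triggered exploration is exactly the investment that triggered it, to keep the total cost at each depth below $w(r)$ per service.
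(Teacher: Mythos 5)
Your overall strategy (bound $\alg$ by $O(D)\cdot kw(r)$, then bound $kw(r)$ by $O(1)\cdot\opt^B+O(D)\cdot\opt^D$ via a preflow on charging nodes) is the paper's own decomposition (Lemmas \ref{lem:MAD_ALG} and \ref{lem:MAD_OPT}), and your lower-bound plan faithfully reproduces the paper's Procedure \ref{proc:MAD_PreflowBuilder} and Lemmas \ref{lem:MAD_ValidPreflow}--\ref{lem:MAD_RootExcesses}. The delay-bounded-by-buying step ($\alg^D\le\alg^B$) via the saturation invariant also matches Lemma \ref{lem:MAD_DelayLessThanBuy}. So far so good.

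The gap is in the upper bound on $\alg^B$, and it is a real one. Your inductive step rests on the premise that ``each exploration $\mu'$ triggered as a child of $\mu$ has cost exactly equal to the investment it received from $\mu$.'' This is false. When $\Explore(e)$ triggers $\Explore(e')$, the cost of the triggered call is $w(e')$ and its budget $b_{e'}$ is reset to $w(e')$, \emph{independently} of how much $\Explore(e)$ itself invested. The counter $c_{e'}$ that just filled may have been raised almost to $w(e')$ by earlier explorations — either by other ancestors of $e'$ within the same service (the DFS can revisit $c_{e'}$ from several ancestor calls via the expanding live cut) or, more importantly, by explorations during earlier services. So a call $\Explore(e)$ with budget $w(e)$ can, by investing only tiny $\epsilon$'s into many nearly-full descendant counters, spawn children whose total cost vastly exceeds $w(e)$. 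This breaks the claimed bound of $w(e)$ on per-depth cost within the subtree of $\mu$, and with it the conclusion $\alg^B\le(D+1)kw(r)$. The preflow you describe alongside this (source edges of weight $w(r)$, investment edges of weight $x$) does not rescue the argument either: the resulting excesses only satisfy $\chi_v\ge 0$, which says nothing about $\sum_v w(e_v)$.

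The paper handles exactly this issue by weighting the preflow edges: an investment of $x$ into a call $\Explore_\tau(e)$ yields a preflow edge of weight $h_e\cdot x$ (not $x$), and the source sends $D\cdot w(r)$ to each root call. Because \emph{all} of the $w(e)$ invested into $c_e$ over its phase — spread across possibly many calls and many services — appears as incoming edges, the total incoming weight at $\Explore_\tau(e)$ is $h_e\cdot w(e)$; since outgoing investments go to edges of height at most $h_e-1$, the outgoing weight is at most $(h_e-1)w(e)$, leaving excess at least $w(e)$ at every node. Summing these excesses against $\omega_Z=kDw(r)$ is what yields $\alg^B\le kDw(r)$. The height-scaling is precisely what correctly accounts for the cross-call, cross-service budget flow that your per-service inductive argument cannot track. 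Until you incorporate this (or an equivalent cross-service amortization), the upper bound in your plan does not go through.
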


In the following analysis, we denote by $k$ the number of times that
the algorithm transmits a tree. We also denote the times of the $k$
transmissions by $t_{1},...,t_{k}$ in increasing order.

\subsubsection{Upper Bounding $\protect\alg$}

We upper bound the cost of the algorithm by proving the following
lemma.
\begin{lem}
\label{lem:MAD_ALG}$\alg\le2kDw(r)$
\end{lem}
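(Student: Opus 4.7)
The plan is to decompose $\alg = \alg^{B} + \alg^{D}$ and prove the two bounds $\alg^{B} \le k D w(r)$ and $\alg^{D} \le \alg^{B}$, whose sum gives the claimed inequality.

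For the buying-cost bound I would argue by induction on the height $h_{e}$ of $T_{e}$ that any single call to $\Explore(e)$, together with the entire recursion rooted at it, adds at most $h_{e} \cdot w(e)$ worth of edges to $\T$. A leaf edge trivially satisfies this through the single addition of $e$ itself. For the inductive step, $\Explore(e)$ directly adds $e$ (weight $w(e)$) and has budget $b_{e} = w(e)$; it can only trigger a recursive $\Explore(e^{\prime})$ when $c_{e^{\prime}}$ fills, an event which consumes exactly $w(e^{\prime})$ of that budget. Every such $e^{\prime}$ lies strictly inside $T_{e}$, so $h_{e^{\prime}} \le h_{e} - 1$, and by the inductive hypothesis the recursive call costs at most $(h_{e} - 1) \cdot w(e^{\prime})$. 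Summing over the filled descendants, whose weights sum to at most $w(e)$, bounds the total recursive cost by $(h_{e} - 1) \cdot w(e)$, giving a grand total of $h_{e} \cdot w(e)$. Applied to $r$ with $h_{r} = D$ this gives a per-service buying cost of at most $D \cdot w(r)$, and summing over the $k$ services yields $\alg^{B} \le k D w(r)$.

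For the delay-cost bound I would follow the template of Proposition \ref{prop:FLDelay_DelayBoundedByBuyingAndConnection}: the algorithm maintains the invariant that for every pending set $Q$ at time $t$, $d_{Q}(t) \le w(T^{Q})$, since any violation would make $Q$ critical and trigger $\UponCritical$. Because each delay function goes to infinity, every request is eventually served; let $Q_{i}$ denote the set served by the $i$-th transmission. Since $\T_{i}$ is a subtree rooted at $r$ that contains every leaf of $Q_{i}$, we have $T^{Q_{i}} \subseteq \T_{i}$, and therefore $d_{Q_{i}}(t_{i}) \le w(T^{Q_{i}}) \le w(\T_{i})$. Summing over services gives $\alg^{D} = \sum_{i} d_{Q_{i}}(t_{i}) \le \sum_{i} w(\T_{i}) = \alg^{B}$, and combining with the buying-cost bound yields $\alg \le 2\alg^{B} \le 2 k D w(r)$.

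The main obstacle is verifying that the invariant $d_{Q}(t) \le w(T^{Q})$ is preserved at the moment of each transmission. Between services it is enforced by the $\UponCritical$ trigger, but inside $\Explore(e)$ time is advanced and the algorithm only watches for saturations of subtrees rooted at edges in the live cut under $e$, not for arbitrary pending subsets. One must argue either that any globally-critical set manifests as one of these local saturations, or that any residual violation at $t_{i}$ can be absorbed by the weight of the transmitted tree (or by invoking another service immediately, which only inflates $k$ by a constant factor). Once this invariant is in hand, the delay bound follows by the direct summation sketched above.
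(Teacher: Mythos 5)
The decomposition $\alg = \alg^B + \alg^D$, the bound $\alg^D \le \alg^B$ via the invariant $d_Q(t)\le w(T^Q)$, and the final combination all match the paper (Observation \ref{obs:MAD_NoCriticalInvariant} and Lemma \ref{lem:MAD_DelayLessThanBuy}). However, your proof of the buying-cost bound $\alg^B\le kDw(r)$ by per-service induction has a genuine gap.

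The flaw is in the step ``it can only trigger a recursive $\Explore(e')$ when $c_{e'}$ fills, an event which consumes exactly $w(e')$ of that budget.'' This is false: $\Invest(e,e')$ consumes $w(e')-c_{e'}$ of $b_e$, and $c_{e'}$ is \emph{not} reset between services (it is zeroed only at initialization and upon calling $\Explore(e')$). Moreover, even within a single service the counter $c_{e'}$ can be raised by several different $\Explore$ calls whose live cuts both contain $e'$. So a single $\Explore(e)$ can trigger recursive calls $\Explore(e'_1),\dots,\Explore(e'_m)$ while spending only $\sum_j\bigl(w(e'_j)-c_{e'_j}^{\mathrm{before}}\bigr)\le w(e)$ of budget, and $\sum_j w(e'_j)$ can strictly exceed $w(e)$. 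Concretely: with $w(r)=1$, $D=2$, and many children of weight $1/2$, a first service can leave several child counters at $1/2-\epsilon$; a second service then fills and triggers several of them for a total transmission cost above $2=Dw(r)$. Thus the per-service bound ``$\Explore(r)$ plus its recursion costs at most $Dw(r)$'' does not hold, and the induction collapses at the summation over triggered descendants.

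The paper's argument avoids this by accounting globally rather than per service. It builds a preflow whose vertices are \emph{all} $\Explore_\tau(e)$ calls over the whole run. Each node $\Explore_\tau(e)$ with $e\ne r$ receives incoming weight $h_e\cdot x$ from each $\Explore_{\tau'}(e')$ that invested $x$ in it; the crucial point is that the \emph{cumulative} investment in $c_e$ between consecutive resets is exactly $w(e)$, no matter how many services or callers contributed, so the incoming weight sums to $h_e\cdot w(e)$. Since $\Explore_\tau(e)$ sends out at most $(h_e-1)w(e)$, Lemma \ref{lem:MAD_AlgChiGood} gives $\chi_v\ge w(e)$ at every node, and Proposition \ref{prop:Flow_SubsetLBSource} then bounds $\alg^B=\sum_v w(e)\le\sum_v\chi_v\le\omega_Z=kDw(r)$. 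If you want to salvage your sketch without writing the preflow explicitly, you would need to replace the per-service induction with a cumulative-counter argument (as in Corollary \ref{cor:FLDeadline_CountersBoundedByDKF} for facility location), grouping by tree depth across all services, rather than by recursion depth within one service.

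Your closing concern about Observation \ref{obs:MAD_NoCriticalInvariant} during time-forwarding is legitimate but is resolved by Proposition \ref{prop:MAD_LambdaLargerThanCutCrit}: if some pending $Q$ were to become critical during $\Explore(r)$, a subset of $Q$ would already saturate a live-cut subtree at or before that moment, so the time-forwarding cannot skip past it.
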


The main technique used in proving Lemma \ref{lem:MAD_ALG} is constructing
a preflow to provide an upper bound for $\alg^{B}$. Bounding $\alg^{D}$
by $\alg^{B}$ then yields the lemma.

\begin{obs}
\label{obs:MAD_EveryServiceServes}Every call to $\Explore(r)$ serves
at least one pending request.
\end{obs}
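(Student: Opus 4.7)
The plan is to prove the observation by strengthening it to the following inductive claim on the height of $e$: whenever $\Explore(e)$ is invoked at time $t$ with $b_{e}=w(e)$ and there is a set $Q\subseteq T_{e}$ of requests pending at time $t$ satisfying $d_{Q}(t)\ge w(T_{e}^{Q})$, at least one request is served before $\Explore(e)$ returns. Observation \ref{obs:MAD_EveryServiceServes} then follows by instantiating this claim at $e=r$ with $Q$ the critical set that triggered $\UponCritical$, since criticality of $Q$ is exactly the saturation condition for $T_{r}$.

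The base case is when $e$ is a leaf edge: the saturation inequality together with $w(T_{e}^{Q})\ge w(e)>0$ forces $Q$ to be nonempty, and the $\texttt{Add}(e)$ call at the start of $\Explore(e)$ marks every pending request on $e$ as served. For the inductive step, let $e_{1},\ldots,e_{m}$ be the children of $e$ and set $Q_{i}=Q\cap T_{e_{i}}$. Using the decomposition $w(T_{e}^{Q})=w(e)+\sum_{i}w(T_{e_{i}}^{Q_{i}})$ and $d_{Q}(t)=\sum_{i}d_{Q_{i}}(t)$, the saturation inequality for $Q$ forces at least one child $e_{i^{\ast}}$ to satisfy $d_{Q_{i^{\ast}}}(t)\ge w(T_{e_{i^{\ast}}}^{Q_{i^{\ast}}})$. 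Hence in the first iteration of the while loop of $\Explore(e)$ the earliest saturation time $t'$ is at most $t$; since no recursive $\Explore$ has enlarged $\T$ beyond $\{e\}$ at this point, the live cut $H$ is exactly the set of children of $e$, and the algorithm selects some child $e'\in H$ together with a saturating subset $Q'\subseteq T_{e'}$.

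The crucial step is to show that $\Invest(e,e')$ fully fills $c_{e'}$ and hence triggers $\Explore(e')$. This is where the $\left(\ge2\right)$-HST hypothesis enters: since $e'$ is a child of $e$ we have $w(e)\ge w(e')$, so $b_{e}=w(e)\ge w(e')\ge w(e')-c_{e'}$ regardless of the prior value of $c_{e'}$, and hence $\Invest(e,e')$ sets $c_{e'}$ to $w(e')$. The recursive $\Explore(e')$ is then invoked with $b_{e'}=w(e')$ and with $Q'$ still pending (nothing has been served yet in the current $\Explore(e)$) and still saturating $T_{e'}$ (by monotonicity of the delay functions), so the induction hypothesis applies and produces a served request. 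The main obstacle is this last step, and in particular ensuring that the first iteration operates on a child of $e$, so that the HST inequality is available and the counter is filled irrespective of its initial value.
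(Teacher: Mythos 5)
Your proof is correct, and it supplies an argument that the paper leaves implicit (the paper states the observation without proof). The strengthened inductive claim is the right one, and you correctly identify the two facts that make the induction go through: (i) at the moment $\Explore(e)$ begins, the live cut under $e$ consists exactly of the children of $e$, and some child $e_{i^{\ast}}$ already has a saturating subset at time $t$ (by the decomposition $w(T_{e}^{Q})=w(e)+\sum_{i}w(T_{e_{i}}^{Q_{i}})$ and $w(e)>0$), so $t'=t$ and no time forwarding occurs in the first iteration; and (ii) the fresh budget $b_{e}=w(e)\ge w(e')\ge w(e')-c_{e'}$ guarantees that the first call to $\Invest(e,e')$ fills the counter and forces the recursive call, with $Q'$ still pending and still saturating $T_{e'}$ at time $t$. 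One minor phrasing nit: $b_{e}$ is set to $w(e)$ inside $\Explore(e)$ rather than being a precondition of the call, so the inductive hypothesis is more cleanly stated without mentioning $b_{e}$ at all; the substance of the argument is unaffected.
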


\begin{prop}
\label{prop:MAD_EveryRequestServed}Every request is eventually served.
\end{prop}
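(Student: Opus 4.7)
The plan is a proof by contradiction, leveraging Observation \ref{obs:MAD_EveryServiceServes} together with the standing assumption that a pending request accumulates unbounded delay. Suppose some request $q$ is never served; then $q$ remains pending for all $t>r_q$ and, by hypothesis, $d_q(t)\to\infty$. Since the path weight $w(T^{\{q\}})$ is finite, there is a finite time $t_0$ at which $d_q(t_0)\ge w(T^{\{q\}})=w(T_r^{\{q\}})$. By Definition \ref{def:MAD_CriticalSet}, the singleton $\{q\}$ is then critical at $t_0$, and since $d_q$ is non-decreasing and $q$ stays pending, $\{q\}$ remains critical for every $t\ge t_0$.

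Because a critical set persists from $t_0$ onward, the event $\UponCritical$ must fire (conventionally, the event fires whenever the critical condition holds, and each invocation is treated as an instantaneous atomic operation). Each firing executes $\Explore(r)$ and, by Observation \ref{obs:MAD_EveryServiceServes}, serves at least one pending request. Consequently, while $\{q\}$ remains critical at time $t_0$, $\UponCritical$ fires back-to-back at $t_0$, each call removing at least one distinct request from the pending set.

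Only finitely many requests are released by any finite time, so the set of requests pending at $t_0$ is finite; after at most that many successive $\UponCritical$ invocations, the entire pending set has been served --- in particular $q$ itself --- contradicting the assumption that $q$ is never served. The main obstacle I expect is pinning down the semantics of $\UponCritical$ precisely enough to justify its repeated firing while a critical set persists; once the conventional reading is adopted, the remainder of the proof reduces to Observation \ref{obs:MAD_EveryServiceServes} plus a counting argument over the finite pending set at $t_0$.
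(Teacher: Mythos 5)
Your argument is correct and matches the paper's proof in all essentials: both use the unbounded-delay assumption to conclude that $\{q\}$ becomes and stays critical, then invoke Observation \ref{obs:MAD_EveryServiceServes} to argue that successive calls to $\Explore(r)$ each serve at least one pending request, so the finite pending set is eventually exhausted. The only cosmetic difference is that you phrase it as a contradiction while the paper argues directly.
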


\begin{proof}
Consider a request $q$. As assumed in the model, the delay of $q$
goes to infinity as $q$ remains pending. But at some point, the delay
of $q$ would exceed $T_{r}^{\{q\}}$, making $\{q\}$ critical, and
triggering calls to $\Explore(r)$ until $q$ is served. Each such
call serves at least one pending request due to Observation \ref{obs:MAD_EveryServiceServes},
and thus $q$ will eventually be served.
\end{proof}
The following observation follows from the fact that a tree is transmitted
whenever a set of requests becomes critical.
\begin{obs}
\label{obs:MAD_NoCriticalInvariant}At any time $t$ during the algorithm,
and for any set of requests $Q$ pending at $t$, it holds that $d_{Q}(t)\le w(T^{Q})$.
\end{obs}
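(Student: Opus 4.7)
The plan is a proof by contradiction combined with continuity. I would suppose the invariant fails: there exists some time $t$ and a set $Q$ pending at $t$ with $d_Q(t) > w(T^Q)$. Let $t^{*}$ be the infimum of times at which some pending set violates the invariant. Because each $d_q$ is continuous and non-decreasing, and because the pending pool changes only at the discrete events of request arrivals and transmissions, one shows that at $t^{*}$ itself there exists a pending set $Q^{*}$ with $d_{Q^{*}}(t^{*}) = w(T^{Q^{*}})$; that is, $Q^{*}$ becomes critical at $t^{*}$.

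At this point I would invoke the algorithm's design: the event function $\UponCritical$ fires precisely when some pending set becomes critical, so at $t^{*}$ a transmission is triggered instantaneously. For any set $Q'$ pending immediately after this transmission, $Q'$ was also pending immediately before $t^{*}$, and by the minimality of $t^{*}$ the invariant held at all strictly earlier times for $Q'$; hence by continuity $d_{Q'}(t^{*}) = \lim_{s \to t^{*-}} d_{Q'}(s) \le w(T^{Q'})$, and the invariant is restored across $t^{*}$. Combined with the assumption that $t^{*}$ was the infimum of failure times, this is a contradiction.

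The main obstacle in making this rigorous is the limit-point issue at $t^{*}$: one must argue that $t^{*}$ itself witnesses the equality $d_{Q^{*}}(t^{*}) = w(T^{Q^{*}})$ for a concrete pending set $Q^{*}$, rather than the violation arising merely as a limit with no attaining set. This is exactly where continuity of each $d_q$ together with the piecewise-constant nature of the pending pool are used. A clean way to package the full argument is by strong induction on the discrete sequence of arrival and transmission events, applying the continuity-plus-criticality step above within each inductive step; this sidesteps any delicacy about accumulation of events while using only the design feature that a transmission is fired as soon as any pending set reaches criticality.
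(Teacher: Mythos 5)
Your proposal is correct in spirit and proceeds from exactly the same underlying fact as the paper's proof, which is essentially a one-sentence assertion: the observation ``follows from the fact that a tree is transmitted whenever a set of requests becomes critical.'' You have simply wrapped this design feature in a formal infimum-of-failure-times argument.

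One subtle point deserves attention before the contradiction is fully closed. After the transmission triggered at $t^{*}$, you argue that every remaining pending set $Q'$ satisfies $d_{Q'}(t^{*}) \le w(T^{Q'})$. But for the infimum argument to yield a contradiction, you actually need \emph{strict} inequality for every pending set after the instantaneous events at $t^{*}$ are complete; otherwise, since the delay functions keep increasing, violations could recur immediately after $t^{*}$, leaving $t^{*}$ the infimum of failure times without contradiction. Moreover, a single call to $\UponCritical$ is not guaranteed to serve all of $Q^{*}$ (Observation \ref{obs:MAD_EveryServiceServes} guarantees only that at least one pending request is served), so several consecutive calls to $\UponCritical$ may be triggered at the same instant before the pending pool stops being critical. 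The full argument therefore needs to observe that, because each call serves at least one request and there are finitely many pending requests, only finitely many transmissions happen at $t^{*}$, and after the last one \emph{no} pending set is critical, giving strict inequality and restoring the invariant on a small interval $[t^{*}, t^{*}+\epsilon)$. Your closing suggestion of packaging this as an induction over the discrete arrival and transmission events is indeed the cleanest way to handle both the multiple-firings issue and the possibility of a request arriving exactly at $t^{*}$; with that fix, the argument matches the paper's intent precisely.
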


\begin{lem}
\label{lem:MAD_DelayLessThanBuy}$\alg^{D}\le\alg^{B}$.
\end{lem}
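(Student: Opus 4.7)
The plan is to decompose the algorithm's delay cost across transmissions and then apply Observation \ref{obs:MAD_NoCriticalInvariant} at the moment of each transmission. Since every request is served by Proposition \ref{prop:MAD_EveryRequestServed}, and a request $q$ pays delay $d_q(t_q)$ where $t_q$ is its service time, I can partition the requests according to the transmission that serves them. Writing $Q_i$ for the set of requests served at the $i$-th transmission (at time $t_i$), this yields
\[
\alg^D \;=\; \sum_{q} d_q(t_q) \;=\; \sum_{i=1}^{k} \sum_{q \in Q_i} d_q(t_i) \;=\; \sum_{i=1}^{k} d_{Q_i}(t_i).
\]

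Next, I would apply Observation \ref{obs:MAD_NoCriticalInvariant} to the set $Q_i$ at time $t_i$. Every request in $Q_i$ is pending at $t_i$ (it is about to be served), so the observation gives $d_{Q_i}(t_i) \le w(T^{Q_i})$. By definition of the algorithm, $Q_i$ consists of requests whose leaf edges lie in the transmitted tree $\T_i$; since $\T_i$ is a rooted subtree containing $r$ and the leaf edges of all of $Q_i$, the subtree $T^{Q_i}$ spanned by $r$ and the leaves of $Q_i$ satisfies $T^{Q_i} \subseteq \T_i$, and therefore $w(T^{Q_i}) \le w(\T_i)$.

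Combining these two inequalities and summing over transmissions gives
\[
\alg^D \;=\; \sum_{i=1}^{k} d_{Q_i}(t_i) \;\le\; \sum_{i=1}^{k} w(T^{Q_i}) \;\le\; \sum_{i=1}^{k} w(\T_i) \;=\; \alg^B,
\]
which is the claimed bound. The only subtlety I would need to address is the continuity point: Observation \ref{obs:MAD_NoCriticalInvariant} must be invoked at the exact transmission time $t_i$, not strictly before it. This is fine because the $d_q$ are continuous and a transmission is triggered at the precise moment a set becomes critical, so the inequality (with equality, in fact) carries over to $t_i$ itself. Aside from that, the proof is a one-line application of the invariant, and I do not foresee any genuine obstacle.
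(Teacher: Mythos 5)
Your proof is correct and follows essentially the same route as the paper's: partition requests by the transmission serving them, apply Observation \ref{obs:MAD_NoCriticalInvariant} to $Q_i$ at time $t_i$ to get $d_{Q_i}(t_i)\le w(T^{Q_i})$, and use $T^{Q_i}\subseteq\T_i$ to conclude. The only addition is your brief comment on invoking the observation at $t_i$ itself, which the paper takes for granted.
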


\begin{proof}
Denote by $\mathcal{Q}$ the set of all requests released in the instance.
Through Proposition \ref{prop:MAD_EveryRequestServed}, we can partition
$\mathcal{Q}$ into the sets of requests $Q_{i}$, for $i\in k$,
such that $Q_{i}$ is served in the $i$'th service. Denote by $T_{i}$
the tree bought by the algorithm in the $i$'th service, and denote
by $d(Q_{i})$ the total delay incurred by the algorithm on the requests
of $Q_{i}$. To prove the lemma, it is enough to show that $d(Q_{i})\le w(T_{i})$
for every $i\in[k]$.

Now, observe that since all of $Q_{i}$ are served in $t_{i}$. Therefore,
$d(Q_{i})=d_{Q_{i}}(t_{i})$. Since transmitting $T_{i}$ serves $Q_{i}$,
we have that $T^{Q_{i}}\subseteq T_{i}$. Using Observation \ref{obs:MAD_NoCriticalInvariant},
we have that $d(Q_{i})\le w(T_{i})$ as required.
\end{proof}
It remains to bound $\alg^{B}$. 

Let $V$ be the set of calls to \Explore made by the algorithm. Observe
that in the algorithm, whenever an edge $e$ is bought, a call to
\Explore{$e$} is made immediately afterwards. Therefore, we have
that $\alg^{B}=\sum_{\Explore_{\tau}(e)\in V}w(e)$. 

In addition, immediately prior to calling \Explore{$e$} the counter
$c_{e}$ is zeroed. We say that $\Explore_{\tau_{1}}(e_{1})$ \emph{invested
}$x$ in $\Explore_{\tau_{2}}(e_{2})$ if $\Explore_{\tau_{1}}(e_{1})$
raised $c_{e_{2}}$ by $x$, such that the next zeroing of $c_{e_{2}}$
triggers $\Explore_{\tau_{2}}(e_{2})$.

We now construct a graph $G=(V\cup\{s\},E)$ and a weight function
$\alpha:E\to\R^{+}$, such that $Z=(G,s,\alpha)$ is a preflow. We
construct $E$ and $\alpha$ in the following way:
\begin{enumerate}
\item For every $j\in[k]$, and for every root function call $\Explore_{\tau}(r)$,
add to $E$ an edge $\sigma$ from $s$ to $\Explore_{\tau}(r)$,
and set $\alpha(\sigma)=D\cdot w(r)$.
\item For every function call $\Explore_{\tau}(e)\in V$, and for each function
call $\Explore_{\tau^{\prime}}(e^{\prime})\in V$ that invested some
amount $x$ in $\Explore_{\tau}(e)$, we add to $E$ an edge $\sigma$
from $\Explore_{\tau^{\prime}}(e^{\prime})$ to $\Explore_{\tau}(e)$,
and set $\alpha(\sigma)=h_{e}\cdot x$.
\end{enumerate}
\begin{lem}
\label{lem:MAD_AlgChiGood}For every $v=\Explore_{\tau}(e)\in V$
we have that $\chi_{v}\ge w(e)$, implying that $Z$ is a valid preflow.
\end{lem}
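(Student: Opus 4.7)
The plan is to bound $\chi_v$ by separately controlling the total outgoing and incoming weight at each $v = \Explore_\tau(e) \in V$, and then to exploit the fact that strict descendants of $e$ have strictly smaller height. First I would show that the total outgoing weight is at most $(h_e - 1) \cdot w(e)$. Every edge $e'$ in which $v$ invests lies in the live cut under $e$ at the time of investment; by definition $e' \in T_e \setminus \T$ with $p(e') \in \T$, so $e'$ is a strict descendant of $e$ in the tree, giving $h_{e'} \le h_e - 1$. Moreover, $\Explore(e)$ sets $b_e \leftarrow w(e)$ at the start, and $\Invest$ only decreases $b_e$ by the amount it actually places on a counter; hence if $x_{e'}$ denotes the total amount $v$ invests in $e'$, then $\sum_{e'} x_{e'} \le w(e)$, and
\[
\sum_{\sigma \in E_v^-} \alpha(\sigma) \;=\; \sum_{e'} h_{e'} \cdot x_{e'} \;\le\; (h_e - 1) \cdot w(e).
\]

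Next I would show that the incoming weight equals $h_e \cdot w(e)$ when $e \ne r$, and is at least $h_r \cdot w(r)$ when $e = r$. For $e = r$ there is exactly one incoming edge, namely the edge from $s$ of weight $D \cdot w(r)$; since no Explore call invests in $c_r$ (the root's counter is not even allocated in the initialization) and $h_r \le D$, the desired bound follows. For $e \ne r$, the call $\Explore_\tau(e)$ is triggered precisely after $c_e$ has been raised from $0$ to $w(e)$ and then zeroed; by the definition of ``invested'', each incoming edge to $v$ corresponds to some $\Explore_{\tau'}(e')$ contributing a portion of that rise, and these contributions sum to exactly $w(e)$, so the incoming weight is $h_e \cdot w(e)$.

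Subtracting the two bounds yields $\chi_v \ge h_e \cdot w(e) - (h_e - 1) \cdot w(e) = w(e)$, which both proves the stated bound and, since $w(e) \ge 0$ for every edge, shows that $Z$ is a valid preflow. I expect the only subtle point to be the phase accounting in the non-root case: one must verify that the definition of ``invested'' cleanly partitions each rise of $c_e$ into contributions from distinct predecessors, so that they sum to exactly $w(e)$ per triggered call. The structure of the algorithm -- $c_e$ starts a phase at $0$, is filled by successive $\Invest$ calls from ancestor explorations, and is zeroed immediately before each $\Explore_\tau(e)$ -- makes this partitioning transparent, after which the rest is an arithmetic comparison driven by the height decrement $h_{e'} \le h_e - 1$.
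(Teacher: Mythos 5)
Your proof is correct and follows essentially the same approach as the paper: bound the outgoing weight above by $(h_e - 1)\,w(e)$ using the budget $b_e$ and the fact that invested-in edges are strict descendants, bound the incoming weight below by $h_e\, w(e)$ (exactly $w(e)$ of total investment triggered the call for $e \neq r$, and $D\,w(r)$ from $s$ for $e = r$), and subtract. Your write-up is slightly more explicit about the height decrement (via the live-cut definition) and the phase accounting for $c_e$, but the argument is the same.
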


\begin{proof}
We first claim that $\sum_{\sigma\in E_{v}^{+}}\alpha(\sigma)\ge h_{e}\cdot w(e)$.
If $e=r$, this is true since there exists an edge $\sigma$ from
$s$ to $\Explore_{\tau}(e)$ such that $\alpha(\sigma)=Dw(r)\ge h_{e}\cdot w(e)$. 

Otherwise, observe that the total amount invested in $\Explore_{\tau}(e)$
is exactly $w(e)$, and thus $\sum_{\sigma\in E_{v}^{+}}\alpha(\sigma)\ge h_{e}\cdot w(e)$.

Now, observe that $\Explore_{\tau}(e)$ invests at most $w(e)$ in
counters for edges of height at most $h_{e}-1$, and thus $\sum_{\sigma\in E_{v}^{-}}\alpha(\sigma)\le(h_{e}-1)\cdot w(e)$.
Combining this with the previous claim, we get $\chi_{v}\ge w(e)$
as required.
\end{proof}
We can now prove Lemma \ref{lem:MAD_ALG}.
\begin{proof}[of Lemma \ref{lem:MAD_ALG}]
 Observe the preflow $Z$. Note that 
\[
\omega_{Z}=\sum_{\sigma\in E_{s}^{-}}\alpha(\sigma)=kDw(r)
\]

Using Lemmas \ref{lem:MAD_AlgChiGood} and \ref{prop:Flow_SubsetLBSource},
we have 
\[
\alg^{B}=\sum_{\Explore_{\tau}(e)\in V}w(e)\le\sum_{v\in V}\chi_{v}\le\omega_{Z}=kDw(r)
\]
Using Lemma \ref{lem:MAD_DelayLessThanBuy}, we get that $\alg\le2kDw(r)$
as required.
\end{proof}

\subsubsection{Lower Bounding $\protect\opt$}

The following lemma provides a lower bound on the cost of the optimum.
\begin{lem}
\label{lem:MAD_OPT}$kw(r)\le\opt^{B}+D\cdot\opt^{D}$.
\end{lem}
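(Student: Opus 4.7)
The plan is to mirror Lemma \ref{lem:FLDelay_OPT} from the facility-location case: I will construct a preflow $Z=(G,s,\alpha)$ over charging nodes whose source output is at most $\opt^B+D\cdot\opt^D$ and in which the $k$ root charging nodes collectively carry excess at least $kw(r)$; Proposition \ref{prop:Flow_SubsetLBSource} then closes the argument.

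I would introduce charging nodes $\mu=(e,[\tau_1,\tau_2))$ for consecutive invocations of $\Explore(e)$ (with the usual half-infinite endpoints) and define
\[c_b(\mu)=w(e)\cdot\#\{\text{OPT transmissions in }[\tau_1,\tau_2)\text{ containing }e\},\qquad c_d(\mu)=\!\!\sum_{\substack{q\,:\,v_q\in T_e\\ r_q\in[\tau_1,\tau_2)}}\!\!d_q(t^{\opt}_q),\]
with $c(\mu)=c_b(\mu)+c_d(\mu)$. Because the intervals for a fixed $e$ partition the timeline, each transmission of $e$ by OPT contributes $w(e)$ to exactly one $\mu$, so $\sum_\mu c_b(\mu)=\opt^B$; each request $q$ contributes its OPT-delay to at most $D$ charging nodes (one per ancestor edge of its leaf edge), so $\sum_\mu c_d(\mu)\le D\cdot\opt^D$. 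These together bound the source output. I would then apply Procedure \ref{proc:FLDeadline_PreflowBuilder} essentially verbatim: mark $\mu$ as $\Special$ exactly when $c_b(\mu)>0$ (equivalently when OPT transmits any edge of $T_e$ during the interval, since OPT transmissions are rooted subtrees); define possible edges via the $\Invest$-based notion of investing; and redefine
\[\lambda_\mu=\inf\{t'\ge\tau_1\,:\,\exists\text{ child }e'\text{ of }e\text{ and }Q'\subseteq T_{e'}\text{ pending with }d_{Q'}(t')\ge w(T_{e'}^{Q'})\},\]
which is precisely the saturation event $\Explore(e)$ waits for in Algorithm \ref{alg:MAD_Algorithm}.

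Verifying $\chi_\mu\ge 0$ splits into the same cases as Lemma \ref{lem:FLDelay_ValidPreflow}. The $\Special$ case uses $c_b(\mu)\ge w(e)$ against the $w(e)$ budget of $\Explore_{\tau_1}(e)$; the regular-color case uses that the phase of $c_e$ is raised by exactly $w(e)$ between $\tau_1$ and $\tau_2$, matching incoming investment to outgoing budget. The $\None$ case is the main obstacle, and I would adapt Case 4 of Lemma \ref{lem:FLDelay_ValidPreflow} by replacing the $\psi_u$-decomposition with the tree identity
\[w(T_e^Q)=w(e)+\sum_{\substack{e'\text{ child of }e\\ Q\cap T_{e'}\ne\emptyset}}w(T_{e'}^{Q\cap T_{e'}}),\]
together with the condition $\lambda_\mu>\tau_2^\star$ to bound the delay OPT could have paid on requests that were already pending immediately after $\Explore_{\tau_1}(e)$; the residual $w(e)$ term in the decomposition must then be covered by OPT's delay on requests released inside $(\tau_1,\tau_2)$, which OPT cannot avoid because the MAD analog of Proposition \ref{prop:FLDeadline_NotBought} (provable by the same induction, since OPT transmissions include the root) forbids any transmission inside $T_e$ during $[\tau_1,\tau_2^\star)$. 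For the root charging nodes $\mu_i=(r,[t_{i-1},t_i))$, incoming $\bar{E}$-edges are absent, so the same case split applied with $\tau_2^\star=t_i$ and $Q$ the critical set at $t_i$ delivers $\chi_{\mu_i}\ge w(r)$. Summing and invoking Proposition \ref{prop:Flow_SubsetLBSource} yields $kw(r)\le\sum_i\chi_{\mu_i}\le\omega_Z\le\opt^B+D\cdot\opt^D$. The hardest part is engineering the $\None$ case so that the decomposition cleanly separates requests pending before $\tau_1$ from those released inside $(\tau_1,\tau_2)$ to expose the $w(e)$ of slack needed to cover the outgoing investment.
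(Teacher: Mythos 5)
Your proposal follows the paper's proof essentially step by step: the same charging nodes, the same $\Special$/colored/$\None$ case split inside a preflow built by the same $\SetColor$ procedure, the same use of $\lambda_\mu$ to separate requests pending after $\Explore_{\tau_1}(e)$ from those released during $(\tau_1,\tau_2)$, and the same final application of Proposition~\ref{prop:Flow_SubsetLBSource}. The only local variations are cosmetic and equivalent: your $c_b$ accounting keeps multiplicity (giving equality $\sum_\mu c_b(\mu)=\opt^B$ rather than the paper's inequality), your $\lambda_\mu$ is stated in child-wise saturation form rather than the paper's $d_{Q'}(t')\ge w(T_e^{Q'})-w(e)$ form (these agree by Proposition~\ref{prop:MAD_LambdaLargerThanCutCrit}), and the paper's $\None$ case uses the monotonicity $w(T_e^{\hat Q})\le w(T_e^Q)$ directly where you invoke the child-level tree decomposition, which amounts to the same inequality. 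One small imprecision worth fixing when writing this up: in your $\lambda_\mu$ definition you should make explicit that $Q'$ ranges over subsets of the set of requests pending in $T_e$ \emph{immediately after} $\Explore_{\tau_1}(e)$ returns, since that is exactly what feeds the ``$\hat Q$ was still cheap at $\hat t<\lambda_\mu$'' step of the $\None$ case and the $\lambda_\mu<\infty\Rightarrow$ budget exhausted step of the colored case.
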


\subparagraph*{Charging nodes and incurred costs.}

We now define charging nodes for the analysis of our algorithm. The
charging nodes are tuples of the form $(e,[\tau_{1},\tau_{2}))$,
such that $\tau_{1}$ and $\tau_{2}$ are two subsequent times in
which the edge $e$ is bought. As in the facility location case, we
allow $\tau_{1}=-\infty$ and $\tau_{2}=\infty$.

For a charging node $\mu=(e,[\tau_{1},\tau_{2}))$ we say that:
\begin{itemize}
\item $\opt$ incurs a \emph{buying cost }of $w(e)$ in $\mu$ if $\opt$
bought the edge $e$ during $[\tau_{1},\tau_{2})$. We denote the
buying cost that $\opt$ incurs in $\mu$ by $c_{b}(\mu)$.
\item $\opt$ incurs a \emph{delay cost }in $\mu$ equal to the delay incurred
by $\opt$ on the set of requests $Q=\{q\in T_{e}|r_{q}\in[\tau_{1},\tau_{2})\}$.
We denote the delay cost that $\opt$ incurs in $\mu$ by $c_{d}(\mu)$.
\end{itemize}
We denote the total cost that $\opt$ incurs in $\mu$ by $c(\mu)=c_{b}(\mu)+c_{d}(\mu)$.

Denote by $M$ the set of all charging nodes. To prove Lemma \ref{lem:MAD_OPT},
we show a preflow on the set of vertices $M\cup\{s\}$, where $s$
is the source node.

The following definition of charging node investment is very similar
to the definition for the facility location case.
\begin{defn}[Investing]
For two charging nodes $\mu_{1}=(e_{1},[\tau_{1}^{1},\tau_{2}^{1}))$
and $\mu_{2}=(e_{2},[\tau_{1}^{2},\tau_{2}^{2}))$, such that $e_{1}$
is an ancestor of $e_{2}$, we say that $\mu_{1}$ \emph{invested
$x$ in $\mu_{2}$ }if $\Explore_{\tau_{1}^{1}}(e_{1})$ raised the
counter $c_{e_{2}}$ by $x$, through calls to $\Invest$, during
the counter phase of $c_{e_{2}}$ between $\tau_{1}^{2}$ and $\tau_{2}^{2}$.
\end{defn}

\begin{defn}[$\lambda_{e}^{t}$ and $\lambda_{\mu}$]
For every function call $\Explore_{t}(e)$ for some edge $e\in T$
and time $t$, let $Q$ be the set of requests pending in $T_{e}$
immediately after the return of $\Explore_{t}(e)$. We define $\lambda_{e}^{t}$
to be the first time $t^{\prime}\ge t$ such that there exists $Q^{\prime}\subseteq Q$
such that $d_{t}(Q^{\prime})\ge w(T_{e}^{Q^{\prime}})-w(e)$. 

For a charging node $\mu=(e,[\tau_{1},\tau_{2}))$ such that $\tau_{1}\neq-\infty$,
we write $\lambda_{\mu}=\lambda_{e}^{\tau_{1}}$.
\end{defn}

\paragraph*{Possible edges. }

We describe the set of possible edges in $G$ from nodes in $M$ to
other nodes in $M$, denoted by $\bar{E}$, and the weight function
$\alpha:\bar{E}\rightarrow\R^{+}$. The final set of edges added to
$G$ by Procedure \ref{proc:MAD_PreflowBuilder} from the nodes of
$M$ to themselves is a subset of $\bar{E}$. The set $\bar{E}$ contains
an edge $\sigma$ from any charging node $\mu_{1}\in M$ to any charging
node $\mu_{2}\in M$ if $\mu_{1}$ invested in $\mu_{2}$. We set
the weight $\alpha(\sigma)$ to be the amount that $\mu_{1}$ invested
in $\mu_{2}$.

We can now construct the preflow required for the analysis using Procedure
\ref{proc:MAD_PreflowBuilder}. This procedure is very similar to
Procedure \ref{proc:FLDeadline_PreflowBuilder}, used for analysis
of our algorithms for facility location. It uses the function $\SetColor$
as defined in Procedure \ref{proc:FLDeadline_PreflowBuilder}.

The procedure for the construction is given in Procedure \ref{proc:MAD_PreflowBuilder}
very similar to that given in Procedure \ref{proc:FLDeadline_PreflowBuilder}.

\noindent \LinesNumbered \RestyleAlgo{boxruled}\renewcommand{\algorithmcfname}{Procedure}\DontPrintSemicolon
\begin{algorithm}[tb]
\caption{\label{proc:MAD_PreflowBuilder}PreflowBuilder - Online Multilevel
Aggregation with Delay}

\SetKwProg{Fn}{Function}{}{end}
\SetKwFunction{PreflowBuilder}{PreflowBuilder}
\SetKwFunction{SetColor}{SetColor}
\SetKw{Break}{break}

\textbf{Initialization.}

Let the set of vertices be $M\cup\{s\}$, and initialize the edge
set to be $E=\emptyset$.

Initialize dictionary $\texttt{Color}[w]=None$ for every $\mu\in M$.

\ForEach{$\mu=(e,[\tau_{1},\tau_{2}))\in M$ such that $\opt$ transmitted
edge $e$ during $[\tau_{1},\tau_{2})$}{

set $\Color[\mu]\leftarrow\Special$

}

\ForEach{$\mu\in M$ such that $c(\mu)>0$}{

add a new edge $\sigma=(s,\mu)$ to $E$, and set $\alpha(\sigma)=c(\mu)$

}

\;

\Fn{\PreflowBuilder{}}{

\For{$i$ from $1$ to $k$}{

let $\mu\leftarrow(r,[t_{i-1},t_{i}))$

\SetColor{$\mu$,$\mu$}

}

\For{$j$ from $2$ to $D$}{

\ForEach{$\mu=(e,[\tau_{1},\tau_{2}))\in M$ such that $e$ is of
depth $j$}{

\ForEach{edge $\sigma\in E_{\mu}^{-}$ incoming to a node $\mu^{\prime}$}{

\lIf{\SetColor{$\mu,\texttt{Color}[\mu^{\prime}]$}$\neq\None$}{\Break}

}

}

}

}

\lFn(\tcp*[h]{As in Procedure \ref{proc:FLDeadline_PreflowBuilder}}){\SetColor{$\mu$,$\mu^{\star}$}}{}
\end{algorithm}

\begin{defn}[Cut]
 We say that a set of edges $H\subseteq T$ is a \emph{cut }if no
edge in $H$ is an ancestor of another edge in $H$. 
\end{defn}

It is easy to verify that any live cut is a cut.
\begin{prop}
\label{prop:MAD_LambdaLargerThanCutCrit}Let $e$ be an edge, and
let $H$ be a cut in $T_{e}$ that does not include $e$. Let $Q\subseteq\bigcup_{h\in H}T_{h}$
be a set of pending requests and $t$ be a time such that $d_{Q}(t)\ge w(T_{e}^{Q})-w(e)$.
Then there exists an $h\in H$ and a subset $Q_{h}\subseteq Q$ such
that $Q_{h}\subseteq T_{h}$ and $d_{Q_{h}}(t)\ge w(T_{h}^{Q_{h}})$.
\end{prop}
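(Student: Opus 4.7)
The plan is to reduce the statement to a pigeonhole argument by partitioning $Q$ along the cut $H$ and decomposing the spanning subtree $T_e^Q$ accordingly. First, for every $h \in H$, set $Q_h \defi Q \cap T_h$. Because $H$ is a cut and $Q \s \bigcup_{h \in H} T_h$, each request in $Q$ lies in exactly one $T_h$, so $\{Q_h\}_{h \in H}$ is a partition of $Q$; in particular $d_Q(t) = \sum_{h \in H} d_{Q_h}(t)$. Let $H' \defi \{h \in H : Q_h \neq \emptyset\}$ be the ``active'' part of the cut.

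Next I would show the key structural inequality
\[
w(T_e^Q) - w(e) \;\ge\; \sum_{h \in H'} w(T_h^{Q_h}).
\]
For this, observe that $T_e^Q$ is, by definition, $\{e\}$ together with the union of the unique paths in $T_e$ from the bottom endpoint of $e$ to each leaf $v_q$ with $q \in Q$. For each $h \in H'$, any such path to a leaf in $T_h$ must pass through $h$, so $T_h^{Q_h}$ (which includes $h$ and all edges of $T_h$ on paths to leaves of $Q_h$) is entirely contained in $T_e^Q$. Since the edges $h \in H'$ are pairwise incomparable in the tree order (that is what ``cut'' means), the subtrees $\{T_h^{Q_h}\}_{h \in H'}$ are pairwise edge-disjoint and none of them contains $e$. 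Hence
\[
w(T_e^Q) \;\ge\; w(e) + \sum_{h \in H'} w(T_h^{Q_h}),
\]
which is exactly the displayed inequality.

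Combining with the hypothesis $d_Q(t) \ge w(T_e^Q) - w(e)$ gives
\[
\sum_{h \in H'} d_{Q_h}(t) \;=\; d_Q(t) \;\ge\; \sum_{h \in H'} w(T_h^{Q_h}).
\]
A pigeonhole argument then yields some $h \in H'$ with $d_{Q_h}(t) \ge w(T_h^{Q_h})$, and by construction $Q_h \s Q \cap T_h$, proving the claim. (We may assume $Q \ne \emptyset$, as otherwise $d_Q(t) = 0$ and the statement is vacuous.)

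The only step that needs any care is the decomposition of $T_e^Q$; concretely, verifying that the edge $e$, the edges strictly between $e$ and the cut $H$, and the subtrees $T_h^{Q_h}$ for $h \in H'$ are pairwise edge-disjoint, so that their weights add without double counting. Once that is in place the pigeonhole conclusion is immediate and does not use the $\left(\ge 2\right)$-HST structure at all.
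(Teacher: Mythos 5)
Your proof is correct and follows essentially the same route as the paper's: partition $Q$ by the subtrees $T_h$, observe $w(T_e^Q) \ge w(e) + \sum_h w(T_h^{Q_h})$, and conclude by pigeonhole. You supply a bit more detail (restricting to nonempty $Q_h$ and spelling out the edge-disjointness of the $T_h^{Q_h}$), but there is no substantive difference.
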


\begin{proof}
Partition $Q$ into $|H|$ disjoint sets $Q_{h}$ for every $h\in H$,
according to the subtree $T_{h}$ in which the requests are. Now,
observe that $w(T_{e}^{Q})\ge w(e)+\sum_{h\in H}w(T_{h}^{Q_{h}})$.
We thus have that 
\[
\sum_{h\in H}d_{Q_{h}}(t)=d_{Q}(t)\ge w(T_{e}^{Q})-w(e)\ge\sum_{h\in H}w(T_{h}^{Q_{h}})
\]
and thus there exists $h\in H$ such that $d_{Q_{h}}(t)\ge w(T_{h}^{Q_{h}})$,
as required.
\end{proof}
\begin{prop}
\label{prop:MAD_ForwardSmallerThanLambda}Observe the function call
$\Explore_{t}(e)$, and let $P$ be the set of times chosen as $t^{\prime}$
in $\Explore_{t}(e)$. Then for every $t^{\prime}\in P$ we have that
$t^{\prime}\le\lambda_{e}^{t}$.
\end{prop}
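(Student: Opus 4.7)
The plan is to prove the proposition by contradiction. Suppose some $t^{\prime}\in P$ satisfies $t^{\prime}>\lambda_{e}^{t}$, and let $t^{\prime}_{j}$ denote the earliest such time, chosen in the $j$-th iteration of the while loop of $\Explore_{t}(e)$ when the live cut is $H_{j}$.

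First I would establish a structural claim: the set $Q_{\text{after}}$ of requests pending in $T_{e}$ immediately after the return of $\Explore_{t}(e)$ satisfies $Q_{\text{after}}\subseteq\bigcup_{h\in H_{j}}T_{h}$. Indeed, every $q\in Q_{\text{after}}$ lives on a leaf edge $e_{q}\notin\T$ (otherwise $q$ would have been marked served when $e_{q}$ was added to $\T$), while $e$ itself was added to $\T$ at the very start of $\Explore_{t}(e)$. Since $\T$ only grows during the exploration, at iteration $j$ we still have $e_{q}\notin\T$ and $e\in\T$, so the path from $e_{q}$ up to $e$ must cross the live cut $H_{j}$.

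Next, by the definition of $\lambda_{e}^{t}$, there exists $Q^{\star}\subseteq Q_{\text{after}}$ with $d_{Q^{\star}}(\lambda_{e}^{t})\ge w(T_{e}^{Q^{\star}})-w(e)$. Applying Proposition \ref{prop:MAD_LambdaLargerThanCutCrit} with the edge $e$, the cut $H_{j}$ (which does not contain $e$ itself), the request set $Q^{\star}$, and time $\lambda_{e}^{t}$, I obtain some $h\in H_{j}$ and $Q_{h}\subseteq Q^{\star}$ with $d_{Q_{h}}(\lambda_{e}^{t})\ge w(T_{h}^{Q_{h}})$, i.e., $Q_{h}$ saturates $T_{h}$ at time $\lambda_{e}^{t}$.

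Finally, I would argue that the restriction $\tilde{Q}_{h}\subseteq Q_{h}$ of requests released by time $\lambda_{e}^{t}$ preserves the saturation inequality, since unreleased requests contribute zero delay while only inflating $T_{h}^{Q_{h}}$, and that $\tilde{Q}_{h}\subseteq Q_{\text{after}}$ is pending throughout the execution of $\Explore_{t}(e)$ and hence visible to iteration $j$. Thus at time $\lambda_{e}^{t}$ there is already a set of pending requests saturating $T_{h}$ for some $h\in H_{j}$, so the algorithm's earliest saturation time in iteration $j$ satisfies $t^{\prime}_{j}\le\lambda_{e}^{t}$, contradicting $t^{\prime}_{j}>\lambda_{e}^{t}$. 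The main obstacle I anticipate is the timing bookkeeping in this last step, namely verifying that $\tilde{Q}_{h}$ is genuinely admissible as a witness for the algorithm's criterion at iteration $j$; this is handled by combining the fact that $Q_{\text{after}}$ remains pending throughout $\Explore_{t}(e)$ with the monotonicity of delay and of $\T$, so that dropping requests released after $\lambda_{e}^{t}$ only shrinks $T_{h}^{\tilde{Q}_{h}}$ while leaving $d_{\tilde{Q}_{h}}(\lambda_{e}^{t})$ unchanged.
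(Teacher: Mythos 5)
Your proof is correct and rests on the same two ingredients the paper uses: the decomposition Proposition~\ref{prop:MAD_LambdaLargerThanCutCrit} and the observation that the pending set in $T_e$ only shrinks during $\Explore_t(e)$, so $Q_{\text{after}}$ is still pending (and still beneath the live cut) at every earlier iteration. The packaging differs slightly: the paper tracks a ``running'' $\lambda$ defined from the current pending set at each iteration, shows each $t'$ is bounded by it, and notes $\lambda$ is monotone non-decreasing until it equals $\lambda_e^t$; you instead argue directly by contradiction using the final set $Q_{\text{after}}$. These are logically equivalent, and yours arguably avoids introducing the auxiliary running quantity. One small remark: your step about restricting $Q_h$ to requests released by time $\lambda_e^t$ is vacuous here --- every request in $Q_{\text{after}}$ has $r_q \le t \le \lambda_e^t$ because the exploration happens instantaneously at time $t$ and no requests arrive during it, so $\tilde Q_h = Q_h$ automatically. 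Dropping that step tightens the argument without losing anything.
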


\begin{proof}
Fix some point during the execution of $\Explore_{t}(e)$. Denote
by $Q$ the set of pending requests in $T_{e}$, and let $\lambda\ge t$
be the first time such that there exists $Q^{\prime}\subseteq Q$
for which $d_{Q^{\prime}}(\lambda)\ge w(T_{e}^{Q^{\prime}})-w(e)$.
Observe the next time chosen as $t^{\prime}$ in $\Explore_{t}(e)$.
Observe that the subtrees rooted in edges of the current live cut
contain all of $Q$. Thus, using Proposition \ref{prop:MAD_LambdaLargerThanCutCrit},
we obtain that $t^{\prime}\le\lambda$.

Since during $\Explore_{t}(e)$ requests are being served but do not
arrive, we have that $\lambda$ only increases during $\Explore_{t}(e)$.
Since the final value of $\lambda$ is $\lambda_{e}^{t}$, for every
$t^{\prime}\in P$ we have $t^{\prime}\le\lambda_{e}^{t}$ as required.
\end{proof}
\begin{prop}
For every charging node $\mu=(e,[\tau_{1},\tau_{2}))\in M$, it holds
that $\sum_{\sigma\in E_{\mu}^{-}}\alpha(\sigma)\le w(e)$.
\end{prop}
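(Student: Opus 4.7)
The plan is to identify the outgoing edges of $\mu$ in $\bar{E}$ with the investments made by the single exploration $\Explore_{\tau_1}(e)$ that corresponds to $\mu$, and then apply the strict budget that this exploration maintains. This is essentially a bookkeeping argument that leverages the key design principle stressed in the framework's introduction: each exploration adheres exactly to its allotted budget.

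First I would handle the edge case $\tau_1 = -\infty$. Then no buying of $e$ has occurred before $\tau_2$, so there is no call $\Explore_{\tau_1}(e)$ and $\mu$ has invested nothing in any other charging node. Consequently $\bar{E}_\mu^- = \emptyset$ and the required sum is $0$.

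Second, for $\tau_1 \neq -\infty$, I would appeal directly to the definition of investment: $\mu$ invests in $\mu_2 = (e_2, [\tau_1^2, \tau_2^2))$ precisely when the specific call $\Explore_{\tau_1}(e)$ raises $c_{e_2}$ during the corresponding phase of that counter. Crucially, recursive sub-calls $\Explore(e')$ spawned from inside $\Explore_{\tau_1}(e)$ do \emph{not} contribute to investments of $\mu$; those raises belong to the charging node associated with $e'$ and are paid out of its own separate budget $b_{e'}$. Therefore $\sum_{\sigma \in \bar{E}_\mu^-} \alpha(\sigma)$ equals the total amount that $\Explore_{\tau_1}(e)$ itself charged to descendant counters via its direct calls to $\Invest$.

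Third, I would invoke the budget mechanism of Algorithm \ref{alg:MAD_Algorithm}. At the start of $\Explore_{\tau_1}(e)$ the algorithm sets $b_e \leftarrow w(e)$, and each subsequent direct call $\Invest(e, \cdot)$ decreases $b_e$ by exactly the returned amount. The $\min$ inside $\Invest$ together with the while-loop guard $b_e \neq 0$ ensure that $b_e$ never becomes negative. Hence the cumulative amount invested by $\Explore_{\tau_1}(e)$ is at most $w(e)$. Since $E_\mu^- \subseteq \bar{E}_\mu^-$ by the construction in Procedure \ref{proc:MAD_PreflowBuilder}, the required inequality follows. I do not foresee a significant obstacle here; the only subtlety is being careful to attribute recursive investments to the appropriate charging node rather than to $\mu$, and this is immediate from the definition of investment.
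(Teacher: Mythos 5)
You identify the outgoing edges $E_\mu^-$ with the investments made \emph{by} $\Explore_{\tau_1}(e)$ and bound them by that call's budget $b_e = w(e)$. That reading follows the sentence defining $\bar{E}$ (``an edge from $\mu_1$ to $\mu_2$ if $\mu_1$ invested in $\mu_2$''), but that sentence is inconsistent with the rest of the construction. Observe that $\SetColor$ adds the \emph{incoming} edges of $\mu$ precisely when $\mu$ is colored, that Case~2 of Lemma~\ref{lem:MAD_ValidPreflow} identifies $\sum_{\sigma\in E_\mu^+}\alpha(\sigma)$ with the total amount $\mu$ has invested, and that Lemma~\ref{lem:MAD_RootExcesses} asserts $E_\mu^-=\emptyset$ for root charging nodes. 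None of these can hold if edges run investor~$\to$~investee, since root explorations certainly make investments. The orientation actually used is the reverse: an edge runs from the investee to the investor, i.e., from the charging node whose counter was raised to the one whose exploration raised it.

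Under that orientation, $\sum_{\sigma\in E_\mu^-}\alpha(\sigma)$ is the total amount invested \emph{in} $\mu$ by ancestor explorations during the counter phase $[\tau_1,\tau_2)$, not the amount $\mu$ paid out. The paper bounds this by the counter cap: $c_e$ is $0$ at $\tau_1$ and is zeroed again exactly when it reaches $w(e)$, which triggers the purchase at $\tau_2$, so at most $w(e)$ can be paid into $c_e$ over the phase. Your budget-of-$\Explore$ argument is a true and dual fact, but it controls the \emph{incoming} side $E_\mu^+$ (which is exactly what Case~2 of Lemma~\ref{lem:MAD_ValidPreflow} needs), not the outgoing side asked for here. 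Your $\tau_1=-\infty$ case is affected the same way: under the intended orientation the first counter phase of $c_e$ can certainly receive investment, so $E_\mu^-$ need not be empty; the bound there still comes from the counter cap, not from the absence of a corresponding $\Explore_{\tau_1}(e)$ call.
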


\begin{proof}
Observe that an outgoing edge $\sigma$ from $\mu$ only goes to a
node $\mu^{\prime}$ that invested in $\mu$, and is labeled $\alpha(\sigma)=x$
where $x$ is the amount that $\mu^{\prime}$ invested in $\mu$.
These amount sum to at most $w(e)$, since the counter $c_{e}$ can
only reach $w(e)$ before it is zeroed and $e$ is bought (thus ending
the counter phase $[\tau_{1},\tau_{2})$).
\end{proof}
\begin{cor}
\label{cor:MAD_w_e_is_enough}Every charging node $\mu\in M$ such
that $\sum_{\sigma\in E_{\mu}^{+}}\alpha(\sigma)\ge w(e)$ has that
$\chi_{\mu}\ge0$.
\end{cor}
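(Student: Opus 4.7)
The statement is an immediate consequence of the preceding proposition, so the proof plan is quite short. By definition, $\chi_{\mu} = \sum_{\sigma\in E_{\mu}^{+}}\alpha(\sigma) - \sum_{\sigma\in E_{\mu}^{-}}\alpha(\sigma)$. The plan is to simply combine the hypothesis $\sum_{\sigma\in E_{\mu}^{+}}\alpha(\sigma)\ge w(e)$ with the preceding proposition, which states $\sum_{\sigma\in E_{\mu}^{-}}\alpha(\sigma)\le w(e)$, to conclude that $\chi_{\mu} \ge w(e) - w(e) = 0$. No subtlety beyond substitution is involved.

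There is no real obstacle here: the only ``work'' already happened in the preceding proposition, which bounded total outgoing weight from $\mu$ by $w(e)$ using the fact that the counter $c_e$ can be raised by at most $w(e)$ during the phase $[\tau_1,\tau_2)$ before being zeroed. This is an analogue of Corollary \ref{cor:FLDeadline_FIsEnough} from the facility location case, where $f$ played the role that $w(e)$ plays here. The corollary will be used later in the proof of the main preflow validity lemma to dispatch the cases in which $\mu$ either has $\Color[\mu]=\Special$ (and hence receives weight $c(\mu)\ge w(e)$ directly from the source) or $\Color[\mu]\ne\None$ (and hence collects at least $w(e)$ in incoming investments).
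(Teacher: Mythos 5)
Your proof is correct and matches the paper's (implicit) reasoning: the corollary follows immediately from the definition of excess combined with the preceding proposition bounding the total outgoing weight $\sum_{\sigma\in E_{\mu}^{-}}\alpha(\sigma)$ by $w(e)$. The paper omits an explicit proof precisely because it is this one-line substitution.
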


The following observation results from the condition checks in $\SetColor$.
\begin{obs}
\label{obs:MAD_ColoredNotMinusInfinity}For any node $\mu=(e,[\tau_{1},\tau_{2}))$
such that $\texttt{Color}[\mu]=\mu^{\star}$ for some charging node
$\mu^{\star}$, we have that $\tau_{1}\neq-\infty$, and also $\lambda_{\mu}<\infty$.
\end{obs}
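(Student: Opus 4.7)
The observation states two conclusions about any charging node $\mu=(e,[\tau_1,\tau_2))$ whose color has been set to some actual charging node $\mu^\star$ (not $\None$, not $\Special$). My plan is to simply trace how such a color could have been assigned by Procedure \ref{proc:MAD_PreflowBuilder}.

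First, I would note that the only mechanism that ever sets $\Color[\mu] \leftarrow \mu^\star$ is the body of the inner \textbf{if} inside $\SetColor$ (from Procedure \ref{proc:FLDeadline_PreflowBuilder}, which is reused here). The condition guarding this assignment explicitly requires $\tau_1 \neq -\infty$ and $\lambda_\mu \le \tau_2^\star$, where $[\tau_1^\star,\tau_2^\star)$ is the interval of $\mu^\star$. This immediately yields the first claim $\tau_1\neq-\infty$.

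Next, to obtain $\lambda_\mu<\infty$, it suffices to show $\tau_2^\star < \infty$ for every color label $\mu^\star$ that ever appears. The plan is a short induction on the order in which colors are assigned during Procedure \ref{proc:MAD_PreflowBuilder}. In the color-creation loop (over $i$ from $1$ to $k$), $\SetColor(\mu,\mu)$ is called only for root charging nodes of the form $\mu=(r,[t_{i-1},t_i))$ with $i \le k$; hence the second component of any color label created there is $t_i < \infty$ (the root charging node $(r,[t_k,\infty))$ is deliberately skipped by the loop bound $k$). In the propagation phase, every call has the form $\SetColor(\mu,\Color[\mu'])$, so the label argument is an existing color, not a fresh one. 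Thus the set of labels that ever appear as colors is exactly the set of labels introduced in the creation phase, all of which satisfy $\tau_2^\star < \infty$.

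Combining the two pieces, the $\SetColor$ guard gives $\lambda_\mu \le \tau_2^\star$, and the induction gives $\tau_2^\star < \infty$, so $\lambda_\mu < \infty$. There is no real obstacle here — the statement is essentially a bookkeeping consequence of how $\SetColor$ and the PreflowBuilder loops are written; the only subtlety worth spelling out is that labels are inherited rather than reinvented during propagation, so the finiteness of $\tau_2^\star$ established at the root level propagates to every colored charging node.
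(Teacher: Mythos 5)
Your proof is correct and follows the same route the paper intends: the paper merely remarks that the observation ``results from the condition checks in $\SetColor$,'' and you supply exactly those details, in particular the point (not explicitly spelled out in the paper) that every label $\mu^{\star}$ ever assigned satisfies $\tau_{2}^{\star}<\infty$ because labels originate only at root charging nodes $(r,[t_{i-1},t_{i}))$ with $i\le k$ and are thereafter only inherited, never freshly created.
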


The following Proposition is analogous to Proposition \ref{prop:FLDeadline_NotBought},
and its proof is identical.
\begin{prop}
\label{prop:MAD_NotBought}Let $\mu=(e,[\tau_{1},\tau_{2}))$ such
that $\texttt{Color}[\mu]=\mu^{\star}$ for some charging node $\mu^{\star}=(r,[\tau_{1}^{\star},\tau_{2}^{\star}))$.
Then $\opt$ did not transmit $e$ during $[\tau_{1},\tau_{2}^{\star})$.
\end{prop}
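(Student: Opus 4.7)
The plan is to replicate essentially verbatim the argument of Proposition \ref{prop:FLDeadline_NotBought} from the deadline facility location case, adapted to edges instead of nodes. The proof is by induction on the depth of $e$.

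First, since $\Color[\mu]=\mu^{\star}\neq\Special$, the initialization of Procedure \ref{proc:MAD_PreflowBuilder} guarantees that $\opt$ did not transmit $e$ during $[\tau_{1},\tau_{2})$ (otherwise $\mu$ would have been marked $\Special$ and no recoloring would have taken place, by the condition check in $\SetColor$). It remains to extend this non-transmission property up to time $\tau_{2}^{\star}$.

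For the base case $e=r$, the only way $\Color[\mu]$ is set to a non-$\Special$ value is through the initial $\SetColor(\mu,\mu)$ call in the first for-loop of \PreflowBuilder{}, so $\mu^{\star}=\mu$ and hence $\tau_{2}^{\star}=\tau_{2}$, which finishes this case immediately. For the inductive step, $e\neq r$, so $\Color[\mu]$ was set when processing some parent-level charging node $\mu^{\prime}=(p(e),[\tau_{1}^{\prime},\tau_{2}^{\prime}))$ with $\Color[\mu^{\prime}]=\mu^{\star}$; by the inductive hypothesis applied to $\mu^{\prime}$, $\opt$ does not transmit $p(e)$ during $[\tau_{1}^{\prime},\tau_{2}^{\star})$. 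Since the edge from $\mu^{\prime}$ to $\mu$ in the preflow graph was created only because $\mu^{\prime}$ invested in $\mu$, and any such investment occurs during the counter phase $[\tau_{1},\tau_{2})$ of $c_{e}$, we get $\tau_{1}^{\prime}\in[\tau_{1},\tau_{2})$, and in particular $[\tau_{2},\tau_{2}^{\star})\subseteq[\tau_{1}^{\prime},\tau_{2}^{\star})$.

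Now I would invoke the key structural property of multilevel aggregation: any transmitted subtree must contain the root, so if $\opt$ transmits $e$ at some time $t$, then $\opt$ simultaneously transmits $p(e)$ at time $t$ as well. Applying the inductive hypothesis, $\opt$ transmits neither $p(e)$ nor $e$ during $[\tau_{2},\tau_{2}^{\star})$; combined with the first observation that $\opt$ does not transmit $e$ during $[\tau_{1},\tau_{2})$, we conclude that $\opt$ does not transmit $e$ during the entire interval $[\tau_{1},\tau_{2}^{\star})$, completing the induction.

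The only substantive place where this proof differs from that of Proposition \ref{prop:FLDeadline_NotBought} is the use of the ``transmissions are subtrees containing the root'' property to transfer the non-transmission guarantee from $p(e)$ to $e$ — in the facility location proof the analogous step was automatic because the claim was stated about the entire subtree $T_u$ rather than a single node. That one-line observation is the only mild obstacle; the remainder is purely bookkeeping inherited from the earlier argument.
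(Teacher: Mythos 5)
Your proof is essentially correct and follows the same inductive approach as the paper's proof (which the paper simply states to be "identical" to that of Proposition~\ref{prop:FLDeadline_NotBought}, leaving implicit the structural step you correctly spell out: transmitting a descendant edge forces transmitting every ancestor edge, since transmitted subtrees must contain the root). That observation is indeed the only non-trivial adaptation needed, and you identified it.

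One small correction: in the inductive step you write that $\Color[\mu]$ was set when processing a parent-level charging node $\mu^{\prime}=(p(e),[\tau_{1}^{\prime},\tau_{2}^{\prime}))$. Unlike the facility location algorithm, where $\Explore(u)$ only raises counters of immediate children, the multilevel aggregation exploration raises counters of arbitrary edges in the live cut underneath $e$; consequently investments can come from any strict ancestor edge $e^{\prime}$, not just $p(e)$. This does not affect the argument: any ancestor is at strictly smaller depth, so the induction on depth of $e$ still applies to $\mu^{\prime}$, and the ``transmitted subtrees contain the root'' property transfers non-transmission from $e^{\prime}$ to $e$ regardless of whether $e^{\prime}$ is the immediate parent. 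Also, strictly speaking, in the multilevel aggregation setting you cannot assert $\tau_{1}^{\prime}\ge\tau_{1}$ as cleanly as in the instantaneous facility location case (where all of $\Explore_{\tau_{1}^{\prime}}$ happens at $\tau_{1}^{\prime}$), but you only ever use the inequality $\tau_{1}^{\prime}<\tau_{2}$, which does follow from the investment occurring during the counter phase $[\tau_{1},\tau_{2})$ at some time $\ge\tau_{1}^{\prime}$, so the final containment step is sound.
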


\begin{lem}
\label{lem:MAD_ValidPreflow}The preflow defined by Procedure \ref{proc:MAD_PreflowBuilder}
is valid.
\end{lem}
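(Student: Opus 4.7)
The plan is a case analysis on the final value of $\Color[\mu]$, mirroring Lemmas \ref{lem:FLDeadline_ValidPreflow} and \ref{lem:FLDelay_ValidPreflow}. In Case 1 ($\Color[\mu] = \Special$), $\opt$ transmitted $e$ during $[\tau_1, \tau_2)$, so $c(\mu) \geq w(e)$; the initialization edge $s \to \mu$ supplies this weight, and Corollary \ref{cor:MAD_w_e_is_enough} concludes. In Case 2 ($\Color[\mu] = \mu^\star$ for some root charging node $\mu^\star$), $\SetColor$ succeeded, so the incoming edges from $\mu$'s investors were added to $E$. By Observation \ref{obs:MAD_ColoredNotMinusInfinity}, $\lambda_\mu < \infty$, which forces a pending request in $T_e$ to remain after $\Explore_{\tau_1}(e)$ returns; by Proposition \ref{prop:MAD_EveryRequestServed} this request is eventually served, which requires $\Explore(e)$ to be called again, so $\tau_2 < \infty$. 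Thus $c_e$ was raised from $0$ to $w(e)$ during $[\tau_1, \tau_2)$, the total incoming weight to $\mu$ equals $w(e)$, and Corollary \ref{cor:MAD_w_e_is_enough} again concludes.

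The main case is Case 3 ($\Color[\mu] = \None$), where I aim to show $c(\mu) \geq \sum_{\sigma \in E_\mu^-} \alpha(\sigma)$ so that the $s \to \mu$ edge alone gives $\chi_\mu \geq 0$. Fix an outgoing edge $\sigma$ to $\mu^\prime = (e^\prime, [\tau_1^\prime, \tau_2^\prime))$; by the procedure $\Color[\mu^\prime] = \mu^\star = (r, [\tau_1^\star, \tau_2^\star))$ for some root charging node. Proposition \ref{prop:MAD_NotBought} applied to $\mu^\prime$, together with $\Color[\mu] \neq \Special$, yields that $\opt$ did not transmit $e$ during $[\tau_1, \tau_2^\star)$. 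The investment from $\mu^\prime$ into $c_e$ was triggered, via Proposition \ref{prop:MAD_ForwardSmallerThanLambda}, at a time $t^\prime \leq \lambda_{\mu^\prime} \leq \tau_2^\star$ by a witness set $Q^\prime \subseteq T_e$ saturating $T_e$ at $t^\prime$. Since $\opt$ cannot transmit $e$ during $[\tau_1, \tau_2^\star)$ yet every request in $T_e$ must ultimately be served by a transmission that includes $e$, each $q \in Q^\prime$ released in $[\tau_1, \tau_2)$ accumulates delay at least $d_q(t^\prime)$ before being served, contributing to $c_d(\mu)$. Using $\Color[\mu] = \None$ (so $\tau_1 = -\infty$ or $\lambda_\mu > \tau_2^\star$) I restrict attention to such requests; disjointness of these witness sets across distinct outgoing edges then carries over from Case 3 of Lemma \ref{lem:FLDeadline_ValidPreflow} via the disjointness of counter phases of $c_{e^\prime}$.

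The main obstacle will be translating the saturation bound $d_{Q^\prime}(t^\prime) \geq w(T_e^{Q^\prime})$ into exactly $\alpha(\sigma)$ worth of $\opt$-cost in $\mu$. In facility location the analogous threshold is a single opening fee $f$, but here $w(T_e^{Q^\prime})$ is spread across a whole subtree, so I would decompose $Q^\prime$ along the subtree's structure (an analogue of the ancestor-closed decomposition in Proposition \ref{prop:FLDelay_MinimumPrefixDecomposition} and Case 4 of Lemma \ref{lem:FLDelay_ValidPreflow}), subtract the portion of delay already pending at $\tau_1$ — which the definition of $\lambda_e^{\tau_1}$ together with $\lambda_\mu > \tau_2^\star$ bounds by $w(T_e^{Q^\prime}) - w(e)$ — and show that the residual delay from requests released in $[\tau_1, \tau_2)$ sums to at least $\alpha(\sigma) \leq w(e)$, yielding $c_d(\mu) \geq \sum_{\sigma \in E_\mu^-} \alpha(\sigma)$ and hence $\chi_\mu \geq 0$.
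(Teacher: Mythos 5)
Your Case 1 and your overall plan (case analysis on $\Color[\mu]$, using Proposition \ref{prop:MAD_NotBought}, Proposition \ref{prop:MAD_ForwardSmallerThanLambda}, and the $\lambda_\mu$ condition to manufacture a witness set of recently-released requests) match the paper. The two other cases have issues.

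\textbf{Case 2 tracks the wrong quantity.} In this preflow the edges run from investee to investor (descendant to ancestor): this is what makes the color-propagation loop and the proposition $\sum_{\sigma\in E_\mu^-}\alpha(\sigma)\le w(e)$ work, since outgoing edges of $\mu$ go to the charging nodes that \emph{raised} $c_e$, and their weights sum to the total invested \emph{in} $\mu$. To invoke Corollary \ref{cor:MAD_w_e_is_enough} you need the \emph{incoming} weight to be $\ge w(e)$, and incoming edges come from the charging nodes that $\mu$ itself invested in; their total weight is the amount that $\Explore_{\tau_1}(e)$ spent raising descendant counters. The relevant fact is therefore not that $c_e$ was refilled to $w(e)$ by time $\tau_2$ (your argument that $\tau_2<\infty$ gives this, but it only bounds the outgoing side), but that $\lambda_\mu<\infty$ forces $\Explore_{\tau_1}(e)$ to exit its loop with $b_e=0$, so $\mu$ invested exactly $w(e)$ downward. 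That is what the paper uses. As written, your Case 2 establishes a bound on $\sum_{\sigma\in E_\mu^-}\alpha(\sigma)$, not on $\sum_{\sigma\in E_\mu^+}\alpha(\sigma)$, and so does not close.

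\textbf{Case 3 is overcomplicated, and the decomposition-plus-disjointness route is not needed.} The paper picks a \emph{single} outgoing edge $\sigma$ to $\mu' = (e', [\tau_1', \tau_2'))$ and extracts from it one witness set $Q$ saturating $T_e$ at some time $\hat t\le\lambda_{\mu'}\le\tau_2^\star$. It then compares directly with $\hat Q$, the subset of $Q$ still pending after $\Explore_{\tau_1}(e)$: since $\lambda_\mu>\tau_2^\star\ge\hat t$, we have $d_{\hat Q}(\hat t) < w(T_e^{\hat Q}) - w(e)$, and since $\hat Q\subseteq Q$ gives $w(T_e^{\hat Q})\le w(T_e^{Q})$, a single subtraction yields $d_{Q\setminus\hat Q}(\hat t)\ge w(e)$. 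No analogue of the ancestor-closed decomposition of Proposition \ref{prop:FLDelay_MinimumPrefixDecomposition} is needed here because $w(T_e^{\cdot})$ is monotone under set inclusion, unlike $\psi$ in the facility-location setting. And no disjointness across distinct outgoing edges is needed either: $w(e)$ already dominates $\sum_{\sigma\in E_\mu^-}\alpha(\sigma)$, so a single edge's witness gives $c(\mu)\ge w(e)\ge\sum_{\sigma\in E_\mu^-}\alpha(\sigma)$ and hence $\chi_\mu\ge 0$ via Corollary \ref{cor:MAD_w_e_is_enough}. Your plan of proving a per-edge bound $\ge\alpha(\sigma)$ and summing with disjoint witnesses would, if carried out, be both weaker per edge than what the saturation bound gives and unnecessary given the global budget $w(e)$.
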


\begin{proof}
We need to show that $\chi_{\mu}\ge0$ for every $\mu=(e,[\tau_{1},\tau_{2}))\in M$.

We consider the following cases:

\textbf{Case 1: }$\texttt{Color}[\mu]=\Special$. In this case, we
have that $c(\mu)\ge c_{b}(\mu)=w(e)$. Observe that an edge $\sigma$
from $s$ to $\mu$ is created with $\alpha(\sigma)=c(\mu)$, and
thus from Corollary \ref{cor:MAD_w_e_is_enough} we have that $\chi_{\mu}\ge0$.

\textbf{Case 2:} $\texttt{Color}[\mu]=\mu^{\star}$, for some charging
node $\mu^{\star}$. Using Observation \ref{obs:MAD_ColoredNotMinusInfinity},
observe that $\tau_{1}\neq-\infty$ and $\lambda_{\mu}<\infty$, and
thus the call $\Explore_{\tau_{1}}(e)$ has raised  counters by exactly
$w(e)$, and thus $\mu$ has invested a total of $w(e)$ in other
charging nodes. Thus, in $\sum_{\sigma\in\bar{E}_{\mu}^{+}}\alpha(\sigma)=w(e)$.
Observe that $\SetColor$ added the edges of $\bar{E}_{\mu}^{+}$
to $E$, and thus $\sum_{\sigma\in E_{\mu}^{+}}\alpha(\sigma)\ge w(e)$.
Using Corollary \ref{cor:MAD_w_e_is_enough}, we have that $\chi_{\mu}\ge0$.

\textbf{Case 3: }$\Color[\mu]=\None$. If there are no outgoing edges
from $\mu$, then clearly $\chi_{\mu}\ge0$ and we are done. Otherwise,
there exists an outgoing edge $\sigma$ to some node $\mu^{\prime}=(e^{\prime},[\tau_{1}^{\prime},\tau_{2}^{\prime}))$
with $\Color[\mu^{\prime}]=\mu^{\star}$, for some charging node $\mu^{\star}$.
Denote $\mu^{\star}=(r,[\tau_{1}^{\star},\tau_{2}^{\star}))$, and
observe that since $\mu^{\prime}$ invested in $\mu$, we must have
that $\tau_{1}^{\prime}\le\tau_{2}$. Using Proposition \ref{prop:MAD_NotBought},
and the fact that $\Color[\mu]\neq\Special$, we have that $\opt$
did not transmit $e$ during $[\tau_{1},\tau_{2}^{\star})$.

\textbf{Claim -- }There exists a set of requests $Q^{\prime}\subseteq T_{e}$
such that $r_{q}\in[\tau_{1},\tau_{2})$ such that $d_{Q^{\prime}}(\tau_{2}^{\star})\ge w(e)$.
\begin{proof}[Proof of Claim]
Since $\Color[\mu^{\prime}]=\mu^{\star}$, it must be that $\tau_{1}^{\prime}\neq-\infty$
and $\lambda_{\mu^{\prime}}\le\tau_{2}^{\star}$. Since $\mu^{\prime}$
invested in $\mu$, we have that at some point during $\Explore_{\tau_{1}^{\prime}}(e^{\prime})$,
$e$ was in the live cut under $e^{\prime}$, and the algorithm detected
a set of pending requests $Q\subseteq T_{e}$ such that $d_{Q}(\hat{t})\ge w(T_{e}^{Q})$
for some time $\hat{t}\ge\tau_{1}^{\prime}$. From Proposition \ref{prop:MAD_ForwardSmallerThanLambda},
we have that $\hat{t}\le\lambda_{\mu}\le\tau_{2}^{\star}$. Note also
that since $Q$ is pending at $\tau_{1}^{\prime}$, we have that $r_{q}<\tau_{1}^{\prime}\le\tau_{2}$.
for every $q\in Q$.

Now observe that since $\Color[\mu^{\prime}]=\mu^{\star}$, $\Color[\mu]=\None$,
and there exists an edge from $\mu$ to $\mu^{\prime}$, we must have
that $\SetColor(\mu,\mu^{\star})$ was called. Since $\Color[\mu]=\None$,
it must be that either $\tau_{1}=-\infty$ or $\lambda_{\mu}>\tau_{2}^{\star}$. 

If $\tau_{1}=-\infty$, then $r_{q}\ge\tau_{1}$ for every $q\in Q$.
Combining this with $r_{q}<\tau_{2}$, we have that $r_{q}\in[\tau_{1},\tau_{2})$
for every $q\in Q$. We also have that $d_{Q}(\tau_{2}^{\star})\ge d_{Q}(\hat{t})\ge w(T_{e}^{Q})\ge w(e)$,
by $Q$'s definition. Thus choosing $Q=Q^{\prime}$ proves the claim
for this case.

Otherwise, we have that $\tau_{1}\neq-\infty$ and $\lambda_{\mu}>\tau_{2}^{\star}$.
We denote by $\hat{Q}\subseteq Q$ the subset of requests pending
immediately after the return of $\Explore_{\tau_{1}}(e)$. By the
definition of $\lambda_{\mu}$, and since $\hat{t}<\lambda_{\mu}$,
we have that $d_{\hat{Q}}(\hat{t})<w(T_{e}^{\hat{Q}})-w(e)$. Thus,
\begin{align*}
d_{Q\backslash\hat{Q}}(\tau_{2}^{\star}) & \ge d_{Q\backslash\hat{Q}}(\hat{t})=d_{Q}(\hat{t})-d_{\hat{Q}}(\hat{t})\\
 & \ge w(T_{e}^{\hat{Q}})-\left(w(T_{e}^{\hat{Q}})-w(e)\right)=w(e)
\end{align*}
Denote $Q^{\prime}=Q\backslash\hat{Q}$. The requests of $Q^{\prime}$
were not pending immediately during $\Explore_{\tau_{1}}(e)$, and
therefore $r_{q}\ge\tau_{1}$ for any $q\in Q^{\prime}$. As seen
before, for every $q\in Q$ we have that $r_{q}<\tau_{2}$, and thus
for any $q\in Q^{\prime}$ we have $r_{q}\in[\tau_{1},\tau_{2})$.
$Q^{\prime}$ therefore proves the claim. 
\end{proof}
We now use the claim. As shown before, $\opt$ did not buy $e$ during
$[\tau_{1},\tau_{2}^{\star})$, and has therefore did not serve any
request from $Q^{\prime}$ until time $\tau_{2}^{\star}$. Therefore,
$\opt$ incurs a delay cost of $w(e)$ at $\mu$ on the requests of
$Q^{\prime}$, and thus $c(\mu)\ge w(e)$. Observe that an edge $\sigma$
from $s$ to $\mu$ is created with $\alpha(\sigma)=c(\mu)$, and
thus Corollary \ref{cor:MAD_w_e_is_enough} implies that $\chi_{\mu}\ge0$.
This concludes the proof of Lemma \ref{lem:MAD_ValidPreflow}.
\end{proof}
\begin{lem}
\label{lem:MAD_RootExcesses}For every $j\in[k]$, the charging node
$\mu=(r,[t_{j-1},t_{j}))$ has that $\chi_{\mu}\ge w(r)$.
\end{lem}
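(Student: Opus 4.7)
The approach is to mirror Lemma \ref{lem:FLDelay_RootExcesses}, adapted to the MAD setting. First I would observe that for the root charging node $\mu = (r, [t_{j-1}, t_j))$, no other charging node can invest in $\mu$ (investment requires an ancestor, and the root edge has none), so $E_\mu^- = \emptyset$. Hence $\chi_\mu = \sum_{\sigma \in E_\mu^+} \alpha(\sigma)$, and it suffices to show this sum is at least $w(r)$.

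Next I would split on $\Color[\mu]$. If $\Color[\mu] \neq \None$, the argument is essentially that of Cases 1 and 2 of Lemma \ref{lem:MAD_ValidPreflow}: in the $\Special$ subcase, $\opt$ transmits $r$ inside $[t_{j-1}, t_j)$, so $c_b(\mu) \ge w(r)$ and the source-edge $s \to \mu$ alone gives weight $\ge w(r)$; in the other subcase, $\SetColor(\mu, \mu)$ succeeded, which forces $t_{j-1} \neq -\infty$ and $\lambda_\mu \le t_j < \infty$, and then the call $\Explore_{t_{j-1}}(r)$ must have spent its full budget of $w(r)$ on raising descendant counters, yielding incoming weight $\ge w(r)$.

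The substantive case is $\Color[\mu] = \None$. Here $\SetColor(\mu, \mu)$ returned $\None$, so either $t_{j-1} = -\infty$ or $\lambda_\mu > t_j$; and since $\Color[\mu] \neq \Special$, $\opt$ did not transmit $r$ during $[t_{j-1}, t_j)$ and, because every transmission contains the root edge, $\opt$ performed no transmission at all in this interval. Let $Q$ be the critical set triggering the algorithm's transmission at $t_j$, so $d_Q(t_j) \ge w(T^Q) \ge w(r)$. Following Case 3 of Lemma \ref{lem:MAD_ValidPreflow} with $\hat{t} := t_j$, I would produce a subset $Q' \subseteq Q$ with $r_q \in [t_{j-1}, t_j)$ for every $q \in Q'$ and $d_{Q'}(t_j) \ge w(r)$: take $Q' = Q$ when $t_{j-1} = -\infty$, and otherwise take $Q' = Q \setminus \hat{Q}$ where $\hat{Q}$ is the subset of $Q$ still pending immediately after $\Explore_{t_{j-1}}(r)$, using the bound $d_{\hat{Q}}(t_j) < w(T^{\hat{Q}}) - w(r)$ supplied by the definition of $\lambda_\mu > t_j$ and the inclusion $T^{\hat{Q}} \subseteq T^Q$.

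Since $\opt$ performed no transmission during $[t_{j-1}, t_j)$, every $q \in Q'$ is released in this interval yet unserved by $\opt$ until at least $t_j$, and hence accrues delay at $\opt$ of at least $d_q(t_j)$; summing gives $c(\mu) \ge c_d(\mu) \ge d_{Q'}(t_j) \ge w(r)$, so the edge $s \to \mu$ alone supplies $\sum_{\sigma \in E_\mu^+} \alpha(\sigma) \ge w(r)$. The main obstacle is the sub-case $\lambda_\mu > t_j$ with $t_{j-1} \neq -\infty$: the correct $Q'$ must be isolated by subtracting the necessarily-small delay of the old requests $\hat{Q}$ from the critical mass $d_Q(t_j)$, relying on the tightness built into the definition of $\lambda$. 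This is the same maneuver used inside Lemma \ref{lem:MAD_ValidPreflow} and should adapt directly.
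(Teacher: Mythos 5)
Your proposal is correct and follows essentially the same route as the paper: you observe $E_\mu^- = \emptyset$, dispatch the $\Color[\mu]\neq\None$ case via Cases~1 and~2 of Lemma~\ref{lem:MAD_ValidPreflow}, and in the $\None$ case you isolate $Q' = Q\setminus\hat Q$ (or $Q'=Q$ when $t_{j-1}=-\infty$) using the bound on $d_{\hat Q}(t_j)$ from $\lambda_\mu>t_j$, exactly as the paper does. The only cosmetic difference is your explicit note that not transmitting $r$ means $\opt$ performs no transmission at all in the interval, which the paper leaves implicit.
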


\begin{proof}
We denote $\tau_{1}=t_{j-1},\tau_{2}=t_{j}$. Observe that no other
nodes invest in $\mu$, and thus $E_{\mu}^{-}=\emptyset$. It remains
to show that $\sum_{e\in E_{\mu}^{+}}\alpha(e)\ge w(e)$.

If $\texttt{Color}[\mu]\neq\None$, then identically to Cases 1 and
2 of Lemma \ref{lem:MAD_ValidPreflow}, we have that $\sum_{e\in E_{\mu}^{+}}\alpha(e)\ge w(e)$.
This completes the proof for these cases.

If $\Color[\mu]=\None$, then we have a very similar proof to case
3 of Lemma \ref{lem:MAD_ValidPreflow}. Observe the pending requests
$Q$ that became critical at $\tau_{2}$, triggering the service.
Clearly, $r_{q}<\tau_{2}$ for every $q\in Q$. Observe that $\SetColor(\mu,\mu)$
was called, yet $\Color[\mu]=\None$. Thus, it must be that either
$\tau_{1}=-\infty$ or $\lambda_{\mu}>\tau_{2}$. To complete the
proof, we need the following claim.

\textbf{Claim -- }there exists a set of requests $Q^{\prime}$ such
that $r_{q}\in[\tau_{1},\tau_{2})$ for every $q\in Q^{\prime}$,
and $d_{Q^{\prime}}(\tau_{2})\ge w(r)$.
\begin{proof}[Proof of Claim]
 Observe the two cases of the claim. If $\tau_{1}=-\infty$, then
$r_{q}\in[\tau_{1},\tau_{2})$ for every $q\in Q$. Together with
the fact that $d_{Q}(\tau_{2})\ge w(T^{Q})\ge w(r)$, choosing $Q^{\prime}=Q$
proves the claim.

Otherwise, $\tau_{1}\neq-\infty$ and $\lambda_{\mu}>\tau_{2}$. In
this case, denote by $\hat{Q}\subseteq Q$ the requests of $Q$ pending
immediately after $\Explore_{\tau_{1}}\{r\}$ and observe, as in Case
3 of Lemma \ref{lem:MAD_ValidPreflow}, that $d_{\hat{Q}}(\tau_{2})\le w(T^{\hat{Q}})-w(r)\le w(T^{Q})-w(r)$.
Thus, we have that $d_{Q\backslash\hat{Q}}(\tau_{2})\ge w(r)$. Observe
that $r_{q}\ge\tau_{1}$ for every $q\in Q\backslash\hat{Q}$, and
thus $r_{q}\in[\tau_{1},\tau_{2})$ for every $q\in Q\backslash\hat{Q}$.
Thus choosing $Q^{\prime}=Q\backslash\hat{Q}$ yields the claim.
\end{proof}
Now, observe that $\Color[\mu]\neq\Special$ and thus $\opt$ did
not transmit $e$ during $[\tau_{1},\tau_{2})$. Using the claim,
$\opt$ incurred delay cost of at least $w(r)$ on $\mu$ due to $Q^{\prime}$.
Thus $c(\mu)\ge w(r)$, and thus $\sum_{e\in E_{\mu}^{+}}\alpha(e)\ge w(r)$,
completing the proof of the lemma.
\end{proof}
\begin{prop}
\label{prop:MAD_ChiSBoundsOpt}$\omega_{Z}\le\opt^{B}+D\cdot\opt^{D}$
\end{prop}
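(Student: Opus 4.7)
The plan is to unpack $\omega_Z$ directly from Procedure \ref{proc:MAD_PreflowBuilder} and then to charge the sum $\sum_{\mu\in M}c(\mu)$ back to $\opt$ by an ancestor-counting argument analogous to the one used in Lemma \ref{lem:FLDeadline_ChargeLBOpt}, but simpler because here intervals are half-closed and the quantities are indexed by edges rather than nodes.

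First I would observe that the only edges incident to $s$ in $G$ are the outgoing edges created in the initialization block of Procedure \ref{proc:MAD_PreflowBuilder}: for every charging node $\mu\in M$ with $c(\mu)>0$ we add an edge $\sigma=(s,\mu)$ of weight $\alpha(\sigma)=c(\mu)$, and no edges enter $s$. Hence $E_s^+=\emptyset$ and
\[
\omega_Z \;=\; -\chi_s \;=\; \sum_{\sigma\in E_s^-}\alpha(\sigma) \;=\; \sum_{\mu\in M}c(\mu) \;=\; \sum_{\mu\in M}c_b(\mu)\;+\;\sum_{\mu\in M}c_d(\mu).
\]

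Next I would bound the two sums separately. For the buying cost: consider an edge $e^{\ast}$ purchased by $\opt$ at some time $t$; this incurs cost $w(e^{\ast})$ in a charging node $\mu=(e,[\tau_1,\tau_2))$ only when $e=e^{\ast}$ and $t\in[\tau_1,\tau_2)$. For the fixed edge $e^{\ast}$, the half-closed intervals $[\tau_1,\tau_2)$ of the charging nodes of the form $(e^{\ast},\cdot)$ are disjoint and cover the entire timeline (they are defined by consecutive buying times of $e^{\ast}$ by the algorithm, with the conventions $\tau_1=-\infty$ and $\tau_2=\infty$ at the ends), so exactly one such interval contains $t$. Therefore each purchase by $\opt$ is counted exactly once, giving $\sum_{\mu\in M}c_b(\mu)=\opt^B$.

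For the delay cost, consider any request $q$ released at time $r_q$ on a leaf of the tree. The delay incurred by $\opt$ on $q$ appears in $c_d((e,[\tau_1,\tau_2)))$ only if $q\in T_e$ (i.e.\ $e$ is an ancestor edge of $v_q$) and $r_q\in[\tau_1,\tau_2)$. The ancestor edges of $v_q$ number at most $D$, and for each such edge the half-closed intervals again partition the timeline so that at most one interval contains $r_q$. Consequently the delay of $\opt$ on $q$ is counted in at most $D$ charging nodes, which summing over all requests yields $\sum_{\mu\in M}c_d(\mu)\le D\cdot\opt^D$.

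Combining the two bounds gives $\omega_Z \le \opt^B + D\cdot \opt^D$, as claimed. There is no real obstacle here; the only point to be careful about is verifying that the half-closed interval structure prevents any double-counting (in contrast to the closed-interval case of Lemma \ref{lem:FLDeadline_ChargeLBOpt}, where an additional factor of $2$ was absorbed), and that the bound $D$ on the number of ancestor edges of any leaf is correct for a tree of depth $D$.
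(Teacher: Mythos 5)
Your proof is correct and follows essentially the same route as the paper: identify $\omega_Z=\sum_{\mu\in M}c(\mu)$ from the initialization of Procedure \ref{proc:MAD_PreflowBuilder}, then charge $c_b$ once per $\opt$-purchase (unique interval containing the buying time) and $c_d$ at most $D$ times per request (one per ancestor edge of $v_q$, unique interval containing $r_q$). The extra care you take about the half-closed intervals and the count of ancestor edges is sound and matches the paper's implicit reasoning.
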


\begin{proof}
Observe that $E_{s}^{+}=\emptyset$, and that for every $\sigma\in E_{s}^{-}$
to a node $\mu\in M$ we have that $\alpha(\sigma)=c(\mu)$. Therefore,
$\omega_{Z}=\sum_{\mu\in M}c(\mu)$.

Observe that for buying an edge $e$ at time $t$, $\opt$ incurs
buying cost only at the unique charging node $(e,[\tau_{1},\tau_{2}))$
such that $t\in[\tau_{1},\tau_{2})$. 

In addition, when $\opt$ incurs delay for a request $q$ released
on leaf edge $e$, it incurs delay cost in at most $D$ charging nodes,
of the form $(e^{\prime},[\tau_{1},\tau_{2}))$ such that $r_{q}\in[\tau_{1},\tau_{2})$
and $e^{\prime}$ is an ancestor of $e$.

Thus, $\sum_{\mu\in M}c(\mu)\le\opt^{B}+D\cdot\opt^{D}$, proving
the proposition.
\end{proof}
We can now prove Lemma \ref{lem:MAD_OPT}.
\begin{proof}[Proof (of Lemma \ref{lem:MAD_OPT})]
 Observe the set of charging nodes $N=\{(r,[t_{j-1},t_{j})|j\in[k]\}$.
Using Lemma \ref{lem:MAD_RootExcesses}, we have that $\sum_{\mu\in N}\chi_{\mu}\ge kw(r)$. 

We now use Propositions \ref{prop:Flow_SubsetLBSource} and \ref{prop:MAD_ChiSBoundsOpt}
to obtain 
\[
kw(r)\le\omega_{Z}\le\opt^{B}+D\cdot\opt^{D}
\]
 proving the lemma.
\end{proof}
We now prove the main theorem for this subsection.
\begin{proof}[Proof of Theorem \ref{thm:MAD_HSTTheorem}]
The theorem results immediately from Lemmas \ref{lem:MAD_ALG} and
\ref{lem:MAD_OPT}.
\end{proof}

\subsection{From HSTs to General Trees}

In this subsection, we show how to extend our result for multilevel
aggregation on $\left(\ge2\right)$-HSTs to general trees, thus proving
Theorem \ref{thm:MAD_GeneralTree}. To do so, we use a similar method
to that used in \cite{DBLP:conf/soda/BuchbinderFNT17} to form a virtual
forest of $\left(\ge2\right)$-HSTs, based on the edges of the original
tree. 

\paragraph*{The decomposition.}

Let $T$ be the tree, with general weights, rooted at root edge $r$.
We create a forest, the edges of which are the edges of $T$. 

\begin{figure}[tb]
\begin{centering}
\includegraphics{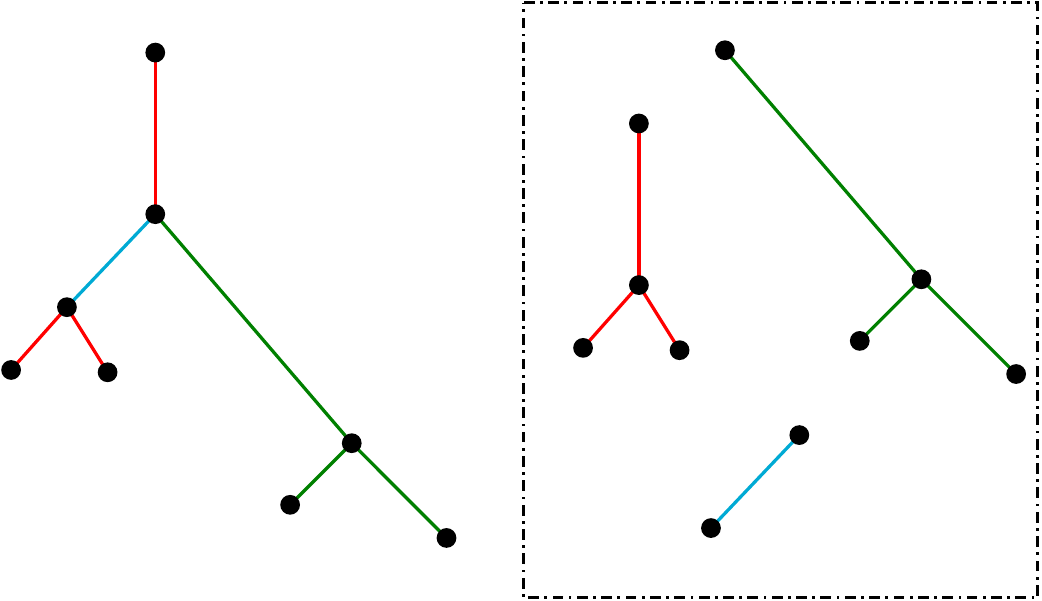}
\par\end{centering}
This figure shows an example of a forest decomposition. The original
tree is on the left, and the virtual forest of $\left(\ge2\right)$-HSTs
resulting from the decomposition in given in the dash-dotted box on
the right.

\caption{Forest Decomposition}

\end{figure}

\begin{defn}[parenthood in virtual $\left(\ge2\right)$-HST]
For every edge $e$, we define $p^{\prime}(e)$, the \emph{virtual
parent }of $e$, to be the least ancestor $e^{\prime}$ of $e$ in
$T$ such that $w(e)\le2w(e^{\prime})$. If there is no such $e^{\prime}$,
then $e$ is \emph{the root edge }of a virtual tree in the forest.
\end{defn}

We define the forest according to the function $p^{\prime}$. Observe
that each connected component is indeed a tree, and specifically a
$\left(\ge2\right)$-HST. Denote by $T^{1},....,T^{m}$ the virtual
trees formed from $T$, and denote by $r^{i}$ the root edge of $T^{i}$

Let $I$ be an instance of online multilevel aggregation with delay.
We partition the requests of $I$ to $I^{1},...,I^{m}$, such that
a request belongs to $I^{i}$ if the leaf edge $v_{q}\in I^{i}$.

We denote by $\opt_{i}$ the optimal solution for the multilevel aggregation
instance $I_{i}$ in the virtual tree $T_{i}$. Using an identical
argument to Observation 4.2 in \cite{DBLP:conf/soda/BuchbinderFNT17},
we have the following observation.
\begin{obs}
\label{obs:MAD_GeneralTreeBigOptAtLeastSmallOpts}$\opt\ge\sum_{i=1}^{m}\opt_{i}$
\end{obs}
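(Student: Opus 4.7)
The plan is to take any feasible solution for the original instance $I$ on $T$ and decompose it into feasible solutions for the sub-instances $I^1,\ldots,I^m$ on the virtual trees $T^1,\ldots,T^m$, in such a way that the sum of the costs of the decomposed solutions equals the cost of the original. Applying this decomposition to $\opt$ immediately yields $\opt \ge \sum_{i=1}^m \opt_i$, since each decomposed solution costs at least $\opt_i$.

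Concretely, suppose $\opt$ transmits subtrees $S_1,S_2,\ldots$ at times $\tau_1,\tau_2,\ldots$, serving each request $q$ at the earliest $\tau_j$ such that the leaf edge of $q$ lies in $S_j$. For each $i$, the decomposed solution for $I^i$ will transmit $S_j^i \defi S_j \cap T^i$ at time $\tau_j$, whenever this intersection is nonempty. The main step is to verify that each $S_j^i$ is a valid transmission in $T^i$, i.e., a set of edges closed under the virtual-parent function $p^{\prime}$ and containing $r^i$. This is where the defining property of $p^{\prime}$ enters: for any edge $e \in T^i$ with $e\neq r^i$, the virtual parent $p^{\prime}(e)$ is by construction an ancestor of $e$ in $T$, and it lies in the same virtual tree, so $p^{\prime}(e)\in T^i$. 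Since $S_j$ is a subtree of $T$ containing the root, it is closed under ordinary parenthood in $T$, hence contains every ancestor in $T$ of any of its edges; in particular, $p^{\prime}(e) \in S_j$ for every non-root edge $e\in S_j$. Thus $S_j^i$ is closed under $p^{\prime}$ inside $T^i$, and iterating $p^{\prime}$ from any edge of $S_j^i$ reaches $r^i$, so $r^i \in S_j^i$ whenever $S_j^i\neq\emptyset$.

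It remains to compare costs. The buying cost decomposes exactly: since the edge sets of $T^1,\ldots,T^m$ partition the edges of $T$, we have $\sum_j w(S_j) = \sum_i \sum_j w(S_j^i)$. For the delay cost, each request $q$ belongs to precisely one $I^i$, namely the one whose virtual tree contains the leaf edge of $q$; and if $\opt$ serves $q$ at time $\tau_j$ via $S_j$, then the leaf edge of $q$ lies in $S_j \cap T^i = S_j^i$, so the decomposed solution for $I^i$ serves $q$ at exactly the same time $\tau_j$, incurring the same delay. Summing over $i$ gives that the total cost of the decomposed solutions equals the cost of $\opt$, which combined with the trivial bound $\text{cost}_i \ge \opt_i$ yields the observation. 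The main obstacle is the central claim that $S_j^i$ is a valid transmission in $T^i$, but this reduces directly to the fact that $p^{\prime}(e)$ is always an ancestor of $e$ in $T$ living in the same virtual tree.
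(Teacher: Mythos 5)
Your proof is correct and is the natural decomposition argument: restrict each transmission of the optimal solution to each virtual tree, verify that the restriction is a valid transmission there (closed under $p'$, hence containing the virtual root), and note that buying costs partition exactly and delay costs are preserved since every request belongs to exactly one sub-instance. The paper itself only cites Observation 4.2 of Buchbinder et al.\ rather than reproducing the argument, but what you have written is precisely that argument, so the approach matches.
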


\begin{defn}
Let $e\in T_{i}$. We define $B_{e}$ to be the set of edges in $T$
on the path from $e$ to $p^{\prime}(e)$ (including $e$, not including
$p^{\prime}(e)$). If $e=r^{i}$, then let $B_{e}$ be all the edges
from $e$ to $r$, including $r$.
\end{defn}

\begin{defn}
Let $\T_{i}$ be some transmittable subtree in $T_{i}$ for any $i$.
We define $\bar{T}_{i}=\bigcup_{e\in\T_{i}}B_{e}$ to be the \emph{concretization
}of $T_{i}$.
\end{defn}

\paragraph*{The algorithm.}

We now describe the algorithm for online multilevel aggregation with
delay on a general tree. The algorithm is:
\begin{enumerate}
\item Run Algorithm \ref{alg:MAD_Algorithm} for each of $T_{1},....,T_{m}$
separately.
\item Whenever the instance of Algorithm \ref{alg:MAD_Algorithm} for $T_{i}$
transmits the virtual subtree $\T_{i}$, transmit its concretization
$\bar{\T}_{i}$.
\end{enumerate}
Observe that any transmission made by the main algorithm indeed serves
the same requests as the original, virtual transmission. We denote
by $\alg_{i}$ the virtual cost of the $\left(\ge2\right)$-HST algorithm
for $T_{i}$ -- that is, the delay of the requests of $I_{i}$ plus
the sum of the costs of virtual transmissions triggered by the $\left(\ge2\right)$-HST
algorithm for $T_{i}$.

We denote by $k_{i}$ for $i\in[m]$ the number of transmissions caused
by the algorithm for $T_{i}$. The following lemma is a restatement
of Lemma \ref{lem:MAD_OPT}.
\begin{lem}
\textup{\label{lem:MAD_OPTGeneralTree}$k_{i}w(r^{i})\le\opt_{i}^{B}+D\cdot\opt_{i}^{D}\le D\cdot\opt_{i}$}
\end{lem}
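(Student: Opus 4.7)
The plan is to invoke Lemma \ref{lem:MAD_OPT} directly, since the virtual tree $T^i$ is a $(\ge 2)$-HST by construction of the forest decomposition, and the algorithm on $T^i$ is exactly Algorithm \ref{alg:MAD_Algorithm} applied to the instance $I_i$. The depth of $T^i$ is at most $D$ (the depth of the original tree), since $T^i$ is obtained from a subset of edges of $T$ with the parenthood relation $p^{\prime}$, which can only shorten ancestor chains relative to $T$. Thus the hypotheses of Lemma \ref{lem:MAD_OPT} are satisfied with the role of $r$ played by $r^i$, the role of $\opt$ played by $\opt_i$, and the role of $k$ played by $k_i$.

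Concretely, I would first observe that Algorithm \ref{alg:MAD_Algorithm} is being run independently on $T^i$ with request set $I_i$; it is oblivious to what happens in the other virtual trees, and its virtual transmissions and associated counters live entirely within $T^i$. Hence the analysis leading to Lemma \ref{lem:MAD_OPT} (the preflow construction on charging nodes of $T^i$, and in particular Lemmas \ref{lem:MAD_ValidPreflow} and \ref{lem:MAD_RootExcesses}) goes through verbatim with respect to any valid solution for $I_i$ on $T^i$, and in particular with respect to $\opt_i$. This yields the first inequality $k_i w(r^i) \le \opt_i^B + D \cdot \opt_i^D$.

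The second inequality is then just a trivial consequence of $D \ge 1$: since $\opt_i = \opt_i^B + \opt_i^D$, we have
\[
\opt_i^B + D \cdot \opt_i^D \;\le\; D \cdot \opt_i^B + D \cdot \opt_i^D \;=\; D \cdot \opt_i.
\]
There is no real obstacle in this lemma — it is essentially a bookkeeping step that packages Lemma \ref{lem:MAD_OPT} for each virtual tree so that it can later be summed using Observation \ref{obs:MAD_GeneralTreeBigOptAtLeastSmallOpts} to bound $\sum_i k_i w(r^i)$ by $D \cdot \opt$. The only thing worth flagging is to confirm that the depth bound used inside Lemma \ref{lem:MAD_OPT} is indeed the depth of $T^i$ and not something larger, which is immediate from the construction of the forest decomposition.
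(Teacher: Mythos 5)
Your proof is correct and matches the paper's intent exactly: the paper simply declares this lemma to be ``a restatement of Lemma \ref{lem:MAD_OPT}'' applied to the virtual tree $T^i$, and you have spelled out why that invocation is legitimate (each $T^i$ is a $(\ge 2)$-HST of depth at most $D$, Algorithm \ref{alg:MAD_Algorithm} runs independently on $T^i$ with request set $I_i$, and the second inequality is just $D\ge 1$).
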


It remains to bound the cost of the algorithm.
\begin{prop}
\label{prop:MAD_GeneralTreeDelayBoundedByBuying}$\alg^{D}\le\alg^{B}$
\end{prop}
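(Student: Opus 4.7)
The plan is to reduce the proposition to Lemma \ref{lem:MAD_DelayLessThanBuy}, which already establishes $\alg^D \le \alg^B$ for the HST case, applied to each of the virtual HSTs $T_1,\ldots,T_m$ in the forest decomposition, and then relate the virtual costs on each $T_i$ to the actual costs incurred on the original tree $T$.

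First I would split the quantities across the components of the decomposition. For the delay cost, observe that the delay incurred on a request depends only on when it is served, not on the tree structure used to serve it; and by construction, each request of $I_i$ is served exactly when the algorithm for $T_i$ transmits a virtual subtree containing its leaf edge. Hence $\alg^D = \sum_{i=1}^m \alg_i^D$, where $\alg_i^D$ denotes the delay cost of the HST algorithm running on $T_i$. For the buying cost, I would use the fact that whenever the $T_i$-instance transmits a virtual subtree $\T_i$, the main algorithm transmits its concretization $\bar\T_i = \bigcup_{e \in \T_i} B_e$. Since $e \in B_e$ for every $e$, we have $\T_i \subseteq \bar\T_i$, and therefore $w(\bar\T_i) \ge w(\T_i)$. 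Summing over all transmissions triggered by all HST instances gives $\alg^B \ge \sum_{i=1}^m \alg_i^B$, where $\alg_i^B$ is the virtual buying cost of the $T_i$-instance.

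Next I would invoke Lemma \ref{lem:MAD_DelayLessThanBuy} on each virtual HST. Since $T_i$ is a $\left(\ge 2\right)$-HST and the $T_i$-instance is an unmodified run of Algorithm \ref{alg:MAD_Algorithm} on instance $I_i$, the lemma applies and yields $\alg_i^D \le \alg_i^B$ for every $i \in [m]$.

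Chaining the three observations, I obtain
\[
\alg^D \;=\; \sum_{i=1}^m \alg_i^D \;\le\; \sum_{i=1}^m \alg_i^B \;\le\; \alg^B,
\]
proving the proposition. There is no real obstacle here: the only thing worth double-checking is the inequality $w(\bar\T_i) \ge w(\T_i)$, which follows immediately from $\T_i \subseteq \bar\T_i$ (no cancellation or double counting of edge weights is needed, since we are only seeking a lower bound on $w(\bar\T_i)$).
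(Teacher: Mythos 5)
Your proposal is correct and follows essentially the same route as the paper: decompose the delay cost exactly into $\sum_i \alg_i^D$, lower-bound $\alg^B$ by $\sum_i \alg_i^B$ via $\T_i \subseteq \bar\T_i$, and then apply Lemma \ref{lem:MAD_DelayLessThanBuy} componentwise. The paper simply states the two observations without the brief justifications you provide, but the chain of inequalities is identical.
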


\begin{proof}
Observe that $\alg^{D}=\sum_{i=1}^{m}\alg_{i}^{D}$ and that $\alg^{B}\ge\sum_{i=1}^{m}\alg_{i}^{B}$.
Thus, we have that 
\[
\alg^{D}=\sum_{i=1}^{m}\alg_{i}^{D}\le\sum_{i=1}^{m}\alg_{i}^{B}\le\alg^{B}
\]
where the second inequality is from Lemma \ref{lem:MAD_DelayLessThanBuy}.
\end{proof}
We denote by $\overline{\alg}_{i}^{B}=\sum_{j=1}^{k_{i}}w(\bar{\T}_{i}^{j})$
where $\T_{i}^{j}$ is the $j$'th transmission made by the $T_{i}$
algorithm. Observe that $\alg^{B}=\sum_{i=1}^{m}\overline{\alg}_{i}^{B}$.

The following lemma bounds the cost of the algorithm, and provides
the final component for Theorem \ref{thm:MAD_GeneralTree}. 
\begin{lem}
\label{lem:MAD_GeneralTreeALG}For every $i$, we have that $\overline{\alg}_{i}^{B}\le2Dk_{i}\cdot w(r^{i})$.
\end{lem}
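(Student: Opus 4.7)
The plan is to bound the concretized buying cost in the original tree $T$ by relating it back to the virtual HST buying cost of Algorithm \ref{alg:MAD_Algorithm} run on $T^{i}$, and then to invoke the preflow identity from the proof of Lemma \ref{lem:MAD_ALG} applied to $T^{i}$. Since $\bar{\T}_{i}^{j} = \bigcup_{e \in \T_{i}^{j}} B_{e}$, one immediately has $w(\bar{\T}_{i}^{j}) \le \sum_{e \in \T_{i}^{j}} w(B_{e})$; summing over transmissions gives $\overline{\alg}_{i}^{B} \le \sum_{e \in T^{i}} k_{e}\cdot w(B_{e})$, where $k_{e}$ counts the transmissions on $T^{i}$ that include the virtual edge $e$.

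The second step is to bound $w(B_{e})$ for each virtual edge $e \in T^{i}$. By the defining condition for $p'$, every original-tree edge strictly between $e$ and $p'(e)$ is a strict $T$-ancestor of $e$ whose weight is bounded by a constant multiple of $w(e)$; together with the depth bound $|B_{e}|\le D$ this yields $w(B_{e}) = O(D)\cdot w(e)$, and an analogous argument bounds $w(B_{r^{i}}) = O(D)\cdot w(r^{i})$. Combining this with the preflow identity $\alg_{i}^{B} = \sum_{\Explore_{\tau}(e)\in V} w(e) \le \omega_{Z} = k_{i} D_{i} w(r^{i}) \le k_{i} D w(r^{i})$ extracted from Lemma \ref{lem:MAD_AlgChiGood} applied to $T^{i}$ (using $D_{i}\le D$) then produces a bound of the form $\overline{\alg}_{i}^{B} = O(D)\cdot k_{i} w(r^{i})$.

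The main obstacle is extracting the sharp constant $2D$ claimed by the lemma rather than the $O(D^{2})$ that a naive multiplication of $w(B_{e}) = O(D)w(e)$ with $\alg_{i}^{B}\le O(D)k_{i}w(r^{i})$ would give. To obtain the tight bound I would refine the preflow of Lemma \ref{lem:MAD_AlgChiGood} so that its outgoing-edge weight coefficients are computed against the original-tree depth rather than the virtual height $h_{e}$, which causes each virtual exploration $\Explore_{\tau}(e)$ to absorb its concretization overhead into a single telescoping excess contribution and thereby produces only a single factor of $D$ overall. Equivalently, one may argue directly at the transmission level: the edges of $\bar{\T}_{i}^{j}$ lying strictly above $r^{i}$ form the fixed path $B_{r^{i}}$, paid once per transmission, while every edge below $r^{i}$ is attributed to the unique virtual chain in $\T_{i}^{j}$ passing through it, yielding $w(\bar{\T}_{i}^{j}) \le O(1)\cdot w(\T_{i}^{j}) + w(B_{r^{i}})$ per transmission, which sums via the preflow bound on $\alg_{i}^{B}$ to the desired $\overline{\alg}_{i}^{B} \le 2Dk_{i}w(r^{i})$.
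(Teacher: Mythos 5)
Your reconnaissance correctly identifies that naively chaining $w(B_e)=O(D)\,w(e)$ with $\alg_i^B\le O(D)\,k_i w(r^i)$ overcounts by a factor of $D$, and you correctly sense that a refinement of the preflow is needed. However, neither of your two proposed fixes closes the gap.

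Your second, ``transmission-level'' refinement rests on the inequality $w(\bar{\T}_i^j)\le O(1)\cdot w(\T_i^j)+w(B_{r^i})$, and this is false. For a non-root virtual edge $e\in T^i$, the set $B_e$ consists of $e$ together with the original-tree edges strictly between $e$ and $p'(e)$, and each such intermediate edge has weight strictly less than $w(e)/2$ but otherwise unconstrained; a single $B_e$ can therefore contain $\Theta(D)$ edges each of weight close to $w(e)/2$, giving $w(B_e)=\Theta(D)\,w(e)$. If the virtual transmission $\T_i^j$ is, say, a single edge $e$ with such a $B_e$, then $w(\bar{\T}_i^j)=\Theta(D)\,w(\T_i^j)$, so no $O(1)$ constant can work, and no amount of telescoping along the virtual chain rescues it because the bulk of the cost lives in the intermediate original-tree edges that are not virtual edges of $T^i$ at all. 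Your first refinement (``compute outgoing-edge coefficients against original-tree depth'') is pointed in the right direction but is not an argument: merely swapping the height parameter inside the preflow of Lemma \ref{lem:MAD_AlgChiGood} does not account for the concretization cost, because that preflow has charging nodes only for the virtual edges on which $\Explore$ is called, and the edges of $\bar{\T}_i^j\setminus\T_i^j$ have no charging node to absorb their cost.

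The paper's actual proof builds a genuinely new preflow whose charging nodes are $(e,j)$ for \emph{every} original-tree edge $e\in\bar{\T}_i^j$ (including the intermediate ones), and replaces single preflow arcs by ``$x$-routes'' that thread through all intermediate charging nodes along the $T$-path, with arc weights scaled by the original-tree height $h_{e_\beta}$ at each hop. Observation \ref{obs:MAD_GeneralTreeExternalAtMostTwice} then guarantees that each intermediate charging node retains excess at least $w(e)/2$, which is exactly what Lemma \ref{lem:MAD_GeneralTreeAlgChiGood} proves, and summing excesses against $\omega_Z=Dk_i w(r^i)$ yields the stated $2Dk_i w(r^i)$. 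In short: the missing idea is to give the intermediate concretized edges their own charging nodes and route the investment through them; without that, the concretization overhead has nowhere to land.
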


Fix $i\in[m]$. We denote by $\T_{j}$ for $j\in[k_{i}]$ the $j$'th
virtual transmission made by the $T_{i}$-algorithm. For $j\in[k_{i}]$,
we denote by $t_{j}$ the time of $\T_{j}$'s transmission.

To prove Lemma \ref{lem:MAD_GeneralTreeALG}, we construct a preflow,
in a similar manner to the proof of Lemma \ref{lem:MAD_ALG}. However,
in this case we also have nodes that correspond to edges that for
which $\Explore$ is not called.

We now describe the construction of the graph $G=(V\cup\{s\},E)$,
and the weight function $\alpha$, such that $Z=(G,s,\alpha)$ is
a preflow. Each vertex in $V$ is of the form $(e,j)$ where $e\in\bar{\T}_{j}$.
To describe the edge set $E$, we require the following definition.
\begin{defn}[$x$-route]
Let $(e,j)$, $(e^{\prime},j^{\prime})$ be two edges such that $e$
is an ancestor of $e^{\prime}$, and $j^{\prime}\ge j$. Denote by
$e=e_{0},e_{1},...,e_{l}=e^{\prime}$ the path from $e$ to $e^{\prime}$
in $T$. We define an $x$-route from $(e_{1},j_{1})$ to $(e_{2},j_{2})$
to be the set of the following charging node edges.
\begin{enumerate}
\item An edge $\sigma$ from $(e,j)$ to $(e_{1},j^{\prime})$ with $\alpha(\sigma)=x\cdot h_{e_{1}}$.
\item For each $\beta\in[l-1]$, an edge $\sigma$ from $(e_{\beta},j^{\prime})$
to $(e_{\beta+1},j^{\prime})$ with $\alpha(\sigma)=x\cdot h_{e_{\beta+1}}$.
\end{enumerate}
We also define an $x$-route from $s$ to $(e^{\prime},j^{\prime})$
in a similar manner. Let $r=e_{1},e_{2},...,e_{l}=e^{\prime}$ the
path from the root of $T$ to $e^{\prime}$. The edges of this $x$-route
are:
\begin{enumerate}
\item An edge $\sigma$ from $s$ to $(r,j^{\prime})$ with $\alpha(\sigma)=x\cdot D$.
\item For each $\beta\in[l-1]$, and edge $\sigma$ from $(e_{\beta},j^{\prime})$
to $(e_{\beta+1},j^{\prime})$ with $\alpha(\sigma)=x\cdot h_{e_{\beta+1}}$.
\end{enumerate}
\end{defn}

We can now describe $E$. The edges of $E$ are constructed in the
following way:
\begin{enumerate}
\item For each $j\in[k_{i}]$, add to $E$ the edges of a $w(r_{i})$-route
from $s$ to $(r_{i},j)$.
\item For two charging nodes $(e_{1},j_{1})$, $(e_{2},j_{2})$ such that
$e_{1},e_{2}\in T_{i}$, $e_{1}$ is an ancestor of $e_{2}$ and $\Explore_{t_{j_{1}}}(e_{1})$
invested $x$ in $\Explore_{t_{j_{2}}}(e_{2})$, add to $E$ the edges
of an $x$-route from $(e_{1},j_{1})$ to $(e_{2},j_{2})$.
\end{enumerate}
\begin{obs}
\label{obs:MAD_GeneralTreeExternalAtMostTwice}For every two edges
$e\in T$, $e^{\prime}\in T_{i}$ such that $e\in B_{e^{\prime}}$
it holds that $w(e)\le2w(e^{\prime})$.
\end{obs}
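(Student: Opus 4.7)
The plan is to split into cases based on whether $e'$ is the root edge $r^i$ of its virtual tree, and within each case handle $e=e'$ separately from the proper-ancestor case. The trivial case $e=e'$ gives $w(e)=w(e')\le 2w(e')$ immediately, so the whole content of the observation lies in bounding $w(e)$ when $e$ is a strict ancestor of $e'$ in $T$ that lies on the path specified by the definition of $B_{e'}$.

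Suppose first that $e'\ne r^i$, so that $B_{e'}$ consists of the edges on the path from $e'$ up to (but not including) $p'(e')$. If $e\in B_{e'}$ and $e\ne e'$, then $e$ is a proper ancestor of $e'$ that is a strict descendant of $p'(e')$; in particular, $e$ is strictly deeper in $T$ than $p'(e')$. Since $p'(e')$ is by definition the deepest ancestor of $e'$ satisfying $w(e')\le 2w(p'(e'))$, the edge $e$ fails this property, so $w(e')>2w(e)$, and hence $w(e)<w(e')/2\le 2w(e')$, which is in fact stronger than needed.

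Now suppose $e'=r^i$, so $B_{e'}$ is the set of all edges from $e'$ up to the root $r$ of $T$ (inclusive). If $e\in B_{e'}$ with $e\ne e'$, then $e$ is a (proper) ancestor of $e'$ in $T$. By the definition of a virtual root, there is no ancestor $e''$ of $e'=r^i$ with $w(e')\le 2w(e'')$; applying this to $e''=e$ yields $w(e')>2w(e)$, and therefore again $w(e)<w(e')/2\le 2w(e')$.

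I do not expect any real obstacle here: the observation is essentially a restatement of the definitions of $p'$ and of the virtual-root condition, packaged in the notation $B_{e'}$. The only thing to be careful about is the convention ``least ancestor'' in the definition of $p'(e')$, which we interpret as the deepest (closest to $e'$) ancestor with the required weight inequality; under this reading the argument above is clean and uniform across both cases.
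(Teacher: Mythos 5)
Your argument has the right shape — split on $e=e'$ versus $e$ a strict ancestor of $e'$, and within the latter split on $e'\ne r^i$ versus $e'=r^i$, then in each nontrivial case invoke the fact that every edge of $B_{e'}$ other than $e'$ itself lies strictly below $p'(e')$ (or below no virtual parent at all, if $e'$ is a virtual root) and therefore must violate the defining inequality of $p'$. Your reading of ``least ancestor'' as the deepest (closest to $e'$) ancestor satisfying the condition is also the right one; the alternative reading would make $p'$ ill-behaved. This is exactly the unpacking-of-definitions proof one would write, and the paper indeed offers no proof because there is nothing more to it.

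That said, you should be more suspicious of the bound you actually derive. Taking the paper's displayed condition ``$w(e)\le 2w(e')$'' at face value, you conclude $w(e')>2w(e)$ for every strict ancestor $e\in B_{e'}$, i.e.\ $w(e)<w(e')/2$, which is a factor $4$ stronger than the claimed $w(e)\le 2w(e')$. That overshoot is a signal that the condition in the definition of $p'$ is misstated: as written it only guarantees $w(p'(e'))\ge w(e')/2$, which is not enough for the paper's immediately following claim that each virtual component is a $(\ge 2)$-HST (that requires $w(p'(e'))\ge 2w(e')$). The intended condition is almost certainly $2w(e)\le w(e')$, i.e.\ $p'(e)$ is the deepest ancestor $e'$ with $w(e')\ge 2w(e)$. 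Under that corrected reading your argument goes through verbatim with the inequality reversed: every $e\in B_{e'}\setminus\{e'\}$ fails the condition, so $w(e)<2w(e')$, which matches the observation tightly and is also consistent with the $(\ge 2)$-HST claim and with the later use of this observation (which loses exactly a factor $2$, not $1/2$). So: correct structure, correct case split, but you proved a formally stronger statement from a definition that cannot be what was meant, and you should have flagged that the surplus strength and the $(\ge 2)$-HST claim cannot coexist under the literal reading.
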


\begin{lem}
\label{lem:MAD_GeneralTreeAlgChiGood}For every charging node $\mu=(e,j)$
it holds that $\chi_{\mu}\ge\frac{w(e)}{2}$.
\end{lem}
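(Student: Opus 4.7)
The plan is to bound, for each charging node $\mu=(e,j)$, the incoming and outgoing weights summed over all $x$-routes that touch $\mu$. For any such route, $\mu$ is either the starting endpoint, the ending endpoint, or an intermediate vertex on the route's $T$-path; let $X_{\text{end}}$, $X_{\text{start}}$, and $X_{\text{through}}$ denote the totals of the route-weights $x$ over these three kinds. Inspecting the edge-weight definition of an $x$-route, every incoming edge to $\mu$ coming from an ending or pass-through route carries weight $x\cdot h_e$, and every outgoing edge from $\mu$ going to a starting or pass-through route carries weight $x\cdot h_{\text{next}}\le x(h_e-1)$, where $h_{\text{next}}$ is the height of the next $T$-descendant on the route's path. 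Summing,
\[
\chi_\mu \;\ge\; h_e(X_{\text{end}}+X_{\text{through}}) \;-\; (h_e-1)(X_{\text{start}}+X_{\text{through}}) \;=\; h_e X_{\text{end}} - (h_e-1) X_{\text{start}} + X_{\text{through}},
\]
which simplifies to $\chi_\mu \ge X_{\text{end}}+X_{\text{through}}$ whenever $X_{\text{end}}\ge X_{\text{start}}$.

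I then split into three cases. If $e=r^i$, the $w(r^i)$-route from $s$ ends at $\mu$, so $X_{\text{end}}\ge w(r^i)$, while investments originating at $\mu$ are bounded by the exploration's budget $w(r^i)$, yielding $X_{\text{start}}\le X_{\text{end}}$ and $\chi_\mu \ge w(r^i)=w(e)$. If $e\in T_i\cap\T_j$ with $e\ne r^i$, the investments that triggered the $t_j$-exploration of $e$ exactly fill its counter, so $X_{\text{end}}=w(e)$, and the exploration's budget again bounds $X_{\text{start}}\le w(e)$, giving $\chi_\mu\ge w(e)$.

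The main obstacle is the remaining case where $e\in\bar{\T}_j\setminus\T_j$ and $e\ne r^i$. Here $X_{\text{end}}=X_{\text{start}}=0$, so I must show $X_{\text{through}}\ge w(e)/2$. Since $e\in\bar{\T}_j$, there is some $e'\in\T_j$ with $e\in B_{e'}\setminus\{e'\}$; equivalently, $e$ lies strictly between $e'$ and $p'(e')$ in $T$. The key structural claim is that every route terminating at $(e',j)$ must pass through $(e,j)$ as an intermediate vertex: the starting point of such a route is a $T_i$-ancestor of $e'$ and therefore lies at or above $p'(e')$ in $T$, while $e$ lies below $p'(e')$ and strictly above $e'$, placing $e$ in the interior of the $T$-path from the route's start to its end. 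Consequently, either $w(e')$ total investment arrives at $(e',j)$ (when $e'\ne r^i$), or the root route of weight $w(r^i)=w(e')$ arrives at $(r^i,j)$ (when $e'=r^i$), and every unit of this flow passes through $(e,j)$. Applying Observation \ref{obs:MAD_GeneralTreeExternalAtMostTwice}, which yields $w(e')\ge w(e)/2$, gives $\chi_\mu\ge X_{\text{through}}\ge w(e')\ge w(e)/2$, as required.
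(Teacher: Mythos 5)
Your proof is correct and follows essentially the same strategy as the paper's: split on whether $e\in\T_j$ or $e\in\bar{\T}_j\setminus\T_j$, and in the latter case use the fact that all flow arriving at the owner $e'\in\T_j$ passes through $(e,j)$ with net positive excess, then apply Observation \ref{obs:MAD_GeneralTreeExternalAtMostTwice} to convert $w(e')$ to $w(e)/2$.

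Two points of comparison. First, your explicit $(X_{\text{end}},X_{\text{start}},X_{\text{through}})$ accounting and the clean inequality $\chi_\mu\ge h_e X_{\text{end}}-(h_e-1)X_{\text{start}}+X_{\text{through}}$ make transparent why pass-through routes can never hurt and why $X_{\text{end}}\ge X_{\text{start}}$ suffices in the first two cases; the paper states this only informally (``adding an $x$-route can only create an outgoing edge from $\mu$ when creating an incoming edge with greater $\alpha$''). Second, the paper's Case $e\neq e'$ asserts ``Thus, $e\notin T_i$,'' which is false in general: an edge strictly between $e'$ and $p'(e')$ in $T$ can itself belong to $T_i$ as a non-ancestral sibling of $e'$ (e.g., a path with weights $16,1,8,2,4$ from root to leaf has $e_2\in B_{e_3}$ and $e_2\in T_1$). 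This misstatement is harmless since all the paper actually uses is $e\notin\T_j$, and your case split by $e=r^i$, $e\in T_i\cap\T_j$, and $e\in\bar{\T}_j\setminus\T_j$ makes the correct condition explicit: in the third case $X_{\text{start}}=X_{\text{end}}=0$ because no $\Explore_{t_j}(e)$ call exists, which is precisely what licenses dropping the $h_e$-weighted terms. So your argument is the same in substance but more precise than the paper's.
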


\begin{proof}
Observe that $x$-routes do not 

It must be that $e\in\bar{\T}_{j}$. Hence, there exists an edge $e^{\prime}\in\T_{j}$
such that $e\in B_{e^{\prime}}$. Since $e^{\prime}\in\T_{j}$, then
we are in one of the following cases. 

\textbf{Case 1: }$e^{\prime}=e$, and thus $e^{\prime}\in\T_{j}$.
It can be shown that $\sum_{\sigma\in E_{\mu}^{+}}\alpha(\sigma)\ge w(e)\cdot h_{e}$,
and that $\sum_{\sigma\in E_{\mu}^{-}}\alpha(\sigma)\le w(e)\cdot(h_{e}-1)$,
similarly to the proof of Lemma \ref{lem:MAD_AlgChiGood}.

\textbf{Case 2: }$e\neq e^{\prime}$ and $e^{\prime}\ne r_{i}$. Thus,
$e\notin T_{i}$. Observe that since $e\notin T_{i}$, adding any
$x$-route cannot decrease $\chi_{\mu}$. Indeed, adding an $x$-route
can only create an outgoing edge from $\mu$ when creating an incoming
edge with greater $\alpha$. Thus, we locate a set of $x$-routes
that increases $\chi_{\mu}$ to at least $w(e^{\prime})$. From Observation
\ref{obs:MAD_GeneralTreeExternalAtMostTwice}, we get that $w(e^{\prime})\ge\frac{w(e)}{2}$,
proving the lemma.

If $e^{\prime}=r_{i}$, then a $w(r_{i})$-route is created from $s$
to $(r_{i},j)$. Since $e$ is on the path from $r$ to $r_{i}$,
it must be that the route adds:
\begin{itemize}
\item An incoming edge $\sigma$ to $(e,j)$ with $\alpha(\sigma)\ge h_{e}\cdot w(r_{i})$.
\item An outgoing edge $\sigma^{-}$ from $(e,j)$ with $\alpha(\sigma)\le(h_{e}-1)\cdot w(r_{i})$.
\end{itemize}
showing that $\chi_{\mu}\ge w(r_{i})\ge\frac{w(e)}{2}$.

Otherwise, $e^{\prime}\neq r_{i}$. Observe that any $x$-route to
$(e^{\prime},j)$ contains $\mu$, and increases $\chi_{\mu}$ by
at least $x$ (using the same argument as the case for $e^{\prime}=r_{i}$).
In this case, observe that a total of $w(e^{\prime})$ has been invested
has been invested in $(e^{\prime},j)$ to trigger $\Explore_{t_{j}}(e^{\prime})$.
This completes the proof.
\end{proof}
\begin{proof}[Proof of Lemma \ref{lem:MAD_GeneralTreeALG}]
We have that $\overline{\alg}_{i}^{B}\le2Dk_{i}\cdot w(r_{i})$.

Observe the preflow $Z$ as constructed. We have that $\omega_{Z}=Dk_{i}w(r_{i})$.
From Lemma \ref{lem:MAD_GeneralTreeAlgChiGood}, and using Proposition
\ref{prop:Flow_SubsetLBSource}, we have
\begin{align*}
\overline{\alg}_{i}^{B} & \le\sum_{j=1}^{k_{i}}w(\bar{\T_{j}})=\sum_{(e,j)\in V}w(e)=2\cdot\sum_{\mu=(e,j)\in V}\chi_{\mu}\le2\cdot\omega_{Z}=2Dk_{i}w(r_{i})
\end{align*}
\end{proof}
\begin{proof}[Proof of Theorem \ref{thm:MAD_GeneralTree}]
 From Lemmas \ref{lem:MAD_OPTGeneralTree} and \ref{lem:MAD_GeneralTreeALG},
we have that for every $i\in[m]$ 
\[
\overline{\alg}_{i}^{B}\le2D^{2}\opt_{i}
\]

From Observation \ref{obs:MAD_GeneralTreeBigOptAtLeastSmallOpts},
we have that 
\[
\alg^{B}=\sum_{i=1}^{m}\overline{\alg}_{i}^{B}\le2D^{2}\cdot\sum_{i=1}^{m}\opt_{i}\le2D^{2}\cdot\opt
\]

Using Lemma \ref{prop:MAD_GeneralTreeDelayBoundedByBuying}, we have
that 
\[
\alg\le2\alg^{B}\le4D^{2}\cdot\opt
\]
 as required.
\end{proof}

\section{\label{sec:OSD}Online Service with Delay}

\subsection{Problem and Notation}

In the online service with delay (OSD) problem, a single server exists
on a point in a metric space. Requests arrive on points of the metric
space over time, and accumulate delay until served, where serving
a request requires moving the server to that request. The cost of
moving the server from one point to another is the distance between
those two points in the metric space. The goal is to minimize the
sum of the moving cost and the delay cost. 

Formally, a request is a tuple $q=(v_{q},r_{q},d_{q}(t))$ such that
$v_{q}$ is the point on which $q$ arrives, the request arrives at
time $r_{q}$, and $d_{q}(t)$ is an arbitrary non-decreasing continuous
delay function. We also assume that $d_{q}(t)$ tends infinity as
time progresses. For any instance of OSD $I$, denote by $\alg^{B}$
the total cost of moving the algorithm's server. We also denote by
$\alg^{D}=\sum_{q\in Q}d_{q}(t_{q})$, where $t_{q}$ is the time
in which the request $q$ is served. Then the algorithm's goal is
to minimize the total cost 
\[
\alg=\alg^{B}+\alg^{D}
\]
As in the previous problems in this paper, we also consider the special
case in which the metric space is the leaves of a $\left(\ge2\right)$-HST.
Without loss of generality, we allow an algorithm to move its server
to the internal nodes of the tree, even though they are not a part
of the original metric space. This is implemented by lazy moving of
the server -- that is, the server never really moves to those internal
nodes, but its virtual location in an internal node is kept in the
algorithm's memory for the sake of calculations. 

In this section, we prove the following theorem.
\begin{thm}
\label{thm:OSD_GMSLogSquared}There exists a randomized $O(\log^{2}n)$-competitive
algorithm for online service with delay on a general metric space
of $n$ points.
\end{thm}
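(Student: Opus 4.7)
The plan is to follow the same two-step template used for facility location in Theorems \ref{thm:FLDeadline_GMSLogSquared} and \ref{thm:FLDelay_GMSLogSquared}. First, I would establish a deterministic algorithm for OSD on a $\left(\ge2\right)$-HST of depth $D$ satisfying an inequality of the form
\[
\alg \;\le\; O(D)\cdot\opt^{B}\;+\;O(D^{2})\cdot\opt^{D},
\]
where $\opt^{B}$ is the server-movement (``metric'') cost and $\opt^{D}$ is the delay cost of any offline solution. Then I would obtain the randomized algorithm for a general metric space $\mathcal{X}$ by drawing $\T\sim\mathcal{D}$ from the embedding in Theorem \ref{thm:HSTEmbedding} and running the HST algorithm on the resulting $\left(\ge2\right)$-HST of depth $D=O(\log n)$.

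The second step is essentially the same bookkeeping as in the proof of Theorem \ref{thm:FLDeadline_GMSLogSquared}. Since $\delta_{\T}\ge\delta_{\mathcal{X}}$, the realized moving cost in the original space is no more than the moving cost charged on the tree, so $\alg^{\mathcal{X}}\le\alg^{\T}$. Any fixed offline solution $\opt^{\mathcal{X}}$ induces a solution $\opt^{\T}$ by making the same server moves at the same times and serving the same requests; its delay component is unchanged, $\opt^{\T,D}=\opt^{\mathcal{X},D}$, while its moving cost is distorted in expectation by only $O(\log n)$, i.e.\ $\E_{\T\sim\mathcal{D}}[\opt^{\T,B}]\le O(\log n)\cdot\opt^{\mathcal{X},B}$. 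Plugging into the HST bound with $D=O(\log n)$ gives
\[
\E[\alg^{\mathcal{X}}]\;\le\;\E_{\T}\!\left[O(\log n)\cdot\opt^{\T,B}+O(\log^{2}n)\cdot\opt^{\T,D}\right]\;\le\;O(\log^{2}n)\cdot\opt^{\mathcal{X}},
\]
as required. The crucial property exploited here, identical to the facility-location proofs, is the asymmetry between the metric cost (distorted) and the non-metric cost (delay, not distorted), which lets us afford an extra factor of $D$ on the delay side.

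The main obstacle, and the bulk of the work for this section, lies in the first step: proving the HST competitiveness result. Following the technique sketch in the introduction, I would design the HST algorithm to invoke the exploration subroutine from Algorithm \ref{alg:MAD_Algorithm}, but applied not to the full tree but to the ``relative tree'' determined by the current server location, with a notion of major edges as in \cite{DBLP:conf/stoc/AzarGGP17}. The analysis would use a potential function $\Phi$ proportional to the distance between the algorithm's server and the optimum's server along the HST, and would split the cost of the algorithm's explorations into two buckets: those whose cost is charged against $\opt$ by a preflow argument almost identical to the one developed in the analysis of Algorithm \ref{alg:MAD_Algorithm} (yielding an $O(D^{2})\cdot\opt^{D}$ term, plus an $O(D)\cdot\opt^{B}$ term inherited from the multilevel aggregation reduction), and those whose cost is paid by a drop in $\Phi$, which telescopes and is controlled by $\opt^{B}$. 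The delicate part will be arguing that each exploration either (i) corresponds to a genuinely ``critical'' set of requests on the current relative tree and hence can be bounded by the multilevel-aggregation preflow, or (ii) is triggered by a discrepancy between the two servers' positions that discharges the potential; the relative-tree and major-edge formalism is precisely what makes this dichotomy clean.
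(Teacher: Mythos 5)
Your proposal follows exactly the paper's strategy: prove the HST bound $\alg\le O(D)\cdot\opt^{B}+O(D^{2})\cdot\opt^{D}$ (the paper's Theorem~\ref{thm:OSD_HSTTheorem}) via the multilevel-aggregation exploration on relative trees with a preflow argument and a server-distance potential, then transfer to general metrics using the embedding of Theorem~\ref{thm:HSTEmbedding}, exploiting that moving cost is distorted by $O(\log n)$ while delay cost is not. The sketch of the HST analysis matches the paper's dichotomy (services with $\mathbb{I}_i=1$ charged to $\opt$ via the preflow, services with $\mathbb{I}_i=0$ paid by the potential drop), so this is the same proof.
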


\subsection{Algorithm for HSTs}

In this subsection, we present an algorithm for online service with
delay on $\left(\ge2\right)$-HSTs. We assume that the weight of each
edge is a power of $2$ -- this can be enforced, at a loss factor
of $2$ to competitiveness. This algorithm encapsulates our algorithm
for online multilevel aggregation with delay, while using similar
mechanisms to those in \cite{DBLP:conf/stoc/AzarGGP17}.

For an edge $e$, denote $\mathcal{C}(e)=\{e^{\prime}|p(e^{\prime})=p(e)\wedge w(e^{\prime})<w(e)\}$,
the set of sibling edges of $e$ with smaller weight. Note that for
every $e^{\prime}\in\mathcal{C}(e)$ we have $w(e^{\prime})\le\frac{1}{2}w(e)$,
since edge weights are powers of 2. We define the following.
\begin{defn}[Top and bottom nodes]
For an edge $e$, we define $v_{e}^{\top}$ to be the top node of
$e$, and $v_{e}^{\bot}$ to be the bottom node of $e$. 
\end{defn}

\begin{defn}[Relative subtree $R_{e}$]
 For an edge $e$, we define the \emph{relative subtree of $e$}
to be $\{e\}\cup\bigcup_{e^{\prime}\in\mathcal{C}(e)}T_{e^{\prime}}$.
\end{defn}

The following definition is required for defining exactly what we
mean when referring to locations of servers and requests.
\begin{defn}[Locations of servers and requests]
 Consider the location of a server (either the algorithm's or the
optimum's).
\begin{itemize}
\item For $T_{e}$, we say that \emph{the server is internal to $T_{e}$
}if the server is in one of the nodes of $T_{e}$ \emph{other than
$v_{e}^{\top}$}.
\item For $R_{e}=\{e\}\cup\left(\bigcup_{e^{\prime}\in\mathcal{C}(e)}T_{e^{\prime}}\right)$,
we say that \emph{the server is internal to $R_{e}$ }if the server
in one of the nodes of $R_{e}$ \emph{other than $v_{e}^{\bot}$}.
\end{itemize}
The same applies for saying that a \emph{request $q$ is internal
to $T_{e}$ (or $R_{e}$)}, and writing $q\in T_{e}$ (or $q\in R_{e}$).

Let $Q\subseteq R_{e}$ be a set of requests, and denote by $Q\restriction_{T_{e^{\prime}}}=\{q\in T_{e^{\prime}}|q\in Q\}$.
Then we define $R_{e}^{Q}$ to be 
\[
\left(\bigcup_{e^{\prime}\in\mathcal{C}(e)}T_{e^{\prime}}^{Q\restriction_{T_{e^{\prime}}}}\right)\cup\{e\}
\]
\end{defn}

We sometimes write $Y_{e}$ to make claims that refer to either $R_{e}$
or $T_{e}$.
\begin{defn}[Saturation]
We say that a set of requests $Q\subseteq Y_{e}$ \emph{saturates
$Y_{e}$ at time $t$ if $d_{Q}(t)\ge w(Y_{e}^{Q})$.}
\end{defn}

\begin{defn}[Major edges]
We say that an edge $e$ is \emph{major }at a time $t$ if every
edge $e^{\prime}$ on the path from the algorithm's server to $e$
has that $w(e^{\prime})\le w(e)$.
\end{defn}

\begin{defn}[Critical set]
\label{def:OSD_CriticalSet}We say that a set of requests $Q$ is
\emph{critical }at time $t$ if it saturates $Y_{e}$ at time $t$
for an edge $e$ which is major at time $t$.
\end{defn}

\begin{defn}
Let $e$ be an edge, and $Y_{e}$ be either $T_{e}$ or $R_{e}$.
We say that the algorithm's server is \emph{on the other side of $e$
than $Y_{e}$ }if:
\begin{itemize}
\item The server is internal to $T_{e}$ and $Y_{e}=R_{e}$.
\item The server is not internal to $T_{e}$ and $Y_{e}=T_{e}$.
\end{itemize}
\end{defn}

The following proposition allows us to assume that whenever a set
of requests is critical by saturating $Y_{e}$ for a major edge $e$,
we have that the algorithm's server is on the other side of $e$ than
$Y_{e}$.
\begin{prop}
\label{prop:OSD_CriticalTreeOppositeServer}Suppose there exists a
critical set of requests $Q$, saturating $Y_{e}$ for $e$ a major
edge, at some point in time. Then there exists another critical set
of requests $Q^{\prime}$, saturating $Y_{e^{\prime}}$ for another
major edge $e^{\prime}$, such that the algorithm's server is on the
other side of $e^{\prime}$ than $Y_{e^{\prime}}$.
\end{prop}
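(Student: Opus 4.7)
I will prove the proposition by descent on the edge weight $w(e)$. If the server is already on the other side of $e$ than $Y_e$, we simply take $(e',Y_{e'},Q')=(e,Y_e,Q)$. Otherwise, I will produce another critical saturation that either finishes the proof (server on the other side) or continues the descent with strictly smaller edge weight. Since the tree is finite and the $\left(\ge2\right)$-HST property forbids an infinite strictly-decreasing chain of edge weights, the process must terminate.

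\textbf{The case $Y_e=T_e$, server internal to $T_e$.} Partition $Q$ by the child edges $e_1,e_2,\dots$ of $v_e^\bot$ via $Q^{(i)}=Q\cap T_{e_i}$. Saturation of $T_e$ yields
\[
\sum_i\bigl(d_{Q^{(i)}}(t)-w(T_{e_i}^{Q^{(i)}})\bigr)\ \ge\ w(e)\ >\ 0.
\]
If the server sits exactly at $v_e^\bot$, any positively-contributing $i$ makes $e_i$ trivially major with the server at $v_{e_i}^\top$ outside $T_{e_i}$, and we are done. Otherwise let $j$ index the unique child subtree containing the server. If some $i\ne j$ with $w(e_i)\ge w(e_j)$ contributes positively, then $Q^{(i)}$ saturates $T_{e_i}$; the path from the server to $e_i$ passes only through edges of weight at most $w(e_j)\le w(e_i)$, so $e_i$ is major and the server, being in $T_{e_j}$, is outside $T_{e_i}$. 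If no such $i$ exists, the remaining contributions satisfy
\[
\bigl(d_{Q^{(j)}}(t)-w(T_{e_j}^{Q^{(j)}})\bigr)\ +\!\!\sum_{i\ne j,\,w(e_i)<w(e_j)}\!\!\bigl(d_{Q^{(i)}}(t)-w(T_{e_i}^{Q^{(i)}})\bigr)\ \ge\ w(e)\ \ge\ w(e_j).
\]
Either $Q^{(j)}$ saturates $T_{e_j}$ and we recurse with $(e_j,T_{e_j},Q^{(j)})$, or its contribution is non-positive and $Q''=\bigcup_{i\ne j,\,w(e_i)<w(e_j)}Q^{(i)}\subseteq R_{e_j}$ satisfies $d_{Q''}(t)\ge w(e_j)+\sum_{e'\in\mathcal{C}(e_j)}w(T_{e'}^{Q''\cap T_{e'}})=w(R_{e_j}^{Q''})$; the server sits in $T_{e_j}$ and $e_j$ is major, so $(e_j,R_{e_j},Q'')$ is the sought saturation.

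\textbf{The case $Y_e=R_e$, server not internal to $T_e$; main obstacle.} Because $e$ is major, the server cannot lie above $v_e^\top$ (any ancestor edge has weight $\ge 2w(e)$) nor in a sibling subtree $T_f$ with $w(f)>w(e)$, so it is either at $v_e^\top$ or in some sibling $T_f$ with $w(f)\le w(e)$. When the server is at $v_e^\top$, pigeonhole on the saturation of $R_e$ produces some $e'\in\mathcal{C}(e)$ with $Q\cap T_{e'}$ saturating $T_{e'}$, and $e'$ is trivially major with the server outside $T_{e'}$. When the server is in $T_f$ with $w(f)=w(e)$, one checks that $\mathcal{C}(f)=\mathcal{C}(e)$, so $Q$ itself saturates $R_f$, $f$ is major, and $(f,R_f,Q)$ works since the server is internal to $T_f$. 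Finally, when $f\in\mathcal{C}(e)$, I repeat the three-way split of the previous case along $\mathcal{C}(e)$ to obtain either a saturating $e'\ne f$ with $w(e')\ge w(f)$, a recursion into $(f,T_f,Q^{(f)})$, or a saturation of $R_f$ by the residual requests. The main obstacle throughout is the ``wrong sibling'' phenomenon: a naive pigeonhole can return a sibling that is smaller than the server's own subtree and hence fails to be major because the path from the server crosses a heavier intermediate edge; the relative-subtree notion $R_{e_j}$ exists precisely to absorb this surplus, as the HST slack $w(e)\ge w(e_j)$ converts the leftover delay on strictly-smaller siblings into saturation of $R_{e_j}$, after which the server's position inside $T_{e_j}$ automatically places it on the opposite side of $e_j$ from $R_{e_j}$.
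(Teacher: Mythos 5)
Your argument is correct, and it takes a genuinely different route from the paper. The paper's proof takes $Q$ to be the \emph{minimal} set saturating $Y_e$ and leverages minimality to collapse cases: any child subtree that fails to saturate must in fact contain no requests of $Q$, which lets the proof jump directly to ``$Q\subseteq T_{\hat e}$'' or ``$Q\subseteq R_{\hat e}$'' and then recurse by induction on the height of $e$. You instead drop the minimality assumption and run a quantitative telescoping argument: distribute the saturation slack $d_Q(t)-w(T_e^Q)\ge 0$ over child subtrees, extract either a positively-contributing heavier sibling (immediate win) or absorb the residual slack into the relative subtree $R_{e_j}$, and terminate by strict descent on $w(e)$ rather than on height. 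Both termination measures are valid in a $\left(\ge2\right)$-HST. Your version also works out the $Y_e=R_e$ branch in full, including the equal-weight sibling case $w(f)=w(e)$ (where $\mathcal C(f)=\mathcal C(e)$), which the paper dismisses as ``very similar'' — that added explicitness is useful, since the equal-weight case does not reduce to $f\in\mathcal C(e)$ and needs its own one-line check. The cost of your approach is slightly heavier bookkeeping (the three-way split and the $Q''$ construction), whereas the paper's minimality trick lets it skip the arithmetic; the benefit is that you never need to appeal to existence or well-definedness of a minimal saturating set, and the summation makes the role of the $R_e$ construction (absorbing the leftover delay of strictly-smaller siblings) more transparent. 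One small point worth stating explicitly: when $Y_e=T_e$ and the server is internal to $T_e$, the edge $e$ cannot be a leaf edge — otherwise both the server and all of $Q$ would sit at $v_e^\bot$, contradicting that $Q$ is pending — and this is needed for your child-edge partition to be nontrivial; the paper notes this, and your argument implicitly relies on it but does not say so.
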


\begin{proof}
If the server is on the other side of $e$ than $Y_{e}$, we are done.
Suppose otherwise, and let $Q$ be the minimal set saturating $Y_{e}$. 

Consider the case that $Y_{e}=T_{e}$, and the algorithm's server
is internal to $T_{e}$. Note that $e$ cannot be a leaf edge --
otherwise, the server and all requests in $Q$ must be on $v_{e}^{\bot}$,
in contradiction to the requests of $Q$ being pending.
\begin{enumerate}
\item If the server is in $v_{e}^{\bot}$, we can thus choose $e^{\prime}$
to be any child edge of $e$ saturated by $Q\restriction_{T_{e^{\prime}}}$(such
an edge must exist, otherwise $Q$ would not have saturated $T_{e}$). 
\item If the server is internal to $T_{\hat{e}}$ for some $\hat{e}$ child
edge of $e$, then:
\begin{enumerate}
\item If there exists a sibling $\tilde{e}$ of $\hat{e}$ such that $w(\tilde{e})\ge w(\hat{e})$
such that $Q\restriction_{T_{\tilde{e}}}$ is saturated, then $\tilde{e}$
is major, and thus $Q\restriction_{T_{\tilde{e}}}$ is critical. The
server is on the other side of $\tilde{e}$ than $T_{\tilde{e}}$,
completing the proof.
\item If there is no such $\tilde{e}$, by the minimality of $Q$ we have
for any $\tilde{e}$ sibling of $\hat{e}$ such that $w(\tilde{e})\ge w(e)$
that $Q\restriction_{T_{\tilde{e}}}=\emptyset$.
\begin{enumerate}
\item If $Q\restriction_{T_{\hat{e}}}$ does not saturate $T_{\hat{e}}$,
then again from minimality of $Q$ we have $Q\restriction_{T_{\hat{e}}}=\emptyset$.
Thus $Q\subseteq R_{\hat{e}}$. Since $w(R_{\hat{e}}^{Q})=w(T_{e}^{Q})-w(e)+w(\hat{e})\le w(T_{e}^{Q})$,
it holds that $Q$ saturates $\hat{e}$, and is thus critical. 
\item Otherwise, $Q\restriction_{T_{\hat{e}}}$ saturates $T_{\hat{e}}$,
and is thus critical. Since the server is internal to $T_{\hat{e}}$,
induction on the height of $e$ yields the proof.
\end{enumerate}
\end{enumerate}
\end{enumerate}
The case that $Y_{e}=R_{e}$ and the server is not internal to $T_{e}$
is very similar. 
\end{proof}
\textbf{Algorithm's description. }The algorithm for service with delay
on a $\left(\ge2\right)$-HST is given in Algorithm \ref{alg:OSD_Algorithm}.
The algorithm triggers a service whenever a set of requests becomes
critical. We assume that the set of requests considered by the algorithm
is always on the other side of the major edge than the server. This
assumption uses Proposition \ref{prop:OSD_CriticalTreeOppositeServer}.

Whenever a set of requests becomes critical, saturating $Y_{e}$ for
a major edge $e$, the algorithm moves the server to the closer node
touching $e$ (denoted by $u_{1}$). It then calls the exploration
function of the multilevel aggregation algorithm for $\left(\ge2\right)$-HSTs,
given in Algorithm \ref{alg:MAD_Algorithm}. To make this well defined,
a call to $\texttt{MultilevelAggregationExplore}(Y_{e})$ observes
the $\left(\ge2\right)$-HST $Y_{e}$, in which $e$ is the root edge.
If $Y_{e}=R_{e}$, then $e$ is ``promoted'' to be the parent edge
of its siblings in $R_{e}$ for the sake of the multilevel aggregation
exploration (note that the resulting tree is indeed a $\left(\ge2\right)$-HST).
The counters used by the exploration are the same counters $c_{e}$
of the service with delay algorithm.

The exploration of the multilevel aggregation algorithm yields a tree
to transmit $\T$. In the case of service with delay, instead of transmitting
$\T$, we traverse it with the server, in DFS order, returning to
the node $u_{1}$. Note that the cost of this is exactly twice the
weight of $\T$. To conclude, the server crosses $e$, ending the
service on the other side of $e$ than before the service. Observe
that while this concludes the call to $\UponCritical$, it may immediately
trigger new calls to $\UponCritical$ due to new edges becoming major
in the server's new location.

\noindent \LinesNumbered \RestyleAlgo{boxruled}\renewcommand{\algorithmcfname}{Algorithm}
\begin{algorithm}[tb]
\caption{\label{alg:OSD_Algorithm}Online Service with Delay}

\SetKwProg{Fn}{Function}{}{end}
\SetKwProg{EFn}{Event Function}{}{end}
\SetKwFunction{UponCritical}{UponCritical}
\SetKwFunction{Explore}{Explore}
\SetKwFunction{MultilevelAggregationExplore}{MultilevelAggregationExplore}

\textbf{Initialization.}

Initialize $c_{e}\leftarrow0$ for any edge $e\in T\backslash\{r\}$

\;

\EFn(\tcp*[h]{Upon request set becoming critical as per Definition
\ref{def:OSD_CriticalSet}}){\UponCritical{}}{

Let $Y_{e}$ be the tree saturated by the critical requests, such
that $e$ is a major edge.

\tcp*[h]{The server is on the other side of $e$ than $Y_{e}$, by
Proposition \ref{prop:OSD_CriticalTreeOppositeServer}.}

let $e=(u_{1},u_{2})$, such that the server is on $u_{1}$'s side.

move server to $u_{1}$. 

let $\T\leftarrow$\MultilevelAggregationExplore{$Y_{e}$}

traverse $\T$ in DFS order, finishing at $u_{1}$.

traverse $e$ to reach $u_{2}$.

}
\end{algorithm}

\subsection{Analysis}

Fix any instance of online service with delay on the tree $T$. Define
$\alg^{B}$ and $\alg^{D}$ to be the total moving cost and the total
delay cost of the algorithm on the instance, respectively. Define
$\alg=\alg^{B}+\alg^{D}$. Define $\opt^{B},\opt^{D}$ and $\opt$
similarly for the optimum. 

In this subsection, we prove the following theorem.
\begin{thm}
\label{thm:OSD_HSTTheorem}$\alg\le O(D)\cdot\opt^{B}+O(D^{2})\cdot\opt^{D}$.
\end{thm}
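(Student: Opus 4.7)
The plan is to extend the MAD analysis of Theorem \ref{thm:MAD_HSTTheorem} via amortization using a potential function $\Phi$ proportional to the tree-distance between the algorithm's server and $\opt$'s server, scaled by a $\Theta(D)$ factor. Since $\Phi \ge 0$ always and $\Phi = 0$ initially, it suffices to prove $\alg + \Delta \Phi \le O(D) \cdot \opt^B + O(D^2) \cdot \opt^D$ along the run, where movements by $\opt$ additionally contribute changes to $\Phi$ bounded by $O(D)$ times $\opt$'s movement and absorbed into the $O(D) \opt^B$ term.

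First I would bound the per-call cost of $\UponCritical$. When it fires for a major edge $e$ and critical tree $Y_e$, the algorithm performs a MAD-style exploration of $Y_e$ (which is itself a $\left(\ge 2\right)$-HST rooted at $e$ of depth at most $D$), DFS-traverses the returned subtree $\T$ at cost $2 w(\T)$, and crosses $e$ at cost $w(e)$. An analogue of Lemma \ref{lem:MAD_ALG} applied to $Y_e$ gives $w(\T) \le O(D) \cdot w(e)$, and an analogue of Lemma \ref{lem:MAD_DelayLessThanBuy} (using the saturation invariant maintained by the algorithm) bounds the delay incurred on requests served during the call by the movement cost. Hence the per-call contribution to $\alg$ is $O(D) \cdot w(e)$.

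Next I would partition calls to $\UponCritical$ according to $\opt$'s server location at the moment the call begins. With $e = (u_1, u_2)$ the major edge and the algorithm initially on $u_1$'s side, the \emph{potential-covered} case is when the call produces a net decrease of $\Phi$ sufficient to absorb the $O(D) w(e)$ cost; this happens when the algorithm's motion during the call brings its server into alignment with $\opt$'s position (the scaling of $\Phi$ by $\Theta(D)$ is what makes $\Phi$'s decrease large enough). The \emph{chargeable} case is everything else, for which I would establish the following dichotomy: between the previous chargeable call with the same major edge $e$ and this one, either $\opt$'s server crossed $e$ (contributing at least $w(e)$ to $\opt^B$), or $\opt$ incurred at least $w(Y_e^Q) \ge w(e)$ delay on requests in $Y_e$ (contributing to $\opt^D$).

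The main obstacle will be formalizing this dichotomy, which requires adapting the preflow construction of Procedure \ref{proc:MAD_PreflowBuilder} to use each major edge $e$ as its own ``local root.'' I would construct, for every major edge $e$ that is ever critical, a preflow on the $\left(\ge 2\right)$-HST $Y_e$ with charging nodes indexed by consecutive chargeable calls for $e$, mirroring the MAD arguments to obtain a per-edge inequality bounding the number of chargeable calls for $e$ times $w(e)$ by $\opt$'s cost attributable to $e$. Summing over all major edges, each unit of $\opt^B$ is charged by at most $O(1)$ chargeable calls per edge it crosses (and hence $O(1)$ times in total per crossing), contributing $O(D) \cdot \opt^B$ after multiplying by the $O(D) w(e)$ per-call cost; each unit of $\opt^D$ on a request $q$ can be charged by chargeable calls for at most $O(D)$ ancestor $Y$-trees of $q$, each charge paying $O(D)$ algorithm cost per unit of $\opt^D$, for a total of $O(D^2) \cdot \opt^D$. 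Combined with the potential-covered calls paid by $\Delta \Phi$, this establishes the required bound.
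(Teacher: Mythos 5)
Your overall strategy---bound $\alg$ by $\sum_i w(e_i)$ up to $O(D)$, use a potential function $\Phi$ scaled by $\Theta(D)$ to absorb the ``potential-covered'' calls, and charge the remaining calls to $\opt$ via a preflow---is the same as the paper's. However, your first step contains a real error: the per-call bound $w(\T)\le O(D)\cdot w(e)$ for a single invocation of $\UponCritical$ does not follow from an analogue of Lemma \ref{lem:MAD_ALG}, and in fact it is false. The counters $c_{e'}$ persist across calls to $\UponCritical$; a single call can top off many pre-charged counters with negligible investment and cascade into a transmission whose weight is far larger than $O(D)\cdot w(e)$. Lemma \ref{lem:MAD_ALG} is a \emph{cumulative} bound ($\alg\le 2kDw(r)$ over all $k$ services, proved by a global preflow), not a per-call one. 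The paper sidesteps this by proving only the cumulative Lemma \ref{lem:OSD_ALG} ($\alg\le O(D)\sum_{i}w(e_i)$) and then introducing the \emph{artificial} per-call quantities $\alg_i=\gamma D w(e_i)$ (which need not upper-bound the actual cost of call $i$, only sum to $\ge\alg$); the potential-difference inequality $\alg_i\le 4\gamma D\cdot\opt_i-\Delta_i\phi$ is then proved for these artificial quantities. Your amortization as written charges the actual per-call cost against $\Delta\Phi$, which is exactly where it breaks. The fix is to redo your potential step on the artificial $\alg_i$, which is what the paper does.

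A secondary concern is your plan to build a separate preflow for each major edge $e$ on $Y_e$, ``indexed by consecutive chargeable calls for $e$.'' The NCNs (counter phases of a descendant edge $e'$) receive investments from explorations rooted at \emph{different} major edges in different services, so they do not partition cleanly by major edge; a per-$e$ preflow would be missing incoming investment on some NCNs and the excess bound would fail. The paper instead constructs a single global preflow on all charging nodes with three node types (SRCN, RRCN, NCN), creates colors only at root charging nodes with $\mathbb{I}_i=1$, and propagates them through the whole graph---this avoids needing to attribute each counter increment to a particular major edge. Your dichotomy (``$\opt$ crossed $e$ or incurred $w(Y_e^Q)$ delay between consecutive chargeable calls for $e$'') also glosses over the fact that the requests causing saturation at time $t_i$ may have arrived long before the previous chargeable call for $e$, so the delay is not cleanly localized to one interval of $e$; the color-propagation mechanism and the $\lambda_\mu$ thresholds are precisely what allow the paper to push that delay charge down to NCNs whose intervals do contain the arrival times.
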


Observe that upon embedding from a general metric space of $n$ points
to a $\left(\ge2\right)$-HST, the moving cost is distorted but the
delay cost is not. Thus, using similar arguments to the proof of Theorem
\ref{thm:FLDeadline_GMSLogSquared}, we have that Theorem \ref{thm:OSD_HSTTheorem}
implies Theorem \ref{thm:OSD_GMSLogSquared} for general metric spaces.

\subsubsection{Upper Bounding $\protect\alg$}

We again denote by $k$ the number of services made by the algorithm.
That is, $k$ is the number of calls to $\UponCritical$. We denote
by $e_{i}$ for $i\in[k]$ the major edge considered in the $i$'th
service. We also denote by $t_{i}$ the time of the $i$'th service.

We devote this part of the analysis to proving the following lemma.
\begin{lem}
\label{lem:OSD_ALG}$\alg\le O(D)\cdot\sum_{i=1}^{k}w(e_{i})$
\end{lem}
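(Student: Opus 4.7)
The plan is to bound the cost of each service $i$ by $O(D \cdot w(e_i))$ and sum over $i \in [k]$. The total cost of the algorithm decomposes as $\alg = \alg^B + \alg^D$, the server movement cost plus the delay cost, and I would bound each summand separately.

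First, I would bound the movement cost of service $i$, which breaks into three parts: the initial move to $u_1$, the DFS traversal of $\T_i$, and the crossing of $e_i$ to $u_2$. For the initial move, since $e_i$ is major at time $t_i$, every edge on the path from the algorithm's server to $e_i$ has weight at most $w(e_i)$; combining this with the $(\ge 2)$-HST property telescopes the weights of this path as a geometric series that sums to $O(w(e_i))$. The DFS traversal costs exactly $2 w(\T_i)$, and crossing $e_i$ costs $w(e_i)$. It remains to bound the aggregate $\sum_i w(\T_i) \le O(D) \sum_i w(e_i)$, which I would do by adapting the preflow argument of Lemma \ref{lem:MAD_ALG}: let the nodes of the preflow be the union over all services of the Explore calls produced by each invocation of \texttt{MultilevelAggregationExplore}; have the source send weight $D \cdot w(e_i)$ to the top-level Explore call of service $i$; and assign edge weights for investments as in Lemma \ref{lem:MAD_AlgChiGood}. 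The crucial inequality $\chi_v \ge w(e)$ from Lemma \ref{lem:MAD_AlgChiGood} carries over unchanged, because the \texttt{Explore} function and its counter dynamics are identical; by Proposition \ref{prop:Flow_SubsetLBSource} we then obtain $\sum_i w(\T_i) \le \omega_Z = D \sum_i w(e_i)$.

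Next I would bound the delay cost, mirroring Lemma \ref{lem:MAD_DelayLessThanBuy}. Partition all requests by the service in which they are served, yielding $Q_1, \ldots, Q_k$; since every request accumulates unbounded delay if left pending, this partition covers all requests. For each service $i$, the invariant that no set was critical just before $t_i$, together with continuity of the delay functions, forces $d_{Q_i}(t_i) \le O(w(Y_{e_i}^{Q_i})) \le O(w(\T_i)) + O(w(e_i))$. Summing these bounds over $i$ and combining with the movement bound yields $\alg \le O(D) \sum_i w(e_i)$, as required.

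The hard part will be pinning down the per-service delay bound. The set $Q_i$ served in service $i$ contains the critical triggering set $Q_i^{\mathrm{crit}}$, for which the critical condition together with non-criticality just before $t_i$ gives $d_{Q_i^{\mathrm{crit}}}(t_i) = w(Y_{e_i}^{Q_i^{\mathrm{crit}}})$ by continuity; but $Q_i$ may also contain additional pending requests in the leaves of $\T_i$. For the latter, I would argue that the non-criticality invariant, applied to the subset $Q_i \setminus Q_i^{\mathrm{crit}}$ in its own relevant major subtree, keeps their combined delay at $t_i$ below $w(Y_{e_i}^{Q_i \setminus Q_i^{\mathrm{crit}}})$, giving the total bound $d_{Q_i}(t_i) \le 2 w(Y_{e_i}^{Q_i})$. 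A minor additional subtlety is that requests served during the DFS traversal accumulate additional delay between $t_i$ and the moment of their actual service; this is absorbed either by treating the traversal as instantaneous in the continuous-time model or by inflating constants.
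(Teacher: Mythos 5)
Your overall decomposition matches the paper: you split $\alg = \alg^B + \alg^D$, bound $\alg^B$ by $O(D)\sum_i w(e_i)$ via a preflow argument adapted from Lemma \ref{lem:MAD_ALG}, and then try to bound $\alg^D$ against $\alg^B$. The movement-cost half of your argument is correct and essentially reproduces the paper's Proposition \ref{prop:OSD_MajorEdgeNotTooFar} and Lemma \ref{lem:OSD_BoundedBuying}.

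The delay bound, however, has a genuine gap, and it is exactly the point you flagged as ``the hard part.'' You partition requests by \emph{service} and then assert ``the invariant that no set was critical just before $t_i$, together with continuity, forces $d_{Q_i}(t_i) \le O(w(Y_{e_i}^{Q_i}))$.'' This fails for all but the first service of a batch. Several services can be triggered at the same instant, one after another, as the server's position (and hence the set of major edges) changes; the non-criticality invariant only holds at the moment the algorithm was last quiescent, and at that moment $e_i$ (for a later service $i$ in the batch) was typically \emph{not} a major edge. A set $Q$ inside $Y_{e_i}$ can therefore have accumulated delay up to $w(T_{e_1}^{Q})$ --- where $e_1$ is the edge that triggered the first service of the batch --- which can be far larger than $w(Y_{e_i}^{Q})$. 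So the inequality $d_{Q_i}(t_i) \le O(w(Y_{e_i}^{Q_i}))$ is simply not implied by the invariant. Your finer claim about $Q_i \setminus Q_i^{\mathrm{crit}}$ being controlled by ``its own relevant major subtree'' runs into the same problem, and additionally is not well-posed: there is a critical set present right before service $i$ (that is what triggered it), so there is no clean non-criticality statement to invoke at that algorithmic step.

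The paper avoids this by grouping the instantaneous services into \emph{service phases} and proving Proposition \ref{prop:OSD_ServerInPhaseInitialTree}: during an entire phase the server never leaves the tree $Y_e$ that triggered the phase, so every request served in the phase lies in $Y_e$. The non-criticality invariant before the phase (when $e$ is major and the server is outside $Y_e$) then bounds the total delay of all requests served in the phase by $w(Y_e^{Q_{\mathrm{phase}}})$, which in turn is a lower bound on the phase's movement cost, yielding $\alg^D \le \alg^B$ directly. You need either this per-phase regrouping together with something like Proposition \ref{prop:OSD_ServerInPhaseInitialTree}, or some other device, to close the gap; the per-service inequality you stated does not hold.
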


Observe the operation of the algorithm. Upon a critical set of requests
the algorithm calls $\UponCritical$ a few times consecutively, until
there is no critical set of requests with regard to the server's current
location. The algorithm then enters the waiting state. We call each
such instantaneous set of services a \emph{service phase. }We denote
by $k^{\prime}$ the number of these phases. We also assume that no
two sets of requests become critical at the same time, which can easily
be enforced by the algorithm by breaking ties arbitrarily.
\begin{prop}
\label{prop:OSD_ServerInPhaseInitialTree}Consider the service phase
which starts from a set of requests $Q$ becoming critical by saturating
$Y_{e}$, for a major edge $e$. Then during the entire phase, the
server only serves requests internal to $Y_{e}$.
\end{prop}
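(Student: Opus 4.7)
The plan is to prove the statement by induction on the number of services executed in the phase.

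For the base case, observe that the body of $\UponCritical$ confines the server's trajectory to $Y_e$: it first moves the server to $u_1$ (an endpoint of $e$, which is a node of $Y_e$), then DFS-traverses the tree $\T \subseteq Y_e$ returned by $\texttt{MultilevelAggregationExplore}(Y_e)$, and finally crosses $e$ to $u_2$. Every node visited along this trajectory lies in $Y_e$, so every request served in the first service is internal to $Y_e$, and the server ends at a node of $Y_e$.

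For the inductive step, suppose that services $1,\ldots,j$ have served only requests internal to $Y_e$ and left the server at some node $u$ of $Y_e$. Let $Q'$ be the critical set triggering the $(j+1)$-th service, saturating $Y_{e'}$ for some major edge $e'$ with the server on the opposite side of $e'$, as provided by Proposition \ref{prop:OSD_CriticalTreeOppositeServer}. It suffices to show $Y_{e'} \subseteq Y_e$, since then applying the base-case argument to service $j+1$ completes the induction. The central observation is that time is frozen inside the phase, so the saturation condition $d_{Q'}(t') \ge w(Y_{e'}^{Q'})$ holds at exactly the same time $t$ as at the phase start. Thus if $e'$ had also been major from the original server position $s_0$, then $Q'$ would already have been critical at time $t$, contradicting the tiebreaking assumption together with the waiting-state hypothesis just before $t$ (which together force $Q$ to be the unique set becoming critical at $t$). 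Consequently $e'$ must be major from the current position $u$ but \emph{not} from $s_0$, and the $\left(\ge 2\right)$-HST structure then confines the discrepancy between the two paths (from $s_0$ to $e'$ versus from $u$ to $e'$) to edges the server has crossed since $t$, all of which lie inside $Y_e$ by the induction hypothesis.

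The main obstacle is turning this qualitative observation into the inclusion $Y_{e'} \subseteq Y_e$. One must perform a case analysis on $Y_e \in \{T_e, R_e\}$ and $Y_{e'} \in \{T_{e'}, R_{e'}\}$; the subtle case is $Y_{e'} = R_{e'}$, where sibling subtrees of $e'$ could in principle extend outside $Y_e$. Here the HST property (ancestor edges are at least twice heavier than their descendant edges, so any edge major from $s_0$ already lies in $T_{p(e)}$) sharply restricts the edges that can become ``newly major'' after the server's crossing of $e$, and combined with the impossibility of a second critical set at time $t$, one concludes that the entirety of $Y_{e'}$ must lie within $Y_e$. Iterating the inductive step until the phase terminates proves the proposition.
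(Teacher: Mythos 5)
Your high-level plan (induct over the services in the phase and show each explored tree $Y_{e'}$ is contained in $Y_e$) is viable, and your opening observation is exactly right and matches the paper: time is frozen inside the phase, so if $e'$ were major from the original position $s_0$ then $Q'$ would already have been critical at time $t$, violating the tiebreaking and waiting-state assumptions; hence $e'$ was not major from $s_0$. But you stop short of the argument that actually closes the gap, which you yourself label ``the main obstacle'' and then resolve only with ``one concludes.'' The paper's proof derives a concrete weight bound you never establish: since the server has crossed only edges of weight at most $w(e)$ since time $t$, if $w(e')\ge w(e)$ then the path from $s_0$ to $e'$ (contained in the union of the path from $s_0$ to the current location and the path from the current location to $e'$) would consist only of edges of weight at most $w(e')$, making $e'$ major from $s_0$ --- contradiction; hence $w(e')<w(e)$. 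Then, because the server is internal to $Y_e$ and every edge that separates the interior of $Y_e$ from its exterior (namely $e$ when $Y_e=T_e$; and $e$, $p(e)$, or a non-lighter sibling of $e$ when $Y_e=R_e$) has weight at least $w(e)>w(e')$, no major edge $e'$ with $w(e')<w(e)$ can lie outside $Y_e$. This gives $e'\in Y_e$ and $e'\ne e$, from which $Y_{e'}\subseteq Y_e$ follows immediately. Your sketch never produces this chain.

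Two further concrete problems. First, your parenthetical justification --- ``any edge major from $s_0$ already lies in $T_{p(e)}$'' --- is false: in a $(\ge 2)$-HST, ancestors of $e$ and siblings of $e$ of weight at least $w(e)$ are also major from $s_0$ yet lie outside $T_{p(e)}$, so this cannot be the lever that ``sharply restricts'' newly-major edges. Second, your induction hypothesis records only that the server \emph{ends} each prior service at a node of $Y_e$; the argument about the discrepancy between the $s_0$-to-$e'$ and $u$-to-$e'$ paths requires that the server's \emph{entire trajectory} through services $1,\ldots,j$ has remained inside $Y_e$ (so that all edges crossed have weight at most $w(e)$), and this stronger invariant should be the thing you carry through the induction, which is essentially how the paper phrases its (non-inductive, first-exit) proof.
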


\begin{proof}
The first service in the phase only serves requests internal to $Y_{e}$,
and the server finishes the service in a point internal to $Y_{e}$.
We claim that during the rest of the phase, the server remains internal
to $Y_{e}$, which proves the proposition.

Assume otherwise. Then we must have that at some point during the
phase, a set of pending requests $Q^{\prime}$ is critical (with regards
to the server's location at that point in the phase) by saturating
$Y_{e^{\prime}}$ for an edge $e^{\prime}\notin Y_{e}$. Consider
the first such point during the phase. Due to our assumption that
no two sets of requests become critical at the same time, we have
that $e^{\prime}$ must not have been a major edge before the start
of the phase. But note that all edges in $Y_{e}$ have weight at most
$w(e)$, and thus the server only traversed edges of weight at most
$w(e)$ since the start of the phase. Thus, we must have that $w(e^{\prime})<w(e)$.
Now, note that the server cannot reach any edge $e^{\prime}$ such
that $w(e^{\prime})<w(e)$ and $e^{\prime}\notin Y_{e}$ from a position
which is internal to $Y_{e}$ without traversing an edge of weight
at least $w(e)$. This is a contradiction to $e^{\prime}$ being
a major edge.
\end{proof}
\begin{lem}
\label{lem:OSD_DelayBoundedByBuying}$\alg^{D}\le\alg^{B}$
\end{lem}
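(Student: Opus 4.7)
The plan is to mirror the argument used to prove Lemma \ref{lem:MAD_DelayLessThanBuy} in the multilevel aggregation analysis: establish a no-critical-set invariant, partition the served requests according to the service phase in which they were served, apply the invariant to bound the accumulated delay of each partition, and bound that in turn by the server's movement cost during the corresponding phase.

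First, I would establish the invariant analogous to Observation \ref{obs:MAD_NoCriticalInvariant}: at every time $t$, for every set of requests $Q$ pending at $t$ and every edge $e$ that is major at $t$ with $Q \subseteq Y_e$, we have $d_Q(t) \le w(Y_e^Q)$. Indeed, if this were strictly violated, continuity of the delay functions yields an earliest violation time; at that instant $Q$ is critical (by Definition \ref{def:OSD_CriticalSet}) and the algorithm fires $\UponCritical$, so the inequality cannot persist. Hence the invariant holds (in its non-strict form) at every instant.

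Next, I would partition the served requests by service phase. Each service phase is a maximal consecutive block of $\UponCritical$ calls, all occurring at a single instant $t_p$; let $P_p$ denote the requests served in phase $p$, and let $e_p$ be the major edge used in the first service of that phase. By Proposition \ref{prop:OSD_ServerInPhaseInitialTree} we have $P_p \subseteq Y_{e_p}$. Taking the left limit in the invariant at $t_p$ (where $P_p$ is still pending and, since the server has not moved yet, $e_p$ is still major), continuity of $d_{P_p}$ gives
\[
d_{P_p}(t_p) \le w(Y_{e_p}^{P_p}).
\]
Since each request of $P_p$ is served at time $t_p$, the delay the algorithm incurs on $P_p$ is exactly $d_{P_p}(t_p)$.

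To finish, observe that Proposition \ref{prop:OSD_ServerInPhaseInitialTree} states that once the server enters $Y_{e_p}$ (by crossing $e_p$ in the first service of the phase), it remains within $Y_{e_p}$ throughout the phase. To serve every request of $P_p$, the server must visit every leaf where a request of $P_p$ lies, so its total movement during phase $p$ (counting the one crossing of $e_p$ needed to enter $Y_{e_p}$) traverses every edge of $Y_{e_p}^{P_p}$ at least once, and is therefore at least $w(Y_{e_p}^{P_p})$. Summing over phases yields
\[
\alg^D = \sum_p d_{P_p}(t_p) \le \sum_p w(Y_{e_p}^{P_p}) \le \alg^B,
\]
as required. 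The main obstacle is the first step: justifying the invariant rigorously despite the fact that the set of major edges depends on the server's position and therefore varies during each phase. The resolution is that $\UponCritical$ is triggered the very instant any pending set satisfies the criticality condition, so violations cannot accumulate over time, and the invariant holds always in the weak form.
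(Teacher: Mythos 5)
Your proof is correct and follows essentially the same route as the paper's: both partition requests by service phase, invoke Proposition \ref{prop:OSD_ServerInPhaseInitialTree} to show $Q_i \subseteq Y_{e_i}$, use the fact that $e_i$ was still major (and no critical set had fired yet) to bound $d_{Q_i}(t_i) \le w(Y_{e_i}^{Q_i})$, and lower-bound the phase's movement cost by $w(Y_{e_i}^{Q_i})$ because the server begins on the far side of $e_i$. Your proposal simply spells out the no-critical-set invariant explicitly rather than invoking it implicitly as the paper does.
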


\begin{proof}
Let $\mathcal{Q}$ be the set of all requests in the instance. Divide
$\mathcal{Q}$ into $Q_{1},...,Q_{k^{\prime}}$ such that $Q_{i}$
are the requests served by the algorithm in the $i$'th phase. 

Fix the $i$'th phase, let $t$ be the time of the phase and let $Q=Q_{i}$.
Let $Y_{e}$ be the saturated tree triggering the phase, with $e$
a major edge. Due to Proposition \ref{prop:OSD_ServerInPhaseInitialTree},
we have that $Q\subseteq Y_{e}$. Since the algorithm's server is
outside $Y_{e}$, we have that $w(Y_{e}^{Q})$ is a lower bound for
the cost of moving the server to serve $Q$. Since $e$ is a major
edge immediately before the start of the phase, we have that $d_{Q}(t)\le w(Y_{e}^{Q})$.
Thus the delay incurred by the requests of $Q$ is bounded by the
buying cost incurred by the algorithm in the phase.

Summing this conclusion over all phases yields the lemma.
\end{proof}
\begin{prop}
\label{prop:OSD_MajorEdgeNotTooFar}Moving the server to touch a major
edge $e$ costs at most $2w(e)$.
\end{prop}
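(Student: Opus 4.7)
The plan is to perform a short case analysis on the location of the server relative to $e = (v_e^\top, v_e^\bot)$, and in each case use the $(\ge 2)$-HST property to bound the length of the server's path by a geometric series. The major property of $e$ will be used precisely to rule out the one scenario where this geometric argument could fail. Let $s$ denote the server's current node and let $u_1$ be the endpoint of $e$ closer to $s$.

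First I would dispose of the easy case where $s$ is internal to $T_e$. Then $u_1 = v_e^\bot$, and the entire path from $s$ to $u_1$ consists of strict descendants of $e$. Since $T$ is a $(\ge 2)$-HST, the top-most edge of this path (a child of $e$) has weight at most $w(e)/2$, and every edge lower on the path has at most half the weight of its parent. Thus the total weight of the path is bounded by $w(e)/2 + w(e)/4 + \cdots < w(e)$, which is well within $2w(e)$.

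Next I would handle the case $s \notin T_e$, where $u_1 = v_e^\top$. Let $\ell$ be the least common ancestor of $s$ and $v_e^\top$; the path from $s$ to $v_e^\top$ first goes up to $\ell$ and then down to $v_e^\top$. The key observation is that any edge strictly between $\ell$ and $v_e^\top$ on the downward portion is a \emph{proper ancestor} of $e$, and hence, by the $(\ge 2)$-HST property, has weight at least $2w(e) > w(e)$. Since $e$ is major, no such edge can lie on the path from $s$ to $e$, so the downward portion must be empty; i.e., $\ell = v_e^\top$. Thus $s$ lies in the subtree rooted at some sibling edge $e''$ of $e$ under $v_e^\top$, and the entire path is the upward walk from $s$ to $v_e^\top$, whose top-most edge is $e''$. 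Because $e$ is major, $w(e'') \le w(e)$, and using the $(\ge 2)$-HST property going up, the edges on this path have weights at most $w(e''), w(e'')/2, w(e'')/4, \ldots$, summing to at most $2w(e'') \le 2w(e)$.

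Combining both cases yields a moving cost of at most $2w(e)$, as claimed. I expect the only mildly delicate step to be the use of majority in the second case: one needs to be careful that any edge on the descending portion of the path from $\ell$ to $v_e^\top$ is a strict ancestor of $e$, so that its weight is at least $2w(e)$, not just $w(e)$, contradicting the major property.
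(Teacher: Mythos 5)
Your proof is correct and takes essentially the same approach as the paper's: the major property forces the path from the server to $e$ to consist only of an upward walk (the paper phrases this as "the downward path must be length $0$, otherwise it would contain $e$'s parent edge"), and the $(\ge 2)$-HST property then bounds this upward walk by a geometric series summing to at most twice its top edge, which by majority weighs at most $w(e)$. Your version just spells out the two cases (server internal or external to $T_e$) more explicitly, but the argument is the same.
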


\begin{proof}
Since we are in a $\left(\ge2\right)$-HST, the path from any node
to another node consists of (at most) one upwards path followed by
one downwards path. Since $e$ is a major edge, each edge on the path
from the server to $e$ must have weight at most $w(e)$. Thus, the
downwards path must be of length $0$ -- otherwise, it would contain
$e$'s parent edge, which has weight larger than $w(e)$. Consider
that the weight of the upwards path is at most $2w(e)$.
\end{proof}
\begin{lem}
\label{lem:OSD_BoundedBuying}$\alg^{B}\le(2D+5)\cdot\sum_{i=1}^{k}w(e_{i})$
\end{lem}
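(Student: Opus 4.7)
The plan is to decompose $\alg^{B}$ across the $k$ services of the algorithm and bound the cost of each service $i$ by $O(D)\cdot w(e_i)$. Inspecting \texttt{UponCritical}, each service consists of exactly three movements: (i) moving the server from its current location to the endpoint $u_1$ of the major edge $e_i$; (ii) a DFS traversal of the tree $\T_i$ returned by \texttt{MultilevelAggregationExplore}$(Y_{e_i})$, starting and finishing at $u_1$; (iii) crossing $e_i$ from $u_1$ to $u_2$. The first costs at most $2w(e_i)$ by Proposition \ref{prop:OSD_MajorEdgeNotTooFar} (since $e_i$ is major), the second costs exactly $2w(\T_i)$ because a DFS that returns to its starting point traverses every edge of $\T_i$ twice, and the third costs $w(e_i)$. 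Thus the $i$-th service costs at most $3w(e_i) + 2w(\T_i)$.

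It remains to bound $\sum_{i=1}^{k}w(\T_i)$ by $O(D)\cdot\sum_{i=1}^{k}w(e_i)$. Because \texttt{MultilevelAggregationExplore}$(Y_{e_i})$ treats $e_i$ as the root of the $\left(\ge 2\right)$-HST $Y_{e_i}$ (of depth at most $D$), this reduces to the per-service weight bound of the multilevel aggregation algorithm. The plan is to reuse the preflow construction of Lemma \ref{lem:MAD_AlgChiGood} globally across all MAD invocations: let $V$ be the collection of all \texttt{Explore} calls made inside any \texttt{MultilevelAggregationExplore}$(Y_{e_i})$, add a source edge from $s$ to each MAD root call $\Explore_{\tau}(e_i)$ with weight $D\cdot w(e_i)$, and add an edge of weight $h_{e}\cdot x$ for every invest action in $V$, where $h_e\le D$ is the height of $e$. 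By the same local argument as in Lemma \ref{lem:MAD_AlgChiGood}, every $v=\Explore_{\tau}(e)\in V$ has $\chi_v\ge w(e)$ (incoming weight $\ge h_e\cdot w(e)$ via either the source edge or invest edges that together fill $c_e$, outgoing weight $\le (h_e-1)\cdot w(e)$). Since $w(\T_i)=\sum_{v\in V_i}w(e_v)$ for $V_i\subseteq V$ the calls of service $i$, Proposition \ref{prop:Flow_SubsetLBSource} yields
\[
\sum_{i=1}^{k}w(\T_i)=\sum_{v\in V}w(e_v)\le\sum_{v\in V}\chi_v\le\omega_Z=D\cdot\sum_{i=1}^{k}w(e_i).
\]

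Combining these bounds gives
\[
\alg^{B}\le\sum_{i=1}^{k}\bigl(3w(e_i)+2w(\T_i)\bigr)\le(2D+3)\sum_{i=1}^{k}w(e_i)\le(2D+5)\sum_{i=1}^{k}w(e_i),
\]
proving the lemma. The main subtlety is that the counters $c_e$ persist across different MAD invocations triggered by different OSD services, so a single invest edge in the preflow may connect \texttt{Explore} calls of distinct services; what saves the argument is that the inequality $\chi_v\ge w(e)$ is purely local in $v$ and depends only on the total amount invested in $c_e$ before it triggered $\Explore(e)$, which is always exactly $w(e)$ regardless of how it is split across services.
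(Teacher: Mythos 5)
Your proposal is correct and follows the same strategy as the paper's proof: decompose each service into the movement to the major edge, the DFS traversal of the transmitted tree, and the crossing of the major edge, and bound the DFS cost via a global preflow patterned on Lemma \ref{lem:MAD_AlgChiGood}. The paper only sketches the preflow step (stating a $(D+1)\sum_i w(e_i)$ bound without detail, giving $2(D+1)+3 = 2D+5$), whereas you spell it out and in fact obtain the slightly sharper $(2D+3)$. One small imprecision: when $Y_{e_i}=R_{e_i}$, the promoted root $e_i$ may invest in siblings $e'$ whose height $h_{e'}$ in $T$ can equal or exceed $h_{e_i}$, so the blanket claim ``outgoing weight $\le (h_e-1)\cdot w(e)$'' does not literally hold with $h_e$ the paper's height of $T_e$; the argument still goes through because the source edge carries $D\cdot w(e_i)$ and the depth of $R_{e_i}$ as the explored $\left(\ge 2\right)$-HST is at most $D$ whenever $e_i$ has any siblings (the $\text{depth}=1$ case being degenerate). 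Your observation that the counters are shared across OSD services and that this is harmless because the excess bound is local is exactly the right point to flag.
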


\begin{proof}
Each service triggered by the saturation of a major edge $e$ causes
a multilevel aggregation service of either $T_{e}$ or $R_{e}$, plus
additional server movements required to reach and traverse $e$. The
additional movements are of at most $3w(e)$ (using Proposition \ref{prop:OSD_MajorEdgeNotTooFar}),
and thus $3\cdot\sum_{i=1}^{k}w(e_{i})$ over all services.

Using a very similar proof to the case for multilevel aggregation,
we can show that the sum of the weight of the trees to ``transmit''
yielded by the calls to the multilevel aggregation algorithm are at
most $(D+1)\sum_{i=1}^{k}w(e_{i})$. Since traversing a tree by DFS
is twice the cost of transmission, the buying cost incurred by the
OSD algorithm for that step is at most $2(D+1)\sum_{i=1}^{k}w(e_{i})$.

Overall, the buying cost of the algorithm is at most $(2D+5)\sum_{i=1}^{k}w(e_{i})$.
\end{proof}
\begin{proof}[of Lemma \ref{lem:OSD_ALG}]
 The lemma results directly from Lemmas \ref{lem:OSD_DelayBoundedByBuying}
and \ref{lem:OSD_BoundedBuying}.
\end{proof}

\subsubsection{Lower Bounding $\protect\opt$}
\begin{defn}[$\mathbb{I}_{i}$]
We define the indicator variable $\mathbb{I}_{i}$ for $i\in[k]$
to be $1$ if the optimum's server was on the same side of $e_{i}$
at $t_{i}$ as the algorithm's server (before the call to $\UponCritical$),
and $0$ otherwise.
\end{defn}

The following lemma provides a lower bound on the cost of the optimum.
\begin{lem}
\label{lem:OSD_OPT}$\sum_{i=1}^{k}\mathbb{I}_{i}\cdot w(e_{i})\le3\cdot\opt^{B}+3D\cdot\opt^{D}$
\end{lem}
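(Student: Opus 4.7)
The plan is to mirror the preflow construction of Lemma \ref{lem:MAD_OPT}, adapted to the OSD setting and restricted to services with $\mathbb{I}_{i}=1$. I introduce charging nodes $\mu=(e,[\tau_{1},\tau_{2}))$ for each edge $e\in T$ and each pair of consecutive times at which the counter $c_{e}$ is reset by the OSD algorithm (equivalently, by its internal $\texttt{MultilevelAggregationExplore}$ calls), allowing the endpoints $\pm\infty$. For each such node I define $c_{b}(\mu)$ as the total cost that OPT incurs traversing $e$ during $[\tau_{1},\tau_{2})$, $c_{d}(\mu)$ as the total delay OPT incurs on requests $q$ with $v_{q}\in T_{e}$ and $r_{q}\in[\tau_{1},\tau_{2})$, and $c(\mu)=c_{b}(\mu)+c_{d}(\mu)$. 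The set $\bar{E}$ of possible investment edges and the meaning of ``$\mu_{1}$ invested $x$ in $\mu_{2}$'' carry over verbatim from the multilevel-aggregation analysis, because the OSD algorithm drives the same counter mechanics.

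I then build the preflow $Z=(G,s,\alpha)$ by running Procedure \ref{proc:MAD_PreflowBuilder} with a single change: the ``root'' charging nodes at which colors originate are precisely the nodes $\mu_{i}=(e_{i},[\tau,t_{i}))$ for which $\mathbb{I}_{i}=1$, instead of all root-edge charging nodes. The propagation rules, the $\Special$ color, the definition of $\lambda$ (now using saturation of $Y_{e^{\prime}}$ in place of $T_{e^{\prime}}$), and the assignment $\alpha((s,\mu))=c(\mu)$ all transfer unchanged. Validity of the preflow at non-root charging nodes reproduces Lemma \ref{lem:MAD_ValidPreflow} almost verbatim, since the argument is local and only depends on the counter structure and the definition of $\lambda$.

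The heart of the proof is the analogue of Lemma \ref{lem:MAD_RootExcesses}: $\chi_{\mu_{i}}\ge w(e_{i})$ for every service $i$ with $\mathbb{I}_{i}=1$. Let $\mu_{i}=(e_{i},[\tau_{1},t_{i}))$ and let $Q$ be the critical set triggering the $i$-th service. Using Proposition \ref{prop:OSD_CriticalTreeOppositeServer} and $\mathbb{I}_{i}=1$, OPT's server lies on the same side of $e_{i}$ as ALG's server at time $t_{i}$, while $Q$ saturates $Y_{e_{i}}$ on the opposite side. If $\Color[\mu_{i}]\neq\None$ the excess is obtained just as in cases 1--2 of Lemma \ref{lem:MAD_ValidPreflow}. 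Otherwise $\Color[\mu_{i}]=\None$, so either $\tau_{1}=-\infty$ or $\lambda_{\mu_{i}}>t_{i}$; in the first subcase every request in $Q$ is released in $[\tau_{1},t_{i})$, while in the second I subtract the set $\hat{Q}\subseteq Q$ of requests already pending immediately after the previous reset at $\tau_{1}$ and use $\lambda_{\mu_{i}}>t_{i}$ to bound $d_{\hat{Q}}(t_{i})\le w(Y_{e_{i}}^{\hat{Q}})-w(e_{i})$, exactly as in case 3 of Lemma \ref{lem:MAD_ValidPreflow}. Either way, a subset $Q^{\prime}$ of requests released in $[\tau_{1},t_{i})$ satisfies $d_{Q^{\prime}}(t_{i})\ge w(e_{i})$. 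Because OPT is on the wrong side of $e_{i}$ at $t_{i}$, either OPT crossed $e_{i}$ during $[\tau_{1},t_{i})$ and $c_{b}(\mu_{i})\ge w(e_{i})$, or OPT pays at least $w(e_{i})$ in delay on $Q^{\prime}$ before $t_{i}$ and $c_{d}(\mu_{i})\ge w(e_{i})$.

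Combining Proposition \ref{prop:Flow_SubsetLBSource} with the root-excess claim gives
\[
\sum_{i=1}^{k}\mathbb{I}_{i}\cdot w(e_{i})\le\sum_{i:\mathbb{I}_{i}=1}\chi_{\mu_{i}}\le\omega_{Z}=\sum_{\mu\in M}c(\mu).
\]
Each OPT traversal of an edge $e$ contributes to $c_{b}$ in at most one charging node for $e$ (since the intervals on $e$ partition time), and each OPT delay paid on a request $q$ contributes to $c_{d}$ in at most $D+1$ charging nodes (one per ancestor edge of $v_{q}$ whose interval contains $r_{q}$). A constant amount of slack has to be absorbed from boundary-time conventions and from the $Y_{e}$-versus-$T_{e}$ distinction that arises when the exploration is on $R_{e_{i}}$ rather than $T_{e_{i}}$, yielding $\sum_{\mu\in M}c(\mu)\le 3\,\opt^{B}+3D\,\opt^{D}$ as required. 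The main obstacle is the $\None$-color subcase of the root-excess claim: extracting a fresh set of requests $Q^{\prime}$ released within $[\tau_{1},t_{i})$ whose total delay at $t_{i}$ is at least $w(e_{i})$, and simultaneously ensuring this delay is genuinely new rather than already paid by OPT before $\tau_{1}$; the relative-subtree case $Y_{e_{i}}=R_{e_{i}}$ is expected to be mostly notational, with the combinatorial heart of the proof being identical to that of Lemma \ref{lem:MAD_OPT}.
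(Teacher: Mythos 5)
Your high-level plan matches the paper's: build a preflow on per-edge, per-interval charging nodes, seed colors only at root services with $\mathbb{I}_{i}=1$, prove a root-excess bound, and then count multiplicities. The root-excess argument in your $\Color[\mu_{i}]=\None$ subcase is essentially the one the paper uses. But there are two substantive gaps, both concentrated exactly where you wrote that the $R_{e}$ case ``is expected to be mostly notational.'' It is not.

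First, the paper does not use a single family of charging nodes indexed by counter resets. It uses \emph{three disjoint families} per edge: standard root charging nodes (intervals between consecutive services that saturate $T_{e}$ with $e$ major), relative root charging nodes (same but for $R_{e}$), and normal charging nodes (intervals between counter resets). These have genuinely different interval structures --- a top-level call to $\texttt{MultilevelAggregationExplore}(Y_{e})$ treats $e$ as the root of that exploration and does not reset $c_{e}$, so root services do not align with the counter-reset intervals you are using. This multiplicity is also what produces the constants $3$ and $3D$: an OPT edge traversal is counted in up to three charging nodes (one of each type), and an OPT delay on a request is counted in up to $2D$ SRCNs/NCNs plus up to $D$ RRCNs. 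Your claim of ``at most one charging node'' for edge traversal and ``$D+1$'' for delay would give a tighter bound, and the extra factor is not boundary slack; if you collapse the three families, the accounting no longer works out and the root-excess lemma can fail because the root service at $t_{i}$ need not correspond to any counter-reset interval of $e_{i}$.

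Second, and more seriously, you define $c_{d}(\mu)$ uniformly as OPT's delay on requests with $v_{q}\in T_{e}$. For an RRCN the relevant requests live in $R_{e}$, not $T_{e}$, and --- this is the key point --- the paper only charges the delay in an RRCN $(e,[\tau_{1},\tau_{2}))$ \emph{if OPT's server stayed internal to $T_{e}$ throughout $[\tau_{1},\tau_{2})$}. Without that condition, the upper bound $\sum_{\mu}c(\mu)\le 3\opt^{B}+3D\opt^{D}$ breaks (OPT's delay on a request in a sibling subtree could otherwise be charged to unboundedly many RRCNs up the tree). The matching lower bound is where $\mathbb{I}_{i}=1$ is actually used: for an RRCN with $\Color[\mu]=\None$, $\mathbb{I}_{i}=1$ places OPT's server internal to $T_{e_{i}}$ at $t_{i}$, and $\Color[\mu]\neq\Special$ means OPT never traversed $e_{i}$ in $[\tau_{1},t_{i})$, so OPT was trapped inside $T_{e_{i}}$ for the whole interval and really did pay the delay on $Q^{\prime}\subseteq R_{e_{i}}$. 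This is also what makes the analogue of Proposition \ref{prop:MAD_NotBought} go through: it must conclude that OPT's server never \emph{entered $Y_{\mu}$} (not merely that OPT did not traverse $e$), and the base case of that induction is exactly the $\mathbb{I}_{i}=1$ condition at the RCN. Your proposal neither states the conditional $c_{d}$ definition for RRCNs nor the strengthened not-bought proposition, so the $R_{e}$ branch of the root-excess lemma is not actually proved.
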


\subparagraph*{Charging nodes and incurred costs.}

We first define the charging nodes for the analysis of this algorithm.
For every edge $e$, there exist three types of charging nodes:
\begin{enumerate}
\item \emph{Standard root charging nodes} (SRCN), which are nodes of the
form $(e,[\tau_{1},\tau_{2}))$ where $\tau_{1}$ and $\tau_{2}$
are two subsequent times in which $\Explore(e)$ is called due to
$e$ being a major edge and $T_{e}$ being saturated, triggering service.
\item \emph{Relative root charging nodes} (RRCN), which are nodes of the
form $(e,[\tau_{1},\tau_{2}))$ where $\tau_{1}$ and $\tau_{2}$
are two subsequent times in which $\Explore(e)$ is called due to
$e$ being a major edge and $R_{e}$ being saturated, triggering service.
\item \emph{Normal charging nodes} (NCN), which are nodes of the form $(e,[\tau_{1},\tau_{2}))$
where $\tau_{1}$ and $\tau_{2}$ are two subsequent times in which
$\Explore(e)$ is called due to the counter $c_{e}$\emph{ }reaching
$w(e)$.
\end{enumerate}
Nodes of types 1 and 2 correspond to root charging nodes in the multilevel
aggregation case, while nodes of type 3 correspond to non-root nodes. 

For a charging node $\mu=(e,[\tau_{1},\tau_{2}))$ we say that:
\begin{itemize}
\item $\opt$ incurs a \emph{buying cost }of $w(e)$ in $\mu$ if $\opt$
traversed the edge $e$ during $[\tau_{1},\tau_{2})$. We denote the
buying cost that $\opt$ incurs in $\mu$ by $c_{b}(\mu)$.
\item If $\mu$ is an SRCN or an NCN, $\opt$ incurs a \emph{delay cost}
in $\mu$ equal to the delay incurred by $\opt$ on the set of requests
$Q=\{q\in T_{e}|r_{q}\in[\tau_{1},\tau_{2})\}$
\item If $\mu$ is an RRCN, $\opt$ incurs a\emph{ delay cost} in $\mu$
equal to the delay incurred by $\opt$ on the set of requests $Q=\{q\in R_{e}|r_{q}\in[\tau_{1},\tau_{2})\}$
\uline{if} $\opt$'s server remained internal to $T_{e}$ during
$[\tau_{1},\tau_{2})$.
\end{itemize}
We denote the total delay cost incurred by $\opt$ in $\mu$ be $c_{d}(\mu)$.
We denote the total cost that $\opt$ incurs in $\mu$ by $c(\mu)=c_{b}(\mu)+c_{d}(\mu)$.
\begin{lem}
\label{lem:OSD_ChargeLBOpt}$\sum_{\mu\in M}c(\mu)\ge3\cdot\opt^{B}+3D\cdot\opt^{D}$
\end{lem}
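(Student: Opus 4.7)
The plan is to prove this by the same case-analysis used in Lemma \ref{lem:MAD_ChargeLBOpt} and Proposition \ref{prop:MAD_ChiSBoundsOpt}, namely by accounting separately for $\sum_{\mu\in M} c_b(\mu)$ and $\sum_{\mu\in M} c_d(\mu)$ and matching each one against $\opt^B$ and $\opt^D$ respectively. The factors $3$ and $3D$ (as opposed to $1$ and $D$ in Proposition \ref{prop:MAD_ChiSBoundsOpt}) come from the fact that each edge $e$ now carries three overlapping families of charging intervals, corresponding to the three types of charging nodes (SRCN, RRCN, NCN), so each action of $\opt$ can be counted up to three times rather than once.

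For the buying-cost term, I would fix an edge $e$ and observe that the SRCN-intervals on $e$, the RRCN-intervals on $e$, and the NCN-intervals on $e$ each form (separately) a partition of the timeline, since each of the three types is indexed by consecutive $\Explore(e)$-calls of that specific type. A single traversal of $e$ by $\opt$ at a time $t$ therefore lies in at most one interval of each type, contributing $w(e)$ at most three times to $\sum_\mu c_b(\mu)$. Summing over all traversals $\opt$ makes on all edges gives $\sum_\mu c_b(\mu) \le 3\cdot\opt^B$.

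For the delay-cost term, I would fix a request $q$ on a leaf $v_q$ and identify the charging nodes that may attribute part of $q$'s delay. By the definitions, an SRCN or NCN node $(e,[\tau_1,\tau_2))$ can only contribute delay from $q$ if $v_q \in T_e$ and $r_q\in[\tau_1,\tau_2)$, forcing $e$ to be an ancestor edge of $v_q$. For an RRCN node $(e,[\tau_1,\tau_2))$, the requirement $v_q\in R_e$ similarly constrains $e$ to lie on (or adjacent to) the root-to-$v_q$ path, and so reaches at most one extra edge per level. In all cases only $O(D)$ edges are ever involved, and for each such edge $r_q$ lies in at most one interval of each of the three types. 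Hence the delay of $q$ is counted at most $3D$ times in total, giving $\sum_\mu c_d(\mu)\le 3D\cdot\opt^D$. The RRCN side-condition (that $\opt$'s server stayed internal to $T_e$) only eliminates charges, so it preserves this inequality.

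Combining the two bounds yields $\sum_{\mu\in M}c(\mu)\le 3\cdot\opt^B+3D\cdot\opt^D$, matching the stated inequality relating the total charges to $\opt$'s costs and providing the input needed for Lemma \ref{lem:OSD_OPT} via Proposition \ref{prop:Flow_SubsetLBSource}. The main obstacle I anticipate is careful bookkeeping in the RRCN case: I must verify that when $v_q$ lies internal to $R_e$ (i.e.\ in $T_{e'}$ for a sibling $e'$ of $e$ with $w(e')<w(e)$), the corresponding RRCN charge is genuinely already paid for by an ancestor edge of $v_q$ in the three-fold count, so that the factor $3D$ (and not something larger) suffices across all three charging-node families.
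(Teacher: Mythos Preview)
Your buying-cost argument and your SRCN/NCN delay argument are fine and match the paper. The gap is in the RRCN delay count.

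You claim that the condition $v_q\in R_e$ forces $e$ to lie ``on (or adjacent to) the root-to-$v_q$ path'', giving at most one extra edge per level. This is not true. Unpacking the definition, $v_q\in R_e$ means that $v_q\in T_{e'}$ for some sibling $e'$ of $e$ with $w(e')<w(e)$. So what is constrained to the root-to-$v_q$ path is $e'$, not $e$; and a fixed ancestor $e'$ of $v_q$ may have arbitrarily many heavier siblings $e$. Thus the set of edges $e$ with $v_q\in R_e$ is not $O(D)$ in general, and your fallback idea of ``paying'' for each RRCN charge out of the three-fold count on an ancestor of $v_q$ cannot work: there may be far more than $D$ such RRCNs competing for those $D$ slots. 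Your remark that the side-condition ``only eliminates charges'' is correct but beside the point---without it you do not have the bound to begin with.

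The paper's fix is exactly to use that side-condition. An RRCN $(e,[\tau_1,\tau_2))$ only incurs delay from $q$ if $\opt$'s server stays internal to $T_e$ for the entire interval; since $r_q\in[\tau_1,\tau_2)$, in particular $\opt$'s server is internal to $T_e$ at time $r_q$. But the edges $e$ with this property are precisely the ancestors of $\opt$'s server's position at time $r_q$, of which there are at most $D$. For each such $e$, at most one RRCN interval contains $r_q$, giving at most $D$ RRCN charges. Together with the $D$ SRCNs and $D$ NCNs along the root-to-$v_q$ path, this yields the factor $3D$.
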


\begin{proof}
Observe that any edge traversal by the optimum's server can be counted
in three charging nodes relating to that edge (one SRCN, one RRCN
and one NCN). 

Any delay cost incurred by the optimum due to a request $q$ can be
counted in NCNs and SRCNs along the depth of the tree, yielding $2D$
such charging nodes. In addition, the delay of $q$ can be counted
in at most $D$ RRCNs along the path from the root to the location
of the optimum's server at time $r_{q}$. 

These observations yield the lemma.
\end{proof}
Denote by $M$ the set of all charging nodes. To prove Lemma \ref{lem:MAD_OPT},
we show a preflow on the set of vertices $M\cup\{s\}$, where $s$
is the source node. 

The following definition of charging node investment is nearly identical
to the definition in the multilevel aggregation case.
\begin{defn}[Investing]
For a charging node $\mu_{1}=(e_{1},[\tau_{1}^{1},\tau_{2}^{1}))$
and an NCN $\mu_{2}=(e_{2},[\tau_{1}^{2},\tau_{2}^{2}))$, we say
that $\mu_{1}$ \emph{invested $x$ in $\mu_{2}$ }if $\Explore_{\tau_{1}^{1}}(e_{1})$
raised the counter $c_{e_{2}}$ by $x$ during the counter phase $[\tau_{1}^{2},\tau_{2}^{2})$
(not including recursive calls made by $\Explore_{\tau_{1}^{1}}(e_{1})$).
\end{defn}

Procedure \ref{proc:OSD_PreflowBuilder} is used to build the preflow.
As in the previous analyses, we define $\bar{E}$ to be the set of
possible edges between nodes of $M$ to themselves. As before, an
edge $\sigma$ exists in $\bar{E}$ from a charging node $\mu$ to
a charging node $\mu^{\prime}$ if $\mu$ invested in $\mu^{\prime}$,
and $\alpha(\sigma)$ is set to be the total invested amount.

\LinesNumbered \RestyleAlgo{boxruled}\renewcommand{\algorithmcfname}{Procedure}\DontPrintSemicolon
\begin{algorithm}[tb]
\caption{\label{proc:OSD_PreflowBuilder}PreflowBuilder - Online Service with
Delay}

\SetKwProg{Fn}{Function}{}{end}
\SetKwFunction{PreflowBuilder}{PreflowBuilder}
\SetKwFunction{SetColor}{SetColor}
\SetKw{Break}{break}

\textbf{Initialization.}

Let the set of vertices of $G$ be $M\cup\{s\}$, and initialize the
edge set to be $E=\emptyset$.

Initialize dictionary $\texttt{Color}[w]=\None$ for every $\mu\in M$.

\ForEach{$\mu=(e,[\tau_{1},\tau_{2}))\in M$ such that $\opt$ traversed
edge $e$ during $[\tau_{1},\tau_{2})$}{

set $\Color[\mu]\leftarrow\Special$

}

\ForEach{$\mu\in M$ such that $c(\mu)>0$}{

add a new edge $\sigma=(s,\mu)$ to $E$, and set $\alpha(\sigma)=c(\mu)$

}

\;

\Fn{\PreflowBuilder{}}{

\For{$i$ from $1$ to $k$}{

let $\mu\leftarrow(e_{i},[t_{i-1},t_{i}))$ be the RCN of the $i$'th
service.

\lIf{$\mathbb{I}_{i}=1$}{\SetColor{$\mu$,$\mu$}}

}

\For{$j$ from $1$ to $D$}{

\ForEach{NCN $\mu=(e,[\tau_{1},\tau_{2}))\in M$ such that $e$ is
of depth $j$}{

\ForEach{edge $\sigma\in E_{\mu}^{-}$ incoming to a node $\mu^{\prime}$}{

\lIf{\SetColor{$\mu,\texttt{Color}[\mu^{\prime}]$}$\neq\None$}{\Break}

}

}

}

}

\;

\lFn(\tcp*[h]{As in Procedure \ref{proc:FLDeadline_PreflowBuilder}}){\SetColor{$\mu$,$\mu^{\star}$}}{}
\end{algorithm}

We use the following definition for ease.
\begin{defn}[$Y_{\mu}$]
 For a charging node $\mu=(e,[\tau_{1},\tau_{2}))$, we define $Y_{\mu}$
to be $R_{e}$ if $\mu$ is a RRCN. Otherwise, we define $Y_{\mu}$
to be $T_{e}$.
\end{defn}

\begin{obs}
\label{obs:OSD_TreeContainment}If a node $\mu=(e,[\tau_{1},\tau_{2}))$
invested in a node $\mu^{\prime}=(e^{\prime},[\tau_{1},\tau_{2}))$,
then $Y_{\mu^{\prime}}\subseteq Y_{\mu}$.
\end{obs}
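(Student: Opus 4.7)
The plan is to prove the observation by case analysis on the type of the charging node $\mu$, using the fact that the definition of investing forces $\mu^{\prime}$ to be an NCN (so $Y_{\mu^{\prime}} = T_{e^{\prime}}$) and that the invest happens inside a single call $\Explore_{\tau_{1}}(e)$ that raises $c_{e^{\prime}}$. The overarching idea is that whenever the algorithm invokes \texttt{MultilevelAggregationExplore}$(Y_\mu)$, the entire nested DFS of explorations and counter-raises takes place inside $Y_\mu$, so any edge $e^{\prime}$ whose counter gets charged satisfies $T_{e^{\prime}}\subseteq Y_\mu$.

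First I would dispose of the easy cases. If $\mu$ is an SRCN, then $Y_{\mu}=T_{e}$, and $\Explore_{\tau_{1}}(e)$ was called as the top-level multilevel-aggregation exploration on $T_{e}$; by Algorithm \ref{alg:MAD_Algorithm}, the edge $e^{\prime}$ charged by an $\Invest$ call lies in the live cut under $e$, which is a subset of $T_{e}$, so $T_{e^{\prime}}\subseteq T_{e}=Y_{\mu}$. If $\mu$ is an NCN on $e$, the same live-cut argument applies: $e^{\prime}$ is in the live cut under $e$ at the moment of the $\Invest$ call, hence $e^{\prime}$ is a descendant of $e$ in $T$, and so $T_{e^{\prime}}\subseteq T_{e}=Y_{\mu}$.

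The only slightly delicate case is when $\mu$ is an RRCN, so $Y_{\mu}=R_{e}$. Here the exploration was launched as \texttt{MultilevelAggregationExplore}$(R_{e})$, where $e$ has been promoted to be the root edge and the subtrees $T_{e^{\prime\prime}}$ for $e^{\prime\prime}\in\mathcal{C}(e)$ hang beneath it. Thus the live cut viewed by the multilevel aggregation routine always consists of edges inside $R_{e}$. Since $\mu^{\prime}$ is an NCN, $e^{\prime}\neq e$, so $e^{\prime}$ must lie in some $T_{e^{\prime\prime}}$ with $e^{\prime\prime}\in\mathcal{C}(e)$, and therefore $T_{e^{\prime}}\subseteq T_{e^{\prime\prime}}\subseteq R_{e}=Y_{\mu}$.

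The main thing to be careful about is the RRCN case: one must confirm that the promotion of $e$ inside $R_{e}$ does not allow the exploration to escape $R_{e}$ through $e$'s parent, and that investing is defined with respect to the \emph{single} call $\Explore_{\tau_{1}}(e)$ (not including deeper recursive calls). Both of these are immediate from the algorithm's definition and the definition of investing, so the case analysis concludes the proof with no further work.
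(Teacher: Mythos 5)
Your proof is correct. The paper states this as an observation with no explicit proof, and your case split (SRCN/NCN give $Y_{\mu}=T_{e}$ and the live cut stays in $T_{e}$; RRCN gives $Y_{\mu}=R_{e}$ with $e$ promoted so the live cut stays in $\bigcup_{e^{\prime\prime}\in\mathcal{C}(e)}T_{e^{\prime\prime}}$) is exactly the argument the paper is implicitly relying on.
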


\begin{prop}[analogue of Proposition \ref{prop:MAD_NotBought}]
\label{prop:OSD_NotBought}Let $\mu=(e,[\tau_{1},\tau_{2}))$ such
that $\texttt{Color}[\mu]=\mu^{\star}$ for some charging node $\mu^{\star}=(e^{\star},[\tau_{1}^{\star},\tau_{2}^{\star}))$.
Then $\opt$ did not enter $Y_{\mu}$ during $[\tau_{1},\tau_{2}^{\star})$.
\end{prop}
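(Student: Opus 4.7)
My plan is to mirror the template of Proposition~\ref{prop:FLDeadline_NotBought}, but since in OSD the root charging nodes (SRCNs and RRCNs) can sit at arbitrary depths in the tree, I will induct on the order in which Procedure~\ref{proc:OSD_PreflowBuilder} assigns a non-$\None$ color to $\mu$, rather than on the tree depth of $e$. The common starting point in both cases is that $\Color[\mu]\neq\Special$ forces $\opt$ not to traverse $e$ during $[\tau_{1},\tau_{2})$; otherwise the initialization loop would already have stamped $\mu$ with $\Special$.

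For the base case, $\mu$ is itself a root charging node and $\mu^{\star}=\mu$, so $\tau_{2}^{\star}=\tau_{2}$. The procedure colored $\mu$ only because $\mathbb{I}_{i}=1$, which means $\opt$'s server lies on the same side of $e$ as the algorithm's server just before the corresponding service time $t_{i}=\tau_{2}$; by Proposition~\ref{prop:OSD_CriticalTreeOppositeServer} this is the side opposite to $Y_{\mu}$. Since $\opt$ does not cross $e$ on $[\tau_{1},\tau_{2})$, its server remains on that opposite side throughout the interval, and therefore does not enter $Y_{\mu}$.

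For the inductive step, $\mu$ is an NCN whose color was copied from some already colored neighbor $\mu'=(e',[\tau_{1}',\tau_{2}'))$ via a preflow edge between them. Such an edge certifies that $\mu'$ invested in $\mu$, so $\Explore_{\tau_{1}'}(e')$ raised $c_{e}$ during the counter phase $[\tau_{1},\tau_{2})$; in particular $\tau_{1}'\in[\tau_{1},\tau_{2})$, and Observation~\ref{obs:OSD_TreeContainment} gives $Y_{\mu}\subseteq Y_{\mu'}$. Apply the induction hypothesis to $\mu'$, yielding that $\opt$ does not enter $Y_{\mu'}$ on $[\tau_{1}',\tau_{2}^{\star})$; by the containment, $\opt$ also does not enter $Y_{\mu}$ on that sub-interval. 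On the remaining sub-interval $[\tau_{1},\tau_{1}')\subseteq[\tau_{1},\tau_{2})$, $\opt$ does not cross $e$, and at time $\tau_{1}'$ its server is already outside $Y_{\mu}$; since $\mu$ is an NCN we have $Y_{\mu}=T_{e}$, and $e$ is the unique boundary edge separating internal-to-$T_{e}$ from external, so the $\opt$-server stays outside $Y_{\mu}$ throughout $[\tau_{1},\tau_{1}')$ as well. Concatenating yields the desired statement on $[\tau_{1},\tau_{2}^{\star})$.

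The delicate point will be making the implication ``does not enter $Y_{\mu'}$ implies does not enter $Y_{\mu}$'' rigorous, since $Y_{\mu}\subseteq Y_{\mu'}$ by itself does not forbid $\opt$ from moving within $Y_{\mu'}\setminus Y_{\mu}$ and thereby entering $Y_{\mu}$. The cleanest resolution is to read ``$\opt$ does not enter $Y_{\mu}$ during $I$'' as the stronger assertion that $\opt$'s server is not internal to $Y_{\mu}$ at any time in $I$; under this reading the containment gives exactly the needed transfer (since internal-to-$Y_{\mu}$ is a subset of internal-to-$Y_{\mu'}$), and the base case naturally produces the stronger statement from $\mathbb{I}_{i}=1$ combined with Proposition~\ref{prop:OSD_CriticalTreeOppositeServer}.
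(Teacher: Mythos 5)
Your proposal is correct and follows essentially the same argument as the paper: induct on the length of the color-propagation chain, establish the base case from $\mathbb{I}_i=1$ together with the algorithm's convention that the server sits on the opposite side of the major edge from $Y_{\mu^{\star}}$, and handle the inductive step by combining the hypothesis on $\mu'$ with the tree containment $Y_\mu\subseteq Y_{\mu'}$ (Observation~\ref{obs:OSD_TreeContainment}) and the fact that $\Color[\mu]\neq\Special$ forbids $\opt$ from traversing $e$ during $[\tau_1,\tau_2)$. The "delicate point" you flag at the end is already resolved by the paper's choice of phrasing ("not internal to $Y_\mu$"), which is exactly the reading you settle on.
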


\begin{proof}
Since $\texttt{Color}[\mu]=\mu^{\star}$, we must have that for the
RCN $\mu^{\star}$ we have that $\Color[\mu^{\star}]=\mu^{\star}$.
Thus, we have that $\mathbb{I}_{i}=1$ for $i$ such that $\tau_{2}^{\star}=t_{i}$,
and thus the optimum's server was on the same side of $e^{\star}$
as the algorithm's server before the service at $\tau_{2}^{\star}$.
Since we only consider critical trees on the other side of the major
edge, we have that the optimum's server was not internal to $Y_{\mu^{\star}}$
at time $\tau_{2}^{\star}$. Since $\Color[\mu^{\star}]\neq\Special$,
the optimum's server did not traverse $e^{\star}$ during $[\tau_{1}^{\star},\tau_{2}^{\star})$,
and thus was not internal to $Y_{\mu^{\star}}$ during $[\tau_{1}^{\star},\tau_{2}^{\star})$.

What follows is a similar inductive argument to that of Proposition
\ref{prop:FLDeadline_NotBought}. For the base case that $\mu=\mu^{\star}$,
we are done. We now prove the proposition by induction on the depth
of the propagation of the color $\mu^{\star}$ to $\mu$. Observe
that the color $\mu^{\star}$ was propagated to $\mu$ from another
charging node $\mu^{\prime}=(e^{\prime},[\tau_{1}^{\prime},\tau_{2}^{\prime}))$.
By induction, the optimum's server was not internal to $Y_{\mu^{\prime}}$
during $[\tau_{1}^{\prime},\tau_{2}^{\star})$. From Observation \ref{obs:OSD_TreeContainment},
we have that the optimum's server was not internal to $Y_{\mu}$ during
$[\tau_{1}^{\prime},\tau_{2}^{\star})$.

Since $\Color[\mu]\ne\Special$, the optimum's server did not traverse
$e$ during $[\tau_{1},\tau_{2})$. Since $\mu^{\prime}$ invested
in $\mu$, we have that $\tau_{1}^{\prime}\le\tau_{2}$, and thus
the optimum's server was not internal to $Y_{\mu}$ during $[\tau_{1},\tau_{2}^{\star})$
as required.
\end{proof}
\begin{obs}
\label{obs:OSD_w_e_is_enough}Corollary \ref{cor:MAD_w_e_is_enough}
from the multilevel aggregation case holds in this case as well. That
is, if $\sum_{\sigma\in E_{\mu}^{+}}\alpha(\sigma)\ge w(e)$, then
$\chi_{\mu}\ge0$.
\end{obs}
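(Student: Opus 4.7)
The plan is to mirror the two-step structure used for Corollary \ref{cor:MAD_w_e_is_enough} in the multilevel aggregation analysis: first establish that $\sum_{\sigma \in E_{\mu}^{-}} \alpha(\sigma) \le w(e)$ for every $\mu = (e,[\tau_1,\tau_2)) \in M$, and then conclude $\chi_\mu \ge 0$ directly from the definition of excess together with the hypothesis $\sum_{\sigma \in E_{\mu}^{+}} \alpha(\sigma) \ge w(e)$.

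For the bound on the outgoing weight, I would argue uniformly across the three types of charging nodes. By construction, every outgoing edge from $\mu$ corresponds to an investment that $\Explore_{\tau_1}(e)$ made in some NCN, and $\alpha(\sigma)$ is set to exactly that invested amount; hence $\sum_{\sigma \in E_\mu^-} \alpha(\sigma)$ is the total amount invested by $\Explore_{\tau_1}(e)$ in its descendants during the counter phase $[\tau_1,\tau_2)$. The key point is that in all three cases the call to $\Explore(e)$ that occurs at time $\tau_1$ is a call to the multilevel aggregation exploration from Algorithm \ref{alg:MAD_Algorithm}, which begins by setting $b_e \leftarrow w(e)$ and strictly adheres to this budget when making $\Invest$ calls. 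For an NCN this is just a recursive call after the counter $c_e$ was zeroed; for an SRCN or RRCN it is the top-level invocation $\texttt{MultilevelAggregationExplore}(Y_e)$, which calls $\Explore(e)$ with $e$ as the root edge of $Y_e$. In every case the investments made by $\Explore_{\tau_1}(e)$ sum to at most $w(e)$.

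Given this bound, the observation follows at once:
\[
\chi_\mu \;=\; \sum_{\sigma \in E_\mu^+}\alpha(\sigma) \;-\; \sum_{\sigma \in E_\mu^-}\alpha(\sigma) \;\ge\; w(e) - w(e) \;=\; 0.
\]

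The only real subtlety I anticipate is the RRCN case, where the ``tree'' being explored is $R_e$ rather than $T_e$ and $e$ is \emph{promoted} to be the parent of its siblings in $R_e$ before the multilevel aggregation subroutine is run. I would want to verify that this promotion does not change the budget analysis, but since $\Explore(e)$ always sets $b_e \leftarrow w(e)$ regardless of the particular subtree hanging below $e$, the bound carries through identically. This is the main (and essentially only) point that distinguishes the OSD case from the plain multilevel aggregation case; everything else is a direct transcription of the argument establishing Corollary \ref{cor:MAD_w_e_is_enough}.
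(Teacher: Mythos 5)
Your proof correctly arrives at the needed bound $\sum_{\sigma\in E_{\mu}^{-}}\alpha(\sigma)\le w(e)$ and the conclusion $\chi_{\mu}\ge0$ follows, but the argument you give is the \emph{dual} of the one the paper actually relies on, because the paper's stated definition of $\bar{E}$ (``an edge from $\mu_1$ to $\mu_2$ if $\mu_1$ invested in $\mu_2$'') is not the direction that the coloring procedure uses. In Procedure \ref{proc:OSD_PreflowBuilder} (and likewise in Procedures \ref{proc:FLDeadline_PreflowBuilder} and \ref{proc:MAD_PreflowBuilder}), colors are propagated root-down by following $E_{\mu}^{-}$ to an already-colored, shallower node $\mu'$; this only makes sense if edges run from investee to investor, i.e.\ $\sigma\colon\mu\to\mu'$ exists when $\mu'$ invested in $\mu$. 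The proposition preceding Corollary \ref{cor:MAD_w_e_is_enough} is explicit about this: it bounds the outgoing weight of $\mu=(e,[\tau_1,\tau_2))$ by the total that \emph{others} raised $c_e$ during the phase, which is at most $w(e)$ because the counter resets when it reaches $w(e)$. In the OSD setting this argument is also what handles SRCNs and RRCNs cleanly: they are never an investee (investments go only into NCNs), so they simply have no outgoing preflow edges.

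You instead bounded the outgoing weight by the budget $b_e$ that $\Explore_{\tau_1}(e)$ spends, treating outgoing edges as the investments $\mu$ \emph{made} in its descendants. This matches the literal wording of the $\bar{E}$ definition, and it does yield the same numerical bound $w(e)$ (the two quantities are both capped at $w(e)$ by design, one by the budget and one by the counter capacity), so the observation is true under your reading as well. What this buys you is nothing extra here, but it is worth being aware that the rest of the preflow analysis --- in particular the cases in Lemma \ref{lem:OSD_ValidPreflow} that interpret $\sum_{\sigma\in E_{\mu}^{+}}\alpha(\sigma)$ as the amount $\mu$ invested elsewhere --- is written for the reversed direction, so sticking with your reading would require re-deriving those cases symmetrically. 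Also, a minor omission: for an NCN with $\tau_1=-\infty$ there is no call $\Explore_{\tau_1}(e)$, so your budget argument does not literally apply; under your reading that node has no outgoing edges and the bound is $0$, so this is harmless but should be stated.
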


\begin{lem}
\label{lem:OSD_ValidPreflow}The preflow defined by Procedure \ref{proc:OSD_PreflowBuilder}
is valid.
\end{lem}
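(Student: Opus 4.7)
The plan is to follow closely the three-case analysis from the proof of Lemma \ref{lem:MAD_ValidPreflow}, adapted to accommodate the three flavors of charging nodes (SRCN, RRCN, NCN) in the OSD setting. I would case-split on the value of $\Color[\mu]$ for $\mu = (e,[\tau_1,\tau_2))$.

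Case 1 is $\Color[\mu]=\Special$. Here $\opt$ traversed $e$ during $[\tau_1,\tau_2)$, so $c_b(\mu)\ge w(e)$, and the edge from $s$ to $\mu$ created in the initialization has weight $\alpha(\sigma)=c(\mu)\ge w(e)$; Observation \ref{obs:OSD_w_e_is_enough} then yields $\chi_\mu\ge0$. Case 2 is $\Color[\mu]=\mu^\star$ for some charging node $\mu^\star$. For an NCN, the argument is identical to Case 2 of Lemma \ref{lem:MAD_ValidPreflow}: the invocation $\Explore_{\tau_1}(e)$ has raised counters by exactly $w(e)$ (a consequence of the fact that $\SetColor$ assigned a color, forcing $\lambda_\mu<\infty$), so the outgoing edges from $\mu$ in $\bar{E}$ sum to $w(e)$; the call to $\SetColor$ that assigned $\mu^\star$ to $\mu$ simultaneously added all $\bar{E}$-edges incoming to $\mu$ into $E$, so $\sum_{\sigma\in E^+_\mu}\alpha(\sigma)\ge w(e)$, and Observation \ref{obs:OSD_w_e_is_enough} applies. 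For an RCN (SRCN or RRCN) the only way to have $\Color[\mu]=\mu^\star$ is $\mu=\mu^\star$ through the first loop of Procedure \ref{proc:OSD_PreflowBuilder}, which requires $\mathbb{I}_i=1$; the triggering service then calls $\Explore(e)$ with budget $w(e)$, and the same accounting applies.

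Case 3, where $\Color[\mu]=\None$, is the main technical step and parallels Case 3 of Lemma \ref{lem:MAD_ValidPreflow}. If $\mu$ has no outgoing edge in $E$, $\chi_\mu\ge0$ is trivial, so pick an outgoing edge from $\mu$ to a node $\mu'=(e',[\tau_1',\tau_2'))$ with $\Color[\mu']=\mu^\star=(e^\star,[\tau_1^\star,\tau_2^\star))$. Combining Proposition \ref{prop:OSD_NotBought} with $\Color[\mu]\ne\Special$ gives that $\opt$ did not enter $Y_\mu$ during $[\tau_1,\tau_2^\star)$. I then reproduce the critical-set argument used in Lemma \ref{lem:MAD_ValidPreflow}, replacing ``$T_e$'' by ``$Y_\mu$'' throughout: using the investment of $\mu'$ in $\mu$, the bound $\hat{t}\le\lambda_{\mu'}\le\tau_2^\star$ from Proposition \ref{prop:MAD_ForwardSmallerThanLambda}, and the case split on whether $\tau_1=-\infty$ or $\lambda_\mu>\tau_2^\star$, I produce a set of requests $Q'\subseteq Y_\mu$ with $r_q\in[\tau_1,\tau_2)$ for every $q\in Q'$ and $d_{Q'}(\tau_2^\star)\ge w(e)$. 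Since $\opt$'s server never entered $Y_\mu$ throughout $[\tau_1,\tau_2^\star)$, $\opt$ cannot serve any $q\in Q'$ in that interval, so it pays at least $d_{Q'}(\tau_2^\star)\ge w(e)$ in delay on requests whose release times lie in $[\tau_1,\tau_2)$ and whose location is in $Y_\mu$; for SRCN and NCN this is counted in $c_d(\mu)$ directly, while for RRCN the fact that ``not entering $R_e$'' forces $\opt$'s server to remain internal to $T_e$ during $[\tau_1,\tau_2)$ is exactly the premise enabling delay on $R_e$-requests to be counted in $c_d(\mu)$. Therefore $c(\mu)\ge w(e)$, the source-edge $\alpha(s,\mu)=c(\mu)\ge w(e)$ suffices, and $\chi_\mu\ge0$.

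The main obstacle I expect is the Case 3 book-keeping for RRCNs, where the asymmetric definition of $c_d$ (requiring $\opt$'s server to stay internal to $T_e$) and the asymmetric geometry of $R_e$ versus $T_e$ both must be reconciled with the location constraint $Q'\subseteq Y_\mu$; this reconciliation is exactly what Proposition \ref{prop:OSD_NotBought} supplies, together with Observation \ref{obs:OSD_TreeContainment}, which ensures that once $\mu'$ invested in $\mu$ the relevant tree $Y_{\mu'}$ contains $Y_\mu$ and so the ``not entering'' conclusion propagates correctly along the color chain. The rest of the case-3 argument mirrors the MAD proof line-for-line.
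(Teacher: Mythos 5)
Your proof follows the paper's argument case by case, and each step you describe matches the paper's own proof of Lemma~\ref{lem:OSD_ValidPreflow}. The one place you diverge (harmlessly) is in Case~3: the paper observes that any charging node with an outgoing edge in $E$ must be an NCN, since investments are made only into counter phases and only NCNs have them; consequently $Y_\mu = T_e$ automatically and the SRCN/RRCN accounting you add is vacuous there. The RRCN bookkeeping you describe actually belongs to Lemma~\ref{lem:OSD_RootExcesses}, where it does arise, and there it rests on $\mathbb{I}_i = 1$ together with $\Color[\mu]\neq\Special$ rather than on the fact that ``not entering $R_e$'' by itself forces $\opt$ to stay internal to $T_e$ (which is not true in general, since $\opt$ could also sit outside $T_{v_e^\top}$ entirely).
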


\begin{proof}
As in previous versions of this lemma, we need to show that $\chi_{\mu}\ge0$
for every $\mu\in M$. We separate according to cases.

\textbf{Case 1:} $\Color[\mu]=\Special$. In this case, $\opt$ incurs
a buying cost of $w(e)$ at $\mu$, completing the case according
to Observation \ref{obs:OSD_w_e_is_enough}.

\textbf{Case 2: }$\Color[\mu]=\mu^{\star}$ for some charging node
$\mu^{\star}$. In this case, observe that Observation \ref{obs:MAD_ColoredNotMinusInfinity}
applies for OSD as well. Thus, $\mu$ has invested in other nodes
a total of exactly $w(e)$, and thus $\sum_{\sigma\in E_{\mu}^{+}}\alpha(\sigma)\ge w(e)$.
Observation \ref{obs:OSD_w_e_is_enough} completes the proof for this
case.

\textbf{Case 3: }$\Color[\mu]=\None$. If there are no outgoing edges
from $\mu$, then clearly $\chi_{\mu}\ge0$ and we are done. Otherwise,
$\mu$ is an NCN, and there exists an outgoing edge $\sigma$ to some
node $\mu^{\prime}=(e^{\prime},[\tau_{1}^{\prime},\tau_{2}^{\prime}))$
with $\Color[\mu^{\prime}]=\mu^{\star}$ for some charging node $\mu^{\star}=(e^{\star},[\tau_{1}^{\star},\tau_{2}^{\star}))$.
Observe that since $\mu^{\prime}$ invested in $\mu$, we must have
that $\tau_{1}^{\prime}\le\tau_{2}$. Using Proposition \ref{prop:OSD_NotBought},
and the fact that $\Bought[\mu]=False$, we have that $\opt$ was
not internal to $Y_{\mu}$ during $[\tau_{1},\tau_{2}^{\star})$.
As in Case 3 of Lemma \ref{lem:MAD_ValidPreflow}, we locate a set
of requests internal to $Y_{\mu}$ due to which $\opt$ incurs delay
cost of $w(e)$ in $\mu$.

\textbf{Claim -- }There exists a set of requests $Q^{\prime}\subseteq Y_{\mu}$
such that $r_{q}\in[\tau_{1},\tau_{2})$ such that $d_{Q^{\prime}}(\tau_{2}^{\star})\ge w(e)$.
\begin{proof}[Proof of claim]
 Identical to the proof for the corresponding claim in the multilevel
aggregation analysis.
\end{proof}
Using the claim, observe that since the optimum's server was not internal
to $Y_{\mu}$ during $[\tau_{1},\tau_{2}^{\star})$, it has incurred
$w(e)$ delay due to the requests of $Q^{\prime}$. Due to the definition
of delay cost on an NCN, we have that $c_{d}(\mu)\ge w(e)$. This
completes the analysis of the case due to Observation \ref{obs:OSD_w_e_is_enough}.
\end{proof}
\begin{lem}
\label{lem:OSD_RootExcesses}For every root charging node $\mu=(e_{i},[t_{i-1},t_{i}))$
we have that $\chi_{\mu}\ge\mathbb{I}_{i}\cdot w(e_{i})$.
\end{lem}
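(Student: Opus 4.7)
The plan is to mirror the structure of Lemma \ref{lem:MAD_RootExcesses} from the multilevel aggregation analysis, with the indicator $\mathbb{I}_i$ governing whether we need the stronger bound $\chi_\mu \ge w(e_i)$ or the weaker $\chi_\mu \ge 0$. First I observe that since $\Explore(e_i)$ at time $t_i$ is triggered by $\UponCritical$ and not by a counter reaching $w(e_i)$, no other charging node invests in $\mu$, so $E_\mu^- = \emptyset$ and $\chi_\mu = \sum_{\sigma \in E_\mu^+} \alpha(\sigma)$. It therefore suffices to lower bound the incoming weight at $\mu$.

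For $\mathbb{I}_i = 0$, Procedure \ref{proc:OSD_PreflowBuilder} does not call $\SetColor(\mu, \mu)$, so $\Color[\mu] \in \{\None, \Special\}$; either way, the validity of the preflow established in Lemma \ref{lem:OSD_ValidPreflow} directly yields $\chi_\mu \ge 0 = \mathbb{I}_i \cdot w(e_i)$. For $\mathbb{I}_i = 1$, I split by $\Color[\mu]$ into three sub-cases that parallel Lemma \ref{lem:OSD_ValidPreflow}. If $\Color[\mu] = \Special$, the source edge alone carries weight $c(\mu) \ge c_b(\mu) = w(e_i)$. If $\Color[\mu] = \mu$, then $\SetColor(\mu, \mu)$ succeeded, so $\tau_1 \ne -\infty$ and $\lambda_\mu \le t_i$; by the multilevel aggregation analog of Observation \ref{obs:FLDeadline_RaisesCounterAtMostF}, the exploration $\Explore_{\tau_1}(e_i)$ fully spent its $w(e_i)$ budget on descendant counters, so the incoming $\bar{E}$-edges added to $E$ by $\SetColor$ sum to exactly $w(e_i)$.

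The remaining sub-case, $\Color[\mu] = \None$ with $\mathbb{I}_i = 1$, is the main obstacle and mirrors the corresponding argument in Lemma \ref{lem:MAD_RootExcesses}: the aim is to show $c_d(\mu) \ge w(e_i)$ so that the source edge alone suffices. The critical set $Q$ triggering the $i$'th service lies in $Y_{e_i}$ with $d_Q(t_i) \ge w(Y_{e_i}^Q) \ge w(e_i)$, and since $\Color[\mu] \ne \Special$ together with $\mathbb{I}_i = 1$ force the optimum to stay on the algorithm's side of $e_i$ throughout $[t_{i-1}, t_i)$, $\opt$ cannot serve any request of $Y_{e_i}$ during the interval. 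The failure of $\SetColor(\mu, \mu)$ forces $\tau_1 = -\infty$ or $\lambda_\mu > t_i$, and in each case I extract a subset $Q^\prime \subseteq Q$ with $r_q \in [t_{i-1}, t_i)$ and $d_{Q^\prime}(t_i) \ge w(e_i)$: take $Q^\prime = Q$ when $\tau_1 = -\infty$; otherwise let $\hat Q$ be the subset of $Q$ pending immediately after $\Explore_{t_{i-1}}(e_i)$, use the definition of $\lambda_\mu$ to get $d_{\hat Q}(t_i) < w(Y_{e_i}^{\hat Q}) - w(e_i)$, and conclude $d_{Q \setminus \hat Q}(t_i) \ge w(e_i)$ with every such request released in $[t_{i-1}, t_i)$. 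A subtle point for RRCN-type root nodes is that the RRCN definition of $c_d(\mu)$ only counts delay when $\opt$'s server remains internal to $T_{e_i}$ throughout $[t_{i-1}, t_i)$; this will follow from $\mathbb{I}_i = 1$ together with $\Color[\mu] \ne \Special$, since when $Y_{e_i} = R_{e_i}$ the algorithm is by assumption internal to $T_{e_i}$, hence $\opt$ is on the same side at $t_i^-$, and the absence of an $e_i$-traversal guarantees it remained internal throughout.
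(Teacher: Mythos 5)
Your proof is correct and takes essentially the same approach as the paper's: split by the color of $\mu$, use Cases 1 and 2 of Lemma \ref{lem:OSD_ValidPreflow} when $\Color[\mu]\neq\None$, and in the $\None$ case extract the set $Q'$ released in $[t_{i-1},t_i)$ with $d_{Q'}(t_i)\ge w(e_i)$, checking the extra RRCN condition via $\mathbb{I}_i=1$ and $\Color[\mu]\neq\Special$. The only difference is organizational: you dispose of $\mathbb{I}_i=0$ up front (noting $\SetColor(\mu,\mu)$ is never called so $\Color[\mu]\in\{\None,\Special\}$), whereas the paper first splits on $\Color[\mu]$ and invokes $\mathbb{I}_i=0$ only inside the $\None$ sub-case.
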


\begin{proof}
If $\Color[\mu]\neq\None$, we have that $\chi_{\mu}\ge w(e_{i})$
using identical arguments to Cases 1 and 2 of Lemma \ref{lem:OSD_ValidPreflow}.

Otherwise, $\Color[\mu]=None.$ Observe that $\chi_{\mu}\ge0$, due
to Lemma \ref{lem:OSD_ValidPreflow}, which covers the case that $\mathbb{I}_{i}=0$.
Now, suppose that $\mathbb{I}_{i}=1$. We show that $\opt$ incurred
a delay cost of at least $w(e_{i})$ in $\mu$.

\textbf{Claim -- }There exists a set of requests $Q^{\prime}\subseteq Y_{\mu}$
such that $r_{q}\in[t_{i-1},t_{i})$ such that $d_{Q^{\prime}}(t_{i})\ge w(e)$.
\begin{proof}[Proof of Claim]
We denote by $Q$ the set of requests that became critical at $t_{i}$,
triggering the service. Observe that $d_{Q}(t_{i})\ge w(Y_{\mu})$,
and that $r_{q}<t_{i}$ for every $q\in Q$. Since $\Color[\mu]=\None$,
we must have that either $t_{i-1}=-\infty$ or $\lambda_{\mu}>t_{i}$. 

If $t_{i-1}=-\infty$, then $r_{q}\in[t_{i-1},t_{i})$ and choosing
$Q^{\prime}=Q$ yields the claim. Otherwise, $t_{i-1}\ne-\infty$,
and $\lambda_{\mu}>t_{i}$. In this case, we choose $\hat{Q}\subseteq Q$
to be the set of pending requests immediately after the service at
$t_{i-1}$. Since $\lambda_{\mu}>t_{i}$, $d_{\hat{Q}}(t_{i})\le w(Y_{\mu}^{\hat{Q}})-w(e_{i})\le w(Y_{\mu}^{Q})-w(e_{i})$.
Thus, we have that $d_{Q\backslash\hat{Q}}(t_{i})\ge w(e_{i})$. Observe
that $r_{q}\ge t_{i-1}$ for every $q\in Q\backslash\hat{Q}$, and
thus $r_{q}\in[t_{i-1},t_{i})$ for every $q\in Q\backslash\hat{Q}$.
Thus choosing $Q^{\prime}=Q\backslash\hat{Q}$ yields the claim.
\end{proof}
We now use this claim. Observe that the optimum's server was not internal
to $Y_{\mu}$ at $t_{i}$ (due to $\mathbb{I}_{i}=1$), and since
$\Color[\mu]\neq\Special$, the optimum's server was not internal
to $Y_{\mu}$ during $[t_{i-1},t_{i})$. Thus, the optimum incurs
a delay cost of $w(e_{i})$ due to $Q^{\prime}$. Now observe that:
\begin{itemize}
\item If $\mu$ is an SRCN, then $c_{d}(\mu)\ge w(e_{i})$.
\item If $\mu$ is an RRCN, then the algorithm's server was internal to
$T_{e_{i}}$ at time $t_{i}$. Since $\mathbb{I}_{i}=1$, the optimum's
server was internal to $T_{e_{i}}$ as well at $t_{i}$. Since $\Color[\mu]\neq\Special$,
the optimum's server stayed internal to $T_{e_{i}}$ during $[t_{i-1},t_{i})$.
Thus, $c_{d}(\mu)\ge w(e_{i})$.
\end{itemize}
In both cases, $c_{d}(\mu)\ge w(e_{i})$, completing the proof of
the case and lemma.
\end{proof}
\begin{proof}[of Lemma \ref{lem:OSD_OPT}]
 The proof of the lemma results from observing the subset $N\subseteq M$
of all root charging nodes. Lemma \ref{lem:OSD_RootExcesses} implies
that $\sum_{\mu\in N}\chi_{\mu}\ge\sum_{i=1}^{k}\mathbb{I}_{i}\cdot w(e_{i})$. 

We now use Proposition \ref{prop:Flow_SubsetLBSource} and Lemma \ref{lem:OSD_ChargeLBOpt}
to obtain 
\[
\sum_{i=1}^{k}\mathbb{I}_{i}\cdot w(e_{i})\le\omega_{Z}=\sum_{\mu\in M}c(\mu)\le3\cdot\opt^{B}+3D\cdot\opt^{D}
\]
 proving the lemma.
\end{proof}

\subsubsection{Proof of Main Theorem}

In this part of the analysis, we prove Theorem \ref{thm:OSD_HSTTheorem}. 

From Lemma \ref{lem:OSD_ALG}, we have that $\alg\le\gamma D\cdot\sum_{i=1}^{k}w(e_{i})$
for some constant $\gamma$.
\begin{defn}[Potential function $\phi(t)$]
We define the potential function $\phi(t)$ to be $\gamma D$ times
the distance between the algorithm's server and the optimum's server
at time $t$. 

Observe that $\phi(-\infty)=0$.
\end{defn}

For every $i\in[k]$, define the difference in potential $\Delta_{i}\phi=\phi(t_{i}^{+})-\phi(t_{i}^{-})$,
where $t_{i}^{-}$ is time $t_{i}$ immediately before the $i$'th
service and $t_{i}^{+}$ is time $t_{i}$ immediately after the $i$'th
service.

We define $\alg_{i}=\gamma Dw(e_{i})$, and $\opt_{i}=\mathbb{I}_{i}\cdot w(e_{i})$.
Observe from Lemmas \ref{lem:OSD_ALG} and \ref{lem:OSD_OPT} that
$\sum_{i}\alg_{i}\ge\alg$ and $\sum_{i}\opt_{i}\le3\cdot\opt^{B}+3D\cdot\opt^{D}$.
\begin{lem}
\label{lem:OSD_PotentialStep}For every $i\in[k]$, we have that $\alg_{i}\le4\gamma D\cdot\opt_{i}-\Delta_{i}\phi$.
\end{lem}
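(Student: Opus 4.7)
The plan is to split into two cases based on the value of $\mathbb{I}_{i}$, exploiting the tree-metric identity $\delta(x,y) = \delta(x,c) + \delta(c,y)$ whenever $c$ lies on the path from $x$ to $y$. Let $p_{A}$ and $p_{O}$ denote the positions of the algorithm's server and the optimum's server at time $t_{i}^{-}$. The service happens instantaneously, so $p_{O}$ does not change between $t_{i}^{-}$ and $t_{i}^{+}$, and the algorithm's server ends the service at $u_{2}$ (the endpoint of $e_{i}$ opposite to the side containing $p_{A}$): the DFS traversal in the middle returns to $u_{1}$, after which the server crosses $e_{i}$ to $u_{2}$. Write $d = \delta(p_{A},u_{1})$; since $e_{i}$ is a major edge, Proposition \ref{prop:OSD_MajorEdgeNotTooFar} gives $d \le 2w(e_{i})$.

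Case $\mathbb{I}_{i}=1$: both servers are on $u_{1}$'s side of $e_{i}$ at time $t_{i}^{-}$, so the path from $p_{O}$ to $u_{2}$ must cross $e_{i}$, yielding $\delta(p_{O},u_{2}) = w(e_{i}) + \delta(p_{O},u_{1})$. Combining with the triangle inequality $\delta(p_{O},u_{1}) \le \delta(p_{O},p_{A}) + d$ gives
\[
\Delta_{i}\phi \;=\; \gamma D\bigl(\delta(u_{2},p_{O}) - \delta(p_{A},p_{O})\bigr) \;\le\; \gamma D\bigl(w(e_{i}) + d\bigr) \;\le\; 3\gamma D\, w(e_{i}).
\]
Since $\alg_{i} = \gamma D\, w(e_{i})$ and $\opt_{i} = w(e_{i})$, we get $\alg_{i} + \Delta_{i}\phi \le 4\gamma D\, w(e_{i}) = 4\gamma D \cdot \opt_{i}$.

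Case $\mathbb{I}_{i}=0$: $p_{A}$ lies on $u_{1}$'s side and $p_{O}$ lies on $u_{2}$'s side of $e_{i}$, so the unique tree path from $p_{A}$ to $p_{O}$ passes through $u_{1}$ then $u_{2}$, giving the exact identity $\delta(p_{A},p_{O}) = d + w(e_{i}) + \delta(u_{2},p_{O})$. After the service the distance is just $\delta(u_{2},p_{O})$, so
\[
\Delta_{i}\phi \;=\; -\gamma D\bigl(d + w(e_{i})\bigr) \;\le\; -\gamma D\, w(e_{i}) \;=\; -\alg_{i},
\]
and since $\opt_{i} = 0$ the desired inequality $\alg_{i} + \Delta_{i}\phi \le 0 = 4\gamma D \cdot \opt_{i}$ holds.

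There is no real obstacle: once the algorithm's server has been located at $u_{2}$ immediately after the service, the two cases follow directly from the tree-metric structure and Proposition \ref{prop:OSD_MajorEdgeNotTooFar}. The only point requiring care is the definition of $\mathbb{I}_{i}$ (which fixes the side of $p_{O}$ relative to $e_{i}$ at time $t_{i}^{-}$) and the use of the exact additive identity for tree distances across a separating edge in Case $\mathbb{I}_{i}=0$, which is what makes $\Delta_{i}\phi$ negative enough to absorb the full amortized cost $\alg_{i}$.
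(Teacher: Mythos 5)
Your proof is correct and follows essentially the same route as the paper's: split on $\mathbb{I}_{i}$, track the net displacement of the algorithm's server from $p_A$ to $u_2$, and invoke Proposition \ref{prop:OSD_MajorEdgeNotTooFar} to bound $\delta(p_A,u_1)\le 2w(e_i)$. The only difference is presentational: you make the tree-metric additivity across the separating edge $e_i$ explicit (which is exactly what justifies the paper's terser assertions that $\Delta_i\phi\le 3\gamma D\,w(e_i)$ when $\mathbb{I}_i=1$ and $\Delta_i\phi\le-\gamma D\,w(e_i)$ when $\mathbb{I}_i=0$).
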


\begin{proof}
If $\mathbb{I}_{i}=1$, then $\opt_{i}=w(e_{i})$. Using Proposition
\ref{prop:OSD_MajorEdgeNotTooFar}, we have that $\Delta_{i}\phi\le3\gamma D\cdot w(e_{i})$.
Thus 
\[
\alg_{i}=\gamma Dw(e_{i})=4\gamma D\cdot\opt_{i}-3\gamma D\cdot w(e_{i})\le4\gamma D\cdot\opt_{i}-\Delta_{i}\phi
\]
as required.

Otherwise, $\mathbb{I}_{i}=0$. Then, $\opt_{i}=0$. Since the optimum's
server is on the other side of the edge $e_{i}$ than the algorithm's
server before the $i$'th service, and the algorithm finishes the
service on that other side of $e_{i}$, it must be that $\Delta_{i}\phi\le-\gamma D\cdot w(e_{i})$.
Thus,
\[
\alg_{i}=\gamma Dw(e_{i})\le4\gamma D\cdot\opt_{i}-\Delta_{i}\phi
\]
finishing the proof of the lemma.
\end{proof}
\begin{prop}
Denote the final value of $\phi$ by $\phi(\infty)$. Then 
\[
\sum_{i}\Delta_{i}\phi\ge\phi(\infty)-\gamma D\cdot\opt^{B}
\]
\end{prop}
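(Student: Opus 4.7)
The plan is to decompose the total change in the potential $\phi$ from time $-\infty$ to $+\infty$ into two kinds of contributions, those happening during the algorithm's services and those happening in between, and to bound each in turn. Since $\phi(-\infty)=0$, we have the telescoping identity
\[
\phi(\infty) \;=\; \sum_{i=1}^{k}\Delta_{i}\phi \;+\; \Phi_{\mathrm{between}},
\]
where $\Phi_{\mathrm{between}}$ denotes the total change in $\phi$ accumulated outside the instants $t_{1},\dots,t_{k}$ of the algorithm's services. Rearranging, it suffices to show $\Phi_{\mathrm{between}} \le \gamma D\cdot\opt^{B}$.

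Between two consecutive services the algorithm's server stays put (by the algorithm's description, it only moves inside a call to \UponCritical), so any change in the distance between the two servers during such an interval is driven solely by movement of the optimum's server. By the triangle inequality, if the optimum moves its server a total distance $\ell$ during some sub-interval, then the distance between the two servers changes by at most $\ell$, and hence $\phi$ changes by at most $\gamma D\cdot\ell$. Summing over all inter-service intervals (and adding the segments before $t_{1}$ and after $t_{k}$) gives $\Phi_{\mathrm{between}} \le \gamma D\cdot L$, where $L$ is the total distance the optimum's server travels over the whole instance. By definition, $L = \opt^{B}$, so $\Phi_{\mathrm{between}} \le \gamma D\cdot\opt^{B}$, and combining with the telescoping identity yields the claim.

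The main (minor) obstacle is simply to make sure nothing else contributes to $\phi$ between services: delay accrual only changes the costs, not the positions of the servers, so it does not affect $\phi$; and the potential does not have any time-dependent term beyond the distance. Once this is observed, the argument is a clean triangle-inequality telescoping.
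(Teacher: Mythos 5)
Your proof is correct and takes essentially the same approach as the paper: decompose the total change $\phi(\infty)-\phi(-\infty)$ into the contributions from the algorithm's server movements (the $\Delta_i\phi$) and from the optimum's server movements, and bound the latter by $\gamma D\cdot\opt^B$ via the Lipschitz property of the distance term. The paper phrases this as summing contributions by cause rather than by time interval, but the underlying argument is identical.
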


\begin{proof}
Consider that $\phi(\infty)=\phi(\infty)-\phi(-\infty)$ can be constructed
by summing the changes to the potential function caused by moves of
the algorithm's server (which are the $\Delta_{i}\phi$) and changes
caused by moves of the optimum's server. Note that moving the optimum's
server by $x$ can increase $\phi$ by at most $\gamma Dx$. Thus,
\[
\phi(\infty)\le\sum_{i}\Delta_{i}\phi+\gamma D\cdot\opt^{B}
\]
yielding the proposition.
\end{proof}
\begin{cor}
\label{cor:OSD_PotentialBound}$\sum_{i}\Delta_{i}\phi\ge-\gamma D\cdot\opt^{B}$
\end{cor}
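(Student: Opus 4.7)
The plan is to derive the corollary immediately from the preceding proposition, which states that $\sum_{i}\Delta_{i}\phi\ge\phi(\infty)-\gamma D\cdot\opt^{B}$. It therefore suffices to observe that $\phi(\infty)\ge 0$, and then drop the non-negative term to weaken the inequality.

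To justify $\phi(\infty)\ge 0$, I would just unfold the definition of $\phi$: by construction $\phi(t)$ is $\gamma D$ times the distance in the $\left(\ge2\right)$-HST between the algorithm's server and the optimum's server at time $t$. Since $\gamma,D>0$ and distances in a metric space are always non-negative, $\phi(t)\ge 0$ for every $t$, and in particular $\phi(\infty)\ge 0$.

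Chaining the two facts gives
\[
\sum_{i}\Delta_{i}\phi\;\ge\;\phi(\infty)-\gamma D\cdot\opt^{B}\;\ge\;-\gamma D\cdot\opt^{B},
\]
which is exactly the statement of the corollary. There is no real obstacle here — the work was already done in proving the preceding proposition (where one had to argue that the only other contribution to $\phi(\infty)-\phi(-\infty)$ beyond the $\Delta_{i}\phi$ is from optimum's server movements, each unit of which increases $\phi$ by at most $\gamma D$). The corollary is just the proposition combined with the trivial non-negativity of $\phi$, packaged in a form convenient for the telescoping argument that will follow (summing Lemma \ref{lem:OSD_PotentialStep} over $i$ and using this corollary to absorb the potential terms into an $O(D)\cdot\opt^{B}$ overhead).
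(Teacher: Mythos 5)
Your proof is correct and matches the paper's (implicit) argument exactly: the corollary follows from the preceding proposition together with the trivial observation that $\phi(\infty)\ge 0$ because $\phi$ is a non-negative multiple of a metric distance. Nothing more is needed.
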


\begin{proof}[of Theorem \ref{thm:OSD_HSTTheorem}]
 Due to Lemma \ref{lem:OSD_PotentialStep}, we have that 
\[
\alg\le\sum_{i}\alg_{i}\le\sum_{i}4\gamma D\cdot\opt_{i}-\sum_{i}\Delta_{i}\phi
\]
Since $\sum_{i}\opt_{i}\le3\opt^{B}+3D\cdot\opt^{D}$, and using Corollary
\ref{cor:OSD_PotentialBound}, we have that 
\begin{align*}
\alg & \le4\gamma D\cdot\left(3\opt^{B}+3D\cdot\opt^{D}\right)+\gamma D\cdot\opt^{B}\\
 & \le13\gamma D\cdot\opt^{B}+12\gamma D^{2}\cdot\opt^{D}
\end{align*}
proving the theorem.
\end{proof}
\bibliographystyle{plain}
\bibliography{bibfile}

\appendix

\section{\label{appendix:AdditionalFigures}Additional Figures}

\begin{figure}[tb]
\subfloat[After $\protect\Open(u)$.]{\includegraphics[width=0.4\columnwidth]{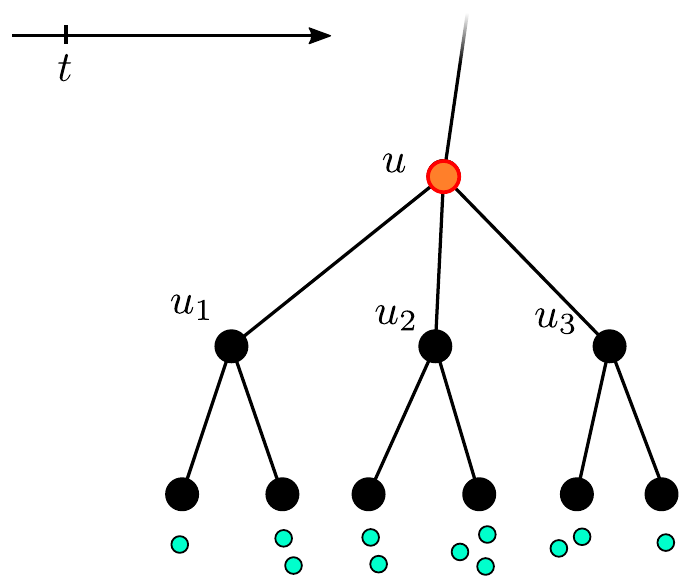}

}\hfill{}\subfloat[Time forwarding reaches $t^{\prime}$, the earliest deadline of a
pending request in $T_{u}$.]{\includegraphics[width=0.4\columnwidth]{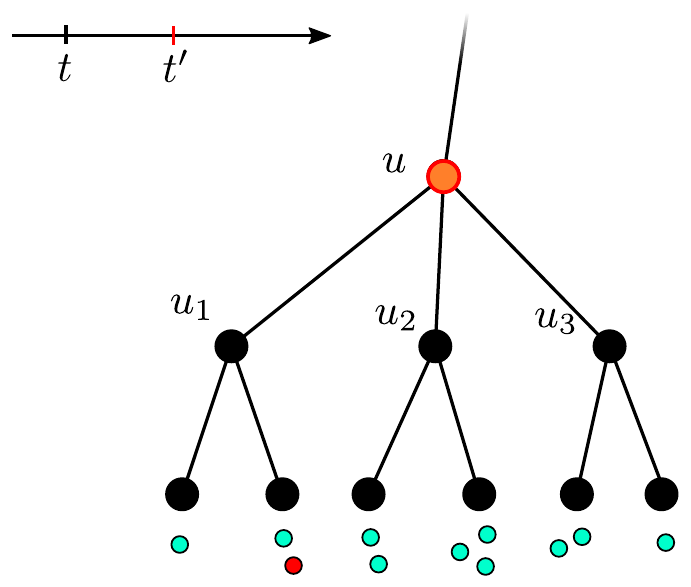}}

\subfloat[The request is connected to $u$.]{\includegraphics[width=0.4\columnwidth]{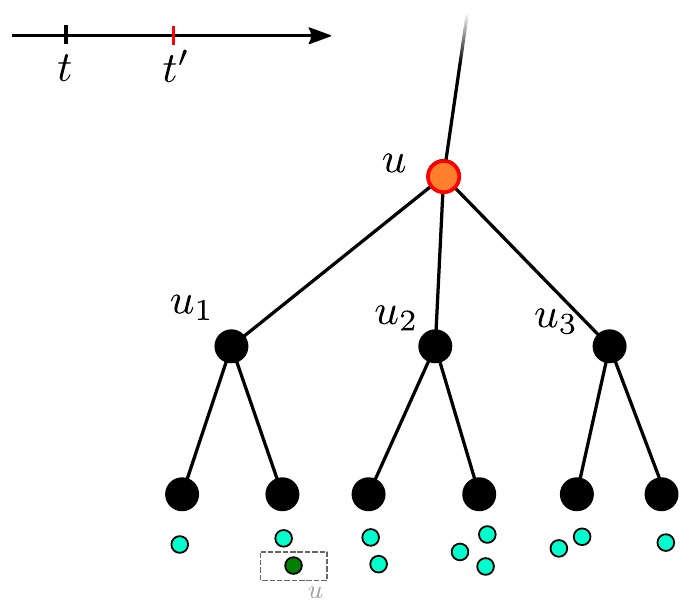}

}\hfill{}\subfloat[The new earliest deadline $t^{\prime\prime}$ is reached in time forwarding.]{\includegraphics[width=0.4\columnwidth]{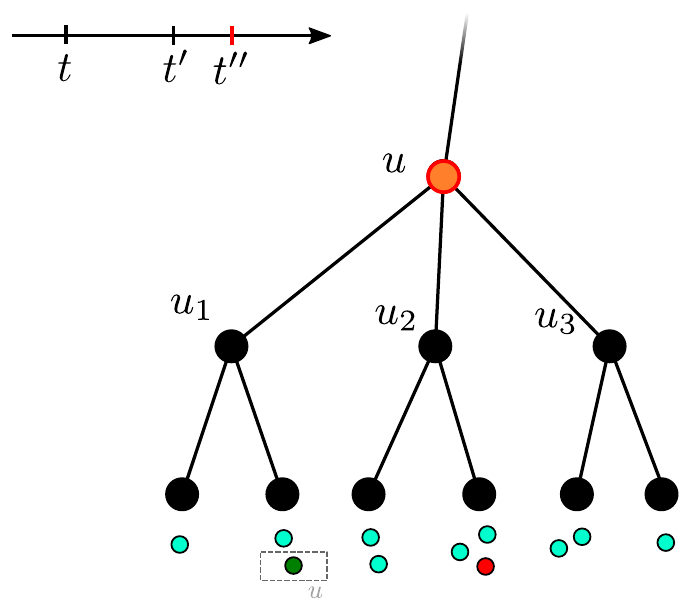}}
\begin{centering}
\subfloat[Investment triggers $\protect\Explore(u_{2})$, serving the earliest
deadline request.]{\includegraphics[width=0.4\columnwidth]{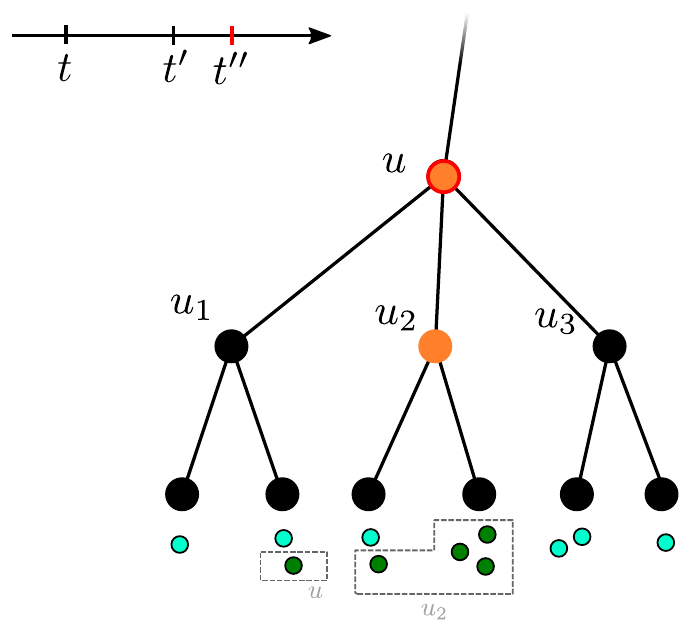}}
\par\end{centering}
\caption{\label{fig:FLDeadline_AlgorithmVisualization}Visualization of Algorithm
\ref{alg:FLDeadline} -- the operation of $\protect\Explore(u)$
at time $t$.}
\end{figure}

\end{document}